\documentclass[aps,prd,notitlepage,nofootinbib,superscriptaddress,groupedaddress]{revtex4-2}

\usepackage[latin9]{inputenc}
\usepackage{mathrsfs}
\usepackage{bm}
\usepackage{amsmath}
\usepackage{amsthm}
\usepackage{amssymb}
\usepackage{graphicx}
\usepackage{esint}
\usepackage{hyperref}
\usepackage{tabularx}

\theoremstyle{plain}

\theoremstyle{plain}

\ifx\proof\undefined
\newenvironment{proof}[1][\protect\proofname]{\par
	\normalfont\topsep6\p@\@plus6\p@\relax
	\trivlist
	\itemindent\parindent
	\item[\hskip\labelsep\scshape #1]\ignorespaces
}{%
	\endtrivlist\@endpefalse
}
\providecommand{\proofname}{Proof}
\fi

\usepackage[table ]{ xcolor}

\usepackage{amssymb}
\usepackage{graphicx,color}

\usepackage{amsmath,amsthm}

\usepackage{mathrsfs}

\usepackage{bm}

\def\beq{\begin{equation}}
\def\eeq{\end{equation}}
\def\bi{\begin{itemize}}
\def\ei{\end{itemize}}
	\def\ba{\begin{array}}
	\def\ea{\end{array}}
	\def\bfig{\begin{figure}}
	\def\efig{\end{figure}}

	\def\C{\mathbb{C}}
	\def\R{\mathbb{R}}
	\def\Z{\mathbb{Z}}

	\newtheorem{theorem}{Theorem}[section]
	\newtheorem{definition}{Definition}[section]
	
	\newtheorem{lemma}[theorem]{Lemma}

	\newcommand{\I}{{\rm i}}

	\newcommand{\Slc}{\mathrm{SL}(2,\mathbb{C})}

	\newcommand{\Su}{\mathrm{SU}(2)}

	\def\be{\begin{eqnarray}}
	\def\ee{\end{eqnarray}}

	\newcommand{\ca}{\mathcal A}

	\newcommand{\cc}{\mathcal C}
	
	\newcommand{\ce}{\mathcal E}
	\newcommand{\cf}{\mathcal F}
	\newcommand{\cg}{\mathcal G}
	\newcommand{\ch}{\mathcal H}
	\newcommand{\ci}{\mathcal I}
	\newcommand{\cj}{\mathcal J}
	\newcommand{\ck}{\mathcal K}
	\newcommand{\cl}{\mathcal L}
	
	\newcommand{\cn}{\mathcal N}
	
	\newcommand{\calp}{\mathcal P}

	\newcommand{\cs}{\mathcal S}
	\newcommand{\ct}{\mathcal T}
	\newcommand{\cu}{\mathcal U}
	\newcommand{\cv}{\mathcal V}
	
	\newcommand{\cx}{\mathcal X}

	\newcommand{\sa}{\mathscr{A}}

	\newcommand{\ff}{\mathfrak{f}}

	\newcommand{\fl}{\mathfrak{l}}  \newcommand{\Fl}{\mathfrak{L}}
	  
	\newcommand{\fn}{\mathfrak{n}}  \newcommand{\Fn}{\mathfrak{N}}
	  
	\newcommand{\fp}{\mathfrak{p}}

	\newcommand{\fs}{\mathfrak{s}}  \newcommand{\Fs}{\mathfrak{S}}

	\renewcommand{\a}{\alpha}
	\renewcommand{\b}{\beta}
	\newcommand{\g}{\gamma}
	\newcommand{\G}{\Gamma}
	
	\newcommand{\eps}{\varepsilon}

	\newcommand{\sig}{\sigma}
	
	\renewcommand{\l}{\lambda}
	\renewcommand{\L }{\Lambda}
	\renewcommand{\o}{\omega}
	\renewcommand{\O}{\Omega}
	\renewcommand{\t}{\tau}

	\newcommand{\dd}{\mathrm d}
	\newcommand{\rmd}{\mathrm d}

	\newcommand{\lt}{\left}
	\newcommand{\rt}{\right}

	\newcommand{\lag}{\left\langle}
	\newcommand{\rag}{\right\rangle}

	\newcommand{\sk}{\mathscr{K}}
	
	\newcommand{\re}{\mathrm{Re}}

	\newcommand{\tr}{\mathrm{Tr}}

\newcommand{\wt}{\widetilde}

\newcommand{\SU}{\ensuremath{\mathrm{SU}(2)}}

\newcommand{\e}{\mathrm{e}}

\newcommand{\abs}[1]{\ensuremath{\left|#1\right|}}

\begin{document}

\title{Ultraviolet Fixed Point in Covariant Loop Quantum Gravity}

\author{Muxin Han}
\email{hanm(At)fau.edu}
\affiliation{Department of Physics, Florida Atlantic University, 777 Glades Road, Boca Raton, FL 33431, USA}
\affiliation{Institut f\"ur Quantengravitation, Universit\"at Erlangen-N\"urnberg, Staudtstr. 7/B2, 91058 Erlangen, Germany}

\begin{abstract}

We investigate the ultraviolet behavior of 4-dimensional Lorentzian covariant Loop Quantum Gravity (LQG) and address the problem of infinite ambiguities relating to the triangulation dependence of spinfoam amplitudes. 
We consider the complete LQG amplitude that summing spinfoam amplitudes over 2-complexes. By introducing spin-network stacks and their covariant extension, spinfoam stacks, the summation over complexes is partitioned into distinct families. 
We demonstrate that the theory exhibits a condensation phenomenon, where quantum geometry condenses to a dominant small spin configuration. 
We identify a candidate fixed point controlling the ultraviolet (small spin) regime of covariant LQG. 
At this fix point, the complete LQG amplitude dynamically reduces to a topological theory at leading order, and the infinite ambiguities of triangulation dependence reduces to a finite set of boundary coefficients associated with a finite basis of 3-dimensional boundary blocks.
These results provide a definition for the continuum limit of spinfoam theory at the fundamental level.

\end{abstract}

\maketitle



In covariant loop quantum gravity (LQG), the transition amplitudes between spin-network boundary states are defined via a spinfoam path integral, wherein one sums over states that decorate a chosen 2-complex~\cite{EPRL,KKL,Rovelli:2010vv}. The 2-complex serves as a discretization of spacetime, dual to a cellular decomposition; for any fixed choice of 2-complex, the amplitude is specified in a well-defined manner. 
However, a fundamental challenge arises from the fact that the discretization is not merely a technical device or auxiliary regulator: the transition amplitude depends on the choice of 2-complex, and modifications to the underlying complex, such as refining or altering its combinatorial structure, generally yield inequivalent amplitudes. 
So the theory has infinite ambiguities corresponding to choices of 2-complexes.
A useful approach toward resolving the issue is to sum over possible 2-complexes (see e.g. \cite{Reisenberger:2000fy,Oriti:2009wn,Oriti:2014yla}).
Nevertheless, the ambiguities of the theory persist within this framework: for each 2-complex contributing to the sum, one may assign an independent weight, the most general sum includes an infinite collection of arbitrary coefficients, each corresponding to a distinct 2-complex. The resulting microscopic parameter space is still infinite-dimensional. 
The situation structurally analogous to that of a nonrenormalizable quantum field theory, in which infinitely many couplings must be fixed to define the theory. 

To address this ambiguity in the continuum limit, a variety of renormalization group (RG) approaches have been developed for LQG and related models~\cite{Dittrich:2014mxa,Bahr:2009mc,Rivasseau:2011hm}. The goal of these approaches is to identify the ultraviolet (UV) fixed point, which should ensure the theory's fundamental independence of discretization choice. In spinfoams, coarse-graining via tensor network and geometric techniques aims to identify fixed points where the theory becomes independent of microscopic details~\cite{Dittrich:2013voa,Dittrich:2016tys,Bahr:2010mc,Bahr:2011uj,Steinhaus:2020lgb,Delcamp:2016dqo}. Complementing the covariant approach, a Hamiltonian renormalization program is developed for canonical LQG, toward a lattice-independent fix point \cite{Lang:2017beo,Thiemann:2020cuq,RodriguezZarate:2025ipb,Thiemann:2024vjx}. In Group Field Theory, RG methods have revealed renormalizable models and emergent condensed geometric phases~\cite{BenGeloun:2011rc,Carrozza:2012uv,Carrozza:2013wda,Carrozza:2016tih,Benedetti:2014qsa,Oriti:2013hta,Krajewski:2012aw}. 
The causal dynamical triangulations (CDT) program sums over triangulations to recover smooth spacetime via second-order phase transitions~\cite{Ambjorn:2012jv,Ambjorn:2004qy,Loll:2019rdt}.
These ideas connect with the broader asymptotic safety scenario, which postulates a non-Gaussian UV fixed point underlying quantum gravity~\cite{Weinberg:1980gg,Reuter:1996cp,Percacci:2007sz,Eichhorn:2017egq,Reuter:2012id,Eichhorn:2018phj,Daum:2010qt,Benedetti:2009gn,Ferrero:2024rvi}.

The central result of this article is that, with a nonperturbative summation organized by \emph{spinfoam stacks}, we identify a candidate fixed point controlling the ultraviolet (UV) regime of covariant LQG (Fig.~\ref{fig:fixpt}). At this fix point, the complete bulk LQG amplitude that summing over 2-complexes dynamically reduces to a topological theory at leading order, and the dependence on ambiguous weights of complexes collapses from infinitely many coefficients to a \emph{finite} set of boundary coefficients associated with a finite basis of \emph{boundary blocks}.
These results provide a definition for the continuum limit of spinfoam theory at the fundamental level.

\begin{figure}[t]
\centering
\includegraphics[width=0.3\columnwidth]{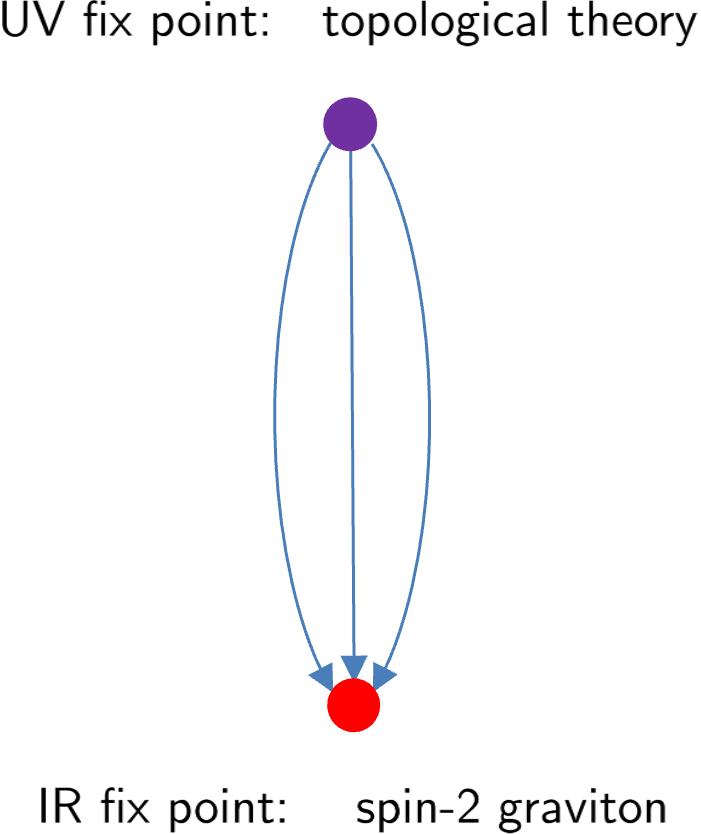}
\caption{Schematic UV fixed-point signal: in a large-cutoff regime the bulk stack amplitude localizes and becomes effectively triangulation independent, leaving finitely many boundary data.}
\label{fig:fixpt}
\end{figure}

\emph{Stacks.--}
The starting point is to build the summing-over-graph structure into the kinematics.
On a spatial slice, a conventional spin-network state lives on a fixed graph.
We instead consider a family $F(\Gamma)$ of graphs generated from a \emph{root graph} $\Gamma$. 
A root graph is a closed oriented graph with nodes $\mathfrak{n}$ and oriented links $\mathfrak{l}$ such that any pair of nodes is connected by at most one link (the pair of nodes can be identical in the case of a loop). 
The family $F(\Gamma)$ contains graphs $\Gamma(\vec p)$ given by stacking $p_{\mathfrak{l}}\in\mathbb{Z}_+$ parallel links on each root link $\mathfrak{l}\subset\Gamma$. A \emph{spin-network stack} is a general superposition of spin-network states on $\Gamma(\vec p)$ by summing over all $\vec{p}=\{p_{\mathfrak{l}}\}_{\mathfrak{l}\subset\Gamma}$, spin assignments $\vec{k}$, and intertwiner labels. 

Geometrically, stacked links represent an arbitrarily fine subdivision of the corresponding face in the dual cellular picture: a single ``flux line'' is replaced by multiple strands carrying quanta of area.
Because stacked strands are physically indistinguishable, we impose permutation invariance by projecting onto the subspace of states that are invariant under permutations of stacked links on each $\mathfrak{l}$.
This \emph{bosonic} projection is motivated by the fact that discrete permutations are part of the gauge redundancy associated with diffeomorphisms \cite{Broukal:2025cgx,Baytas:2026haw}.
The resulting boundary Hilbert space, spanned by permutation-invariant superpositions over all $\vec p$, is the minimal enlargement needed to represent arbitrarily fine subdivisions while keeping a fixed root combinatorics~\cite{Han:2025emp,Han:2019emp}. Permutation invariance implies that states are labelled by unordered tuples of spins; this is why the covariant sums later admit a bosonic occupation-number description.


A spinfoam is understood as a covariant history of a spin-network: starting from a three-dimensional spin-network, its links $\mathfrak{l}$ and nodes $\mathfrak{n}$ evolve into faces $f$ and edges $e$ in four spacetime dimensions. A \emph{root 2-complex} $\mathcal{K}$ is defined such that for every boundary cycle of a face $f$, the face multiplicity is $p_f=1$ and no two faces share the exact same boundary cycle. The covariant history of stacked links is \emph{stacked faces}: for a given root 2-complex $\mathcal{K}$, we generate a family $F(\mathcal{K})$ of complexes $\mathcal{K}(\vec p)$ by replacing each root face $f$ with $p_f$ identical copies, all sharing the same oriented boundary cycle as $f$.
Stacking thus subdivides dual cellular faces while preserving the 1-skeleton. We associate to each root face a complex coupling $\lambda_f\in\C$ and regulate the spin sums by a cutoff $A_f$: for the $p_f$ stacked faces on $f$ carrying SU(2) spins $k_1/2,\dots,k_{p_f}/2$, we impose $\sum_{i=1}^{p_f}E_{k_i}\le A_f$, with a fixed positive and polynomially increasing sequence $E_k$ (e.g. $E_k=k(k+2)$).
These couplings $\lambda_f$ play a role analogous to fugacities: they weight how strongly the sum explores refinements of each face.
With these ingredients we define the \emph{stack amplitude} as a sum over complexes in the family:
\begin{equation}
\mathscr{A}_{\mathcal{K}}
=\sum_{\vec p}\left(\prod_{f\subset\mathcal{K}}\lambda_f^{p_f}\right)\mathscr{A}\bigl(\mathcal{K}, \vec p\bigr),
\label{eq:stackamp}
\end{equation}
where $\mathscr{A}(\mathcal{K},\vec p)$ is the Lorentzian EPRL/KKL amplitude on the complex $\mathcal{K}(\vec p)$, adapted to the permutation-invariant boundary states. 
In the EPRL-KKL amplitude, the sum over spins for each stack of faces is regularized by imposing the cutoff $A_f$. 
The boundary state of stack amplitude is a spin-network stack. 

In the stack amplitude, the summation over complexes within each family $F(\mathcal{K})$ is governed solely by the coupling constants $\{\lambda_f\}$.
This constraint enables us to pinpoint the UV fixed point: the fix point is discovered exactly on the submanifold of the infinite parameter space where these weights are given by the couplings $\{\lambda_f\}$. Allowing for more general weighting schemes amounts to perturbing away from this fixed-point.

Crucially, the UV regime under consideration corresponds to the small-spin regime , which is opposite to the large-spin (semiclassical) limit:
the condensation mechanism to be discussed below keeps the dominant spins at a small nonzero spin $k_0/2$ even for large $A_f$. 
The cutoff $A_f$ provides the large parameter that leads to an expansion of the spinfoam path integral.
The controlled expansion parameter is $A_h/E_{k_0}$, so one can take the cutoff large while holding fixed the typical microscopic area scale through $k_0$.
In the full spinfoam theory, the cutoffs $A_f$ should be understood as large but finite constants, intuitively related to the bare cosmological constant \cite{Han:2021tzw}.
Consequently, a significant suppression of fluctuations in the expansion only occurs when $A_h/E_{k_0}$ is large, which for any finite $A_h$ corresponds to small $k_0$.
Therefore, the vicinity of the fixed point is the regime of small $k_0$, where the topological theory is at the leading order.
The smallness of $k_0$ identifies this fixed point as ultraviolet, since $k_0$ is proportional to the expectation value of the quantum area.

A precise characterization of the notion of scale is essential in background-independent quantum gravity. Unlike conventional quantum field theory, where length or energy scales are referenced with respect to a fixed background geometry, in quantum gravity the geometry itself is dynamical, necessitating a revised definition of scale. In the framework of covariant LQG, spin variables are naturally related to geometric scales, as they correspond to the quantization of area. However, since spins are themselves dynamical variables in LQG, it is appropriate to identify the expectation value of the spin with the physical scale characterizing the regime of the theory. In this context, large expectation values of spin correspond to the IR regime, while small expectation values defines the UV regime. Specifically, the condensation spin $k_0/2$ discussed above is the expectation value of the spin in the path integral. Thus, the UV regime is characterized by a small $k_0$, while the IR regime corresponds to a large $k_0$.

To address the original triangulation dependence problem, we still need to sum over inequivalent root 2-complexes with the same boundary graph $\Gamma=\partial\mathcal{K}$.
This introduces the microscopic ambiguity coefficients $c_{\mathcal{K}}$, one for each root complex,
\begin{equation}
\mathscr{A}=\sum_{\mathcal{K}:\,\partial\mathcal{K}=\Gamma}c_{\mathcal{K}}\,\mathscr{A}_{\mathcal{K}}.
\label{eq:complete}
\end{equation}
Without further input, Eq.~\eqref{eq:complete} appears to be an intractable definition of the dynamics, since the space of ambiguities $\{c_{\mathcal{K}}\}$ and $\{\lambda_f\}$ is infinite dimensional.
The purpose of our analysis is to show these infinite ambiguities are removed at the UV fix point.

\emph{A boson gas of faces.--}
The key mechanism is already visible at the level of a single stack of internal faces $h$.
Because the stacked faces are indistinguishable, the sum over spins on a stack can be reorganized in terms of occupation numbers, as for a noninteracting Bose gas.
Technically, the stacked face amplitude $\omega_h^{\rm bos}$ in the stack amplitude can be computed by Laplace transform method: The Laplace transform of $\omega_h^{\rm bos}$ is a bosonic grand-canonical partition function $\Xi_h(s)$ (where $s$ plays the role of an inverse temperature) of the form
\begin{equation}
\Xi_h(s)=\prod_{k=1}^{\infty}\frac{1}{1-\lambda_h\,\tau^{(h)}_{k}(g_h)\,e^{-sE_k}}-1,
\label{eq:Xi}
\end{equation}
where $k/2$ labels the SU(2) spin, $g_h=\{g_{ve}\}_{v\subset\partial h}$ are half-edge $\Slc$ holonomies around $\partial h$, and $\tau^{(h)}_{k}(g_h)$ is $d_k$ times a simplicity-projected $\Slc$ character and satisfies a uniform bound $|\tau^{(h)}_{k}(g_h)|\le d_k^{2}$, with $d_k=k+1$. Expanding Eq.~\eqref{eq:Xi} yields a sum over occupation numbers $\{n_k\}_{k\ge1}$ with weights $\prod_k[\lambda_h\,\tau_k^{(h)}(g_h) e^{-sE_k}]^{n_k}$, making the Bose-gas structure explicit.

The corresponding microcanonical $\omega_h^{\rm bos}$ is the inverse Laplace transform
\begin{equation}
\omega_h^{\mathrm{bos}}
=\mathscr{P}\!\!\int_{T-\I\infty}^{T+\I\infty}\frac{\dd s}{2\pi \I s}\,\e^{A_h s}\,\Xi_h(s),
\label{eq:omegah}
\end{equation}
where $T$ lies to the right of all poles of $\Xi_h$ and $\mathscr{P}$ denotes principal value.

As a side remark: Equation~\eqref{eq:Xi} has a direct physical interpretation.
For fixed $k$, the single-face factor $\lambda_h\,\tau^{(h)}_{k}(g_h)\e^{-sE_k}$ plays the role of a single-particle partition function.
More concretely, for the choice $E_k=k(k+2)$, which matches the energy spectrum of 2d SU(2) Yang--Mills theory on a disk, the single-face factor can be identified with the partition function of the Yang--Mills theory on a disk with a boundary condition determined by $k$ and the boundary holonomy $g_h$.
Thus $\Xi_h$ describes a grand-canonical ensemble of indistinguishable world-sheets carrying 2d gauge theory, stacked over the root face.
This statistical-mechanics viewpoint is an efficient way to extract the large-cutoff behavior of the face factor from the analytic structure of $\Xi_h$.

The key point is that the same condition maximizing statistical weight enforces a geometric constraint.
The bound $|{\tau^{(h)}_k(g_h)}|\le d_k^2$ is saturated only when the wedge holonomies reduce to SU(2) elements and the face holonomy is $\pm\mathbb{I}$:
\begin{equation}
    g_{ve}^{-1}g_{ve'}\in\SU,
    \qquad
    \overrightarrow{\prod_{(v;e,e')\subset\partial h}} g_{ve}^{-1}g_{ve'}=\pm\mathbb{I},
    \label{eq:suflat}
    \end{equation}
As $A_h$ grows, the inverse Laplace transform is dominated by residues at the poles of $\Xi_h(s)$ with largest real parts. These poles occur precisely on locus of $g_h$ satisfying \eqref{eq:suflat}.
In this way, the state sum for internal faces act as nonperturbative filters: generic $\Slc$ holonomies are suppressed, while the SU(2) subsector is enhanced.
This provides the origin of the path integral localization derived below.

\emph{Condensation spin and the UV scale.--} The poles of $\Xi_h(s)$ are given by
\begin{equation}
s_h(k,m,g_h)=\frac{\ln\!\bigl[\lambda_h\,\tau^{(h)}_{k}(g_h)\bigr]+2\pi \I m}{E_k},
\label{eq:poles}
\end{equation}
where $\ln(\cdot)$ is the principal logarithm.
Here $k\in\mathbb{Z}_{+}$ and $m\in\mathbb{Z}$.
The leading contribution to $\omega_h^{\rm bos}$ is governed by the global maximum of $\mathrm{Re}(s_h(k,m,g_h))$ and thus relates to a \emph{condensation spin} $k_0/2$ satisfying
\begin{equation}
\beta_{k_0}(\lambda_h)=\sup_{k\ge1}\beta_k(\lambda_h),\qquad \beta_k(\lambda_h)=\frac{\ln\!\bigl(\abs{\lambda_h}\,d_k^2\bigr)}{E_k}.
\label{eq:k0}
\end{equation}
The situation is directly analogous to Bose--Einstein condensation: beyond a critical fugacity the ensemble accumulates in the lowest available energy level, here labeled by $k_0$.
In geometric terms, the spin $k_0/2$ sets the typical quantum area scale selected by the state sum.

Indeed, evaluating Eq.~\eqref{eq:omegah} by residues at the dominant pole yields the leading behavior $\omega_h^{\rm bos}(A_h;g_h,\lambda_h)\propto [\lambda_h\,\tau^{(h)}_{k_0}(g_h)]^{A_h/E_{k_0}}$, up to a subexponential prefactor and corrections suppressed faster than any power of $A_h^{-1}$.
The exponent $p_h:=A_h/E_{k_0}\gg1$ can be read as the occupation number of faces in the condensate, in direct analogy with Bose--Einstein condensation: the cutoff $A_h$ plays the role of a total ``energy'' stored in $p_h$ quanta each of ``energy'' $E_{k_0}$. 
A large ratio $A_h/E_{k_0}$ plays a role analogous to the thermodynamic limit: it suppresses fluctuations away from the dominant contribution, and it is in favor of small $k_0$. In this sense the condensation mechanism selects a UV regime (with small typical area). The condensation spin $k_0/2$ controls the dominant microscopic scale.
Since in LQG spin labels quanta of area, condensation to small $k_0$ implies that a macroscopic area is obtained as a superposition of many microscopic quantum areas rather than by exciting a single large spin.
As $\abs{\lambda_h}$ decreases, the maximizer $k_0$ shifts to larger $k$ and eventually all $\beta_k(\lambda_h)\le0$ or close to zero, so the condensed regime disappears, and the stack sum reduces effectively to the spinfoam amplitude on the root 2-complex, corresponding to an infrared description.

\emph{Localization of stack amplitude.--}
The stack amplitude~\eqref{eq:stackamp} generated from a root complex $\mathcal{K}$ with internal faces $h$ and boundary faces $b$ can be written as
\begin{equation}
\mathscr{A}_{\mathcal{K}}
=\int \dd\Omega(g)\,
\prod_{h}\omega_h^{\mathrm{bos}}(g)\,
\prod_{b}\omega_b^{\mathrm{bos}}(g),
\label{eq:stackint}
\end{equation}
which is an integral over the bulk holonomies $g_{ve}$ with product $\Slc$ Haar measure $\dd\Omega(g)$, modulo gauge freedom. The factor $\omega_b^{\mathrm{bos}}$ is the amplitude associated to the stack of boundary faces.
By the leading behavior of $\omega_h^{\mathrm{bos}}$, the stack amplitude becomes a stationary-phase form, with the effective action $S(g)=\sum_h \frac{A_h}{E_{k_0}}\,\ln \left[{\tau^{(h)}_{k_0}(g_h)}/{d_{k_0}^2} \right]$ plus a positive constant.
The maximum $\mathrm{Re}(S)=0$ is reached if and only if the conditions \eqref{eq:suflat} hold on every internal face.
Moreover, the same condition implies $\delta_g S=0$. 
We define the submanifold $\mathcal{C}_{\mathrm{int}}$ as the locus in the space of $\mathrm{SL}(2,\mathbb{C})$ holonomies where the conditions \eqref{eq:suflat} are satisfied for all internal faces. The submanifold $\mathcal{C}_{\mathrm{int}}$ is called the critical manifold, since it contain all critical points of $S$ making dominant contribution to $\mathscr{A}_{\cal K}$. When $A_h/E_{k_0}$ are uniformly large, the stack amplitude localizes on $\mathcal{C}_{\mathrm{int}}$.

The structure of $\mathcal{C}_{\mathrm{int}}$ is transparent.
On each internal face Eq.~\eqref{eq:suflat} restricts all holonomies to SU(2) and fixes the face holonomy to $s_h\mathbb{I}$ with $s_h=\pm1$.
Collecting the signs $\{s_h\}$ labels disjoint sectors $\mathcal{C}_{\mathrm{int}}=\bigcup_{\{s_h\}}\mathcal{C}_{\mathrm{int}}^{\{s_h\}}$. 
The critical manifold $\mathcal{C}_{\mathrm{int}}$ contains the gauge freedom of the SU(2) holonomies.
After removing these gauge freedom, the remaining degrees of freedom are precisely the moduli of flat SU(2) connections (up to sign) on the internal 2-skeleton; their dimension is controlled by the fundamental group of the bulk subcomplex.
In this paper, we focus on the simply connected $\mathcal{K}$ relevant to the local UV continuum limit, then the moduli reduce to only discrete sign sectors.

\emph{Boundary blocks and triangulation independence.--} The boundary face amplitudes $\omega_b^{\rm bos}$ do not depend on gauge freedom, when they are evaluated on $\mathcal{C}_{\mathrm{int}}$, so their dependence on the bulk continuous degrees of freedom completely drops out: the contribution from each boundary face is constant on $\mathcal{C}_{\rm int}^{\{s_h\}}$, and they only depend on discrete sign data induced from the bulk.

Concretely, each stack of boundary faces $b$ acquires a label $\varsigma_{b}\in\{0,\pm1\}$ determined by the induced signs of holonomies along the internal edges in $\partial b$. 
Here, internal edges are edges that do not connect to the boundary.
The value $0$ corresponds to the special case that $b$ does not have any internal edge, then no sign dependence remains.
By the one-to-one correspondence between $b$ and boundary link $\mathfrak{l}$ of the root graph $\Gamma=\partial \mathcal{K}$, we identify $\varsigma_{\mathfrak{l}}=\varsigma_{b}$.
Collecting these labels gives $\bm{\varsigma}\in\{0,\pm1\}^{|\mathcal L|}$, where $|\mathcal L|$ is the number of links in $\Gamma$. The asymptotics of the stack amplitude becomes a finite sum
\begin{equation}
\mathscr{A}_{\mathcal{K}}=\sum_{\bm\varsigma} b_{\bm\varsigma}(\mathcal{K})\,B_{\bm\varsigma}\left[1+O(A^{-1})\right].
\label{eq:stackblocks}
\end{equation}
The \emph{boundary blocks} $B_{\bm\varsigma}$ equals $\int \prod_{(v,e_b)} \dd g_{v e_b}\prod_b\omega_b^{\rm bos}(g)$ restricted onto $\mathcal{C}_{\rm int}$ and depend only on boundary data ($\bm \varsigma$, boundary holonomies and the boundary couplings). All bulk degrees of freedom are absorbed into finitely many coefficients $b_{\bm\varsigma}(\mathcal{K})$.

Equation~\eqref{eq:stackblocks} already exhibits the UV reduction: for fixed boundary graph, the set $\{B_{\bm\varsigma}\}$ spans a finite-dimensional vector space of dimension $3^{|\mathcal{L}|}$.
When neglecting $O(A^{-1})$, every root complex $\mathcal{K}$ with boundary $\Gamma$ defines a vector $\mathscr{A}_{\mathcal{K}}$ in this space.
The crucial step is to return to the complete amplitude~\eqref{eq:complete}.
Substituting Eq.~\eqref{eq:stackblocks} yields that the complete amplitude is also a vector in the vector space
\begin{equation}
\mathscr{A}=\sum_{\bm\varsigma} b_{\bm\varsigma}\,B_{\bm\varsigma}\left[1+O(A^{-1})\right],
\label{eq:boundaryblocks}
\end{equation}
where the \emph{renormalized} coefficients
\begin{equation}
b_{\bm\varsigma}=\sum_{\mathcal{K}:\,\partial\mathcal{K}=\Gamma}c_{\mathcal{K}}\,b_{\bm\varsigma}(\mathcal{K})
\label{eq:bvarsigma}
\end{equation}
package all bulk degrees of freedom including microscopic triangulation dependence into finitely many numbers.
This is the \emph{nut} of the article:
\emph{at the leading order, the infinite ambiguity of summing over 2-complexes collapses to finitely many boundary coefficients $b_{\bm\varsigma}$.} 
Triangulation independence here plays the role of scale invariance in a background-independent setting, indicating that at leading order the theory sits at a fixed point of the full covariant LQG.

Several consequences follow immediately.
(i) The predictive content of the theory at the fixed point is controlled by a \emph{finite} set of parameters $\{b_{\bm\varsigma}\}$, despite the infinite microscopic parameter space spanned by $\{c_{\mathcal{K}},\lambda_f\}$.
(ii) The theory at the fix point is a topological theory without propagating bulk degree of freedom. The surviving degrees of freedom are edge-mode-like and encoded by $\bm\varsigma$ sectors.
The reduction of the bulk partition function to a linear combination of boundary blocks is characteristic of topological quantum field theories. A standard example is the relation between Chern-Simons theory and two-dimensional conformal blocks.
(iii) There are two standard continuum limit strategies in spinfoams: refining a complex versus summing over complexes \cite{Rovelli:2010qx,Asante:2022dnj,Oriti:2009wn}. Here they become equivalent, because at the level of stack amplitude, the leading order depends only on boundary blocks and not on bulk refinement details.

From an effective-field-theory perspective, the coefficients $b_{\bm\varsigma}$ in Eq.~\eqref{eq:boundaryblocks} play the role of renormalized couplings that parametrize the UV completion on the fixed boundary root graph.
Different microscopic choices $\{c_{\mathcal{K}}\}$ that yield the same finite set $\{b_{\bm\varsigma}\}$ are indistinguishable at short distances: bulk refinements change only subleading $O(A^{-1})$ terms.
Equivalently, specifying the UV theory reduces to choosing a vector in the finite-dimensional boundary-block space, rather than an infinite list of triangulation weights.
This turns the continuum problem into a finite matching problem: one can aim to determine $\{b_{\bm\varsigma}\}$ from a finite set of boundary observables and then predict the response to arbitrary bulk refinements in the UV regime.

\emph{Interpretation and scope.--}
In our analysis, the UV fixed point is identified operationally, instead of being derived from any RG equation.
In the large internal cutoff limit, the complete LQG amplitude becomes topological and triangulation independent. 
It provides a concrete notion of UV fixed-point behavior in a background-independent theory.
Understanding the RG behavior then becomes a question of determining which deformations are relevant and induce flows away from the fixed point toward the infrared (IR) regime.
This is in analogy with perturbative quantum field theories that expands around a UV fixed point, such as perturbative QCD at short distances, although the fix point here is topological, in contrast to the free-theory fix point in QCD.

From the broader physics perspective, the result provides a concrete mechanism of fixed-point in a background-independent setting.
The bosonic condensation to a small spin provides a dynamical selection of a microscopic area scale, while the large-cutoff limit produces a topological bulk that is insensitive to short-distance discretization.
Together they realize, in the spinfoam context, the RG intuition that a UV completion can be predictive even when the microscopic parameter space is infinite dimensional.
The boundary blocks provide the finite set of UV data that may be fixed by matching to physical observables.

From the viewpoint of asymptotic-safety ideas, the UV fixed point suggested here is not a perturbative Gaussian point but an emergent topological phase: local bulk propagation is absent at leading order, yet nontrivial boundary data and controlled deformations remain.
This suggests that RG universality in quantum gravity may be realized through phases, captured here by the boundary-block space and by the finite set of coefficients $\{b_{\bm\varsigma}\}$.

\emph{Conclusion.--}
We introduced permutation-invariant spin-network and spinfoam stacks to organize the sum over 2-complexes in Lorentzian covariant LQG.
The state-sum of bosonic stacked faces admit a reformulation as grand-canonical partition-function, which yields a condensation phenomena. The condensation spin $k_0/2$ controls the dominant microscopic area scale.
For uniformly large cutoff and small condensation spin, covariant LQG reduce to a topological and triangulation independent fix-point theory.
As a result, the complete amplitude at the fix point becomes a finite linear combination of boundary blocks, so infinite ambiguities of microscopic triangulations reduce to finitely many boundary coefficients.
A natural next step is to turn on the $O(A^{-1})$ perturbations that generate propagating bulk degrees of freedom and to classify which deformations are relevant and drive the RG flow toward the infrared. 
Another is to study the boundary blocks and connect the resulting coefficients to physical observables.

\vspace{1em}
\textbf{Acknowledgements:} This work was made possible through the support of the WOST, WithOut SpaceTime project (https://withoutspacetime.org), supported by Grant ID\# 63683 from the John Templeton Foundation (JTF). The opinions expressed in this work are those of the author(s) and do not necessarily reflect the views of the John Templeton Foundation. The author receives supports from Center for SpaceTime and the Quantum (CSTQ) and the National Science Foundation through grants PHY-2207763 and PHY-2512890.

\vspace{2em}

\textbf{Organization of the paper:} The remainder of this paper is organized as follows. Section~\ref{Permutation-invariant spin-network stacks} introduces the framework of permutation-invariant spin-network stacks. Section \ref{Spinfoam stacks} extends this construction to spacetime by defining spinfoam stack amplitudes and the complete amplitude. 
In Section \ref{State sum and bose gas}, the state sum of a stack of internal face is analyzed with the Laplace transform method and related to a grand canonical partition function of boson system. In Section \ref{2-dimensional gauge theory}, we relate the system to a Bose gas of 2-dimensional Yang-Mills theories. Finally, in Section \ref{Bose-Einstein condensate of quantum geometry}, we demonstrate the condensation of quantum geometry and derive the asymptotic behavior of the state sum for large cut-off.
Section \ref{Localization of stack amplitude} derives the localization of the stack amplitude onto a critical manifold in the large-cutoff limit. The parametrization of this critical manifold and the non-degeneracy is detailed in Sections \ref{Some explicit computations}. The associated Hessian matrix is discussed in \ref{Nondegenerate Hessian matrix}. Section \ref{Boundary blocks} establishes the main result: the reduction of the complete amplitude to a finite set of boundary blocks. Section \ref{Finiteness of boundary block} proves that the boundary block is a linear functional over the vector space spanned by spin-networks.

\newpage

\tableofcontents

\section{Permutation-invariant spin-network stacks}\label{Permutation-invariant spin-network stacks}

A \emph{root graph} is a closed, oriented graph $\G$ where any pair of nodes are connected by at most one link (the pair of nodes can be identical in the case of loop). We define the link multiplicity to be the number of links connecting a given pair of nodes. The link multiplicities of a root graph equal to one. The root graph serves as the basic combinatorial structure underlying our constructions. On such a graph, a spin-network state is specified by assigning to each oriented link $\fl$ a spin $j = k/2$ ($k \in \mathbb{Z}_+$) and to each node $\fn$ a normalized intertwiner $I_\fn$. 

Moving beyond a single root graph $\G$, one can generate an entire family $F(\G)$ of graphs by allowing the link multiplicity between pairs of neighboring nodes to increase---that is, by stacking multiple links between the same pair of nodes and aligning their orientations (see FIG.\ref{sn_stack}). Each graph in the family, denoted by $\G(\vec p)$, has the link multiplicities $\vec p=\{p_\fl\}_{\fl\subset \G}$ (the root graph is $\G=\G(\vec{1})$). For any spin-network state on $\G(\vec{p})$, each of the $p_\fl$ links between a pair of nodes carries its own spin, and the intertwiners become correspondingly higher-valent. A general superposition of these states, referred to as a \emph{spin-network stack}, is constructed by summing over all combinations of link multiplicities $\{p_\fl\}_{\fl\subset \G}$, spin assignments $\vec{k}$, and intertwiner labels. A conventional spin-network state is just a special case of a stack in which all superposition coefficients but one vanish. We restrict attention to abstract graphs with ordered links, without specifying any embedding in a manifold. Consequently, issues related to knotting or topological embedding are excluded from our considerations.

The graphs discussed above can be described formally as labelled multigraphs 

\begin{definition}
A labelled multigraph $\cg$ is a tuple $(\cn, \cl, \Sigma_\cl, \ell_\cl)$, where: 
\begin{itemize}

\item $\cn$ denotes the set of nodes. 

\item $\cl$ is a multiset consisting of ordered pairs of nodes, each corresponding to an oriented link. A pair of nodes may coincide, in which case the link is a loop. For each pair $\fl \in \cl$, the number of occurrences is the link multiplicity $p_\fl$. All stacked links between the same pair of nodes are oriented identically.

\item $\Sigma_\cl$ is a finite sets serving as the alphabets of link labels, satisfying $|\Sigma_\cl|=|\cl|$.

\item $\ell_\cl: \cl \rightarrow \Sigma_\cl$ is a bijective map that assigns a unique label to each link, with the convention that stacked links corresponding to the same pair $\fl\in\cl$ are arranged consecutively in the ordering.

\end{itemize}
A root graph $\G$ is a labelled multigraph with $\cl$ being a set, i.e. $p_\fl=1$.
\end{definition}

Here, multiset allows for multiple instances for each of its elements, e.g. $\{\fl,\fl,\fl,\fl',\fl',\cdots\}$, and the multiset is well-ordered by the map $\ell_\cl$. We use the pair $(\fl, i)$ to label the $i$-th stacked link on the root link $\fl\subset \G$. The label $i=1,\cdots,p_\fl$ corresponds to the ordering to the stacked links. The set of multiplicities $\vec{p}=\{p_\fl\}_{\fl}$ is well-ordered according to the ordering of $\fl$. 


We define an map $\Phi$ from a labelled multigraph $\cg$ to a pair $(\G,\vec{p})$: its root graph $\G$ is obtained by projecting the multiset of links $\cl(\cg)$ onto a set $\cl(\G)$ comprising a single representative for each group of stacked links. Specifically, for each class of stacked links in $\cl$, the projection selects the first ordered link as the root link. The corresponding labelling map $\ell_{\cl(\G)}$ are inherited so that the ordering match those of the original $\cg$. For example, this projection maps $\{\fl,\fl,\fl,\fl',\fl',\cdots\}$ to the well-ordered set $\{\fl,\fl',\cdots\}$. 

Conversely, given any pair $(\G,\,\vec{p})$, the preimage $\mathfrak{I}_{\G,\vec p} = \Phi^{-1}(\G, \vec{p})$ consists of all labelled multigraphs $\cg$ that differ only in their choice of link labels---that is, all possible orderings of the stacked links. To construct an element $\cg \in \mathfrak{I}_{\G,\vec p}$, start from the root graph $\G$ and replace each root link $\fl$ with $p_\fl$ copies in the multiset $\cl(\cg)$; the ordering of these copies within the multiset is ambiguous. Thus, $\mathfrak{I}_{\G,\vec p}$ enumerates all distinct ways of ordering the stacked links. By regarding $\mathfrak{I}_{\G,\vec p}$ as an equivalence class and selecting a distinguished representative $\cg = \G(\vec{p})$ via a fixed ordering, $\Phi$ establishes a bijection between the set of pairs $(\G, \vec{p})$ and the set of these ordered representatives $\cg = \G(\vec{p})$.

We define the kinematical Hilbert space $\ch_{\rm Kin}$ as the direct sum over all labelled multigraphs:
\be
\ch_{\rm Kin} = \bigoplus_{\cg} \ch_{\cg},\label{Hkin}
\ee
where $\ch_{\cg}$ denotes the Hilbert space of spin-network states (with nonzero spins) defined on a labelled multigraph $\cg$. Each multigraph possesses a choice of labelling of its links, and the labelling is not summed over in \eqref{Hkin}. By the map $\Phi$, each $\cg$ is identified to the pair $(\G,\vec p)$, i.e. $\cg = \G(\vec{p})$.
Therefore, $\ch_{\rm Kin}$ admits the following decomposition
\be
\ch_{\rm Kin} = \bigoplus_{\G} \ch_{\G,{\rm st}},\qquad \ch_{\G,{\rm st}}=\bigoplus_{\vec p} \ch_{\G(\vec p)}.
\ee
Here $\ch_{\G,{\rm st}}$ denote the Hilbert space of all spin-network stack states on a given root graph $\G$.  

Geometrically, a spin-network state on the root graph represents the quantum geometry of a cellular decomposition of a spatial slice $\Sigma$, with each intertwiner encoding a flat-faced polyhedron whose number of faces is given by the node's valence. Stack states generalize this picture: stacked links divide a flat face into smaller faces, each with (in general) non-parallel three-dimensional normals, so that the stack encodes a quantum superposition of cellular geometries with arbitrarily discretized faces \cite{Han:2019emp}.

\begin{figure}[t]
\centering
\includegraphics[width=0.7\textwidth]{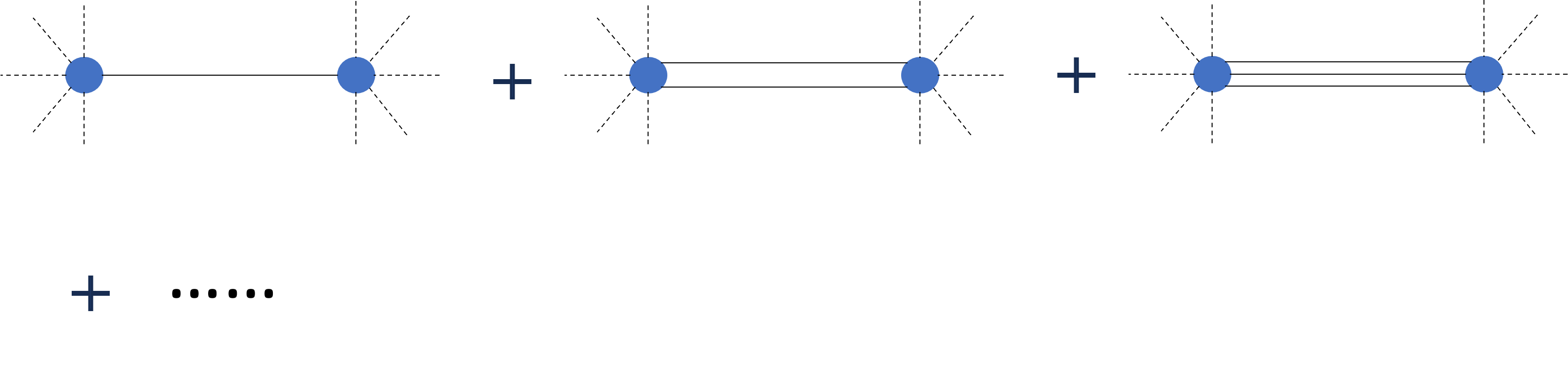}
\caption{The spin-network stack.}
\label{sn_stack}
\end{figure}

The graph $\G(\vec p)$ possesses a symmetry corresponding to permutations of the stacked links at each root link $\fl\subset\G$. Specifically, at every root link $\fl$, the permutation group $S_{p_\fl}$ acts by exchanging the $p_\fl$ identical links: given any spin-network state $\psi_{\G(\vec{p})}$ on the graph, we denote by $\sig_{ij}(\fl)\in S_{p_\fl}$ the permutation exchanging a pair of stacked links $i$ and $j$ at $\fl$, the unitary representation $U_{\sig_{ij}(\fl)}$ of $\sig_{ij}(\fl)$ acting on $\psi_{\G(\vec{p})}$ is given by interchanging the holonomies of the links in the entries
\be
U_{\sig_{ij}(\fl)}\psi_{\G(\vec{p})}\lt(\cdots,H_{\fl}^{(i)},\cdots, H_{\fl}^{(j)},\cdots\rt):=\psi_{\G(\vec{p})}\lt(\cdots,H_{\fl}^{(j)},\cdots, H_{\fl}^{(i)},\cdots\rt).\label{defUsigmaPermut}
\ee
Equivalently, the action exchanges the spins associated to the holonomies $H_{\fl}^{(i)}$ and $H_{\fl}^{(j)}$ and transposes each connected intertwiner. Indeed, if we consider a state where the $i$-th and $j$-th stacked links are colored by spins $k$ and $k'$ respectively, ignoring irrelevant ingredients in the spin-network and following \eqref{defUsigmaPermut},
\be
U_{\sig_{ij}}:  (I_{1})_{ab} D^{k}(H_\fl^{(i)})^{a}_{\ \a } D^{k'}(H_\fl^{(j)})^{b}_{\ \b} (I_{2})^{\a\b} &\mapsto& (I_{1})_{ab}  D^{k}(H_\fl^{(j)})^{a}_{\ \a } D^{k'}(H_\fl^{(i)})^{b}_{\ \b} (I_{2})^{\a\b} = (I_{1})_{ba} D^{k'}(H_\fl^{(i)})^{a}_{\ \a}D^{k}(H_\fl^{(j)})^{b}_{\ \b } (I_{2})^{\b\a}\nonumber\\
&=&(I_{1}^T)_{ab} D^{k'}(H_\fl^{(i)})^{a}_{\ \a}D^{k}(H_\fl^{(j)})^{b}_{\ \b } (I_{2}^T)^{\a\b}.\nonumber
\ee
In the result, the $i$-th and $j$-th stacked links are colored by spins $k'$ and $k$ respectively, while the intertwiners are transposed with respect to the corresponding slots.
In terms of Dirac-ket notation, the action can be written as
\be
U_{\sig_{ij}}\Big|\psi_{\G(p);k,k';I_1,I_2}\Big\rangle=\lt|\psi_{\G(p);k',k;I_1^T,I_2^T}\rt\rangle.
\ee
	 
The permutations $U_\sigma$ extend unitarily to entire $\ch_{\rm Kin}$. Intuitively, this permutation symmetry would be a discrete subgroup within the spatial diffeomorphism group if these graphs were embedded in a 3d spatial manifold\footnote{When the spatial manifold is 2-dimensional (in (2+1)-dimensional gravity), the permutation symmetry does not relate to spatial diffeomorphisms. Therefore, if we apply the formalism in this paper to (2+1)-dimensional gravity, the stacked links and faces are distinguishable}. Viewing the permutations as a part of gauge redundancy motivates us to perform a group average and project the Hilbert space ${\ch}_{\G(\vec p)}$ onto its subspace $\wt{\ch}_{\G(\vec p)}$ of permutation-invariant states\footnote{
	See also \cite{Baytas:2026haw}. A similar idea is in \cite{Broukal:2025cgx} for permutations of nodes.
}. We define the permutation-invariant projection $\mathbb{P}$ on any spin-network state $\psi_{\G(\vec{p})}\in \ch_{\G(\vec p)}$ by average over all permutations of stacked links (at each root link)
\be
\mathbb{P}\psi_{\G(\vec{p})}:= \frac{1}{|G|}\sum_{\sig\in G}U_\sig\psi_{\G(\vec{p})}\equiv \Psi_{\G(\vec{p})},\qquad  G= \times_\fl S_{p_\fl},\qquad U_\sig=\prod_\fl U_{\sig(\fl)}.
\ee
where $|G|$ denotes the number of elements in $G$. The resulting permutation invariant spin-network state $\Psi_{\G(\vec{p})}$ spans the subspace $\wt{\ch}_{\G(\vec{p})}=\mathbb{P}{\ch}_{\G(\vec{p})}$. A linear combination of $\Psi_{\G(\vec{p})}$ over $\vec{p}$ is referred to as permutation-invariant spin-network stacks. Summing over all root graphs, we construct the total Hilbert space of permutation-invariant spin-network stacks as
\be
\ch_{\rm Sym} =  \bigoplus_{\G} \wt{\ch}_{\G,\rm{st}},\qquad \wt{\ch}_{\G,\rm{st}}=\bigoplus_{\vec{p}}\wt{\ch}_{\G(\vec{p})}.
\ee

For the permutation-invariant spin-networks $\wt\Psi_{\G(\vec{p})}$ (normalization of $\Psi_{\G(\vec{p})}$) forming an orthonormal basis, spin labels on stacked links at each $\fl$ are unordered--the basis states are specified by tuples of spins $(k_1,\dots,k_{p_\fl})$, with all permutations identified. To designate each basis vector uniquely, one may adopt the convention $k_1\leq k_2 \leq \cdots \leq k_{p_\fl}$ for every $\fl$ (here $k_i$ colors the $i$-th stacked link), so that every configuration appears once in the basis.

\begin{definition}
Let $\ch_{\G(\vec p)}$ be the Hilbert space of spin-networks on the graph $\G(\vec p)$ with link multiplicities $\{p_\fl\}$. We define the subspace of {\bf ordered seed states}, denoted $\ch_{\G(\vec p),{\rm sd}} \subset \ch_{\G(\vec p)}$, as the span of spin-network states $\psi$ satisfying two conditions:

1.  Spin Ordering: For every root link $\fl$, the assigned spins satisfy $1\leq k_1 \leq k_2 \leq \cdots \leq k_{p_\fl}$.

2.  Stabilizer Symmetry: If $G_{\vec{k}} \subset \times_\fl S_{p_\fl}$ denote the stabilizer subgroup leaving the tuple of spin labels $\vec{k}$ invariant, then $U_\sigma \psi = \psi$ for all $\sigma \in G_{\vec{k}}$.

\end{definition}

We denote by $K^0$ the set of spin configurations satisfying the spin ordering $1\leq k_1 \leq k_2 \leq \cdots \leq k_{p_\fl}$ on all $\fl\subset\G$. The Hilbert space $\mathcal{H}_{\G(\vec p),{\rm sd}}$ is a direct sum over $\vec{k} \in K^0$: $\mathcal{H}_{\G(\vec p),{\rm sd}}=\bigoplus_{\vec{k}\in K^0 }\mathcal{H}_{\vec{k}}^{G_{\vec{k}}}$

\begin{lemma}

For any $\psi_{\vec{k}}\in \ch_{\vec k}^{G_{\vec k}}$, we define the linear map $\mathbb{U}: \ch_{\G(\vec p),{\rm sd}}\to \wt{\ch}_{\G(\vec p)}$ by
\be
\mathbb{U}\psi_{\vec k}=\sqrt{\frac{|G|}{|G_{\vec k}|}}\mathbb{P}\psi_{\vec k} .
\ee
The map $\mathbb{U}$ is unitary. These two Hilbert spaces are unitarily equivalent.


\end{lemma}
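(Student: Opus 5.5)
We will show that $\mathbb{U}$ is isometric on each summand $\ch_{\vec k}^{G_{\vec k}}$, that the images of distinct summands are mutually orthogonal, and that the images together span $\wt{\ch}_{\G(\vec p)}$.

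\textbf{Orbit decomposition.} The group $G=\times_\fl S_{p_\fl}$ acts on spin configurations $\vec k$, with $\vec k$ ranging over all configurations with positive spins on $\G(\vec p)$, not only ordered ones. The space $\ch_{\G(\vec p)}$ splits into the orthogonal direct sum $\bigoplus_{\vec k}\ch_{\vec k}$, since spin-network states with different spin labels are orthogonal. By \eqref{defUsigmaPermut}, $U_\sig$ maps $\ch_{\vec k}$ unitarily onto $\ch_{\sig\vec k}$, acting on the intertwiners by transposition. Each $G$-orbit of spin configurations contains exactly one ordered representative $\vec k\in K^0$. The stabilizer of that representative is $G_{\vec k}$, and the orbit has $|G|/|G_{\vec k}|$ elements.

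\textbf{Isometry.} Take $\psi\in\ch_{\vec k}^{G_{\vec k}}$. Group the sum defining $\mathbb{P}\psi$ by cosets $\sig G_{\vec k}$. Because $U_\tau\psi=\psi$ for $\tau\in G_{\vec k}$, we get
\be
\mathbb{P}\psi=\frac{|G_{\vec k}|}{|G|}\sum_{[\sig]\in G/G_{\vec k}}U_\sig\psi .
\ee
The terms $U_\sig\psi$ lie in the distinct, mutually orthogonal spaces $\ch_{\sig\vec k}$, and each has norm $\|\psi\|$. Hence
\be
\|\mathbb{P}\psi\|^2=\Big(\frac{|G_{\vec k}|}{|G|}\Big)^2\frac{|G|}{|G_{\vec k}|}\|\psi\|^2=\frac{|G_{\vec k}|}{|G|}\|\psi\|^2,
\ee
so $\|\mathbb{U}\psi\|=\|\psi\|$. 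For $\vec k\neq\vec k'$ in $K^0$ the orbits are disjoint. The images therefore live in orthogonal subspaces $\bigoplus_{\sig}\ch_{\sig\vec k}$, so $\mathbb{U}$ is an isometry on the whole direct sum.

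\textbf{Surjectivity.} We expect this to be the main point requiring care. Take any $\Psi\in\wt{\ch}_{\G(\vec p)}$ and decompose it as $\Psi=\sum_{\vec k}\Psi_{\vec k}$. Since $\Psi$ is $G$-invariant, it satisfies $\Psi_{\sig\vec k}=U_\sig\Psi_{\vec k}$. In particular, for ordered $\vec k$ the component $\Psi_{\vec k}$ is invariant under $G_{\vec k}$, so $\Psi_{\vec k}\in\ch_{\vec k}^{G_{\vec k}}$.

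The invariance also shows that $\Psi$ restricted to each orbit equals $\sum_{[\sig]}U_\sig\Psi_{\vec k}=\frac{|G|}{|G_{\vec k}|}\mathbb{P}\Psi_{\vec k}$. Therefore
\be
\Psi=\mathbb{U}\Big(\sum_{\vec k\in K^0}\sqrt{\tfrac{|G|}{|G_{\vec k}|}}\,\Psi_{\vec k}\Big).
\ee
The subtlety is that $U_\sig$ must be well defined on cosets. This holds exactly because the stabilizer condition was imposed on the seed states.

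If the spin sums are infinite, one argues on the algebraic span and extends by continuity. The map is an isometry with dense range, hence it is unitary.
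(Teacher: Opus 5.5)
Your proposal is correct and follows essentially the same route as the paper: surjectivity uses the same orbit decomposition and the same preimage $\sqrt{|G|/|G_{\vec k}|}\,\Psi_{\vec k}$ of the ordered components, and your isometry step is the norm version of the paper's inner-product computation. The only difference is that the paper uses $\mathbb{P}^\dagger\mathbb{P}=\mathbb{P}$ and restricts the group average to the stabilizer $G_{\vec k}$, whereas you expand $\mathbb{P}\psi$ over cosets and apply Pythagoras in the mutually orthogonal spaces $\mathcal{H}_{\sigma\vec k}$, which also makes explicit the orthogonality between distinct orbits that the paper uses tacitly.
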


\begin{proof} It suffices to prove $\mathbb{U}$ is surjective and preserves the inner product. We denote by $K^0$ the set of spin configurations satisfying the spin ordering $k_1 \leq k_2 \leq \cdots \leq k_{p_\fl}$ on all $\fl\subset\G$. 


Surjectivity: For any $\Psi \in \widetilde{\mathcal{H}}_{\G(\vec p)}$, we want to construct a preimage $\psi \in \mathcal{H}_{\G(\vec p),{\rm sd}}$ such that $\Psi= \mathbb{U}\psi$. The spin-network decomposition of $\Psi \in \widetilde{\mathcal{H}}_{\G(\vec p)}\subset {\mathcal{H}}_{\G(\vec p)}$ gives $ \Psi = \sum_{\vec{k}'} \Psi_{\vec{k}'} $ with $\Psi_{\vec{k}'} \in \mathcal{H}_{\vec{k}'}$, where $\mathcal{H}_{\vec{k}'}$ is the subspace of $\ch_{\G(\vec{p})}=\oplus_{\vec k'}\mathcal{H}_{\vec{k}'}$. Due to invariance $U_\sigma \Psi = \Psi$, we have $\sum_{\vec{k}'} U_\sigma \Psi_{\vec{k}'} = \sum_{\vec{k}'} \Psi_{\vec{k}'} $. Comparing components in the subspace $\mathcal{H}_{\sigma \cdot \vec{k}'}$, we find $ U_\sigma \Psi_{\vec{k}'} = \Psi_{\sigma \cdot \vec{k}'} $. Therefore, the set of $\vec{k}'$ in the sum $ \Psi = \sum_{\vec{k}'} \Psi_{\vec{k}'} $ is invariant under $G$ and decomposed into orbits of $G$, so $\Psi$ decomposes into invariant vectors within each orbit of spin configurations. It suffices to consider $\Psi=\sum_{\vec{k}'\in \mathcal{O}_{\vec{k}^0}} \Psi_{\vec{k}'} $ associated with a single orbit $\mathcal{O}_{\vec{k}^0}$ generated by an ordered configuration $\vec{k}^0 \in {K}^0$. We define the candidate seed state $\psi$ as the projection of $\Psi$ onto the ordered sector followed by a rescaling:
\be 
\psi \equiv \sqrt{\frac{|G|}{|G_{\vec k^0}|}} \Psi_{\vec{k}^0} 
\ee
First, $\psi$ is in the seed space $\mathcal{H}_{\G(\vec p),\mathrm{sd}}$, because $\psi \in \mathcal{H}_{\vec{k}^0}$ has ordered spins and satisfies $ U_\tau \Psi_{\vec{k}^0} = \Psi_{\tau \cdot \vec{k}^0} = \Psi_{\vec{k}^0} $ for $\tau \in G_{\vec{k}^0}$. Second, we verify $\mathbb{U}\psi = \Psi$: Let $R$ be a set of representatives for the left cosets $G / G_{\vec{k}^0}$. Then any $\sigma \in G$ can be written uniquely as $\sigma = \rho \tau$ with $\rho \in R$ and $\tau \in G_{\vec{k}^0}$. 
\be
\mathbb{U}\psi = \sqrt{\frac{|G|}{|G_{\vec k^0}|}} \frac{1}{|G|}\sum_{\sigma \in G} U_\sigma \psi = \frac{1}{|G_{\vec{k}^0}|} \sum_{\sigma \in G} U_\sigma \Psi_{\vec{k}^0} =\frac{1}{|G_{\vec{k}^0}|} \sum_{\rho \in R} \sum_{\tau \in G_{\vec{k}^0}} U_\rho U_\tau \Psi_{\vec{k}^0}=\sum_{\rho \in R} U_\rho \Psi_{\vec{k}^0}=\sum_{\rho \in R} \Psi_{\rho \cdot \vec{k}^0}
\ee
As $\rho \cdot \vec{k}^0$ runs through all distinct spin configurations in the orbit $\mathcal{O}_{\vec{k}^0}$ exactly once, $ \mathbb{U}\psi = \sum_{\vec{k}' \in \mathcal{O}_{\vec{k}^0}} \Psi_{\vec{k}'} = \Psi $. Thus, $\mathbb{U}$ is surjective.


Preserving inner product: For any $\psi=\sum_{\vec k\in K^0}\psi_{\vec{k}}$ and $\phi=\sum_{\vec k\in K^0}\phi_{\vec{k}}$ in $\ch_{\G(\vec p),{\rm sd}}$
\be
\langle \mathbb{U} \psi,\mathbb{U}\phi\rangle=\sum_{\vec k\in K^0}\frac{|G|}{|G_{\vec k}|}\langle \mathbb{P}\psi_{\vec k} ,\mathbb{P}\phi_{\vec k} \rangle=\sum_{\vec k\in K^0}\frac{|G|}{|G_{\vec k}|}\langle \psi_{\vec k} ,\mathbb{P}\phi_{\vec k} \rangle.
\ee
For each term in the sum,
\begin{equation}
\langle \psi_{\vec k}, \mathbb{P} \phi_{\vec k} \rangle = \frac{1}{|G|} \sum_{\sigma \in G} \langle \psi_{\vec k}, U_\sigma \phi_{\vec k} \rangle.
\end{equation}
Since $\psi_{\vec k}$ and $\phi_{\vec k}$ belong to the ordered sector with spins $\vec{k}$, and $U_\sigma$ maps spins $\vec{k}$ to $\sigma \cdot \vec{k}$, the inner product $\langle \psi_{\vec k}, U_\sigma \phi_{\vec k} \rangle$ vanishes unless $\sigma \cdot \vec{k} = \vec{k}$ (i.e., $\sigma \in G_{\vec k}$).
Thus, the sum restricts to the stabilizer $G_{\vec k}$. Since $\phi_{\vec k}$ is a seed state (invariant under $G_{\vec k}$), $U_\sigma \phi_{\vec k} = \phi_{\vec k}$ for $\sigma \in G_{\vec k}$.
 \begin{equation}
    \frac{1}{|G|} \sum_{\sigma \in G_{\vec k}} \langle \psi_{\vec k}, \phi_{\vec k} \rangle = \frac{|G_{\vec k}|}{|G|} \langle \psi_{\vec k}, \phi_{\vec k} \rangle.
\end{equation}
As a result,
\be
\langle \mathbb{U} \psi,\mathbb{U}\phi\rangle=\sum_{\vec k\in K^0}\langle \psi_{\vec k} ,\phi_{\vec k} \rangle=\langle \psi,\phi\rangle.
\ee
Therefore $\mathbb{U}$ is unitary.
\end{proof}

We have established the identification $\ch_{\G(\vec p),{\rm sd}} \cong \wt{\ch}_{\G(\vec p)}$ via the isomorphism $\mathbb{U}$. This allows any spin-network state $\psi_{\G(\vec{p})}\in \ch_{\G(\vec p),{\rm sd}}$ to serve as a representative for the corresponding permutation-invariant state $\wt\Psi_{\G(\vec{p})} = \mathbb{U}\psi_{\G(\vec{p})} \in \wt{\ch}_{\G(\vec p)}$. By construction, $\psi_{\G(\vec{p})}$ is supported only on spin assignments $k_1\leq k_2\leq \cdots \leq k_{p_\fl}$ for each root link $\fl$, and can be represented a spin-network function of holonomies:
\be
\langle \vec{H}\mid \psi_{\G(\vec{p})}\rangle = \psi_{\G(\vec{p})}(\vec{H}),\qquad \vec{H} = \{H_\fl^{(i)}\}_{\fl\subset\G,\, i=1,\ldots,p_\fl},
\ee
where $H_\fl^{(i)}\in\mathrm{SU}(2)$ denotes the holonomy assigned to the $i$-th stacked link along $\fl$. Extending the isomorphism $\mathbb{U}$ to entire $\ch_{\rm Sym}$, we obtain the following representation of $\ch_{\rm Sym}$ in terms of ordered seed states:
\be
\ch_{\rm Sym} \cong  \bigoplus_{\G} \wt{\ch}_{\G,\rm{st}},\qquad \wt{\ch}_{\G,\rm{st}}\cong \bigoplus_{\vec{p}}{\ch}_{\G(\vec{p}),\mathrm{sd}}.\label{symRepSd}
\ee

\section{Spinfoam stacks}\label{Spinfoam stacks}

In what follows, we will develop a spinfoam formalism naturally associated with $\ch_{\rm Sym}$, as distinct from the spinfoam models of \cite{Han:2025emp,spinfoamstack} which naturally act on $\ch_{\rm Kin}$.

Given that spin-network stacks are well-defined in the Hilbert space and have interesting geometric interpretations, it is natural to expect that their covariant dynamics should be properly incorporated within spinfoam theory. In the spinfoam framework, a spinfoam is understood as a covariant history of a spin-network: starting from a three-dimensional spin-network, its links $\fl$ and nodes $\fn$ evolve into faces $f$ and edges $e$ in four spacetime dimensions. These faces and edges are respectively assigned spins $j_f = k_f/2$ and intertwiners $I_e$, mirroring the assignments on the initial spin-network. Conversely, slicing a spinfoam produces a spin-network state on the corresponding boundary.

In general, the faces and edges form a 2-complex, which underlies the definition of the spinfoam amplitude. Most of the studies on spinfoams focus on a fixed 2-complex, resulting in amplitudes and predictions that are sensitive to this choice. To develop a more complete and robust spinfoam formulation, however, one should seek amplitudes that are independent of the particular 2-complex used. This suggests summing over all possible 2-complexes, in line with the structure of LQG's Hilbert space, where generic states are superpositions over spin-networks on different graphs.

This motivation leads to extending the idea of stacks to the spacetime (covariant) setting. A spinfoam stack is defined as a sum over spinfoams corresponding to a family of 2-complexes, with the property that intersecting any spatial slice yields a spin-network stack. The amplitude for such a stack--a \emph{stack amplitude}--is the sum of the spinfoam amplitudes over all 2-complexes in this family. More concretely, just as a spin-network stack is assembled by stacking links onto a root graph $\G$, a spinfoam stack arises by stacking faces onto a root 2-complex $\ck$.

\begin{definition}
A labelled multi-2-complex is a tuple $(\mathcal{G}, \mathcal{F}, \Sigma_{\mathcal{F}}, \ell_{\mathcal{F}})$, where:

\begin{itemize}

\item  $\mathcal{G}=(V,E)$ is a graph that serves as the 1-skeleton of the complex: $V$ is the set of vertices (0-cells). $E$ is the set of edges (1-cells).

\item  $\mathcal{F}$ is a multiset of oriented boundary semi-cycles representing the faces (2-cells) $f$. The semi-circle becomes a full circle $\partial f$ if $f$ is an internal face. An oriented boundary cycle along $\partial f$ is a finite sequence of edges $(e_1, e_2, \dots, e_k)$ from $E$. Each $e_i$ is endowed an orientation by $f$ such that the target vertex of $e_i$ is the source vertex of $e_{i+1}$, forming an oriented walk in $\mathcal{G}$. For a given boundary semi-cycle, the multiplicity $p_f$ is the number of distinct faces in $\mathcal{F}$ that share this exact oriented boundary cycle $\partial f$. These faces are the stacked faces. The stacked faces sharing the same boundary circle have the same orientation.

\item  $\Sigma_{\mathcal{F}}$ is a finite set serving as the alphabet of face labels, satisfying $|\Sigma_{\mathcal{F}}| = |\mathcal{F}|$.

\item   $\ell_{\mathcal{F}}: \mathcal{F} \rightarrow \Sigma_{\mathcal{F}}$ is a bijective map that assigns a unique label to each face, with the convention that the faces are ordered by their labels. The stacked faces sharing the same boundary circle are arranged consecutively in the ordering.

\end{itemize}

A root 2-complex $\mathcal{K}$ is a labelled multi-2-complex where $\mathcal{F}$ is a set rather than a multiset. This implies that for every boundary cycle $\partial f$, the multiplicity is $p_f = 1$. No two faces share the exact same boundary cycle.

\end{definition}

Given a root 2-complex $\ck$, the family $F(\ck)$ is generated by increasing the face multiplicities arbitrarily (see FIG.\ref{sf_stack}). A labelled multi-2-complex in $F(\ck)$ with face multiplicities $\vec{p}=\{p_f\}_f$ is denoted by $\ck(\vec{p})$. It is obtained by replacing each root face $f\subset \ck$ with $p_f$ copies in the multiset $\mathcal{F}$ of $\ck(\vec{p})$, followed by a choice of ordering. 

In the following, we use $v, e,$ and $f$ to denote vertices, edges, and faces of the root complex $\ck$. We distinguish internal faces $h$ from boundary faces $b$. The internal faces $h$ do not connect to $\partial \ck$. In $\ck(\vec p)\in F(\ck)$, the stacked faces are labelled by the pair $(f,i)$, $f=h$ or $b$, where $i = 1,\cdots,p_{f}$ corresponds to the ordering to the stacked faces. The 2-complex $\ck(\vec p)\in F(\ck)$ has the same set of edges and vertices as the root complex $\ck$.

The stack amplitude $\sa_\ck$ sums the spinfoam amplitudes on all complexes $\ck(\vec p)\in F(\ck)$. To organize the sum, each root face $f$ is assigned a coupling constant $\lambda_f\in\C$. The amplitude on a given complex $\ck(\vec p)\in F(\ck)$ is weighted by $\prod_{f\subset\ck} \lambda_f^{p_f}$. 
\be
\sa_\ck=\sum_{\vec p}\prod_{f\subset \ck}\l_f^{p_f}\sa \lt(\ck,\vec p\rt).\label{StackAmplitude}
\ee
The amplitude $\sa \left(\ck, \vec{p}\right)$ generalizes the EPRL-KKL spinfoam amplitude to ensure compatibility with permutation invariance.

\begin{figure}[t]
	\centering
	\includegraphics[width=0.6\textwidth]{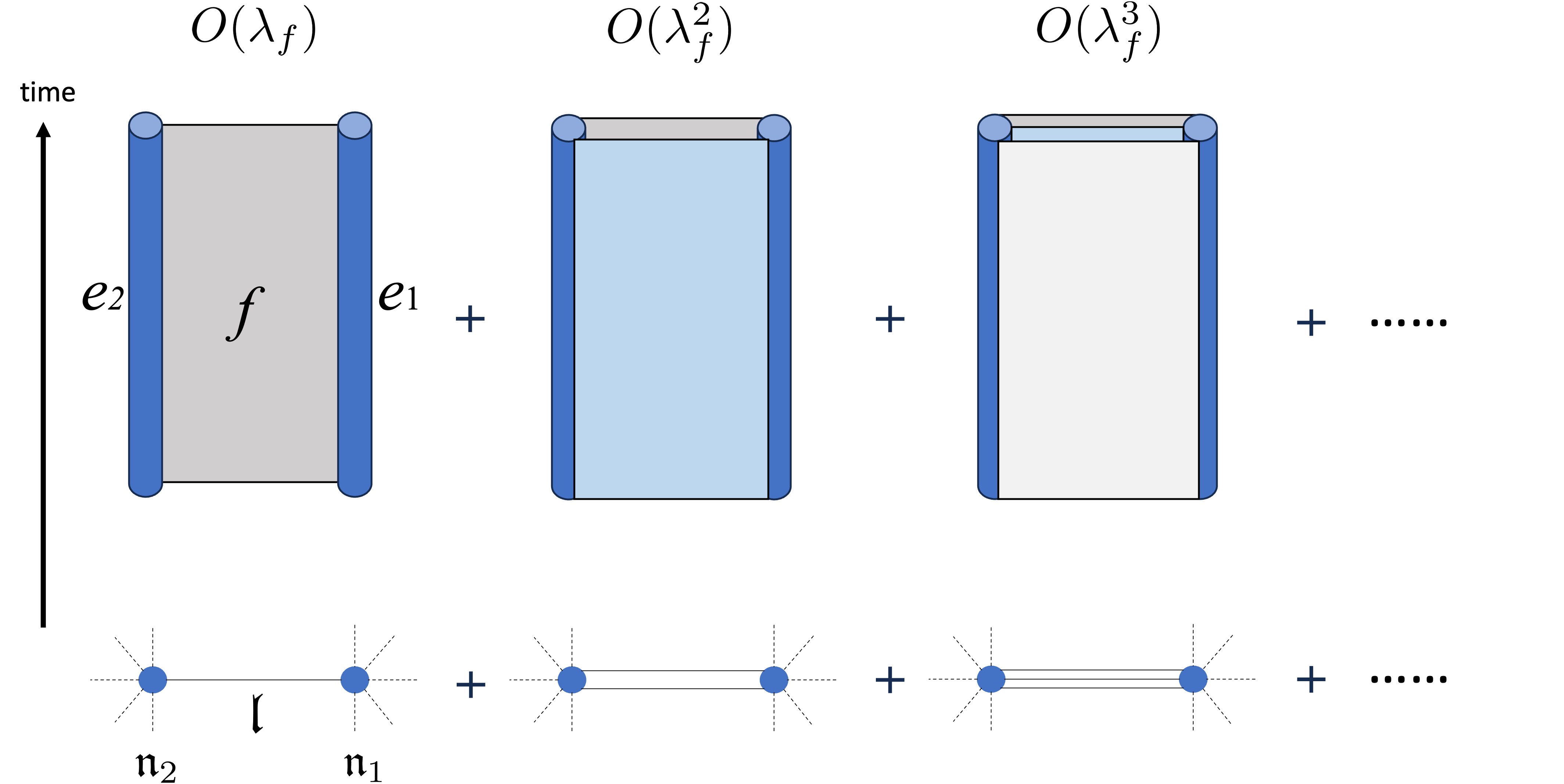}
	\caption{The spin-network stack evolves to the spinfoam stack: The spin-network link $\fl$ evolves to the spinfoam face $f$. the spin-network nodes $\fn_1,\fn_2$ evolve to the spinfoam edges $e_1,e_2$. The faces evolving from the dashed links are not shown on this figure. The leftmost complex is the root complex. The power of coupling constant $\l_f$ counts the number of stacked faces.}
	\label{sf_stack}
\end{figure}

In the EPRL-KKL formalism \cite{EPRL,KKL,Rovelli:2010vv,HanPI}, for any 2-complex such as $\ck(\vec p)$, each edge $e$ between a pair of vertices $v,v'$ is associated with a pair $g_{ve},g_{v'e}\in\Slc$, and each edge $e_b$ between a vertex $v$ and a boundary node $\fn$ is associated with $g_{ve_b}\in\Slc$. For any face, we introduced the short-hand notation $g_h=\{g_{ve}\}_{e\subset\partial h}$ and $g_b=\{g_{ve}\}_{e\subset\partial b}$ being the collection of group variables along the boundary of the face. Given any $f=h$ or $b$, all $p_f$ stacked faces share the same boundary edges and depend on the same set of group variables $g_{ve}\in\Slc$. The amplitude associated to a face $(f,i)\subset \ck(\vec p)$ colored by spin $k_i/2$ is 

\begin{itemize}
	\item Internal face $f=h$:
	\be
	\tau_{k_i}^{(h)}(g_h)= d_{k_i}\tr_{(k_i,\rho_i)}\lt[\overrightarrow{\prod_{v\in\partial h}}P_{k_i}g_{ve}^{-1}g_{ve'}P_{k_i}\rt],\qquad  \rho_i=\g (k_i+2),\label{zetakh}
	\ee
	where $d_k=k+1$ is the dimension of the spin-$k/2$ representation of SU(2). At any vertex $v$, the edges $e$ and $e'$ are, respectively, the incoming and outgoing edges according to the face's orientation. The trace $\tr_{(k_i,\rho_i)}$ is over the Hilbert space $\ch_{(k_i,\rho_i)}$ carrying the principal series $\Slc$ irrep, which can be decomposed into SU(2) irreps by $\ch_{(k_i,\rho_i)}=\oplus_{l=0}^\infty \ch_{k_i+2l}$. The orthogonal projection $P_{k_i}$ is from $\ch_{(k_i,\rho_i)}$ to the lowest SU(2) irrep subspace $\ch_{k_i}$ containing solutions of simplicity constraint. $\rho_i=\g (k_i+2)=2\g (j_i+1)$ follows from the convention in \cite{generalize} and is a convenient choice for the computation in Section \ref{Nondegenerate Hessian matrix}, although the result is not affected by a different choice such as $\rho_i=\g k_i$.
	
	\item Boundary face $f=b$:
	\be
	\tau_{k_i}^{(b)}(g_b)= d_{k_i}\tr_{(k_i,\rho_i)}\lt[ \lt(\overrightarrow{\prod_{v\in\partial b}}P_kg_{ve}^{-1}g_{ve'}P_k\rt)H_{\fl(b)}^{(i)}\rt]. \label{zetakb}
	\ee
	where the SU(2) holonomy $H_{\fl(b)}^{(i)}$ is along the boundary link $(\fl(b),i)$ of the face $(b,i)$. The orientation of the holonomy matches the orientation of the boundary link induced by the face.
	
\end{itemize}

\noindent	
The EPRL-KKL amplitude is defined by a product of $\tau_{k_i}^{(h)}(g_h)$ and $\tau_{k_i}^{(b)}(g_b)$ over all faces, followed by integrating over $g_{ve}\in\Slc$ for all pairs $(v,e)$ and summing over all spins coloring faces. In particular, at each root face $f$, we sum over $k_1, \cdots, k_{p_f}$ for $p_f$ stacked faces on $f$.

To ensure compatibility with $\ch_{\rm Sym}$, we require that for each face $f$, the $p_f$-tuple of spins $(k_1, \ldots, k_{p_f})$ assigned to the stacked faces satisfy the ordered condition $k_1 \leq \cdots \leq k_{p_f}$. This ordering mirrors the constraint imposed on ordered seed states in $\ch_{\G(\vec p),{\rm sd}}$. It turns out to be the only restriction needed for spinfoams. Consequently, the boundary spin data resulting from slicing a spinfoam stack correspond to states in $\ch_{\rm Sym}$ as given by the representation \eqref{symRepSd}. The summation over spins in the spinfoam amplitude $\sa \lt(\ck,\vec p\rt)$ runs over all assignments of spins to the stacked faces that obey the ordering condition for each root face $f$.

However, summing over an infinite range of spins leads to so-called bubble divergences in the spinfoam amplitudes; see, for example, \cite{Frisoni:2021vkv,Bonzom:2010zh,Bonzom:2013vha,Perez:2000fs}. In particular, these divergences generally occur in $\sa \lt(\ck,\vec p\rt)$ as stacked internal faces share boundaries, forming bubble-like closed surfaces where the corresponding spins become unbounded. To regularize the divergence, we introduce a cut-off scheme as follows. 

Let $\{E_k\}_{k\in\mathbb{Z}_+}$ be a sequence of positive integers, with $E_k$ defined to be monotonically increasing and to grow asymptotically as $k^N$ for some positive integer $N$ as $k\to\infty$. For each root face $f$, we define a \emph{cut-off function}
\be
\a_{p_f,\vec{k}} = \sum_{i=1}^{p_f} E_{k_i},\qquad \vec{k}=(k_1, \ldots, k_{p_f})
\ee
and impose that $\a_{p_f,\vec{k}}\leq A_f$ for some finite cut-off $A_f\gg1$. This requirement ensures that both the face multiplicities $p_f$ and the spins are bounded. In practice, we frequently choose $E_k = k(k+2)$, proportional to the quadratic Casimir, for explicit computations.

Using the formalism developed above, the stack amplitude $\sa_\ck$ can be written as a function of the boundary SU(2) holonomies $\vec{H} = \{H_{\fl(b)}^{(i)}\}_{b,i}$:
\be
\sa_{{\cal K}}\left(\vec{A},\vec{H},\vec{\l}\right)&=&\int\rmd\O(g)\prod_h\o_h^{\rm bos}\left(A_h;g_h,\l_h\right)\prod_{b}\o_b^{\rm bos}\left(A_b;g_b,\vec{H}_{\fl(b)},\l_b\right)\label{saKampli}\\
\o_h^{\rm bos}&=&\sum_{p_h=1}^{\infty}\l_h^{p_h}\sum_{1\leq k_1\leq \cdots\leq k_{p_h}}^\infty\prod_{i=1}^{p_h}\tau^{(h)}_{k_{i}}\left(g_h\right)\Theta\left(A_{h}-\alpha_{p_h,\vec{k}}\right),\label{omegaf20}\\
\o_b^{\rm bos}&=&\sum_{p_b=1}^{\infty}\l_b^{p_b}\sum_{1\leq k_1\leq \cdots\leq k_{p_b}}^\infty\prod_{i=1}^{p_b}\tau^{(b)}_{k_{i}}\left(g_b,{H}^{(i)}_{\fl(b)}\right)\Theta\left(A_{b}-\alpha_{p_b,\vec{k}}\right).\label{omegab20}
\ee
The function $\Theta(A_f-\alpha_{p_f,\vec k})$ imposes the cut-off on the summations over $p_f$ and $\vec k$, where $\Theta(x)$ is defined as $\Theta(x)=1$ for $x>0$, $\Theta(x)=0$ for $x<0$, and $\Theta(x)=1/2$ for $x=0$. 

The sum over 2-complexes $\ck(\vec{p})\in F(\ck)$ in $\sa_\ck$ is encoded in Eqs.\eqref{omegaf20} and \eqref{omegab20} through summations over $p_f$. The spinfoam amplitude $\sa(\ck, \vec{p})$ is obtained by extracting the coefficient of $\prod_f \lambda_f^{p_f}$ in the power series expansion of $\sa_\ck$.

We make several remarks about the stack amplitude $\sa_\ck$:

\begin{itemize}

\item If the coupling constants $\l_f,\l_b$ satisfy some consistent relations between different complexes, the stack amplitude is consistent under cut and gluing by the same computation as in \cite{face,Han:2025emp}: If we slice $\ck$ into two root complexes $\ck_1$ and $\ck_2$, the consistency of stack amplitudes under cut and gluing is 
\be
\int \rmd \vec H \sa_{\ck_1}(\vec H)\sa_{\ck_2}(\vec H)=\sa_{\ck},\qquad \ck=\ck_1\cup \ck_2.\label{cylconsist}
\ee
The SU(2) holonomies $\vec H$ are along the stacked links where $\ck_1,\ck_2$ are glued back to $\ck$. We denote by $\rmd \vec H $ the product Haar measure. For any faces $f\subset\ck$ being sliced into faces $b_1\subset\ck_1$ and $b_2\subset\ck_2$, the consistency relation between coupling constants $\l_f$ on $\ck$ and $\l_{b_1},\l_{b_2}$ on $\ck_1,\ck_2$ is $\l_f=\l_{b_1}\l_{b_2}$. 
When this consistency relation for the coupling constants is satisfied, the stack amplitudes obey the consistency condition \eqref{cylconsist}. While these relations help clarify the interpretation of spinfoams as the histories of spin-network stacks through slicing, they are not required for the fixed point analysis in this work.

\item For any root face $f=h$ or $b$, consider a group of stacked faces that share the same spin value. The functions $\tau_{k_i}^{(b)}$ depends on $i$ only via $k_i$ and $H^{(i)}$, exchanging $H^{(i)}$ and $H^{(j)}$ with $k_i=k_j$ leaves the product $\prod_{i=1}^{p_b}\tau_{k_i}^{(b)}$ invariant. More generally, for any assignment of spins $\vec{k}$ to the stacked boundary faces, there exists a stabilizer subgroup $G_{\vec{k}}$ consisting of all permutations that preserve the spin profile $\vec{k}$. The partial amplitudes with fixed boundary $\vec{k}$ in $\sa_{\mathcal{K}}$ are invariant under the action of $G_{\vec{k}}$. This property reflects the second condition $U_\sigma \psi = \psi$, $\sig\in G_{\vec{k}}$, in the definition of ordered seed states. As a result, the boundary data of $\sa_{\mathcal{K}}$, or the boundary data obtained by slicing $\sa_{\mathcal{K}}$, match with the spin-network stack states in $\ch_{\rm Sym}$ specified by the representation \eqref{symRepSd}.

\item The stack amplitude $\sa_\ck$ constructed above is in the holonomy representation, where the boundary state is the generalized eigenstate of holonomy operators. For a generic boundary state $\psi\in\ch_{\rm Sym}$, the associated stack amplitude is given by
\be
\sa_\ck[\psi]:=\int \rmd \vec{H} \sa_\ck (\vec{H})^* \psi (\vec{H}),
\ee
where $\rmd \vec{H}$ is the product SU(2) Haar measure, and $\psi (\vec{H})$ is a linear combination of ordered seed states, according to the representation \eqref{symRepSd}.

\item The integrand $\prod_h\o_h^{\rm bos}\prod_b\o_b^{\rm bos}$ in $\sa_\ck$ is invariant under a set of continuous gauge transformations:
\be
g_{ve}\to x_v g_{ve} u_e,\qquad x_v\in\Slc,\quad u_e\in\Su.
\ee
At each vertex, the $\Slc$ gauge freedom leads to a divergence. We address this by fixing the gauge: choose a specific edge $e_0(v)$ at each vertex $v$, $e_0(v)\neq e_0(v')$ for $v\neq v'$ and fix $g_{v,e_0(v)}=\mathbb{I}$. The integration measure $\rmd\O(g)$ reads
\be
\rmd\O(g)=\prod_{(v,e)}\rmd g_{ve}\prod_v\delta\lt(g_{v,e_0(v)}\rt)
\ee
where $\rmd g_{ve}$ denotes the Haar measure on $\Slc$. Under the gauge fixing, the absolute convergence of $\sa_\ck$ rely on a mild restriction on the root complexes under consideration, namely, when a sphere is used to slice the neighborhood containing a single vertex $v$, the resulting intersection graph $\G_v$ on the sphere is required to be 3-connected \cite{Kaminski:2010qb,finite}.

For the analysis in Section \ref{Nondegenerate Hessian matrix}, we impose the condition that $e_0(v)$ does not intersect the boundary $\partial \ck$, when $\ck$ has at least one internal face, i.e. we require $e_0(v)\in E_{\rm int}$ (defined below) for all vertex $v$. This condition does not impose any restriction on the topology of root complex (see Appendix \ref{vertexEdgeInjection} for an explanation).


\item For our subsequent discussion, it is convenient to define the subcomplex $\ck_-$ (the interior of $\ck$) as the 2-complex formed exclusively by internal faces $h$, with associated internal edges $e\in E_{\rm int}$ and their endpoint vertices, where the set $E_{\rm int}$ collects all edges on the boundary of internal faces. The subcomplex $\ck_-$ is constructed from $\ck$ by removing all boundary faces $b$ and any edges $e_b$ attached to the boundary. In this work, we restrict attention to root complexes $\ck$ such that, whenever a small sphere is used to cut out a neighborhood of any vertex $v\in \ck_-$ (the vertex connecting to a boundary face), the intersection subgraph $\G_{v,-}\subseteq \G_v$ formed between the sphere and the subcomplex $\ck_-$ is always connected. This requirement will be important for the technical arguments in Section \ref{Localization of stack amplitude} and subsequent sections. Note that this requirement is nontrivial only for $v\in\partial \ck_-$, since for $v\not\in\partial \ck_-$, $\G_{v,-}=\G_v$ has been assumed to be 3-connected. Any 2-complex dual to a simplicial complex satisfies this requirement. 

\item In this work, we only consider the root complexes $\ck$ having simply connected interior, i.e. $\pi_1(\ck_-)$ is trivial, for the technical reason to be discussed in Section \ref{Some explicit computations}. Physically, a trivial fundamental group is valid for all local patches of spacetime. 


\end{itemize}

The complete spinfoam amplitude $\sa$ is obtained by summing these stack amplitudes $\sa_\ck$ over root complexes with compatible boundaries, each possibly further weighted by a complex coefficient $c_\ck$.
\be
\sa=\sum_{\ck}c_\ck\sa_\ck,\label{sa6}
\ee
The root complexes included in the sum satisfy the following conditions: (1) they are simply connected, (2) for every vertex $v\in\ck$, the graph $\G_v$ is 3-connected, (3) for every vertex $v$ lying on $\partial\ck_-$, the graph $\G_{v,-}$ is connected, and (4) all have identical boundaries, i.e., $\partial\ck = \G$ for every $\ck$.

In the sum over complexes \eqref{sa6}, the infinitely many undetermined coefficients $c_\ck$ encode the triangulation dependence of spinfoam theory. In what follows, our main result is to show that, at the fix point that we discover in this paper, these infinitely many ambiguities are effectively reduced to only a finite number of degrees of freedom.

We note that, in the most general case, the weights assigned to each $\sa(\ck,\vec p)$ in \eqref{StackAmplitude} could be chosen arbitrarily, rather than being determined specifically by the coupling constants as in our construction. However, we interpret the specific assignment of weights via the coupling constants in \eqref{StackAmplitude} as intrinsic to the definition of the fixed point; any deviation from this prescription would drive the theory away from the fixed point. Fixing these weight by coupling constants may also be viewed as a consistent truncation of the infinite-dimensional parameter space: the theory is defined within the truncated parameter space. The parameter space after the truncation remains infinite-dimensional.





\section{State sum}\label{State sum and bose gas}

We assume the root complex $\ck$ to have one or more internal faces. In order to compute $\o_h^{\rm bos }$ of an internal root face $h$, we apply the following inverse Laplace transform formula known as Perron's formula \cite{Hardy1915General,BarberoG:2008dwr,lqgee1}: Given two sequence $\{\a_n\}_{n=1}^\infty $ and $\{\b_n\}_{n=1} ^\infty $ where $\b_n\in\C$ and $\a_n>0$, if $\sum_{n=1}^\infty\left|\beta_{n}\right|e^{-\alpha_{n} c}<\infty$ for some $c>0$, we have
\be
\sum_{n=1}^{\infty}\beta_{n}\Theta\left(A-\alpha_{n}\right) =\mathscr{P}\int_{T-i\infty}^{T+i\infty}\frac{\rmd s}{2\pi i s}e^{As}\Xi(s),\qquad \Xi(s)=\sum_{n=1}^{\infty}\beta_{n}e^{-\alpha_{n}s},\qquad T>c\label{invLaplace}
\ee
In particular, if $\Xi(s)$ is meromorphic for $\re(s)>0$, the parameter $T>0$ is greater than the real part of all singularities given by the integrand. $\mathscr{P}\int $ denotes the principal value of the integral.

Apply this formula to the state-sum in $\o_h^{\rm bos}$: The sum over $n$ in \eqref{invLaplace} corresponds to the sum over $p_h$ and $k_1,\cdots,k_{p_h}$, and $\a_n$ corresponds to $\a_{p_h,\vec{k}}$. The summand $\b_n$ corresponds to $\l^{p_h}_h\prod_{i=1}^{p_h} \t_{k_i}^{(h)}(g_h)$ with a fixed $g_h$. The permutation symmetric tuple $(k_1,\cdots,k_{p_h})$ indicates that the stacked faces are indistinguishable and obey bosonic statistics. The sum over symmetric tuples can be recast as a sum over non-negative occupation numbers $n_k$ for each spin $k$, such that $p_h=\sum_{k\in\Z_+} n_k$:
\be
\Xi(s_h) =\sum_{p_h=1}^{\infty}\l_h^{p_h}\sum_{1\leq k_1\leq \cdots\leq k_{p_h}}^\infty\lt[\prod_{i=1}^{p_h}\tau^{(h)}_{k_{i}}\left(g_h\right)\rt]\,e^{-s_h \a_{p_h,\vec{k}}} = {\sum_{\{n_k\}_{k\in\Z_+}}}
\prod_{k=1}^\infty
\left[ \lambda_h \tau_k^{(h)}(g_h) e^{-s_h E_k} \right]^{n_k}-1,
\ee
where the subtraction of $1$ accounts for the vacuum contribution with $p_h=0$, corresponding to all $n_k=0$. The series is absolutely convergent for sufficiently large $\re(s_h)>0$. The proof is given in Appendix \ref{meromorphic function}.

The state-sum $\o^{\rm bos}_h$ is written as
\be
\omega_h^{\mathrm{bos}} =\mathscr{P}
 \int_{T-i\infty}^{T+i\infty}
\frac{\rmd s_h}{2\pi i s_h}\, e^{A_h s_h}\Xi[s_h].
\ee
The function $\Xi[s_h]$ is absolutely convergent at every point on the integration contour, then we can carry out the sum over $\{n_k\}$ in $\Xi[s_h]$ (see Appendix \ref{meromorphic function} for details)
\be
\Xi(s_h) = \prod_{k=1}^\infty \frac{1}{1-\lambda_h \tau_k^{(h)}(g_h) e^{-s_hE_k}} - 1,\label{Xishprod}
\ee
The integration contour of $\omega_h^{\mathrm{bos}}$ lies to the right of all singularities of \eqref{Xishprod}. To compute the integral of $\o_h^{\rm bos}$, we analytically continue $\Xi(s_h)$ in \eqref{Xishprod} beyond the contour and obtain $\Xi(s_h)$ as a meromorphic function for $\re(s_h)>0$ (see Appendix \ref{meromorphic function}).

\section{2-dimensional gauge theory}\label{2-dimensional gauge theory}

The Laplace transform $\Xi(s_h)$ of $\omega^{\rm bos}_h$ is an analog of grand canonical partition function:
\be
\Xi(s_h)=\wt{\sum_{\{n_k\}_{k\in\Z_+}}}
\prod_{k=1}^\infty
 Z_{k}(g_h,s_h) ^{n_k},\qquad Z_{k}(g_h,s_h)=\lambda_h \tau_k^{(h)}(g_h) e^{-s_h E_k}
\ee
The quantity $Z_{k}(g_h,s_h)$ is a partition function on a single face with spin $k/2$. In the following, we show that $Z_{k}(g_h,s_h)$ is a partition function of 2-dimensional Yang-Mills (YM) theory with the boundary condition that relates to $k$, $g_h$ and the projection $P_k$ of simplicity constraint.

The classical action for 2-dimensional Yang-Mills (YM) theory with a compact gauge group $G$ is given by
\be
S_{\text{YM}_2} = \frac{1}{4e^2} \int_{\Sigma} \rmd^2 x\, \sqrt{|\det \mathrm{g}|}\, \mathrm{Tr}(\cf_{\alpha\beta} \cf^{\alpha\beta}),
\ee
where $\cf$ is the curvature of a gauge connection $\ca$, $e$ is the coupling constant, and $\mathrm{g}_{\alpha\beta}$ is an arbitrary metric on the 2d surface $\Sigma$. In two dimensions, any 2-form is proportional to the area element so we may write $\cf_{\alpha\beta} = f\, \eps_{\alpha\beta}$, where $\eps_{\alpha\beta}$ is the area element normalized by $\epsilon_{\alpha\beta} \epsilon^{\alpha\beta} = -2$ and $f$ is an adjoint-valued scalar. Plugging this into the action, we obtain
\[
S_{\text{YM}_2} = -\frac{1}{2e^2} \int_{\Sigma} \rmd^2 x\, \sqrt{|\det G|}\, \mathrm{Tr}(f^2).
\]
Thus, the action depends on the metric only via the area element $\eps=\rmd^2 x\, \sqrt{\det \mathrm{g}}$. Since $f$ is a scalar, the action is invariant under all area-preserving diffeomorphisms, and the only invariants of the geometry on which the partition function can depend are the genus of $\Sigma$ and its total area.

We quantize the theory on a cylinder $\Sigma = S^1 \times [0,t]$, $t>0$. This is equivalent to the time evolution of the state on a circle. We take the temporal gauge $\ca_0=0$. At a constant time slice $x^0=0$, the gauge invariant data is the holonomy along the circle $U=\calp \exp\oint_{S^1}\ca_1\rmd x^1$. The Hilbert space $\ch_{\rm YM_2}$ is spanned by $L^2$ (w.r.t. Haar measure) functions $\psi(U)$ satisfying the gauge invariance $\psi(U)=\psi(gUg^{-1})$, for all $g,U\in G$. The Hamiltonian operator is given by $H\propto\oint_S \rmd x^1 E^2_1\propto L\Delta_U$ where $L$ is the circumference of the circle and $\Delta_U$ is the Laplace operator on $G$, whose eigenvalues are quadratic Casimirs $C_2(R)$ of irreps $R$. Given two states $\psi_1,\psi_2\in\ch_{\rm YM_2}$ that define the boundary condition, the partition function of 2d YM theory on the cylinder equals to the transition amplitude
\be
Z_{\rm YM_2}(S^1 \times [0,t])=\langle \psi_2\mid e^{-{s} \Delta_U} \mid\psi_1\rangle,\qquad s=\frac{i}{2} {tL}{e^2} ,
\ee
here $s$ is proportional to the total area $tL$ of the cylinder.

The partition function on a disk $D_2$ can be obtained from the cylinder partition function by collapsing one boundary circle to a single point. This imposes the boundary condition $U=1$ on that end, corresponding to the choice $\psi_1(U) = \delta(U)$. By $\delta(U)=\sum_R \dim(R)\chi_R(U)$ where $R$ labels the irrep of $G$,
\be
Z_{\rm YM_2}(D_2)=\langle \psi_2\mid e^{-{s} \Delta_U} \mid 1\rangle=\sum_R\langle \psi_2\mid\chi_R\rangle e^{-{s} C_2(R)} ,
\ee
where $C_2(R)$ is the quadratic Casimir of $R$.

To relate to our context, we consider $G=SU(2)$ and thus $C_2(R)=C_2(k)=k(k+2)/4$, and we let $D_2$ be the internal face $h$. The circle holonomy $U$ of 2d $\mathrm{YM}$ theory is along $\partial h$, we split $U$ into segment holonomies $H_{ee'}$ connecting the middle points of two neighboring edges $e,e'$, following the orientation of $\partial h$:
\be
U=\overrightarrow{\prod_{(e,e^{\prime})\subset \partial h}}H_{ee^{\prime}}
\ee
We define the following boundary state as a function of $H_{ee^{\prime}}$
\begin{equation}
	\Psi_{(k,\rho)}\left[g_{h}\right]=\l_h\bigotimes_{v\in\partial h}\psi_{(k,\rho)}\left[g_{ve}^{-1}g_{ve'}\right],\qquad \psi_{(k,\rho)}\left[g_{ve}^{-1}g_{ve'}\right]\left(H_{e^{\prime}e}\right)=d_{k}\mathrm{Tr}_{(k,\rho)}\left[P_{k}g_{ve}^{-1}g_{ve^{\prime}}P_{k}H_{e^{\prime}e}\right]^{*},
\end{equation}
then we have the following relation when we set $E_k=k(k+2)$:
\be
Z_{k}(g_h,s_h)&=&\lambda_{h}\tau_{k}^{(h)}(g_{h})e^{-s_{h}E_{k}}=\langle\Psi_{(k,\rho)}\left[g_{h}\right]\mid e^{-4s_{h}\Delta_{U_{h}}}\mid1\rangle\equiv Z_{\rm YM_2}\lt(D_2,\Psi_{(k,\rho)}\left[g_{h}\right]\rt)\\
&=&\lambda_{h}\int\prod_{(e,e^{\prime})}dH_{ee^{\prime}}e^{-4s_{h}\Delta_{U_{h}}}\delta\left(U_{h}\right)\prod_{v\in\partial h}\left(d_{k}\mathrm{Tr}_{(k,\rho)}\left[P_{k}g_{ve}^{-1}g_{ve^{\prime}}P_{k}H_{e^{\prime}e}\right]\right)
\ee
This is the partition function of a 2d YM theory on the face $h$ with a boundary condition defined by $\Psi_{(k,\rho)}[g_h]$. The partition function is analytically continued, so it has a complex area parameter $s_h\in\C$. The state $\Psi_{(k,\rho)}[g_h]$ couples the 2d gauge field $\ca$ to the external $\Slc$ gauge field $g_{ve}^{-1}g_{ve'}$. The state $\Psi_{(k,\rho)}[g_h]$ is not invariant under gauge transformation of $H_{ee'}$, and the partition function implicitly involves a projection of this state onto the gauge-invariant subspace.

\begin{table}[h!]
    \centering
    \begin{tabular}{|l|l|l|}
    \hline
    \textbf{Statistical ensemble} &\textbf{Stacked spinfoam faces}   & \textbf{Ensemble of YM world-sheets} \\
    \hline
    A gas of bosonic particles  & A stack of spinfoam faces over $h$      & A gas of world-sheets carrying YM theories. \\
	\hline
    A bosonic particle                           & A single stacked face  & A single world-sheet\\
	\hline
    Quantum state $k$              & The spin $k/2$ carried by the face & State $\Psi_{(k,\rho)}$ at the world-sheet boundary\\
	\hline
	Particle energy level $\eps_k$ & $E_k$ for defining cut-off & YM energy (quadratic Casimir) $C_2(k)$\\
	\hline
    Occupation number        & The number $n_k$ of faces with spin $k/2$ & The number of world-sheets having boundary $\Psi_{(k,\rho)}$ \\
	\hline
    Inverse temperature $\beta$   & $s_h$   & The complexified area parameter $s\propto tLe^2$\\
	\hline
	Single particle partition function $z_k e^{-\b \eps_k}$ & $Z_k(g_h,s_h)=\l_h\t_k^{(h)}(g_h) e^{-sE_k}$ & YM partition function $Z_{\rm YM_2}\lt(D_2,\Psi_{(k,\rho)}\left[g_{h}\right]\rt)$\\
	\hline
	Grand canonical partition function & $\Xi(s)=\wt{\sum}_{\{n_k\}_{k\in\Z_+}}
	\prod_{k=1}^\infty Z_{k}(g_h,s_h) ^{n_k}$ & $\wt{\sum}_{\{n_k\}_{k\in\Z_+}}
	\prod_{k=1}^\infty Z_{\rm YM_2}\lt(D_2,\Psi_{(k,\rho)}\left[g_{h}\right]\rt)^{n_k}$\\
	 \hline
    \end{tabular}
    \caption{}
	\label{table_ensemble}
\end{table}

Table \ref{table_ensemble} establishes a dictionary between the statistical ensemble of boson gas, stacked spinfoam faces, and ensemble of world-sheets carrying 2d YM theories. The comparison suggests that we can interpret $\Xi(s_h)$ as a grand canonical partition function for a gas of faces over $h$ carrying YM theories (with complexified area). Moreover, the stacked face amplitude $\o_h^{\rm bos}$, as the inverse Laplace transform of $\Xi(s_h)$, is the partition function of the corresponding micro-canonical ensemble.

In this statistical ensemble, the world-sheets are indistinguishable and satisfy bosonic statistics. As we see below, for $\lambda_h$ is not too small, the system can accumulate all world-sheets in the lowest energy state (lowest spin $k=1$), analogous to Bose-Einstein condensation.

\section{Condensation of quantum geometry}\label{Bose-Einstein condensate of quantum geometry}

The poles of $\Xi(s_h)$ are denoted by $s_h(k,m,g_h,\l_h)$ with $k\in\Z_+,\  m\in\Z,\ \l_h\in\C,\ g_h=\{g_{ve}\}_{e\subset\partial h}$:
\be
s_h(k,m,g_h,\l_h)=r_h(k,g_h,\l_h)+im\, \nu_h(k),\qquad r_h(k,g_h,\l_h)=\frac{\ln\lt[\l_h\t_k^{(h)}(g_h)\rt]}{E_k},\qquad \nu_h(k)=\frac{2\pi}{E_k}.
\ee
Here $r_h(k,g_h,\l_h)$ is complex in general. $\ln[\cdots]$ takes the principal value of the logarithm. The real part of $s_h(k,m,g_h,\l_h)$ does not depend on $m$:
\be
R_h(k,g_h,\l_h):=\re[s_h(k,m,g_h,\l_h)]=\re[r_h(k,g_h,\l_h)]=\frac{\ln\lt[\lt|\l_h\t_k^{(h)}(g_h)\rt|\rt]}{E_k}.
\ee 
Given $g_h$ and $\l_h$ such that $\Xi(s_h)$ has poles on the right-half plane, $\o_h^{\rm bos}$ can be computed by using the residue theorem, and the dominant contribution comes from the poles $s_*$ with largest real parts
\be
\o_h^{\rm bos}&=&\lt[\sum_{s_*}\underset{s\to s_*}{\mathrm{Res}}\frac{e^{A_h s}}{s}\Xi\lt(s\rt)\rt]\lt[1+O(A_h^{-\infty})\rt],\qquad \re\lt[s_*\rt]=\sup_{k}\frac{\ln\lt[\lt|\l_h\t_k^{(h)}(g_h)\rt|\rt]}{E_k},\label{sumsstar}
\ee
where $\mathrm{Res}_{s\to s_*}$ denotes the residue at the pole $s_*$. The error $O(A^{-\infty})$ term collects the exponentially suppressed contribution from non-dominant poles. The proof of this formula is given in Appendix \ref{Inverse Laplace transform}.

We denote by $\mathscr{G}_h$ the space of $g_h=\{g_{ve}\}_{v\in\partial h}$, with the gauge fixing $g_{v,e_0(v)}=1$ if $e_0(v)\subset \partial h$. The function $\t_k^{(h)}(g_h)$ satisfies the uniform bound $|\t_k^{(h)}(g_h)|\leq d_k^{2}$ on $\mathscr{G}_h$ for each $k$, and the bound is saturated if and only if $g_h$ satisfies (see Appendix \ref{Bound of tau k} or \cite{spinfoamstack})
\be
g_{ve}^{-1}g_{ve'}\in \Su,\quad \forall e,e'\subset\partial h,\quad e\cap e'=v, \qquad \overrightarrow{\prod_{v\in\partial h}} g_{ve}^{-1}g_{ve'} =\pm \mathbb{I}\ .\label{g0solution} 
\ee  
We denote the subspace of $g_h$ satisfying this condition by $\cc_h^\pm$, where $\cc_h^+$ corresponds to the product being $+\mathbb{I}$ and $\cc_h^+$ to $-\mathbb{I}$. We define $\cc_{h}=\cc_h^+\cup\cc_h^-$, where $\cc_h^+$ and $\cc_h^-$ are disjoint. For any $k\in\Z_+$ and $\l_h$, $R_h(k,g_h,\l_h)$ reaches the maximum at $g_h\in \cc_{h}$:
\be
\b_k(\l_h) \equiv E_k^{-1}\ln\lt(|\l_h|d_k^2\rt)=\sup_{g_h}R_h(k,g_h,\l_h).
\ee
Fixing the value of $\l_h$, since $\b_k(\l_h)\to0$ as $k\to\infty$, for any infinitesimal $\eps>0$, there exists $k_m\in\Z$ such that there are only finitely many $k<k_m$ such that $\b_k(\l_h)>\eps$. The maximum of $\b_k(\l_h)$ among $k<k_m$ gives the supremum for the real part of poles from $\Xi(s_h)$:
\be
\sup_{k\in\Z_+}\b_k(\l_h)=\sup_{k<k_m}\b_k(\l_h)=\sup_{k,g_h}R_h(k,g_h,\l_h).
\ee
This is the global maximum of the poles' real parts. As we see in the following analysis. This global maximum $\sup_{k}\b_k(\l_h)$ relates to the leading asymptotic behavior of $\o_h^{\rm bos}$ as $A\to\infty$.

Focus on $g_h$ satisfying \eqref{g0solution} and compare $\b_k(\l_h)$ among $k$'s. We define the \emph{condensation spin} $k_0/2$ to be the location of the global maximum, i.e. the condensation spin is $k_0/2$ such that
\be
\b_{k_0}(\l_h)=\sup_k\b_k(\l_h)=\sup_k\lt[E_k^{-1}\ln\lt(|\l_h|d_k^2\rt)\rt].
\ee
The value of $k_0$ clearly relates to the value of $|\l_h|$. Qualitatively, as $|\lambda_h|$ decreases, the value $k_0$ that maximizes $\beta_k(\lambda_h)$ increases. This is due to the fact that simultaneously small values of $k_0$ and $\lambda_h$ would result in $\beta_k$ being negative, which is not admissible for the real part of $s_*$. As $\l_h\to 0$, the real part of $s_*$ approaches to zero, and correspondingly, only the root face survives in $\sa_\ck$.

For $E_k=k(k+2)$ and $|\l_h|$ not too small, $|\l_h|\geq \frac{3 }{8}\sqrt[5]{\frac{3}{2}}\simeq 0.40668$, the maximum of $\b_k$ is at the lowest nonzero spin $k=1$. For a generic value of $|\l_h|$, the maximum of $\b_k$ picks up a single $k=k_0$. However, for some special values of $|\l_h|$, the maximum can locate at two different neighboring $k$'s, i.e. in this case $k_0$ is not unique. See FIG.\ref{plotEx1} for illustrations.

\begin{figure}[h]
	\centering
	\includegraphics[width=1\textwidth]{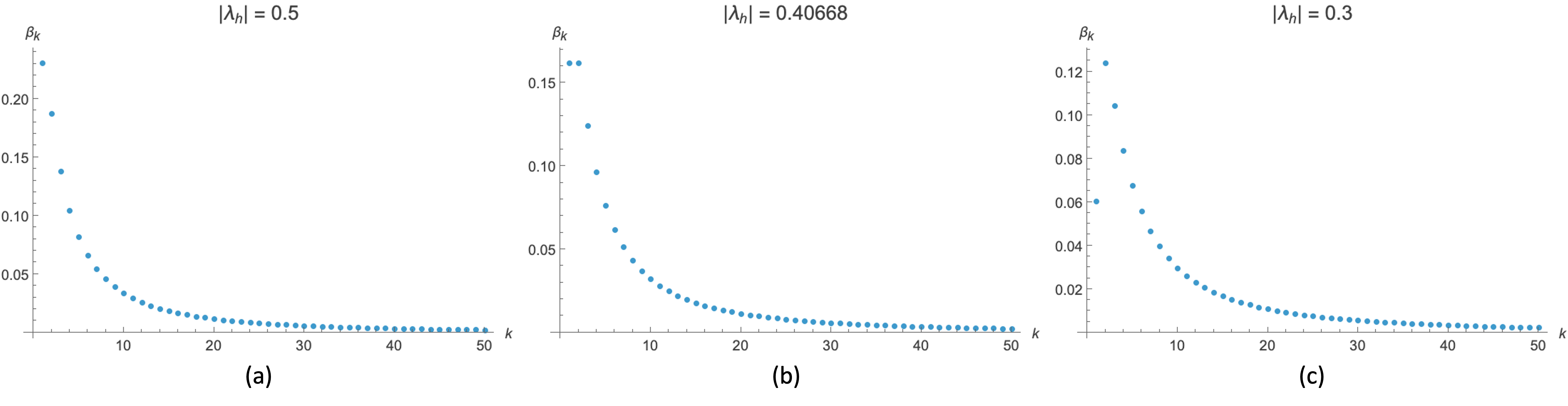}
	\caption{$E_k=k(k+2)$: (a) For $|\l_h|=0.5$, the maximum of $\b_k$ is at $k_0=1$. (b) For $|\l_h|=\frac{3 }{8}\sqrt[5]{\frac{3}{2}}\simeq 0.40668$, the maximum of $\b_k$ is attained at both $k_0=1$ and $k_0=2$. (c) For $|\l_h|=0.3$, the maximum of $\b_k$ is at $k_0=2$.}
	\label{plotEx1}
\end{figure}





From now on we fix a generic $\l_h$ such that $k_0$ is unique. At any $\mathring g_h\in\cc_h$, it is clear that $s_*$ in \eqref{sumsstar} only corresponds to $k=k_0$, i.e. $s_* = s_h(k_0,m,\mathring g_h,\l_h)$ is the pole having the largest real part that equals $\b_{k_0}(\l_h)$.  The value $\b_{k_0}(\l_h)$ stands out as an isolated maximum among the sequence $\{\b_{k}(\l_h)\}_k$. Define $\b^* = \sup_{k \neq k_0} \b_k(\l_h) > 0$, which satisfies $\b^* < \b_{k_0}(\l_h)$. The function $R_h(k_0, g_h, \l_h)$ depends continuously on $g_h$ within the domain $\cv\subset\mathscr{G}_h$ where $\t_{k_0}^{(h)}(g_h) \neq 0$. When  generalizing away from $\cc_h$, we introduce the following open neighborhood of $\cc_h$
\be
\mathscr{U}_h = \big\{ g_h \in \cv \mid R_h(k_0, g_h, \l_h) > \b^* \big\}.\label{scrUhDef}
\ee
Clearly, $\mathscr{U}_h$ is open and contains $\cc_h$.

\begin{lemma}
$s_* = s_h(k_0,m, g_h,\l_h)$ still holds on the open neighborhood $\mathscr{U}_h$, i.e. $R_h(k_0,g_h,\l_h)$ is still the largest among all poles for all $g_h\in\mathscr{U}_h$. 
\end{lemma}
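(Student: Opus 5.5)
The plan is to compare $R_h(k_0,g_h,\l_h)$ with the real parts of all other poles of $\Xi(s_h)$, uniformly in $g_h\in\mathscr{U}_h$. By the pole formula above, the poles of $\Xi(s_h)$ at fixed $g_h$ are exactly the $s_h(k,m,g_h,\l_h)$ with $\l_h\t_k^{(h)}(g_h)\neq0$. Their real parts are $R_h(k,g_h,\l_h)=E_k^{-1}\ln|\l_h\t_k^{(h)}(g_h)|$, and these are independent of $m$. If $\t_k^{(h)}(g_h)=0$, the corresponding factor in \eqref{Xishprod} equals $1$ and gives no pole, so such $k$ can be discarded. It therefore suffices to show that for every $g_h\in\mathscr{U}_h$ and every $k\neq k_0$ with $\t_k^{(h)}(g_h)\neq0$,
\be
R_h(k,g_h,\l_h)<R_h(k_0,g_h,\l_h).
\ee

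The key step is the uniform bound $|\t_k^{(h)}(g_h)|\le d_k^2$ on $\mathscr{G}_h$, which is stated earlier and proved in the appendix. Since $E_k>0$ and $\ln$ is monotone, this gives $R_h(k,g_h,\l_h)\le E_k^{-1}\ln(|\l_h|d_k^2)=\b_k(\l_h)$ for every $g_h$. This bound is independent of $g_h$. Taking the supremum over $k\neq k_0$ gives $R_h(k,g_h,\l_h)\le\b^*$. On the other hand, the definition \eqref{scrUhDef} gives $R_h(k_0,g_h,\l_h)>\b^*$ for $g_h\in\mathscr{U}_h$. Combining the two inequalities yields the strict inequality above, so the poles with $k=k_0$ (all $m$) have the largest real part. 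Moreover, $g_h\in\cv$ guarantees $\t_{k_0}^{(h)}(g_h)\neq0$, so these poles actually exist. Hence $s_*=s_h(k_0,m,g_h,\l_h)$ on all of $\mathscr{U}_h$.

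The only point needing care is that $\b^*$ is a genuine maximum strictly below $\b_{k_0}$, and that the supremum over infinitely many $k$ causes no trouble. Here I would use the earlier observation that $\b_k(\l_h)\to0$ as $k\to\infty$, because $E_k$ grows polynomially while $\ln d_k^2$ grows only logarithmically. Consequently only finitely many $k$ have $\b_k$ above any fixed $\eps>0$. Given that $\b^*>0$ as stated, $\b^*$ is attained at some finite $k$. It satisfies $\b^*<\b_{k_0}$ by the genericity assumption that $k_0$ is unique.

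Even if $\b^*$ were merely a supremum, the argument above still goes through, because it uses only $R_h(k,\cdot)\le\b_k\le\b^*<R_h(k_0,\cdot)$. I therefore expect no real obstacle beyond checking the case $\t_k^{(h)}=0$ noted in the first paragraph. I would also confirm that $\mathscr{U}_h\supset\cc_h$ is nonempty: on $\cc_h$ we have $R_h(k_0,g_h,\l_h)=\b_{k_0}>\b^*$.
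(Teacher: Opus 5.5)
Your proof is correct and follows essentially the paper's argument. The paper likewise combines $R_h(k,g_h,\l_h)\le\b_k(\l_h)\le\b^*$ for all $k\neq k_0$, which it encodes via the bound $|\t_k^{(h)}|\le d_k^2$ as $\b^*=\sup_{k\neq k_0,\,g_h}R_h(k,g_h,\l_h)$, with the defining inequality $R_h(k_0,g_h,\l_h)>\b^*$ on $\mathscr{U}_h$. Your extra checks (discarding $k$ with $\t_k^{(h)}=0$, using $g_h\in\cv$ so the $k_0$ poles exist, and noting that $\b^*$ need only be a supremum) are harmless refinements that the paper leaves implicit.
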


\begin{proof}
Since $\b^* = \sup_{k \neq k_0} \b_k(\l_h) = \sup_{k \neq k_0,\, g_h} R_h(k, g_h, \l_h)$, it follows that $R_h(k_0, g_h, \l_h) > \b^* \geq R_h(k \neq k_0, g_h, \l_h)$, for all $ g_h \in \mathscr{U}_h$.

\end{proof}

As a consequence, for any $g_h\in\mathscr{U}_h$, the poles $s_* = s_h(k_0,m,\mathring g_h,\l_h)$ are simple for all $m\in\Z$. By computing the residues at the poles,
\be
\o_h^{\rm bos}(A_h;g_h,\l_h)&=&{e^{A_h r_h(k_0,g_h,\l_h)}}\Fs_h\lt(A_h,g_h,\l_h\rt)\lt[1+O(A_h^{-\infty})\rt],\label{leadingpolecontri}\\
\Fs_h\lt(A_h,g_h,\l_h\rt)&=&\frac{1}{E_{k_0}}\sum_{m\in\Z}\frac{e^{iA_h m \nu_h(k_0)}}{r_h\lt(k_0,g_h,\l_h\rt)+im\, \nu_h\lt(k_0\rt)}  Q_h\lt(k_0,m,g_h,\l_h\rt)\\
Q_h\lt(k_0,m,g_h,\l_h\rt)&=&\prod_{k \neq k_0}\frac{1} {1 - \lambda_h \tau_k^{(h)}(g_h) e^{-s_h(k_0,m,g_h,\l_h) E_k}}.
\ee
Here is sum over $m$ is a symmetric limit $\lim_{M\to\infty}\sum_{m=-M}^M\cdots$ due to the principal value integral. See Appendix \ref{sum over m} for more discussion about this sum. Both functions $Q_h\lt(k_0,m,g_h,\l_h\rt)$ and $\Fs_h\lt(A_h,g_h,\l_h\rt)$ are smooth on $\mathscr{U}_h$, see Lemmas \ref{lemmaSmoothFs} and \ref{lemmaSmoothQ} for proofs.

Recall that $\tau_{k}^{(h)}(g_h)$ is the amplitude associated to a face carrying the spin $k/2$. The fact that the amplitude $\o_h^{\rm bos}$ is proportional to $e^{A_h r_h(k_0,g_h,\l_h)} = [\lambda_h \tau_{k_0}^{(h)}(g_h)]^{A_h/E_{k_0}}$ indicates that the dominant contribution comes from configurations where all faces are concentrated at the state $k_0$, in analogy with the Bose-Einstein condensation\footnote{Large $A_h$ corresponds to a large number of identical bosons and thus reduces the fluctuations in the grand canonical ensemble.}. Here, $A_h$ represents the total energy and $E_{k_0}$ is the energy of a face at spin $k_0$, so the total number of such faces is $p_h = A_h/E_{k_0}$. When $\l_h$ is such that the condensation spin $k_0/2$ is small, the condensation phenomenon occurs in the UV (i.e., small-$j$) regime. Given the LQG relation between spin and quanta of area, when this condensation occurs, the macroscopic area is given by the superposition of many microscopic quantum areas. As the coupling constant $\lambda$ decreases, the condensation spin increases, causing the theory to move away from the UV regime. In the limit $\lambda \to 0$, the spinfoam stack amplitude coincides with the standard spinfoam amplitude, which is supported on the root complex.

\begin{lemma}\label{lemmaBoundUc}

(1) For all $g_h$ in the complement ${\mathscr{U}}^c_h=\mathscr{G}_h\setminus \mathscr{U}_h$, there exists a constant $\Delta > 0$ and a function $C_h(g_h, \Delta) > 0$, which is continuous in $g_h$ and independent of $A_h$, such that
\be
\lt|\o_h^{\rm bos}(A_h;g_h,\l_h)\rt|\leq C_h(g_h,\Delta) e^{A_h\lt[\b_{k_0}(\l_h)-\Delta\rt]},\qquad \forall g_h\in {\mathscr{U}}_h^c.\label{outsidebound}
\ee

(2) For all $g_h\in \mathscr{G}_h$, and for any (infinitesimal) $\eps>0$, there exists a function $C_h(g_h,-\eps) > 0$, which is continuous in $g_h$ and independent of $A_h$, such that
\be
\lt|\o_h^{\rm bos}(A_h;g_h,\l_h)\rt|\leq C_h(g_h,-\eps) e^{A_h\lt[\b_{k_0}(\l_h)+\eps\rt]},\qquad \forall g_h\in \mathscr{G}_h.\label{outsidebound11}
\ee

\end{lemma}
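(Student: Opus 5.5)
We bound the Perron integral for $\omega_h^{\rm bos}$ directly, by shifting the contour and using the residue formula \eqref{sumsstar}. For fixed $g_h$ we choose a vertical line $\re(s)=\sigma$ just to the right of the largest real part of the poles, $R_{\max}(g_h)=\sup_k R_h(k,g_h,\l_h)$, and estimate the integrand there.

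For part (1), take $g_h\in\mathscr{U}_h^c$. Either $\t_{k_0}^{(h)}(g_h)=0$, or $R_h(k_0,g_h,\l_h)\le\b^*$. Every $k\neq k_0$ already satisfies $R_h(k,g_h,\l_h)\le\b^*$. Hence all poles have real part at most $\b^*$, and we set $\Delta:=\tfrac12(\b_{k_0}(\l_h)-\b^*)>0$. We then move the Perron contour of $\omega_h^{\rm bos}$ to $\re(s)=\b_{k_0}-\Delta$, which lies strictly to the right of every pole. This gives the trivial bound $|e^{A_h s}|=e^{A_h(\b_{k_0}-\Delta)}$, times $\mathscr{P}\int |\Xi(s)|\,|\rmd s|/|2\pi s|$ along the line. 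The integral is not absolutely convergent, because $\Xi$ is almost periodic in $\mathrm{Im}\,s$ and is only controlled by the principal value. We handle this as in the proof of \eqref{sumsstar}. Write $\Xi(s)=\sum_n\b_n e^{-\a_n s}$ and split off finitely many terms. With $\a_n\in\Z_+$, each term contributes through the elementary Perron integral of $e^{(A_h-\a_n)s}/s$, which is bounded by $e^{\sigma(A_h-\a_n)}$ times a constant. Summing gives
\[
|\omega_h^{\rm bos}|\le C\,e^{A_h\sigma}\sum_n|\b_n|e^{-\a_n\sigma}.
\]
This is the usual truncated-Perron estimate, and it requires absolute convergence of $\sum_n|\b_n|e^{-\a_n\sigma}$ at $\sigma=\b_{k_0}-\Delta$.

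This absolute convergence is the main obstacle. The absolute series equals $\prod_k(1-|\l_h\t_k^{(h)}(g_h)|e^{-\sigma E_k})^{-1}-1$, and it converges only when $\sigma>\sup_k R_h(k,g_h,\l_h)$ and $\sum_k|\l_h\t_k|e^{-\sigma E_k}<\infty$. Both hold because $\sigma>\b^*\ge R_h(k,g_h)$ for all $k$, and because $|\t_k|\le d_k^2$ grows only polynomially while $E_k\sim k^N$. The resulting constant is
\[
C_h(g_h,\Delta)=C\Bigl[\prod_k\bigl(1-|\l_h\t_k^{(h)}(g_h)|e^{-(\b_{k_0}-\Delta)E_k}\bigr)^{-1}\Bigr].
\]
It is continuous in $g_h$ because each $\t_k$ is continuous and the product converges uniformly on compacta, using the uniform bound $|\t_k|\le d_k^2$. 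It is also independent of $A_h$. If the estimate with a fixed line is too wasteful, we instead use the line $\re s=\sigma$ with $\sigma$ slightly above $\b^*$. This only improves the exponent, since $\b^*<\b_{k_0}-\Delta$.

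For part (2) we run the same argument for arbitrary $g_h\in\mathscr{G}_h$, now with the line at $\sigma=\b_{k_0}(\l_h)+\eps$. The uniform bound $|\t_k|\le d_k^2$ gives $R_h(k,g_h,\l_h)\le\b_k(\l_h)\le\b_{k_0}(\l_h)<\sigma$ for every $k$. The absolute product then converges for all $g_h$ with a strict margin $\eps$. Its factors satisfy $|\l_h\t_k|e^{-\sigma E_k}\le e^{-\eps E_k}<1$ uniformly, so the constant $C_h(g_h,-\eps)$ is continuous and is in fact uniformly bounded. The $\eps$ margin is needed precisely because on $\cc_h$ the pole at $\b_{k_0}$ touches the line, and the absolute series diverges there.
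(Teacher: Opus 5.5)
Your proposal is correct and follows the paper's proof: the same case split shows that on $\mathscr{U}_h^c$ every pole has real part at most $\beta^*$. The paper then applies Lemma~\ref{smoothboundB3}, which rests on Lemma~\ref{expAtlemmaB2} (absolute convergence of the Dirichlet series on the line plus $\Theta(x)\le e^{xT}$), with $T=\beta^*+\varepsilon=\beta_{k_0}(\lambda_h)-\Delta$ in part (1) and $T=\beta_{k_0}(\lambda_h)+\varepsilon$ in part (2), giving the same constant $\prod_k\bigl(1-|\lambda_h\tau^{(h)}_k(g_h)|e^{-TE_k}\bigr)^{-1}$ up to an additive $1$.
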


\begin{proof} 
(1) Recall the definition of $\mathscr{U}_h$ in \eqref{scrUhDef}. We consider $g_h\not\in \mathscr{U}_h$. There are two cases for $k=k_0$: (1) If $g_h \notin \mathcal{V}$, i.e. $\tau_{k_0}^{(h)}(g_h) = 0$, there is no pole with positive real part associated with $k_0$. (2)  If $g_h \in \mathcal{V}$ but $g_h \notin \mathscr{U}_h$, by the definition of $\mathscr{U}_h$, we have
\be 
R_h({k_0}, g_h, \lambda_h) \le \beta^*. 
\ee
For $k \neq k_0$, for any $g_h$ and any $k \neq k_0$, the real part of the poles satisfies
\be 
R_h(k,g_h, \lambda_h) \le \beta_k(\lambda_h) \le \sup_{k \neq k_0} \beta_k(\lambda_h) = \beta^*.
\ee
Combining all these cases, for any $g_h \notin \mathscr{U}_h$, the real part of any pole $s$ of $\Xi(s_h)$ satisfies $\mathrm{Re}(s) \le \beta^*$, and in particular the maximum real part of the poles satisfies $\mathrm{Re}(s_*(g_h))\le \beta^*$. 

We define the constant $\Delta>0$ by
\be
\Delta =  \beta_{k_0}(\lambda_h) - \beta^* -\eps, 
\ee
for any (infinitesimal) $0<\eps<\beta_{k_0}(\lambda_h) - \beta^*$. Choose the contour parameter $T = \beta^*+\eps =\beta_{k_0}(\lambda_h)-\Delta$ with $\Delta>0$ for $\o_h^{\rm bos}$, we have the strict inequality: $ T > \beta^* \ge \mathrm{Re}(s_*(g_h)) $. The hypothesis in Lemma \ref{smoothboundB3} is satisfied for all $g_h\in{\mathscr{U}}^c_h$. As a result, there exists a function $C_h(g_h, \Delta)$ continuous in $g_h$ such that:
\be
\left|\omega_h^{\mathrm{bos}}(A_h;g_h,\lambda_h)\right| \le C_h(g_h, \Delta) \, e^{A_h T}. 
\ee

(2) For all $g\in\mathscr{G}_h$, the maximum of the real part satisfy $\re(s_*(g_h))\leq \b_{k_0}(\l_h)$. The bound \eqref{outsidebound11} is proven by choosing the contour parameter $T=\b_{k_0}(\l_h)+\eps$ for any infinitesimal $\eps>0$ and use Lemma \ref{smoothboundB3}.

\end{proof}

\section{Localization of stack amplitude}\label{Localization of stack amplitude}

Let us apply these results to the stack amplitude $\sa_\ck=\int \rmd\O\prod_h\o_h\prod_b\o_b$. We denote by $E_{\rm int}$ the set of edges not incident on the boundary, and define the space $\mathscr{G}_{\rm int}$ to be the space of $\{g_{ve}\}_{e\in E_{\rm int}}$ with the gauge fixing $g_{v, e_0(v)}=1$ implemented. For any internal face $h$, we define the continuous projection map 
\be
\fp_h:\ \mathscr{G}_{\rm int}\to \mathscr{G}_{h},\qquad \{g_{ve}\}_{e\in E_{\rm int}}\mapsto g_h, 
\ee
which is given by restricting $\{g_{ve}\}_{e\in E_{\rm int}}$ to $ g_h$. We define the subspaces
\be
\mathscr{U}_{\rm int}=\cap_h \fp^{-1}_h(\mathscr{U}_h),\qquad \cc^{\cs}_{\rm int}=\cap_h \fp^{-1}_h(\cc_h^{\varsigma_h}),\quad\text{where}\quad \cs=\{\varsigma_h\}_h,\ \varsigma_h=\pm1,\qquad \cc_{\rm int}=\cup_\cs \cc_{\rm int}^{\cs}.
\ee
The subspace $\cc_{\rm int}$ collects $\{g_{ve}\}_{e\in E_{\rm int}}$ satisfying \eqref{g0solution} for all $h$. The subspace $\mathscr{U}_{\rm int}\subset \mathscr{G}_{\rm int}$ is an open neighborhood containing $\cc_{\rm int}$, given by
\be
\mathscr{U}_{\rm int}=\lt\{\{g_{ve}\}_{e\in E_{\rm int}}\,\Big|\,  R_h(k_0, g_h, \l_h) > \b^*_h,\ \forall h\rt\},\qquad \b^*_h=\sup_{k \neq k_0} \b_k(\l_h)>0.\label{scrUintBetaStar}
\ee

\begin{lemma}
Given the root complex $\ck$ such that the intersection graph $\Gamma_{v,-}$ is connected for all vertex $v$ (recall the discussion in Section \ref{Spinfoam stacks}),

(1) We denote by $U_{vh}=g_{ve}^{-1}g_{ve'}$ for $v,e,e'$ belong to the boundary of an internal face $h$. Any sequence of configurations in $\mathscr{G}_{\rm int}$ that goes to infinity results in $\Vert U_{vh}\Vert\to \infty $ for at least one $h$ and one $v\in\partial h$, where $\Vert U \Vert = \sqrt{\tr(U U^\dagger)}$ for $2 \times 2$ matrix $U$.

(2) Any sequence in $\mathscr{G}_{\rm int}$ tending to infinity results in $\tau_k^{(h)}(g_h) \to 0$ for at least one internal face $h$.

(3) The spaces $\overline{\mathscr{U}}_{\rm int}$ (the closure of ${\mathscr{U}}_{\rm int}$), $\cc_{\rm int}^{\cs}$, and $\cc_{\rm int}$ are all compact.
\end{lemma}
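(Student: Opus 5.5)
For (1), I will argue by contraposition: suppose every $U_{vh}=g_{ve}^{-1}g_{ve'}$ stays bounded along the sequence, and show that all $g_{ve}$ with $e\in E_{\rm int}$ stay bounded. Fix a vertex $v$ of $\ck_-$.
\begin{itemize}
\item The gauge fixing gives $g_{v,e_0(v)}=\mathbb{I}$, and $e_0(v)\in E_{\rm int}$ by the convention of Section \ref{Spinfoam stacks}.
\item The edges at $v$ belonging to $\ck_-$ are the nodes of the intersection graph $\G_{v,-}$.
\item Two such edges $e,e'$ are adjacent in $\G_{v,-}$ exactly when they are consecutive on the boundary of some internal face $h$. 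In that case $g_{ve'}=g_{ve}U_{vh}$ (or the inverse relation, depending on orientation).
\item Since $\G_{v,-}$ is connected, every $g_{ve}$ equals a finite product $\prod U_{vh}^{\pm1}$ along a path in $\G_{v,-}$ starting at $e_0(v)$.
\end{itemize}
Each $U_{vh}$ has unit determinant, so $\|U^{-1}\|=\|U\|$. Submultiplicativity of the Hilbert--Schmidt norm then bounds every $\|g_{ve}\|$. Hence the whole configuration stays in a compact subset of $\mathscr{G}_{\rm int}$, contradicting the assumption that it goes to infinity. Passing to subsequences turns "not bounded" into $\|U_{vh}\|\to\infty$ for a fixed pair $(v,h)$; there are only finitely many pairs, so some pair recurs along a subsequence. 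I would state the result along subsequences, or note that the statement should be read that way.

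For (2), I will combine (1) with the decay of simplicity-projected $\Slc$ matrix elements. I will write
$$\tau_k^{(h)}=d_k\,\tr_{\ch_k}\Big[\prod_v P_kU_{vh}P_k\Big].$$
Using the Cartan decomposition $U=u\,e^{r\sigma_3/2}\,u'$, I will bound the block $P_k D^{(k,\rho)}(U)P_k$ in operator norm by a function $\phi_k(r)$ that tends to $0$ as $r\to\infty$. The standard estimate, used in the finiteness proofs \cite{Kaminski:2010qb,finite} and in Appendix \ref{Bound of tau k}, gives decay like $e^{-r}$ times a polynomial in $r$ for the lowest-$K$ block.

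Each of the other factors is bounded by $1$ in operator norm. Therefore
$$|\tau_k^{(h)}|\le d_k^2\,\phi_k(r_{vh}),$$
which tends to $0$ for the face $h$ and vertex $v$ provided by (1).

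\textbf{Main obstacle.} The main difficulty is this uniform operator-norm decay of the single projected block. I would try to derive it from the explicit $\mathrm{SU}(2)$-basis matrix elements of the boost $e^{r\sigma_3/2}$ in the principal series, restricted to $j=k/2$.

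For (3), I will use the criterion that a closed subset of $\mathscr{G}_{\rm int}$ is compact if it is bounded.
\begin{itemize}
\item \textbf{$\cc^{\cs}_{\rm int}$.} It is closed, since it is defined by the closed conditions $U_{vh}\in\Su$ and a fixed face holonomy. It is bounded by the argument of (1), because $\|U_{vh}\|=\sqrt2$ on it.
\item \textbf{$\cc_{\rm int}$.} It is the finite union of the $\cc^{\cs}_{\rm int}$, hence compact.
\item \textbf{$\overline{\mathscr{U}}_{\rm int}$.} On $\mathscr{U}_{\rm int}$ we have $|\lambda_h\tau^{(h)}_{k_0}|>e^{E_{k_0}\beta_h^*}>1$ for every $h$, so $|\tau^{(h)}_{k_0}|$ is bounded below by a positive constant. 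An unbounded sequence in $\mathscr{U}_{\rm int}$ would contradict (2) with $k=k_0$. So $\mathscr{U}_{\rm int}$ is bounded, and its closure is compact.
\end{itemize}
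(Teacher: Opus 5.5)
Your proposal is correct and follows essentially the same route as the paper. You bound each $g_{ve}$ by products of $U_{vh}^{\pm1}$ along paths in $\Gamma_{v,-}$ starting from the gauge-fixed edge $e_0(v)$, then use decay of the projected block $P_kD^{(k,\rho)}(U)P_k$, and finally get compactness from closed-plus-bounded. Your ``main obstacle'' is already handled in the paper: it invokes vanishing of $\mathrm{SL}(2,\mathbb{C})$ matrix coefficients at infinity, made quantitative by the $r/\sinh r$ bound of Lemma \ref{LemmaFgFxi}. Your explicit treatment of subsequences in (1), and your use of the argument of (1) (rather than the lower bound on $|\tau^{(h)}_{k_0}|$) to show $\cc^{\cs}_{\rm int}$ is bounded, are only cosmetic differences.
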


\begin{proof}
(1) Let $\{x^{(n)}\}_{n\in\mathbb{N}}$ be a sequence of configurations in $\mathscr{G}_{\rm int}$. The space $\mathscr{G}_{\rm int}$ is a direct product of many copies of $\Slc$. We say the sequence goes to infinity if it leaves every compact subset of $\mathscr{G}_{\rm int}$. This is equivalent to the condition that the norm diverges \footnote{
	In terms of coordinate, any element $g \in \Slc$ can be written as $g = u_1 e^{r \sigma_3/2} u_2$, where $u_1, u_2 \in \Su$ and $r \in \mathbb{R}$. The "infinity" in $\Slc$ corresponds to taking $|r| \to \infty$. Equivalently, this asymptotic region can be characterized by $\Vert g \Vert =\sqrt{2\cosh(r)} \to \infty$.
}: 
\be
\lim_{n \to \infty} \max_{v, e} \| g_{ve}^{(n)} \| = \infty .
\ee
We proceed by contradiction. Assume that the sequence goes to infinity, but $U_{vh}$ remain uniformly bounded. That is, assume there exists a constant $M > 0$ such that for all $n$, all vertices $v$, and all faces $h$ incident to $v$:  
\be
\| U_{vh}^{(n)} \| \le M .
\ee
Consider an arbitrary vertex $v$. Let $\Gamma_{v,-}$ be the intersection graph when a sphere is used to cut out a small neighborhood of $v$ in the subcomplex $\mathcal{K}_-$. The nodes of $\Gamma_{v,-}$ are the internal edges $e \in E_{\rm int}$ incident to $v$, including the gauge-fixed edge $e_0(v)$. Two nodes $e_a, e_b$ are connected by a link in $\Gamma_{v,-}$ if they are the boundary edges of an internal face $h$ at $v$. We associate to this link the variable $U_{vh} = g_{v e_a}^{-1} g_{v e_b}$ (or its inverse).
The graph $\Gamma_{v,-}$ is connected. For any internal edge $e$ incident to $v$, there exists a path in $\Gamma_{v,-}$ connecting $e_0(v)$ to $e$. Let this path be $e_0(v)=\epsilon_0, \epsilon_1, \dots, \epsilon_m=e$ where $\epsilon_i$ are the nodes of the graph. We can express $g_{ve}$ as
\be
g_{ve} = g_{v \epsilon_0} (g_{v \epsilon_0}^{-1} g_{v \epsilon_1}) (g_{v \epsilon_1}^{-1} g_{v \epsilon_2}) \cdots (g_{v \epsilon_{m-1}}^{-1} g_{v \epsilon_m})=\mathbb{I} \cdot U_{v, h_0}^{\sigma_0} \cdot U_{v, h_1}^{\sigma_1} \cdots U_{v, h_{m-1}}^{\sigma_{m-1}},
\ee
using the gauge condition $g_{v \epsilon_0} = \mathbb{I}$ and identifying the terms in parentheses as link variables $U_{v, h_j}^{\pm 1}$. Since $\|U^{-1}\| = \|U\|$ and the norm is sub-multiplicative ($\|AB\| \le \|A\|\|B\|$). Thus, by the assumption that all $\| U_{vh}^{(n)} \| \le M$.
\be 
\| g_{ve}^{(n)} \| \le \|\mathbb{I}\| \prod_{j=0}^{m-1} \| U_{v, h_j}^{(n)} \| \le \sqrt{2} M^m
\ee
Since the complex is finite, the path length $m$ is bounded. Thus, $\| g_{ve}^{(n)} \|$ is bounded uniformly for all $v, e$. This contradicts the hypothesis that the sequence goes to infinity. Therefore, the assumption of boundedness must be false. There exists at least one pair $(v, h)$ such that $\| U_{vh}^{(n)} \|$ is unbounded. Specifically, $\lim_{n \to \infty} \max_{v,h} \| U_{vh}^{(n)} \| = \infty$.

(2) Recall the definition of the function $\tau_k^{(h)}$ in \eqref{zetakh}. We define $A(U) \equiv P_k D^{(k,\rho)}(U) P_k$, $U\in\Slc$, as an operator on the finite-dimensional subspace $\mathcal{H}_k$. The matrix elements of $A(U)$ are matrix coefficients of the unitary principal series representation of $\mathrm{SL}(2,\mathbb{C})$. For any non-trivial unitary irreducible representation of a non-compact semisimple Lie group, the matrix coefficients vanish at infinity \cite{2014arXiv1403.0223C} (see Lemma \ref{LemmaFgFxi} for the case of $\Slc$). Therefore
\be
\lim_{\|U\| \to \infty} \| A(U) \|_{k} = 0 
\ee
where $\| A(U) \|_k$ is the operator norm. Let $\{x^{(n)}\}$ be a sequence in $\mathscr{G}_{\rm int}$ tending to infinity. We have shown in the proof of (1) that there exists at least one pair $(v, h)$ such that $\| U_{vh}^{(n)} \|$ is unbounded in the sequence i.e. $\| U_{v h}^{(n)} \| \to \infty$ as $n \to \infty$. Along this subsequence, $\lim_{n \to \infty} \| A(U_{v h}^{(n)}) \|_k = 0 $. Using the sub-multiplicativity of the trace and operator norm, we obtain
\be
\lt|\tau_k^{(h)}(g_{h}^{(n)})\rt| \le d_k^2 \left( \prod_{v \in \partial h, v \neq v} \lt\| A(U_{v h}^{(n)}) \rt\|_k \right) \lt\| A(U_{v h}^{(n)}) \rt\|_k \to0
\ee
Thus, any sequence tending to infinity results in $\tau_k^{(h)} \to 0$ for at least one face.

(3) In the formula \eqref{scrUintBetaStar} for $\mathscr{U}_{\rm int}$, the condition $R_h > \beta^*_h$ is equivalent to:
\be
|\tau_{k_0}^{(h)}(g_h)| > \frac{1}{|\lambda_h|} e^{E_{k_0} \beta^*_h} \equiv \epsilon_h >0
\ee
The closure $\overline{\mathscr{U}}_{\rm int}$ is contained in the set defined by the non-strict inequalities\footnote{
Let $X$ be a Hausdorff topological space and $f: X \to \mathbb{R}$ be a continuous function. Let $\epsilon > 0$. Define:
$
U = \{ x \in X \mid f(x) > \epsilon \}
$ and 
$
K_{set} = \{ x \in X \mid f(x) \ge \epsilon \}.
$
We always have the inclusion:
$
\overline{U} \subseteq K_{set}
$
Therefore, the equality holds if and only if for every $x$ such that $f(x) = \epsilon$, every neighborhood $V$ of $x$ intersects $U$ (i.e., contains a point $y$ where $f(y) > \epsilon$). However, if there exists a point $x_0$ such that $f(x_0) = \epsilon$ and $x_0$ is a local maximum of $f$, then there exists a neighborhood $V$ of $x_0$ such that for all $y \in V$, $f(y) \le f(x_0) = \epsilon$. This implies $V \cap U = \emptyset$. 
}:
\be
K = \bigcap_h \lt\{ g \in \mathscr{G}_{\rm int} \Big|\, \lt|\tau_{k_0}^{(h)}(g_h)\rt| \ge \epsilon_h \rt\}
\ee
The set $K$ is closed because $\tau_{k_0}^{(h)}$ is continuous. We now show that $K$ is also bounded. Assume for contradiction that $K$ is unbounded. Then there must exist a sequence $\{x^{(n)}\} \subset K$ with $x^{(n)} \to \infty$. From part (2) above, for such a sequence, there is at least one internal face $h$ such that $\tau_{k_0}^{(h)}(\fp_h(x^{(n)})) \to 0$ as $n \to \infty$. However, this contradicts the defining property $|\tau_{k_0}^{(h)}(\fp_h(x^{(n)}))| \geq \epsilon_h > 0$ for all $n$, since $x^{(n)} \in K$. Thus, $K$ must in fact be bounded. As $\overline{\mathscr{U}}_{\rm int}$ is a closed subset of the bounded set $K$ in the finite-dimensional manifold $\mathscr{G}_{\rm int}$, it is therefore compact. The compactness of $\cc_{\rm int}$ and $\cc_{\rm int}^{\cs}$ can be proven similarly.

\end{proof}

We denote by $\sa_\ck|_{\cu}$ the integral restricting on any the compact neighborhood $\cu$ satisfying $\cc_{\rm int}\subset\cu\subset{\mathscr{U}}_{\rm int}$ \footnote{
	For example, choose $\epsilon'_h$ such that $\epsilon_h<\epsilon'_h<d_{k_0}^2$ and define $\cu=\bigcap_h \{ g \in \mathscr{G}_{\rm int} \,|\, |\tau_{k_0}^{(h)}(g_h)| \ge \epsilon_h' \}$, then $\cc_{\rm int}\subset\cu\subset{\mathscr{U}}_{\rm int}$. Moreover $\cu$ is a closed subset in the compact set $K$, so $\cu$ compact.
}.
\be
\sa_\ck\Big|_{\cu}&=&e^{\sum_h\b_{k_0}(\l_h)A_h}\int_{\cu}\rmd\O(g)\, e^{S(\vec{A},g,\vec{m},\vec{\l})}\prod_h \Fs_h(A_h,g_h,\l_h)\prod_b\o_b^{\rm bos}(A_b;g_b,\vec{H}_{\fl(b)},\l_b)\lt[1+O(A^{-\infty})\rt],\label{AKexponen0}
\ee
Both $\prod_h\Fs_h$ and $\prod_b\o_b^{\rm bos}$ are smooth on $\cu$. The effective action $S$ is expressed as
\be
S&=&\sum_h A_h \lt[r_h(k_0,g_h,\l_h)-\b_{k_0}(\l_h)\rt]=\sum_h \frac{A_h}{E_{k_0(h)}} \ln \lt[\frac{\t_{k_0}^{(h)}(g_h)}{d_{k_0(h)}^2}\rt].\label{actionS=}
\ee
 The notation $k_0(h)$ emphasizes that the condensation spin $k_0/2$ can be different for different $h$.



We consider the asymptotic regime that $A_h/E_{k_0(h)}$ are uniformly large among internal faces and use the stationary phase approximation to compute the integral. Since we have fixed $k_0$, we write $A_h\equiv A a_h$ for all $h$ and scale $A\to\infty$. The boundary data (namely, the function $\o_b$ along with $A_b$, $\vec{H}$, and $\l_b$) remain fixed. We employ the stationary phase approximation for the integration over those variables $\{g_{ve}\}_{e\in E_{\rm int}}$, with $E_{\rm int}$ denoting the set of edges not incident on the boundary: Due to $|\tau_k^{(h)}(g_h)|\leq d_k^2$, it is clear that for all $\{g_{ve}\}_{e\in E_{\rm int}}$
\be
\re(S)=\sum_h\frac{A_h}{E_h}\ln\lt(\frac{\lt|\tau_{k_0}^{(h)}(g_h)\rt|}{d_{k_0}^2}\rt)\leq0\ ,
\ee 
and $\re(S)=0$ if and only if $\{g_{ve}\}_{e\in E_{\rm int}}\in \cc_{\rm int}$. Moreover, to compute $\partial_g S$, we deform $g_{ve}\to g_{ve}(1+it^{(v,e)}_{IJ}J^{IJ})$, where $J^{IJ}$ is the Lie algebra generator. For any pair $(v_*,e_*)$ with $e^*\subset\partial h$, the derivative of $\t^{(h)}_k(g_h)$ with respect to $t^{(v_*,e_*)}_{IJ}$ on $\cc_{\rm int}$ gives
\be
\frac{\partial\tau_{k}^{(h)}(g_h)}{\partial t^{(v_*,e_*)}_{IJ}}\Bigg|_{\cc_{\rm int}}=\begin{cases}
	- i d_{k}\tr_{(k,\rho)}\lt[P_{k}J^{IJ} \mathring g_{v_*e_*}^{-1}\mathring g_{v_*e'_*}P_{k}\overrightarrow{\prod}_{v\neq v_*}P_{k} \mathring g_{ve}^{-1}\mathring g_{ve'}P_{k}\rt] \\
	i d_{k_i}\tr_{(k,\rho)}\lt[P_{k} \mathring g_{v_*e_*'}^{-1}\mathring g_{v_*e_*}J^{IJ}P_{k}\overrightarrow{\prod}_{v\neq v_*}P_{k} \mathring g_{ve}^{-1}\mathring g_{ve'}P_{k}\rt]
\end{cases}	
=\pm d_{k}\tr_{(k,\rho)}\lt[P_k J^{IJ}P_k\rt]=0,
\ee
where $\mathring g_{ve}\in \cc_{\rm int}$. The distinction between these two cases depends on whether $e_*$ is oriented as incoming or outgoing at $v$, according to the orientation of $\partial h$. In the second step, the defining condition \eqref{g0solution} characterizing $\cc_{\rm int}$ is applied. Therefore, 
\be
\partial_g S\Big|_{\cc_{\rm int}} = \sum_h \frac{A_h}{E_{k_0}} \frac{\partial_g \tau_{k_0}^{(h)}}{\tau_{k_0}^{(h)}} \Bigg|_{\cc_{\rm int}}=0
\ee
We conclude that $\cc_{\rm int}$ is the critical manifold that collect all $\{g_{ve}\}_{e\in E_{\rm int}}$ satisfying $\re(S)=\partial_g S=0$.

Asymptotically, the integral in $\sa_{\ck}|_\cu$ localizes onto the critical manifold $\cc_{\rm int}$. Performing the stationary phase expansion for each $\vec{m}$, we obtain
\be
\sa_{\ck}\big|_{\cu}
&=&\frac{e^{\sum_h\b_{k_0}(\l_h)A_h}}{A^{D_{\rm int}/2}}\sum_\cs\cf_\cs\lt(\vec A,\vec \l\rt)\int\limits_{\cc_{\rm int}^\cs}\frac{\rmd \mu(\vec \sig)}{\sqrt{\det\lt(-\mathbb{H}_\cs(\vec\sig)/(2\pi)\rt)}}\int \prod_{(v,e_b)} \rmd g_{ve_b}\prod_b\o_b^{\rm bos}\Big|_{\cs,\vec \sig} \lt[1+O(A^{-1})\rt],\label{AKexponen}
\ee
where $\mathbb{H}_\cs(\vec\sig)=A^{-1}\partial^2_g S\big|_{\cs,\vec \sig}$ is the Hessian matrix for the transverse directions to $\cc_{\rm int}^\cs$. The notation $|_{\cs,\vec \sig}$ indicates the restriction onto $\cc_{\rm int}^\cs$, and $\vec\sig$ denotes the coordinates on $\cc_{\rm int}^\cs$. The function $\cf_\cs(\vec A,\vec \l)$ is defined as the restriction of $e^{S(\vec{A},g,\vec{m},\vec{\lambda})} \prod_{h} F_{h}(g_{h},m_{h},\lambda_{h})$ onto $\cc_{\rm int}^\cs$. Its dependence on $g$ appears only through $\tau_k^{(h)}$, which is constant over $\cc_{\rm int}^\cs$. Therefore, $\cf_\cs(\vec A,\vec \l)$ is constant on $\cc_{\rm int}^\cs$.
\be
\cf_\cs\lt(\vec A,\vec \l\rt)&=&e^{S(\vec{A},g,\vec{m},\vec{\lambda})}\prod_h\Fs_{h}(A_h,g_{h},\lambda_{h})\Big|_{\cs,\vec \sig}
=\prod_{h}\sum_{m_h}\frac{e^{\frac{iA_{h}}{E_{k_{0}}}\left[\varphi_{h}+\pi\Theta(-\varsigma_{h})+2\pi im_{h}\right]}\mathring Q_h\lt(m_h,\lambda_h,\varsigma_h\rt)}{\beta_{k_{0}}(\lambda_{h})E_{k_{0}}+i\left[\varphi_{h}+\pi\Theta(-\varsigma_{h})\right]+2\pi im_{h}}\\
\mathring Q_h\lt(m_h,\lambda_h,\varsigma_h\rt)&=&\prod_{k\neq k_{0}}\frac{1}{1-\lambda_{h}\varsigma_{h}d_{k}^{2}e^{-\beta_{k_{0}}(\lambda_{h})E_{k}-\frac{iE_{k}}{E_{k_{0}}}\left[\varphi_{h}+\pi\Theta(-\varsigma_{h})+2\pi m_{h}\right]}} ,\qquad \varphi_h=\mathrm{Arg}(\l_h).
\ee
In the formula \eqref{AKexponen}, $e_b$ denotes the edges connecting to the boundary and $g_{ve_b}\in\Slc$. The exponent $D_{\rm int}/2>0$ is half of the dimension of the Hessian matrix $\mathbb{H}_\cs(\vec\sig)$. The Hessian matrix $\mathbb{H}_\cs(\vec\sig)$ does not depend on $g_{ve_b}$ because $S$ only depends on $g_h$. The Hessian matrix also does not depend on $\vec{m}$. The nondegeneracy of $\mathbb{H}_\cs(\vec\sig)$ is discussed in Section \ref{Nondegenerate Hessian matrix}. The induced measure on $\cc_{\rm int}$ from the Haar measure $\prod_{v,e\neq e_b}\rmd g_{ve}$ is denoted by $\rmd\mu(\vec\sig)$, whose explicit expression is derived in Section \ref{Some explicit computations}. 

Observe that the smooth function $u = \prod_h \Fs_h \prod_b \omega_b^{\rm bos}$ appearing in the stationary phase integral $\int d\Omega\, e^S u$ of \eqref{AKexponen0} depends on the scaling parameter $A$, since $\Fs_h$ depends on $A$. It is important to verify that the error term in the stationary phase expansion for \eqref{AKexponen} remains of order $O(A^{-1})$. The stationary phase error is controlled by $A^{-1} \sum_{|\alpha| \leq 2} \sup_{\cu} | D_g^{\alpha} u |$~\cite{stationaryphase}, where $D_g^\alpha$ denotes a multi-index derivative as defined in Appendix~\ref{Bound the derivatives of tau}. By the continuity of $\fp_h$, the compactness of $\cu$ implies the compactness of $\fp_h(\cu)\in \mathscr{U}_h$, then Lemma~\ref{lemmaSmoothFs} establishes that $D_g^\alpha \Fs_h$ is uniformly bounded by a constant that does not depend on $A$. As a result, $\sum_{|\alpha|\leq 2}\sup_\cu | D_g^{\alpha} u |$ is independent of $A$, ensuring that the error term is indeed of order $O(A^{-1})$.

The function $|\cf_\cs|$ is bounded and periodic in $A$ (see Appendix \ref{sum over m}). Therefore, the leading asymptotics of $\sa_{\ck}|_{\cu}$ is exponentially growing as $A\to\infty$ \footnote{The asymptotics depends on the choice of $E_k$ and the cut-off function $\a_{p_h,k}$, because the cut-off $A_h$ and the uniform limit $A_h\to\infty$ may be different for different choices. But our final results \eqref{AKexponen41} and \eqref{completeAmpliResult} does not depend on the choice.}:
\be
\sa_{\ck}\big|_{\cu}\sim e^{A\sum_h\b_{k_0}(\l_h)a_h}A^{-D_{\rm int}/2}.
\ee

\begin{lemma}
For any compact neighborhood $\wt{\cu}\subset \mathscr{G}_{\rm int}$ that have no intersection with $\mathscr{U}_{\rm int}$, the integral of $\sa_\ck$ on $\wt{\cu}$ exponentially suppresses relative to $\sa_{\ck}|_{\cu}$, i.e. there exists $B>0$ such that
\be
\lt|\sa_{\ck}\big|_{\wt{\cu}}\rt|\leq B e^{\b' A},\qquad \b' < \sum_h\b_{k_0}(\l_h)a_h.
\ee

\end{lemma}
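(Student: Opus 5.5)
The plan is to bound the integrand of $\sa_\ck$ pointwise on $\wt{\cu}$ using Lemma \ref{lemmaBoundUc}, then integrate over the compact set. Since $\wt{\cu}\cap\mathscr{U}_{\rm int}=\emptyset$ and $\mathscr{U}_{\rm int}=\cap_h\fp_h^{-1}(\mathscr{U}_h)$, every point $g\in\wt{\cu}$ has at least one internal face $h$ with $\fp_h(g)\in\mathscr{U}_h^c$. Cover $\wt{\cu}$ by the finitely many closed sets $\wt{\cu}_{h_*}=\wt{\cu}\cap\fp_{h_*}^{-1}(\mathscr{U}_{h_*}^c)$, one for each internal face $h_*$. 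Each set is closed because $\mathscr{U}_{h_*}$ is open, so each is compact. It then suffices to bound the integral over each $\wt{\cu}_{h_*}$ separately, since there are finitely many faces.

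On $\wt{\cu}_{h_*}$ we apply part (1) of Lemma \ref{lemmaBoundUc} to the face $h_*$:
\[
|\o_{h_*}^{\rm bos}|\le C_{h_*}(g_{h_*},\Delta)\,e^{A a_{h_*}[\b_{k_0}(\l_{h_*})-\Delta]}.
\]
For every other internal face $h\neq h_*$ we apply part (2) with a small $\eps>0$:
\[
|\o_h^{\rm bos}|\le C_h(g_h,-\eps)\,e^{A a_h[\b_{k_0}(\l_h)+\eps]}.
\]
The prefactors $C_h$ are continuous, hence bounded on the compact set. The boundary factors $\o_b^{\rm bos}$ do not depend on $A$, since the boundary data are held fixed. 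Each is a finite sum, because the cut-off $A_b$ bounds $p_b$ and the spins, and each term is bounded by $\prod_i d_{k_i}^2$. So $\o_b^{\rm bos}$ is uniformly bounded in $g$, which gives
\[
|\sa_\ck|_{\wt{\cu}_{h_*}}|\le B_{h_*}\exp\Big(A\Big[\sum_h\b_{k_0}(\l_h)a_h - a_{h_*}\Delta+\eps\sum_{h\neq h_*}a_h\Big]\Big)\int \rmd\O .
\]
Choosing $\eps<\min_{h_*}a_{h_*}\Delta/\sum_h a_h$ makes $\b'<\sum_h\b_{k_0}(\l_h)a_h$. Summing over $h_*$ then gives the claim with $B=\sum_{h_*}B_{h_*}$.

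The main obstacle is finiteness of the remaining integration. The $g_{ve}$ with $e\in E_{\rm int}$ range over the compact set $\wt{\cu}$, but the boundary-edge variables $g_{ve_b}$ range over noncompact $\Slc$. For those I would not use a sup bound. Instead I would rely on the decay of $\o_b^{\rm bos}$ in $g_{ve_b}$, which comes from vanishing matrix coefficients at infinity (Lemma \ref{LemmaFgFxi}), together with the absolute-convergence result assumed under the 3-connectedness of $\G_v$ (\cite{Kaminski:2010qb,finite}). These show that $\int\prod\rmd g_{ve_b}\prod_b|\o_b^{\rm bos}|$ is finite and $A$-independent, and uniformly bounded for bulk variables in the compact $\wt{\cu}$. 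A secondary check is that $\Delta$ in Lemma \ref{lemmaBoundUc} can be chosen uniformly. It can, because it is defined from $\b_{k_0}-\b^*$ independently of $g_h$.
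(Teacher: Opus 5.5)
Your proposal is correct and is essentially the paper's own proof. It uses the same finite closed cover $V_h=\wt{\cu}\cap\fp_h^{-1}(\mathscr{U}_h^c)$ of $\wt{\cu}$, applies part (1) of Lemma \ref{lemmaBoundUc} to the face leaving $\mathscr{U}_h$ and part (2) with small $\eps$ to every other face, and sums over $h$. It is, if anything, more careful than the paper: you fix $\eps$ uniformly over the offending face, and you address the noncompact boundary-edge integrals over $g_{ve_b}$ via the decay of the $\Slc$ matrix coefficients and the 3-connectedness convergence result, where the paper simply asserts that the remaining integral of a continuous function is finite.
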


\begin{proof}
The complement $\mathscr{G}_{\rm int}\setminus \mathscr{U}_{\rm int}$ is given by the union of regions where at least at one face $\fp_h(g)\not\in \mathscr{U}_h$, $g=\{g_{ve}\}_{e\in E_{\rm int}}$. For any compact neighborhood $\wt{\cu}\subset \mathscr{G}_{\rm int}\setminus \mathscr{U}_{\rm int}$, we define
\be
W_h\equiv \lt\{ g \in \mathscr{G}_{\rm int}\, \big|\, \fp_h\lt(g\rt) \notin{\mathscr{U}}_h \rt\}.
\ee
The set $W_h$ is closed in $\mathscr{G}_{\rm int}$ because $\mathscr{G}_h\setminus \mathscr{U}_h$ is closed and $\fp_h$ is continuous. The set $V_h$ is identified as the intersection $V_h = \wt{\cu} \cap W_h$, so $V_h$ is compact in $\mathscr{G}_{\rm int}$, and $\wt{\cu} = \bigcup_{h} V_h$. Then we have:
\be
\left| \mathscr{A}_{\mathcal{K}}\big|_{\wt{\cu}} \right| \leq \sum_{h} \int_{V_h} \rmd\Omega(g) \left| \prod_{h} \omega_h^{\rm bos}(g) \prod_{b} \omega_b^{\rm bos}(g) \right|. 
\ee
It suffices to bound each term in this sum individually. Consider a fixed internal face $h'$ and the integral over the region $V_{h'}$. By definition, for any configuration in $V_{h'}$, the group variables $g_{h'}$ satisfy $g_{h'} \notin {\mathscr{U}}_{h'}$. We apply Lemma \ref{lemmaBoundUc}. There exists a constant $\Delta > 0$ and a function $C_{h'}(g_{h'},\Delta)$ continuous in $g_{h'}$ such that $\left| \omega_{h'}^{\rm bos} \right| \leq C_{h'}(g_{h'},\Delta) e^{A_{h'} [\beta_{k_0}(\lambda_{h'}) - \Delta]}$, while for other $h\neq h'$, there exists a function $C_h(g_h,-\eps) > 0$ continuous in $g_h$ such that $\lt|\o_h^{\rm bos}\rt|\leq C_h(g_h,-\eps) e^{A_h\lt[\b_{k_0}(\l_h)+\eps\rt]},\ \forall  g_h\in \mathscr{G}_h,\  \forall \eps>0$. Combining the bounds, we obtain
\be
\int_{V_h} \rmd\Omega(g) \left| \prod_{h} \omega_h^{\rm bos}(g) \prod_{b} \omega_b^{\rm bos}(g) \right|\leq e^{\b'A}\int_{V_h} \rmd\Omega(g) C_{h'}(g_{h'},\Delta) \prod_{h\neq h'} C_h(g_h,-\eps) \prod_{b} \lt|\omega_b^{\rm bos}(g) \right|
\ee
The integrand is a continuous function, so the integral gives a finite value, whose sum over $h$ gives the constant $B$. Moreover,
\be
\b'=\sum_h a_h\b_{k_0}(\l_h)-\Delta a_{h'}+\eps \sum_{h\neq h'}a_h<\sum_h a_h\b_{k_0}(\l_h),
\ee
because $\Delta>0$ and $\eps$ is arbitrarily small. 

\end{proof}

Let $D$ be any compact integration domain containing ${\mathscr{U}}_{\rm int}$. The domain $D$ is the disjoint union of compact sets $\cu \subset {\mathscr{U}}_{\rm int}$ and $\wt{\cu} = D \setminus {\mathscr{U}}_{\rm int}$ and an open set ${\mathscr{U}}_{\rm int} \setminus \cu$. The above shows that the spinfoam integral of $\sa_\ck$ over $\wt{\cu}$ exponentially suppresses relative to $\sa_{\ck}|_{\cu}$. By the regularity of the Haar measure, for any $\epsilon > 0$, we can choose the compact neighborhood $\cu$ so that the measure of ${\mathscr{U}}_{\rm int} \setminus \cu$ is less than $\epsilon$, which implies that the contribution of the spinfoam integral over ${\mathscr{U}}_{\rm int} \setminus \cu$ can be made arbitrarily small. Furthermore, since the stack amplitude $\sa_\ck$ is absolutely convergent, its integral over the complement of a sufficiently large compact set $D$ is arbitrarily small. Therefore, for large $A$, the leading order asymptotics of $\sa_\ck$ is determined by the behavior described in \eqref{AKexponen}, with any other contributions being exponentially suppressed. That is,
\be
\sa_\ck = \sa_\ck\big|_{\cu} \left[ 1 + O(A^{-\infty}) \right].
\ee

\section{Parametrization of critical manifold}\label{Some explicit computations}

In this section, we provide explicit parametrizations of the group variables $g_{ve}$ and derive an explicit expression of the measure $\rmd\mu(\vec{\sigma})$ in \eqref{AKexponen}.

Given the root complex $\ck$, we number the vertices by $v=v_i$, $i=1,\cdots, n$. For every edge $e=(i,j)$ for certain $i,j=1,\cdots,n$ ($i\neq j$), the pair of group variables $g_{v_ie} $ and $g_{v_j e}$ are re-labelled as $g_{ij}$ and $g_{ji}$. We use the following decomposition to parametrize each $g_{ij}\in\Slc$
\be
g_{ij}=u_{ij}e^{-ir_{ij}K^{3}}v_{ij},\qquad u_{ij}=e^{-i\psi_{ij}^{\prime}L^{3}}e^{-i\theta_{ij}^{\prime}L^{2}}\in\Su,\qquad v_{ij}=e^{-i\psi_{ij}L^{3}}e^{-i\theta_{ij}L^{2}}e^{-i\phi_{ij}L^{3}}\in\Su,
\ee
where $r_{ij}\in\R$ and $\psi_{ij}^{\prime},\theta_{ij}^{\prime},\psi_{ij},\theta_{ij},\phi_{ij}$ are Euler angles. To fix the $\Slc$ gauge freedom, we set $r_{ij}=0$ and $u_{ij}=v_{ij}=1$ for $g_{ij}=g_{v e_0(v)}$ and still allow SU(2) gauge transformation at the vertex. The Lie algebra generators relate to the Pauli matrices $\bm\sig^{i=1,2,3}$ by $L^i=\frac{1}{2}\bm\sigma^i,\ K^i=\frac{i}{2}\bm\sigma^i$ in the fundamental representation. The Haar measure $\rmd g_{ij}$ reads
\begin{eqnarray}
\rmd g_{ij}=\frac{1}{4\pi}\sinh^2\lt(r_{ij}\rt)\rmd r_{ij}\rmd u_{ij}\rmd v_{ij}.\label{gruv}
\end{eqnarray}
where $\rmd u_{ij},\rmd v_{ij}$ are Haar measures on SU(2).

For any internal face $h$, we label the vertices of $h$ by $1,\cdots,m$,
\be
\t_{k}^{(h)}&=&d_{k}\mathrm{Tr}_{(k,\rho)}\left[g_{12}P_{k}g_{21}^{-1}g_{23}P_{k}g_{32}^{-1}\cdots g_{ij}P_{k}g_{ij}^{-1}\cdots g_{m1}P_{k}g_{1m}^{-1}\right]
\ee
where
\be
g_{ij}P_{k}g_{ji}^{-1}=\left(u_{ij}e^{-ir_{ij}K^{3}}v_{ij}\right)P_{k}\left(v_{ji}^{-1}e^{ir_{ji}K^{3}}u_{ji}^{-1}\right).
\ee
is the ``holonomy'' along the edge $(i,j)$. The SU(2) variables $v_{ij}$ and $v_{ji}^{-1}$ commute with $P_k$, and one of them is associated to the SU(2) gauge redundancy. The gauge-invariant degrees of freedom is contained in the combination $v_{ij}v_{ji}^{-1}$. We introduce new variables to parametrize this combination:
\be
v_{ij}v_{ji}^{-1} = u_{ij}^{-1} H_{ij} u_{ji},\label{vvuHu}
\ee
where we define $H_{ij}\equiv H_e$ as an SU(2) holonomy along the internal edge $e \in E_{\rm int}$, and $H_{ji} \equiv H_{ij}^{-1}$. In the integration measure, one of $\rmd v_{ij}$ or $\rmd v_{ji}$ is redundant and can be dropped; we identify the remaining measure with $\rmd H_{ij}$.

Notice that when the gauge fixing sets $g_{ij}=g_{v,e_0(v)}=\mathbb{I}$, this corresponds to $u_{ij}=v_{ij}=1$ and $r_{ij}=0$ in the change of variables \eqref{vvuHu}. This gauge fixing prescription only sets at most one of $v_{ij}$ or $v_{ji}$ to the identity for each edge $(i,j)$. The SU(2) holonomy $H_{ij}$ is not fixed by the gauge, which follows from the injectivity condition $e_0(v)\neq e_0(v')$ for $v\neq v'$ (see Lemma \ref{lemmaInjectiveVE}).


Recall that the critical manifold is expressed as $\cc_{\rm int} = \bigcup_{\cs} \cc_{\rm int}^\cs$, where each sector $\cc_{\rm int}^\cs$ is labeled by $\cs = \{\varsigma_h\}_h$, with $\varsigma_h = \pm 1$ assigned to every internal face $h$. The space $\cc_{\rm int}^\cs$ consists of collections $\{g_{ve}\}_{e \in E_{\rm int}}$ such that
\be
g_{ve}^{-1}g_{ve'} \in \mathrm{SU}(2), \quad \text{for all } e, e' \subset \partial h, \; e \cap e' = v,
\qquad
\overrightarrow{\prod_{v \in \partial h}} g_{ve}^{-1} g_{ve'} = \varsigma_h \, \mathbb{I}, \qquad \varsigma_h = \pm 1.
\label{g0solution22}
\ee
Let us consider the first condition. At any vertex $v$ and any internal face $h$ bounded by $v$, $e_0$ (where the gauge fixing $g_{ve_0}=1$ is imposed) and another edge $e_1$ connecting $v$, $g_{ve_0}^{-1} g_{ve_1}\in\Su$ restricts $g_{ve_1}\in\Su$. For any other internal face $h'$ bounded by $v$ and $e_1$ and another edge $e_2$ connecting $v$, $g_{ve_1}^{-1} g_{ve_2}\in\Su$ restricts $g_{ve_2}\in\Su$. The restriction can propagate to all $e$ connecting to $v$ and thus gives $g_{ve}\in\Su$ for all $e\in E_{\rm int}$, because of the connectivity of $\G_{v,-}$. So we have for all edges in $E_{\rm int}$
\be
r_{ij}\approx 0.\label{H=1}
\ee
Here and in the following, we use $\approx$ for the equality that holds only on the critical manifold $\cc_{\rm int}$. It implies
\be
g_{ij}P_{k}g_{ji}^{-1}\approx  H_{ij}.
\ee
Then the second critical point condition in \eqref{g0solution} further restricts $H_{ij}$ to satisfy the $\cs$-flatness condition, i.e.
\be
H_{12}H_{23}\cdots H_{m1}\approx\varsigma_h\mathbb{I}.\label{HHH}
\ee
for any internal face $h$. Recall that the subcomplex $\ck_-$ (the interior of $\ck$) is the 2-complex consisting of only internal faces $h$, edges $e\in E_{\rm int}$ (on the boundary of $h$'s) and vertices as the end points of edges. 


The signs $\cs = \{\varsigma_h\}_h$ assigned to each internal face $h$ must satisfy a compatibility condition with respect to the 2-cycles of $\ck_-$: Let $C = \sum n_h h$ be any 2-cycle in $Z_2(\ck_-, \mathbb{Z})$. The product of holonomies around the faces in $C$ corresponds to the holonomy of a trivial loop in the 1-skeleton, which must be $\mathbb{I}$. This imposes the following necessary condition for the critical manifold $\cc^\cs_{\rm int}$ to be non-empty:
\be
\prod_{h} \varsigma_h^{n_h} = 1 \qquad \text{for every generator of } H_2(\ck_-).\label{closedCycle=1}
\ee
If this condition fails, then $\cc^\cs_{\rm int} = \emptyset$; that is, there are no critical points for such an assignment. Therefore, in the sum over $\cs$ appearing in the stationary phase approximation \eqref{AKexponen}, we restrict to those assignments of $\{\varsigma_h\}_h$ that satisfy this compatibility condition. In what follows, we assume that the assignment of $\varsigma_h$ is admissible\footnote{For simply-connected $\ck_-$, Eq.\eqref{closedCycle=1} implies $\varsigma_h=\prod_{e\subset\partial h} \varsigma_e$, where $\varsigma_e=\pm1$ is defined on each edge.}.

The on-shell gauge freedom $\cg_{\rm int}$ include both $u_{ij}$ and
\begin{eqnarray}
H_{ij}\to x_i H_{ij} x_j^{-1},\qquad x_i\in\Su,
\end{eqnarray}
They are called on-shell gauge freedom because the integrand $e^S\int\prod_{e_b}\rmd g_{ve_b}\prod \o_b$ does not depend on them on $\cc_{\rm int}$.

Choosing a base vertex $v_*$ and a maximal spanning tree $\ct$ in the 1-skeleton of $\ck_-$. A maximal spanning tree is a subgraph that connects all vertices but contains no loops, and for any vertex $v$, there is a unique path $\calp_{v_*\to v}$ within the tree $\ct$ from $v_*$ to $v$. For a give set of $\{H_{e}\}_{e\in E_{\rm int}}$, we define the gauge transformation with
\begin{eqnarray}
x_i=\mathfrak{hol}({\calp_{v_*\to i}}),	
\end{eqnarray}
being the SU(2) holonomy made by $H_{e}$ traveling from $v_*$ to the vertex $i$. The gauge transformation set $H_{e}=1$ along all edges in $\ct$. Moreover, we adopt $\{H_{e}\}_{e\in \ct}\to \{x_i\}_{i\neq v_0}$ to be a change of variables\footnote{For any vertex $i$ in $\ct$, there is a unique vertex $j$ such that $(i,j)$ is an edge and the unique path $\calp_{v_*\to i}$ travels through $j$. Then $x_i=H_{ij}x_j$ and $\rmd x_i=\rmd H_{ij}$.}. After the change of variables, $\t_k^{(h)}$ can be generally written as
\be
\t_{k}^{(h)}&=&d_{k}\mathrm{Tr}_{(k,\rho)}\left[\overrightarrow{\prod_{(i,j)\subset \partial h}} \left(u_{ij}e^{-ir_{ij}K^{3}}u_{ij}^{-1}\right)P_{k}\lt(x_i H_{ij}x_j^{-1}\rt)P_k\left(u_{ji}e^{ir_{ji}K^{3}}u_{ji}^{-1}\right)\right],
\ee
where $H_{ij}\neq 1$ only for $(i,j)\not\subset\ct$.


Let us first consider a general topologically non-trivial $\ck_-$. For each edge $e=(u,v) \notin \ct$, there is a unique fundamental loop $\ell_e$ based at $v_*$ formed by the path $\calp_{v_*\to u}$ in $\ct$, the edge $e$, and the path $\calp_{v\to v_*}$ in $\ct$. All other loops in the 1-skeleton of $\ck_-$ are products of fundamental loops, so a minimal set of generator of the fundamental group $\pi_1(\ck_-)$ can be chosen to be fundamental loops. Then along each generator, there is only a single edge that does not belong to $\ct$. We denote by $\Fl$ the set of edges along these generators. We define a local coordinate chart in the space of $\{g_{ve}\}_{e\in E_{\rm int}}$. The coordinates are given by\footnote{For e.g. $H_e\in \Su$, the coordinates can be chosen as Euler angles.}
\be
\{r_{ij}\}_{(i,j)},\ \{u_{ij}\}_{(i,j)},\ \{x_i\}_{i\neq v_0},\ \{H_l\}_{l\in\Fl, l\not\in\ct},\ \{H_e\}_{e\not\in\Fl, e\not\in\ct}.\label{parametri}
\ee
The boost parameters $\{r_{ij}\}_{(i,j)}$ and the holonomies $\{H_e\}_{e\not\in\Fl,\, e\not\in\ct}$ (i.e., those not associated to the spanning tree or cycle generators) describe directions transverse to the critical manifold $\cc_{\rm int}$. The critical point conditions constrain $r_{ij}\approx 0$ and determine all $\{H_e\}_{e\not\in\Fl,\, e\not\in\ct}$ in terms of the remaining variables.

Recall that $\cc_{\rm int}$ is a disjoint union of $\cc_{\rm int}^\cs$, where $\cs$ collects the signs in the $\cs$-flatness condition \eqref{HHH} at all $h$. $\vec\sig$ described above are coordinates on each $\cc_{\rm int}^\cs$ but are independent of $\cs$. Under the gauge that $H_e=1$ for $e\subset\ct$ and fixing $\cs$, the remaining $\{H_e\}_{e\not\in\Fl, e\not\in\ct}$ are uniquely determined by using \eqref{HHH} (see Lemma \ref{HeenotinFlenotinct}). However, different $\cs$ generally leads to different $\{H_e\}_{e\not\in\Fl, e\not\in\ct}$. One might view $\{H_e\}_{e\not\in\Fl, e\not\in\ct}$ are multi-valued functions of $\vec{\sig}$ and become single-valued on the ``cover space'' $\cc_{\rm int}$.

When $\pi_1(\ck_-)$ is nontrivial, the fundamental group $\pi_1(\ck_-)$ has generators that are subject to algebraic relations. These in turn induce nontrivial constraints among the holonomies $\{H_l\}_{l\in\Fl,\,l\notin\ct}$. Such relations can lead to singularities in the space of SU(2) holonomies that satisfy the $\cs$-flatness condition \eqref{HHH}, and hence introduce singularities into $\cc_{\rm int}^\cs$ \footnote{
	For example, on a 2-torus, the relation $H_A H_B = H_B H_A$ of holonomies along $A$ and $B$ cycles produces a singularity when $(H_A, H_B) = (1,1)$.
}. As a result, the critical manifold $\cc_{\rm int}$ may fail to be smooth in general, which complicates the application of the stationary phase approximation. For these reasons, we will henceforth restrict to the case where $\ck_-$ is simply connected, postponing the general analysis to the future investigation. Physically, simple connectivity is valid for all local patches of spacetime. With this restriction, the variables $\{H_l\}_{l\in\Fl,\,l\notin\ct}$ are absent, so the parametrization \eqref{parametri} reduces to
\be
\{r_{ij}\}_{(i,j)},\ \{u_{ij}\}_{(i,j)},\ \{x_i\}_{i\neq v_0},\ \{H_e\}_{e\not\in\ct},\label{parametri2}
\ee
In this parametrization, the variables $\{r_{ij}\}_{(i,j)}$ and $\{H_e\}_{e\not\in\ct}$ span the directions transverse to the critical manifold $\cc_{\rm int}$. Restricting to the critical manifold $\cc_{\rm int}^\cs$, the conditions $r_{ij} \approx 0$ is imposed, and for all $e \notin \ct$, the $\cs$-flatness constraint enforces $H_e = \pm \mathbb{I}$, where the sign is uniquely fixed by $\cs$ for each $e$ (see Lemma \ref{HeenotinFlenotinct}). The Hessian matrix $\mathbb{H}$ arises as the matrix of second derivatives of $S$ with respect to these transverse variables. The variables $\{u_{ij}\}_{(i,j)},\ \{x_i\}_{i\neq v_0},\ \{H_e\}_{e\in\Fl,\, e\not\in\ct}$ provide coordinates $\vec\sig$ on $\cc_{\rm int}$.

To construct the measure on $\cc_{\rm int}$, we replace the non-redundant $\rmd v_{ij}$ in \eqref{gruv} with $\rmd H_{ij}$, and decompose the product of Haar measures over all internal edges as $\prod_{e\in E_{\rm int}}\rmd H_e = ( \prod_{e \in \ct} \rmd H_e ) ( \prod_{e \notin \ct} \rmd H_e )$. Here, $\prod_{e \in \ct} \rmd H_e$ can be rewritten as $\prod_{i \neq v_*}\rmd x_i$ by the change of variables. After removing $\prod_{e \notin \ct} \rmd H_e$ corresponding the transverse directions, the measure $\rmd\mu(\vec{\sigma})$ on $\cc_{\rm int}^\cs$ takes the form
\begin{eqnarray}
\rmd\mu(\vec{\sig})=\prod_{(i,j)}\frac{\rmd u_{ij}}{4\pi}\prod_{i\neq v_*} \rmd x_i,
\end{eqnarray}
which is independent of $\cs$. As a result of the gauge fixing, the integration over $\rmd u_{ij}$ associated with the edge $e_0(v)$ is omitted. Correspondingly, the critical manifold has the structure
\be
\cc_{\rm int}^\cs \cong \Su^{|E_{\rm int}|-|V|} \times \Su^{|V|-1} = \Su^{|E_{\rm int}|-1}~.
\ee
where $|V|$ denotes the number of vertices in $\ck$. 

Note that the SU(2) gauge symmetry associated to the base vertex $v_*$ does not contribute to the degrees of freedom parametrizing $\cc_{\rm int}^\cs$. The reason is: Lattice SU(2) gauge transformations $\{x_i\}$ act transitively on the space of solutions to the $\cs$-flatness condition. All such solutions can be reached via gauge transformations acting on the reference configuration $H_e = \pm\mathbb{I}$ for all $e\in E_{\rm int}$. However, the gauge transformation at $v_*$ corresponds to the global SU(2) transformation at all vertices by the gauge fixing on $\ct$, so it leaves the reference configuration invariant and therefore acts as the stabilizer, rather than introducing an degree of freedom $\cc_{\rm int}^\cs$.

\section{Hessian matrix}\label{Nondegenerate Hessian matrix}

Using the parametrization \eqref{parametri}, we integrate out the transverse coordinates $\{r_{ij}\}_{(i,j)},\{H_e\}_{e\not\in\ct}$ in the stationary phase integral \eqref{AKexponen0}. The Hessian matrix $\mathbb{H}$ is computed with respect to the transverse directions. In the following, the coordinate index for the transverse directions are denoted by $\a,\b$. To simplify the formulae, we assume $A_h=A$ and $k_0$ to be constant over different $h$ ($\l_h$ is constant over $h$), the expression of the Hessian matrix is given by
\be
\mathbb{H}_{\a\b}=\sum_{h}C_0^{-1}\partial_{\a}\partial_{\b}\t_{k_0}^{(h)}\left(g\right)\Big|_{\cc_{\rm int}},\qquad C_0= d_{k_0}^{2}E_{k_0}.
\ee

To compute the second derivatives of $\t_k^{(h)}$, it is convenient to expand
\be
e^{-ir_{ij}K^{3}}&=& 1-ir_{ij}K^{3}-\frac{1}{2}r_{ij}^{2}\left(K^{3}\right)^{2}+O(r^3),\\
H_{e}&=&\mathring H_e(\cs) e^{-i\sum_{a=1}^{3}t_{e}^{a}L^{a}}={\mathring H_e}(\cs)\lt[1-i\sum_{a=1}^{3}t_{e}^{a}L^{a}-\frac{1}{2}\sum_{a,b=1}^{3}t_{e}^{a}t_{e}^{b}L^{a}L^{b}+O(t^3)\rt],
\ee
where $t^a_{e}=-t^a_{e^{-1}}$. The perturbations $r_{ij}$ and $t^a_e$ represent directions transverse to the critical manifold $\cc_{\rm int}$, and are associated respectively with boosts on the edges $(i,j)$ and with SU(2) holonomies $H_e$ for edges $e$ not lying on the maximal spanning tree. On $\cc_{\rm int}^\cs$, each $H_e$ restricts to $\mathring H_e(\cs) = \pm \mathbb{I}$, with the sign determined by the choice of $\cs$.

The resulting Hessian matrix $\mathbb{H}_{\a\b}$ is a polynomial of the Barbero-Immirzi parameter $\g$. But $\mathbb{H}_{\a\b}$ becomes simplified if we only focus on the leading order of small $\gamma$. In particular, due to the simplicity constraint and $\rho=\g (k+2)$, we have \cite{generalize} \footnote{Recall that here $k=2j$ is an integer.}
\be
\langle k,m|P_{k}K^{3}P_{k}|k,n\rangle=-\gamma\langle k,m|L^{3}|k,n\rangle=O\left(\gamma\right).
\ee
The Hessian matrix is a direct sum of $r$-$r$ and $t$-$t$ blocks as $\g\to0$, due to the off-diagonals
\be
\frac{\partial^{2}}{\partial r_{ij}\partial t_{mn}^{a}}\t_{k_0}^{(h)}\Big|_{\cc_{\rm int}^\cs}&=&\pm d_{k_0}\mathrm{Tr}_{(k_0,\rho_0)}\left[\cdots\left(u_{ij}^{\mp 1}K^{3}u_{ij}^{\pm 1}\right)P_{k_0}\left(x_{i}\mathring{H}_{ij}x_{j}^{-1}\right)P_{k_0}\cdots P_{k_0}\left(x_{m}\mathring{H}_{mn}L^{a}x_{n}^{-1}\right)P_{k}\cdots\right]\nonumber\\
&=&\pm d_{k_0}\mathrm{Tr}_{(k_0,\rho_0)}\left[\cdots\left(u^{\mp1}_{ij}P_{k_0}K^{3}P_{k_0}u_{ij}^{\pm1}\right)\left(x_{i}\mathring{H}_{ij}x_{j}^{-1}\right)\cdots \left(x_{m}\mathring{H}_{mn}L^{a}x_{n}^{-1}\right)\cdots\right]\nonumber\\
&=&O\left(\gamma\right).\label{rtoffdiagonal}
\ee
where $\rho_0 = \gamma(k_0 + 2)$, and the $\pm$ sign corresponds to whether $r_{ij}$ associates with $g_{ij}$ or $g_{ij}^{-1}$. In the second step, we use the property that all operators in the trace commute with $P_k$ except for $K^3$.

Furthermore, the $r$-$r$ block is block-diagonal, where each small block associates to $r_{ij}$ at a given vertex $i$. Indeed, at the vertex $i$, we have the diagonal entries
\be
\sum_{h}C_0^{-1} \frac{\partial^{2}}{\partial r_{ij}^{2}}\t_{k_{0}}^{(h)}\Big|_{\cc^\cs_{\rm int}}
&=&-\sum_{h;(i,j)\subset\partial h}C_0^{-1} \varsigma_h^{k_0} d_{k_{0}} \mathrm{Tr}_{(k_{0},\rho_{0})}\left[P_{k_{0}}\left(K^{3}K^{3}\right)P_{k_{0}}\right]\nonumber\\
& =&-\sum_{h;(i,j)\subset\partial h}C_1\varsigma_h^{k_0}+O\left(\gamma^2\right),\qquad \qquad C_1^{(h)}=\frac{1}{6}d_{k_0}^{2}\left(d_{k_0}+1\right)C_0^{-1}.\label{rrdiagonal}
\ee
In the first step, we have used the $\cs$-flatness condition \eqref{HHH} and the relation $D^{k_0}(\varsigma_h \mathbb{I})=\varsigma_h ^{k_0} D^{k_0}( \mathbb{I})$ for Wigner $D$-matrix on $\ch_{k_0}$. The off-diagonal entries are computed
\be
\sum_{h}C_0^{-1}\frac{\partial^{2}}{\partial r_{ij}\partial r_{im}}\t_{k_{0}}^{(h)}\Big|_{\cc_{\rm int}^\cs}
&=&\sum_{h;(i,j),(i,m)\subset\partial h}C_0^{-1}\varsigma_h^{k_0} d_{k_{0}}\mathrm{Tr}_{(k_{0},\rho_{0})}\left[P_{k_{0}}\left(K^{3}U_{jim}K^{3}\right)P_{k_{0}}U_{jim}^{-1}\right]\nonumber\\
&=&\sum_{h;(i,j),(i,m)\subset\partial h}C_1\varsigma_h^{k_0} \cos\lt(\theta_{jim}\rt)+O\left(\gamma\right).\label{rroffdiagonal}
\ee
Here $\theta_{jim}$ is one of the Euler angles of $U_{jim}=u^{-1}_{ij}u_{im}\in\Su$, and $U_{jim}^{-1}$ appears because of \eqref{g0solution22}. For $r_{im}$ and $r_{jn}$ associate to two different vertices $i\neq j$
\be
\frac{\partial^{2}}{\partial r_{im}\partial r_{jn}}\t_{k_{0}}^{(h)}\Big|_{\cc^\cs_{\rm int}}=O(\g).
\ee
for all $m, n$, since every occurrence of $K^3$ is enclosed between projectors $P_{k_0}$. Therefore, in the limit $\g\to0$, the $r$-$r$ block becomes block-diagonal.

If the coupling constant $\lambda_h$ is chosen such that the condensation spin $k_0/2$ is an integer, then $\varsigma_h^{k_0}=1$ becomes constant, which simplifies the Hessian matrix. In this situation, we can prove that the Hessian matrix is non-degenerate in the limit $\g \to 0$.

\begin{lemma}
When $k_0\in2\Z_+$, the $r$-$r$ block is nondegenerate as $\g\to0$ if $\G_{v,-}(v)$ is connected for all vertices $v$.
\end{lemma}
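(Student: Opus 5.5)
Since the $r$-$r$ block is block-diagonal as $\g\to0$, with one small block per vertex $i$ (the entries for $r_{im},r_{jn}$ with $i\neq j$ are $O(\g)$), I will prove nondegeneracy separately for each vertex block. For $k_0\in2\Z_+$ every factor $\varsigma_h^{k_0}$ equals $1$. By \eqref{rrdiagonal} and \eqref{rroffdiagonal}, the block at vertex $i$ is then, up to the positive factor $C_1$ and $O(\g)$ corrections,
\[
M^{(i)}_{jm}=-\,n_{ij}\,\delta_{jm}+\sum_{h;\,(i,j),(i,m)\subset\partial h}\cos\theta_{jim}\,(1-\delta_{jm}).
\]
Here $n_{ij}$ is the number of internal faces containing the edge $(i,j)$, and the indices $j,m$ run over the non-gauge-fixed internal edges at $i$; $r$ on $e_0(i)$ is fixed to zero by the gauge. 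Since $\theta_{jim}$ is the polar angle of $u_{ij}^{-1}u_{im}$, I will interpret $\cos\theta_{jim}=\hat n_j\cdot\hat n_m$, where $\hat n_j=u_{ij}\hat z$ is the boost direction of $r_{ij}$. With this reading, $M^{(i)}$ is minus a quadratic form built on the graph $\G_{i,-}$: the nodes are the internal edges at $i$ and the links are the internal faces.

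The key step is to rewrite this form as a sum of squares. For $x\in\R^{\#\text{edges}}$ I claim
\[
-x^TM^{(i)}x=\frac12\sum_{h\ni i}\bigl|x_j\hat n_j-x_m\hat n_m\bigr|^2+\frac12\sum_{h\ni i}\bigl(x_j^2+x_m^2\bigr)\cdot 0+\dots
\]
I will compute this directly from the trace $d_{k_0}\mathrm{Tr}[P_{k_0}(\hat n_j\cdot\vec K)^2P_{k_0}]$ rather than trust the sketch above. Concretely, I will go back to the second-order expansion of each $\tau^{(h)}_{k_0}$ along the two boost directions at $i$ belonging to $h$. At $\g=0$ the contribution of face $h$ should be proportional to $-\mathrm{Tr}_{(k_0,0)}[P(x_j\hat n_j\cdot\vec K-x_m\hat n_m\cdot\vec K)^2P]$, which is $\le 0$. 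This works because the critical configuration makes the face trace a trace of a product of SU(2) elements equal to $\mathbb{I}$, so the perturbation at $i$ acts as $e^{-ix_jK_{n_j}}(\cdots)e^{ix_mK_{n_m}}$ sandwiched between $P_{k_0}$. Using $\mathrm{Tr}[P K^aK^b P]\propto\delta^{ab}$ (the value $C_1$ fixes the constant), the face contribution is $-c\,|x_j\hat n_j-x_m\hat n_m|^2$ with $c>0$. This reproduces exactly the diagonal $-n_{ij}$ and off-diagonal $\hat n_j\cdot\hat n_m$ entries. Hence $-M^{(i)}$ is positive semidefinite.

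It remains to show the kernel is trivial, and I expect this to be the main obstacle. If $x$ is in the kernel, then $x_j\hat n_j=x_m\hat n_m$ for every pair joined by a face in $\G_{i,-}$. By connectivity of $\G_{i,-}$ the vector $x_j\hat n_j$ is the same for all nodes, including $e_0(i)$, where $x=0$ by gauge fixing. Therefore all $x_j=0$, and $M^{(i)}$ is strictly negative definite. The whole $r$-$r$ block is then nondegenerate at $\g=0$, and by continuity of the determinant it stays nondegenerate for small $\g$. The delicate points are, first, checking that the gauge-fixed edge $e_0(i)$ genuinely acts as an anchor (this is where $e_0(v)\in E_{\rm int}$ is used), and second, checking the sign conventions so that the $\pm$ orientations of $r_{ij}$ versus $g_{ij}^{-1}$ produce a difference and not a sum. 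If the orientation produced a sum instead, the same argument would go through with $\hat n_m\to-\hat n_m$.
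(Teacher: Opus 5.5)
Your proof is correct and is essentially the paper's argument: your per-face term $|x_j\hat n_j-x_m\hat n_m|^2$ is exactly the paper's $T_h(r)=r_{e_1}^2+r_{e_2}^2-2r_{e_1}r_{e_2}\cos\theta$ (the vector form is a geometric repackaging of $|\cos\theta|\le 1$), and anchoring at the gauge-fixed edge $e_0(v)\in E_{\rm int}$ and then propagating through the connected graph $\Gamma_{v,-}$ is the same kernel argument the paper uses. The only fix needed is to drop the stray display with the factor $\tfrac12$ and the ``$\cdot 0+\dots$'' terms and state the exact identity $-x^TM^{(i)}x=\sum_{h\ni i}|x_j\hat n_j-x_m\hat n_m|^2$, which has coefficient one and follows algebraically from the paper's entries, so no direct recomputation of the trace is required.
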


\begin{proof}
The $r$-$r$ block is nondegenerate if and only if every small blocks associated with a vertex is nondegenerate. Focus on a single vertex $v$ and define a quadratic form $Q(r)=\sum_{e,e'}\mathbf{M}_{e,e'}r_e r_{e'}$ with $r_e,r_{e'}\in\R$, where $e,e'$ are edges connecting to $v$ but  not connecting to the boundary. The matrix $\mathbf{M}$ has the diagonals $\mathbf{M}_{e,e} = -\sum_{h; e\subset \partial h} C_1$ and off diagonals $\mathbf{M}_{e,e'} = \sum_{h; e,e'\subset \partial h} C_1\cos(\theta_{e,e'})$. $k_0\in2\Z_+$ implies $\varsigma_h^{k_0}=1$. The quadratic form can be written as
\be
Q(r)=-\sum_{h,v\in \partial h} C_1 T_h(r),\qquad T_h(r)=r_{e_1(h)}^2+r_{e_2(h)}^2-2r_{e_1(h)} r_{e_2(h)}\cos(\theta_{e,e'}),\qquad C_1>0,
\ee
where $e_1(h),e_2(h)\subset\partial h$ are the pair of edges connecting the vertex $v$. For any $r_e\in\R$, we have $T_h(r)\geq 0$, and it implies $Q(r)\leq 0$. Moreover, $Q(r)= 0$ if and only if $T_{h}(r)=0$. By gauge fixing $r_{e_0}=0$ for one edge $e_0$, $T_{h}(r)=0$ implies $r_e=0$ for all $e$ sharing an internal face $h$ with $e_0$. By induction, $r_e=0$ propagates to all edges $e$ by the connectivity of $\G_{v,-}$. The fact that $Q(r)=0$ implies $r=0$ under gauge fixing indicates that the small block associated to $v$ is nondegenerate, so the $r$-$r$ block is nondegenerate.
\end{proof}


For the $t$-$t$ block of the Hessian, we have the diagonal entries:
\be
\sum_{h}C_0^{-1}\frac{\partial^{2}}{\partial t_{e}^{a}\partial t_{e}^{b}}\t_{k_{0}}^{(h)}\Big|_{C_{\rm int}^\cs}
&=&-\sum_{h;e\subset\partial h}C_0^{-1}\varsigma_h^{k_0} d_{k_{0}}\mathrm{Tr}_{k_{0}}\left[L^{a}L^{b}\right]
=-C_2\delta^{ab}\sum_{h;e\subset\partial h}\varsigma_h^{k_0},\nonumber\\
C_2&=& \frac{1}{12}d^2_{k_0}(d_{k_0}-1)\left(d_{k_0}+1\right)C_0^{-1},
\ee
and the off-diagonal entries:
\be
\sum_{h} C_0^{-1}\frac{\partial^{2}}{\partial t_{e}^{a}\partial t_{e'}^{b}}\t_{k_{0}}^{(h)}\Big|_{\cc_{\rm int}^\cs}
&=&-\sum_{h;e,e'\subset\partial h}s_{e}(h)s_{e'}(h) C_0^{-1} \varsigma_h^{k_0} d_{k_{0}}\mathrm{Tr}_{k_{0}}\left[L^{a}L^{b}\right]\nonumber\\
&=&-C_2\sum_{h;e,e'\subset\partial h}s_{e}(h)s_{e'}(h)\varsigma_h^{k_0},
\ee
where both $e$ and $e'$ are not along the maximal spanning tree $\ct$. The sign $s_{e}(h)=1$ if the orientation of the edge $e$ aligns with the orientation of $\partial h$, otherwise $s_{e}(h)=-1$. 


Unlike $r_{ij}$ which only associates to the vertex $i$, $t^a_{e=(i,j)}$ relates to both vertices $i$ and $j$. So, the $t$-$t$ block of the Hessian is not block diagonal.

\begin{lemma}

When $k_0\in2\Z_+$ and $K_-$ is simply connected, the $t$-$t$ block is non-degenerate.

\end{lemma}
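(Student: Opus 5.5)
With $k_0$ even we have $\varsigma_h^{k_0}=1$, so the $t$-$t$ block has a clean quadratic-form structure. For transverse variables $t_e=(t^a_e)_{a=1,2,3}\in\R^3$, one for each $e\notin\ct$, the diagonal and off-diagonal entries computed above give
\be
Q(t)=\sum_{a}\sum_{e,e'\notin\ct}\mathbb{H}^{tt}_{(e,a),(e',a)}t^a_e t^a_{e'}=-C_2\sum_h\Big|\sum_{e\subset\partial h,\,e\notin\ct}s_e(h)\,t_e\Big|^2 .\nonumber
\ee
To verify this, I will expand the square. The diagonal terms give $-C_2\sum_{h\ni e}|t_e|^2$. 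The cross terms give $-C_2\sum_h s_e(h)s_{e'}(h)\,t_e\cdot t_{e'}$, which matches the stated off-diagonal entries. Since $C_2>0$ for $k_0\ge 1$, we get $Q\le0$. It then suffices to show that $Q(t)=0$ forces $t=0$.

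\textbf{Linear-algebra step.} $Q(t)=0$ holds exactly when $\partial^T t:=\sum_{e\subset\partial h}s_e(h)t_e=0$ for every internal face $h$, with $t_e:=0$ for $e\in\ct$. This is a cellular coboundary condition. Let $t\in C^1(\ck_-;\R^3)$ be the $1$-cochain supported on $E_{\rm int}\setminus\ct$. Then $\delta t=0$ in $C^2(\ck_-;\R^3)$, so $t$ is a $1$-cocycle. Because $\ck_-$ is simply connected, $H^1(\ck_-;\R)=\mathrm{Hom}(\pi_1,\R)=0$. So $t=\delta x$ for a $0$-cochain $x$, that is, $t_{(i,j)}=x_j-x_i$.

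\textbf{Using the tree.} On every tree edge we have $t_e=0$. The spanning tree $\ct$ connects all vertices, so $x$ is constant, and therefore $t_e=x_j-x_i=0$ for all $e\notin\ct$ as well. Hence $Q(t)=0$ implies $t=0$. Because $Q$ is negative semidefinite, this gives that the block is negative definite and in particular nondegenerate.

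\textbf{Main obstacle.} The delicate point is the first step: checking that the orientation signs $s_e(h)$ and the factor $\mathrm{Tr}_{k_0}[L^aL^b]\propto\delta^{ab}$ really assemble the Hessian into this exact sum of squares. Doing so needs two facts. First, the expansion of each $H_e=\mathring H_e e^{-it_e\cdot L}$ must appear at first order in $\tau^{(h)}$ as $s_e(h)\,t_e\cdot L$. This uses that $t_{e^{-1}}=-t_e$, and that conjugation by $\mathring H_e=\pm\mathbb{I}$ and by the gauge elements $x_i$ acts trivially on the generators after taking the trace (more precisely, the $x_i$ rotate $L^a$ by a common $SO(3)$ rotation on each face, which preserves the Euclidean norm). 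Second, the $t$-$t$ second derivatives within a single face must combine the same-edge terms and the cross terms with the same coefficient. This holds because the trace of $\mathring H$-products over the face equals $\varsigma_h^{k_0}d_{k_0}=d_{k_0}$ together with $\sum_a L^aL^a$-type contractions. If the $SO(3)$ rotations by the $x_i$ differ along a face, I would first absorb them by a change of variables $t_e\mapsto \mathrm{Ad}_{x_i}t_e$. This change has unit Jacobian and does not affect nondegeneracy.
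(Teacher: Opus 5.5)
Your argument is correct, and its first half matches the paper's. The paper also writes the block as $Q(t)=-C_2\sum_h|J_h|^2$ with $J_h=\sum_{e\subset\partial h}s_e(h)\,\mathrm{Ad}_{G_{h,e}}t_e$, notes that $G_{h,e}=\pm\mathbb{I}$ so the adjoint actions drop out, and reduces nondegeneracy to showing that $J_h=0$ for all $h$ (with $t=0$ on $\ct$) forces $t=0$. The concern in your last paragraph does not arise. At $r_{ij}=0$ the factors $x_j^{-1}x_j$ cancel between consecutive edges of $\partial h$, leaving a single overall conjugation that the trace removes, so no change of variables is needed.

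You differ from the paper in the final step. The paper stays with holonomies: for $e\notin\ct$ the fundamental loop $\ell_e$ has linearized holonomy $\mathrm{Ad}_{G_e}t_e$. Since $\pi_1(\ck_-)$ is trivial, $\ell_e$ is a product of (conjugated) face boundaries, each with vanishing linearized holonomy, so $t_e=0$. You instead read the condition as $\delta t=0$ in the untwisted cellular cochain complex $C^\bullet(\ck_-;\R^3)$ and quote $H^1(\ck_-;\R)\cong\mathrm{Hom}(\pi_1(\ck_-),\R)=0$.

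Your route is shorter and shows why the conjugating paths are harmless, which the paper's phrase ``product of face loops'' leaves implicit: at linear order around a central background everything is abelian. It also gives more. Each class in $H^1$ has exactly one representative vanishing on a spanning tree, so the kernel of this quadratic form is isomorphic to $H^1(\ck_-;\R)\otimes\R^3$. As far as this block is concerned, only $b_1(\ck_-)=0$ enters, not full simple connectivity. The paper's version is more hands-on and reuses the fundamental-loop machinery of its parametrization of $\cc_{\rm int}^\cs$.
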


\begin{proof}
We denote by $E_{\rm var}$ to be the set of edges not connecting to boundary and not belonging to $\ct$. We define the following quadratic form associated to the \(t\)-\(t\) block
\be
Q(t) = \sum_{e,e' \in E_{\text{var}}} \sum_{a,b} \mathbb{H}_{(e,a),(e',b)} \, t_e^a t_{e'}^b,\qquad t_e^a \in\R,
\ee
and set \(t_e = 0\) except for $e\in E_{\rm var}$. Let us fix, for each face \(h\), a base vertex \(v_h\) on its boundary. For each edge \(e \in \partial h\), let \(G_{h,e}\) be the background holonomy made by $\mathring H_e$ from \(v_h\) to the target of \(e\) along \(\partial h\) (following the face orientation). The holonomy $G_{h,e}$ reduces to $G_{h,e}=\pm\mathbb{I}$ due to $\mathring H_e=\pm\mathbb{I}$ by the gauge fixing on $\ct$. The quadratic form $Q(t)$ can be written as 
\be
Q(t)=\sum_h Q_h(t),\qquad 
Q_h(t) = -C_2 \sum_{a=1}^3 J^a_h J^a_h,\qquad J_h = \sum_{e \in \partial h} s_e(h) \mathrm{Ad}_{G_{h,e}} t_e \in\mathfrak{su}(2).
\ee
The condition $k_0\in2\Z_+$ implies $\varsigma_h^{k_0}=1$. \(Q(t) \leq 0\) by \(C_2 > 0\), so \(Q(t) = 0\) if and only if for every internal face \(h\),
\be
J_h=\sum_{e \in \partial h} s_e(h) \mathrm{Ad}_{G_{h,e}} t_e=\sum_{e \in \partial h} s_e(h)  t_e = 0.\label{Jh=0}
\ee
where $t_e=\sum_a t_e^a(-iL^a)$. This equation means that the linearized holonomy variation $\delta H_{\partial h}$ vanishes at the background $\{\mathring{H}_e\}_e$ for every $h$.
	
Let \(v_*\) be the root vertex of the spanning tree \(\mathcal{T}\). For any edge \(e \in E_{\text{var}}\), define the loop \(\ell_e\) based at \(v_*\) as follows: travel from \(v_*\) to the source of \(e\) along the unique path in \(\mathcal{T}\), traverse \(e\), and return from the target of \(e\) to \(v_0\) along the unique path in \(\mathcal{T}\). The linearized holonomy variation along \(\ell_e\) gives
\be
		H_{\ell_e}^{-1} \delta H_{\ell_e} = \mathrm{Ad}_{G_e} t_e,\label{AdGete}
\ee
where \(G_e \in \mathrm{SU}(2)\) is the background parallel transport from $v_*$ to the target of $e$ along the tree part of \(\ell_e\). The terms from edges in \(\mathcal{T}\) vanish because \(t_{e\in\mathcal{T}} = 0\). On the other hand, the loop \(\ell_e\) equals a product of loops \(\ell_{1}^{\epsilon_1} \cdots \ell_{k}^{\epsilon_k}\) with \(\epsilon_i = \pm 1\), where $\ell_i$ is the boundary of an internal face $h$, due to the trivial $\pi_1(\ck_-)$. Moreover, recall that $Q(t)=0$ implies $\delta H_{\partial h}=0$. Therefore, we have $\delta H_{\ell_e}=0$. Combining this and \eqref{AdGete} yields $t_e=0$, since \(\mathrm{Ad}_{G_e}\) is an isomorphism of the Lie algebra.
	
This argument holds for every \(e \in E_{\text{var}}\). Therefore, \(Q(t)=0\) implies \(t=0\). Consequently, the quadratic form \(Q\) is nondegenerate. This completes the proof that the \(t\)-\(t\) block is non-degenerate.

\end{proof}

Given that the Hessian matrix (for $k_0\in 2\Z$) is non-degenerate in the vanishing Barbero-Immirzi parameter limit: $\g\to0$, it is still non-degenerate for a generic value of $\g$, in particular, it is nondegenerate for small $\g$.

Although the nondegeneracy is only proven for $k_0\in 2\Z$, our main result in the following section is valid in general even for degenerate critical points: It is shown in \cite{arnold2012singularities} that for general oscillatory integral $\cj(A)=\int e^{A S(x)} \varphi(x) d^n x$ (of the same type as \eqref{AKexponen0}) with an analytic phase and a possibly degenerate critical point $x_c$, there is an asymptotic expansion
\be
\cj(A)=e^{A S(x_c)} \sum_\alpha \sum_{k=0}^{n-1} c_{\alpha, k}(\varphi) A^\alpha(\log A)^k,
\ee
and the leading term is equal to the value of the integrand at the critical point, multiplied by a non-zero constant
\be
\cj(A) \sim C e^{A S(x_c)}  \varphi(x_c) A^{-\a_0}(\log A)^{k_0},
\ee
for some $\a_0,k_0\geq 0$. Localizing the integrand on the critical points is the only ingredient that is necessary for deriving our main result.

\section{Boundary blocks}\label{Boundary blocks}

In this section, we discuss the boundary contribution $\int \prod_{(v,e_b)} \rmd g_{ve_b}\prod_b\o_b^{\rm bos}\big|_{\cs,\vec \sig}$ in the asymptotic formula \eqref{AKexponen}. For a boundary face $b$, whose vertices are labelled by $1,\cdots,m$ with $m\geq 2$,
\be
\t^{(b)}_k=d_k\tr_{(k,\rho)}\lt[\lt(P_kg_{1, e_b}^{-1}g_{12}P_{k}g_{21}^{-1}g_{23}P_{k}g_{32}^{-1}\cdots g_{m-1,m}P_{k}g_{m,m-1}^{-1} g_{m,e_b'}P_k\rt)H_{\fl(b)}\rt], 
\ee
where $g_{ij}$ are along edges in $E_{\rm int}$. Use the parametrization discussed in Section \ref{Some explicit computations}
\be
g_{ij}P_{k}g_{ji}^{-1}=\lt(u_{ij}e^{-ir_{ij}K^{3}}u_{ij}^{-1}\right)P_k\lt(x_iH_{ij}x_j^{-1}\rt)P_k\left(u_{ji}e^{ir_{ji}K^{3}}u_{ji}^{-1}\rt).\label{gPgParameter}
\ee
Restrict $\t^{(b)}_k$ on $\cc_{\rm int}^\cs$ and make the following change of variables, which leaves the Haar measure $\rmd g_{ve_b}$ invariant:
\be
g_{1, e_b}=x_1\wt{g}_{1, e_b},\qquad g_{m, e_b'}=x_m\wt{g}_{m, e_b'},
\ee
We obtain $\t^{(b)}_k$ on $\cc_{\rm int}^\cs$ :
\be
\t^{(b)}_k&\approx& d_k\tr_{(k,\rho)}\lt[\lt(P_k\wt{g}_{1, e_b}^{-1}P_{k}\mathring H_{12}\mathring H_{23}\cdots \mathring H_{m-1,m}P_{k} \wt{g}_{m,e_b'}P_k \rt)H_{\fl(b)}\rt]\nonumber\\
&\approx& \varsigma_b^{k} d_k\tr_{(k,\rho)}\lt[\lt(P_k\wt{g}_{1, e_b}^{-1}P_{k}\wt{g}_{m,e_b'}P_k\rt)H_{\fl(b)}\rt],\qquad \varsigma_b=\pm1,\label{plusCase}
\ee
since $H_e = \pm\mathbb{I}$ for every $e \in E_{\rm int}$ and $D^{k}(\varsigma_b \mathbb{I})=\varsigma_b ^{k} D^{k}( \mathbb{I})$ on $\ch_{k}$. Therefore, both $\t^{(b)}_k$ and consequently $\o^{\rm bos}_b$ take constant values on $\cc_{\rm int}^\cs$. 

The sign $\varsigma_b = \varsigma_b(\cs)$, which is assigned to each boundary face $b$, is determined by the bulk sign configuration $\cs = \{\varsigma_h\}$. Indeed, for each sector $\cs$, $\varsigma_b(\cs)$ is the product of the signs of the critical holonomies $H_e \approx \pm \mathbb{I}$ along the internal edge $e\in E_{\rm int}$ of the boundary face $b$, and the sign for each $H_e$ is uniquely determined by $\cs$ (see Lemma \ref{HeenotinFlenotinct}). 

In the case $m=1$, the face $b$ is a triangle whose sides are $e_b,e_b',\fl(b)$. Then both
\be
\t^{(b)}_k= d_k\tr_{(k,\rho)}\lt[\lt(P_kg_{v e_b}^{-1}g_{ve_b'}P_k\rt)H_{\fl(b)}\rt] \label{minusCase}
\ee
and $\o_b$ are independent of $\{g_{ve}\}_{e\in E_{\rm int}}$, so they are constant on $\cc_{\rm int}$. Note that $g_{v e_b}^{-1}g_{ve_b'}=\wt{g}_{v e_b}^{-1}\wt{g}_{ve_b'}$ is invariant under the gauge transformation at $v$.

We conclude that for a simply connected $\ck_-$, $\o^{\rm bos}_b$ on $\cc_{\rm int}$ does not depend on $\vec \sig$. Its dependence on $\cc_{\rm int}$ is only through the signs $\{\varsigma_b\}_b$. In addition, we extend the label $\varsigma_b$ to $\varsigma_b\in\{0,\pm1\}$ and assign $\varsigma_b=0$ to the case of $m=1$. Therefore,
\be
\o_b^{\rm bos}\big|_{\cs,\vec{\sig}}= \mathring\o_{b,\varsigma_b}\lt(\{\wt{g}_{ve_b}\},\vec{H}_{\fl(b)}\rt).
\ee
The explicit expresses are given below,
\be
\mathring{\o}_{b,\varsigma_b}&=&\sum_{p_b=1}^{\infty}\l_b^{p_b}\sum_{1\leq k_1\leq\cdots\leq k_{p_b}}^\infty\prod_{i=1}^{p_b}\mathring{\t}^{(b,\varsigma_b)}_{k_{i}}\Theta\left(A_{b}-\alpha_{p_b,\vec{k}}\right),\\
\mathring{\t}^{(b,\pm)}_{k}&=& (\pm1)^k d_k\tr_{(k,\rho)}\lt[\lt(P_k\wt{g}_{1, e_b}^{-1}P_{k}\wt{g}_{m,e_b'}P_k\rt)H_{\fl(b)}\rt],\qquad \mathring{\t}^{(b,0)}_{k}=d_k\tr_{(k,\rho)}\lt[\lt(P_k\wt{g}_{v e_b}^{-1}\wt{g}_{ve_b'}P_k\rt)H_{\fl(b)}\rt].\label{zetabkob1}
\ee
When collecting all boundary faces $b$, we denote 
\be
\bm{\varsigma} = \{\varsigma_b\}_b\in \Z_3^{|\cl|}
\ee 
where $|\cl|$ is the number of links of the boundary graph $\G = \partial\ck$, with an one-to-one correspondence between boundary faces $b$ and links in $\G$.

We now relabel all quantities, replacing the notation associated with the 2-complex by the corresponding notation for the boundary graph:
\be
\mathring{\o}_{b,\varsigma_b}\equiv \mathring{\o}_{\fl,\varsigma_\fl},\qquad \l_b\equiv \l_\fl,\qquad p_b\equiv p_\fl,\qquad A_b\equiv A_\fl,\qquad \varsigma_b\equiv \varsigma_\fl,\qquad \fs_b\equiv \fs_\fl\qquad \wt{g}_{ve_b}\equiv g_\fn,
\ee
by virtue of the bijection between each boundary face $b$ and a link $\fl$ of the boundary root graph $\G$, as well as the bijection between each $e_b$ and a node $\fn$ in $\G$. It is clear that the Haar measure is invariant $\rmd g_{ve_b}=\rmd \wt g_{ve_b}=\rmd g_{\fn}$.

Given any root graph $\G$ where the links and nodes are denoted by $\fl$ and $\fn$, we consider $\G$ as the boundary of a root 2-complex $\ck$. We define the vector space $\cv_\G$ spanned by the following set of functions $B_{\bm{\varsigma}}$, which equal the boundary contributions $\int \prod_{(v,e_b)} \rmd g_{ve_b}\prod_b\o_b^{\rm bos}\big|_{\cs,\vec \sig}$ in the asymptotic formula \eqref{AKexponen} 
\be
B_{\bm{\varsigma}}(\vec{H})&:=&\int \prod_{\fn\in\G} \rmd g_{\fn}\prod_{\fl\subset\G}\mathring{\o}_{\fl,{\varsigma}_\fl}\lt(\{g_\fn\},\vec{H}_{\fl}\rt),\qquad\vec{H}=\{\vec{H}_{\fl}\}_{\fl},
\ee
These functions depend on SU(2) holonomies $\vec{H}_\fl=\{{H}_{\fl}^{(i)}\}_{i}$ along the stacked links over $\fl$. Note that $B_{\bm{\varsigma} }(\vec{H})$ depends on the coupling constants $\{\l_\fl\}_\fl$, which are understood as fixed parameters.

The function $B_{\bm{\varsigma}}$ depends only on the data associated with $\ck\setminus\ck_-$, i.e. the degrees of freedom associated to $b$ and $e_b$. We can equivalently associate these data to the links and nodes of the boundary root graph $\G$. Accordingly, we refer to them collectively as the boundary data, consisting of:
\begin{itemize}

	\item $\varsigma_\fl\in\{0,\pm1\}$ for each boundary root link $\fl$. 
	

	\item The boundary SU(2) holonomies $\vec{H}=\{{H}_{\fl}^{(i)}\}_{\fl,i}$ along the stacked links on the boundary root graph $\G$.

	\item The boundary area cut-off $A_\fl$ and coupling constant $\l_\fl$.
\end{itemize}

We call the functions $B_{\bm{\varsigma} }$ boundary blocks and call $\cv_\G$ the space of boundary blocks. The boundary blocks are labelled by $\bm{\varsigma} $, so the dimension of $\cv_\G$ is $3^{|\cl|}$. Each of the functions $B_{\bm{\varsigma} }$ defines the following linear functional on the vector space spanned by spin-network states, $\mathbf{Cyl}\subset\ch_{\rm Kin}$, or its permutation invariant projection.
\be
B_{\bm{\varsigma} }[f]=\lim_{A_\fl\to\infty}\int \rmd \vec{H}\,B_{\bm{\varsigma} }(\vec{H})^* f(\vec{H}),
\ee
where $f \in \mathbf{Cyl}$ denotes a finite linear combination of spin-network states, and $\rmd \vec{H}$ represents the Haar measure on the space of SU(2) holonomies on the graph defined by the union of the support graphs of $B_{\bm{\varsigma} }$ and $f$. The finiteness of $B_{\bm{\varsigma} }[f]$ is addressed in Section \ref{Finiteness of boundary block}.

As we send the cut-offs $A_b=A_\fl$ in $\mathring{\o}_{\fl,\fs_\fl}$ to infinity, there are at most finitely many terms in $B_{\bm{\varsigma} }$ that can contribute to $B_{\bm{\varsigma} }[f]$, since $f$ is based on a finite graph (with finitely many links). Therefore, for any $f\in \mathbf{Cyl}$, there exists a sufficiently large (but finite) $A_\fl$ for each $\fl$ such that $B_{\bm{\varsigma} }[f]$ becomes independent of $A_\fl$ for all larger values. Conversely, this functional viewpoint also allows us to keep $A_b$ finite during the spinfoam calculations above, without affecting the result when acting on any state in $\mathbf{Cyl}$.

With the about definition, we find that the leading asymptotics of $\sa_\ck$ is represented by a finite linear combination over $\bm\varsigma$
\be
\sa_\ck
&=&\sum_{\bm\varsigma}b_{\bm\varsigma}\lt(\ck_-\rt) B_{\bm{\varsigma} }(\vec{H}) \lt[1+O(A^{-1})\rt].\label{AKexponen41}
\ee
The key point is that $\o^{\rm bos}_b$'s dependence on $\cc_{\rm int}$ is only through the signs $\bm{\varsigma}=\{\varsigma_b\}_b$. Consequently, all dependence of the amplitude on the interior $\ck_-$ of the root complex $\ck$ is now entirely reduces to a finite set of coefficients $b_{\bm\varsigma}(\ck_-)$. Infinitely many degrees of freedom in the bulk decouples in the limit $A\to\infty$. These coefficients are explicitly defined as sum over sign configurations $\cs$ compatible with the given boundary data $\bm\varsigma$,
\be
b_{\bm\varsigma}\lt(\ck_-\rt)&=&\frac{e^{\sum_h\b_{k_0}(\l_h)A_h}}{A^{D_{\rm int}/2}}\sum_{\cs,\bm\varsigma(\cs)=\bm\varsigma}\cf_\cs\lt(\vec A,\vec \l\rt)\int\limits_{\cc_{\rm int}^\cs}\frac{\rmd \mu(\vec \sig)}{\sqrt{\det\lt(-\mathbb{H}_\cs(\vec\sig)/(2\pi)\rt)}}.
\ee

Summing over infinitely many 2-complexes $\ck$ which has the same boundary $\G=\partial\ck$, the complete amplitude $\sa$ is still a linear combination over both $\bm\varsigma$ and $\bm\fs$:
\be
\sa=\sum_{\ck,\partial\ck=\G}c_\ck{\sa}_\ck=\sum_{\bm\varsigma}b_{\bm\varsigma} B_{\bm\varsigma}\lt[1+O(A^{-1})\rt],\qquad b_{\bm\varsigma}=\sum_{\substack{\ck,\partial\ck=\G,\\ \bm\varsigma(\ck)=\bm\varsigma}}b_{\bm\varsigma}(\ck_-)\, c_\ck \label{completeAmpliResult}
\ee
where $\{c_\ck\}_\ck$ are infinitely many arbitrary coefficients. $\bm\varsigma(\ck)=\bm\varsigma$ means that the boundary faces of $\ck$ have the same type as $\bm\varsigma$. The sum on the left-hand side converges if and only if all $3^{| \cl|}$ coefficients $b_{\bm\varsigma}$ on the right-hand side converge as infinite linear combinations of $c_\ck$.

We make several important observations regarding this result:

\begin{itemize}

\item In the limit $A \to \infty$, the infinitely many ambiguities contained in the coefficients $c_\ck$ and $\l_h$ become only finitely many boundary coefficients $b_{\bm\varsigma}$. The bulk ambiguities that disappear in this limit decouple from the theory. Since these bulk ambiguities are associated with the triangulation dependence of the spinfoam, their decoupling implies that the resulting theory is triangulation independent.

\item For a given boundary root graph $\G$, the boundary blocks $B_{\bm\varsigma}\in\mathbf{Cyl}^*$ span a $3^{|\cl(\G)|}$-dimensional vector space $\cv_\G$. In the limit $A\to\infty$, the complete amplitude $\sa$ is represented by a vector in this space.

\item As $A \to \infty$, the theory reaches a fixed point where all bulk degrees of freedom become either frozen or decoupled. As a signal of fix point, the theory in the limit becomes topological and invariant under changing bulk triangulation, with all physical degrees of freedom localized on the boundary.

\item Two approaches have been proposed for obtaining the continuum limit of spinfoam models: summing over complexes versus refining them \cite{Rovelli:2010qx,Asante:2022dnj,Oriti:2009wn}. In our result, the stack amplitude $\sa_\ck$ is already free from bulk ambiguities. As a result, taking the continuum limit is trivial, regardless of whether it is performed via summing or refining, and the two methods are manifestly equivalent.

\item Our results demonstrate that at the UV fixed point $A\to\infty$, the triangulation-independent theory is well defined. As $A^{-1}$ increases away from zero, the theory flows away from the fixed point: bulk degrees of freedom begin to propagate and the system flows toward the IR regime.

\item As noted above Lemma \ref{lemmaBoundUc} and in the definition of $S$ in \eqref{actionS=}, the limit $A\to\infty$ can equivalently be characterized by $A_h/E_{k_0} \to \infty$. In our opinion, the spin cut-off $A$ should be understood as a large but finite constant in the full spinfoam theory. Intuitively, $A$ is expected to relate to the cosmological constant. Consequently, a significant suppression of fluctuations (as described by the stationary phase expansion) only occurs in the regime where $A_h/E_{k_0}$ is large, which corresponds to small values of the condensation spin $k_0$. Therefore, the regime of finite $\l_h$, where $k_0$ is small, is the vicinity of the fix point. The topological theory is at the leading order in this vicinity. The small $k_0$ indicates that the fixed point is in the UV regime, because $k_0$ is proportional to the expectation value of the quantum area. 

\end{itemize}

\section{Finiteness of boundary block}\label{Finiteness of boundary block}

As a Hilbert space, the carrier space of the $\Slc$ principal series unitary irrep, ${\cal H}_{(k,\rho)}$, is the subspace of $L^{2}(\Su)$ made of functions obeying the covariance condition
\begin{equation}
F(e^{i\phi\bm{\sigma}^3} u)=\mathrm{e}^{i k\phi}F(u),\qquad u\in \Su\label{covariance}
\end{equation}
An orthogonal basis of ${\cal H}_{(k,\rho)}$ is made of a subset of Wigner matrices $D_{k/2,m}^{j}(u)=\langle u\mid(k,\rho),j,m\rangle$. To define the action of $\Slc$, we use the Iwasawa decomposition: any $g\in \Slc$ can be written as $g= b u$ with $u\in SU(2)$ and $b\in B$, where $B$ is the subgroup of upper triangular matrices
\begin{equation}
B=\left\{b=\begin{pmatrix}\lambda ^{-1}&\mu\cr0&\lambda\end{pmatrix}, \lambda\in{\Bbb C}^{\times},\mu\in{\Bbb C}\right\}.
\end{equation}
Moreover, since $u=b^{-1}g$ is unitary, it is useful to notice that $b^{-1}g=b^{\dagger}(g^{-1})^{\dagger}$. To define the $\Slc$ action on ${\cal H}_{(k,\rho)}$, we use this decomposition for $ug$ for any $u\in\Su$ and $g\in \Slc$:
\begin{equation}
ug=b_{g}(u)u_{g}(u),\qquad b_{g}(u)\in B,\quad u_{g}(u)\in \Su
\end{equation}
Then, the action of $g$ on $F\in{\cal H}_{(k,\rho)}$ reads
\begin{equation}
g\cdot F(u)=\lt[\lambda_{g}(u)\rt]^{-k/2+\mathrm{i}\rho/2-1}\lt[\lambda_{g}(u)^*\rt]^{k/2+\mathrm{i}\rho/2-1}F\lt(u_{g}(u)\rt).\label{SL2Caction}
\end{equation}
where $\lambda_{g}(u)$ is the lower right entry of the matrix $b_g(u)$. Note that the Iwasawa decomposition has the gauge freedom $b_{g}(u)\to b_{g}(u)e^{-i\phi\bm{\sigma}^{3}},\ u_{g}(u)\to e^{i\phi\bm{\sigma}^{3}}u_{g}(u)$. The formula \eqref{SL2Caction} is gauge invariant thanks to the covariance condition \eqref{covariance}. We can use this gauge freedom to set $\l_g(u)\in\R$. In this case, $\lambda_{g}^{2}(h)$ is the upper left corner of the matrix  $(b_{g}(u)^{-1})^{\dagger}b_{g}^{-1}(u)=u(g^{-1})^{\dagger}g^{-1}u^{\dagger}$, i.e.
\begin{equation}
\lambda_{g}(u)=\Big[\langle\textstyle{\frac{1}{2},\frac{1}{2}}|u(g^{-1})^{\dagger}g^{-1}u^{\dagger}|\textstyle{\frac{1}{2},\frac{1}{2}}\rangle \Big]^{\frac{1}{2}}
\label{lambda}
\end{equation}
with $|j,m\rangle$ the standard basis of the spin $j$ representation of SU(2).

The above lemma shows explicitly that the matrix coefficients of $g$ suppress as $t e^{-2t}$ as $t\to\infty$.

\begin{lemma}\label{LemmaFgFxi}
For any two states $F_1,F_2\in \ch_{(k,\rho)}$ and any $g\in\Slc$, 
\be
\lt|\lag F_1\mid g\mid F_2\rag\rt|\leq \xi(g)\|F_1\|_\infty \|F_2\|_\infty
\ee
where $\|F\|_\infty=\sup_u |F|$.  , the function $\xi(g)$ is given by
\be
\xi(g)=\xi(g^{-1})=\frac{2 t}{\sinh (2 t)}.
\ee
\end{lemma}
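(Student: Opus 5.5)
We use the realization of $\ch_{(k,\rho)}$ as covariant functions on $\Su$ and the action \eqref{SL2Caction}, and reduce to a Cartan (boost) element. Writing $g=u_1 e^{t\bm\sigma^3}u_2$ with $u_1,u_2\in\Su$ (Cartan decomposition, $t\geq0$), unitarity of the representation and the fact that $\Su$ acts by right translation, which preserves $\|\cdot\|_\infty$, give
\be
\lag F_1\mid g\mid F_2\rag=\lag u_1^{-1}F_1\mid e^{t\bm\sigma^3}\mid u_2F_2\rag ,
\ee
with the sup norms unchanged. Hence it suffices to prove the bound for $g=a_t=e^{t\bm\sigma^3}$ and to define $\xi(g)=\xi(a_t)$, which depends only on $t$. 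The symmetry $\xi(g)=\xi(g^{-1})$ then follows because $g^{-1}=u_2^{-1}e^{-t\bm\sigma^3}u_1^{-1}$ and $e^{-t\bm\sigma^3}$ is $\Su$-conjugate to $e^{t\bm\sigma^3}$ (conjugate by $i\bm\sigma^2$), and $2t/\sinh(2t)$ is even in $t$.

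For $g=a_t$ we compute $\lambda_{a_t}(u)$ from \eqref{lambda}. Since $a_t$ is Hermitian, $(a_t^{-1})^\dagger a_t^{-1}=e^{-2t\bm\sigma^3}$. Parametrizing the unit vector $u^\dagger|\tfrac12,\tfrac12\rangle$ by $(z_1,z_2)$ with $|z_1|^2+|z_2|^2=1$ gives $\lambda^2=e^{-2t}|z_1|^2+e^{2t}|z_2|^2$. Since $\rho$ is real, the modulus of the multiplier in \eqref{SL2Caction} is $|\lambda|^{-2}$; the $k$-dependent phases drop out. Therefore
\be
\lt|\lag F_1\mid a_t\mid F_2\rag\rt|\le \|F_1\|_\infty\|F_2\|_\infty\int_{\Su}\rmd u\,\lambda_{a_t}(u)^{-2}.
\ee
Here we use $|F_1(u)^*F_2(u_{a_t}(u))|\le \|F_1\|_\infty\|F_2\|_\infty$ pointwise; no change of variables for $u_{a_t}(u)$ is needed. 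The remaining integrand depends on $u$ only through $x=|z_1|^2$, which is uniformly distributed on $[0,1]$ under the normalized Haar measure, because it is the image of the uniform measure on $S^2$ under the height coordinate.

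The last step is the elementary integral
\be
\int_0^1\frac{\rmd x}{e^{-2t}x+e^{2t}(1-x)}=\frac{\ln(e^{2t})-\ln(e^{-2t})}{e^{2t}-e^{-2t}}=\frac{4t}{2\sinh(2t)}=\frac{2t}{\sinh(2t)},
\ee
which gives $\xi(a_t)=2t/\sinh(2t)$. This decays like $t e^{-2t}$, as claimed in the remark preceding the lemma.

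The main obstacle is convention bookkeeping rather than analysis. We have to check that the Haar measure on $\Su$ is normalized to $1$ and that the inner product on $\ch_{(k,\rho)}$ is the $L^2(\Su)$ one. We also have to check that the multiplier exponents in \eqref{SL2Caction} indeed combine to $|\lambda|^{-2}$: the real parts of the exponents sum to $-2$, which is exactly the unitarity weight. Finally, the reduction to $a_t$ must be consistent with the parametrization $g=u\,e^{r\bm\sigma^3/2}u'$ used in the footnote earlier, which only rescales $t=r/2$ and does not affect the form of the bound.
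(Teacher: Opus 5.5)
Your proof is correct and follows the paper's route. You Cartan-decompose $g$, bound the multiplier $|\lambda_g(u)^{\I\rho-2}|$ by $\lambda_g(u)^{-2}$ and $|F_1(u)^*F_2(u_g(u))|$ by the product of sup norms, and evaluate $\int_{\Su}\rmd u\,\lambda^{-2}=2t/\sinh(2t)$. The differences are minor: you remove the $\Su$ factors using unitarity and the sup-norm invariance of right translation, whereas the paper keeps a general $g$ and absorbs them through the Haar change of variables $U=Ru^\dagger$. You also write out the final integral and the symmetry $\xi(g)=\xi(g^{-1})$, both of which the paper only asserts.
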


\begin{proof}
The explicit expression of $\l_g(u)$ can be derived by the decomposition $g=R^{-1}e^{-t\bm\sig_3}R'$: 
\be
\lambda_{g}(u)^2=e^{2 t}|\alpha|^2 +e^{-2 t}|\beta|^2,\qquad Ru^\dagger=\left(
	\begin{array}{cc}
	 \alpha  & \beta  \\
	 -\beta ^* & \alpha ^* \\
	\end{array}
	\right)\in\Su.
\ee
For any two states $F_1,F_2\in \ch_{(k,\rho)}$, the matrix coefficient of the unitary representation of $g\in\Slc$ is given by
\be
\lag F_1\mid g\mid F_2\rag=\int_{\Su} \rmd u \, \lambda_{g}(u)^{\mathrm{i}\rho-2}F_1^*(u)\, F_2\lt(u_{g}(u)\rt)
\ee
Then we estimate
\be
\lt|\lag F_1\mid g\mid F_2\rag\rt|\leq \|F_1\|_\infty \|F_2\|_\infty \int_{\Su} \frac{\rmd U}{e^{2 t}|\alpha|^2 +e^{-2 t}|\beta|^2}=\frac{2 t}{\sinh (2 t)}\|F_1\|_\infty \|F_2\|_\infty 
\ee
where $U=Ru^\dagger$ and $\rmd U$ is the Haar measure.

\end{proof}

Let $\mathcal{L}$ and $\mathcal{N}$ denote the sets of links and nodes, respectively, of the root graph $\G$ underlying $B_{\boldsymbol{\varsigma}}$. For any spin-network state $f \in \mathbf{Cyl}$, $f$ is bounded on the space of holonomies; define $C_f := \sup_{\vec{H}} |f(\vec{H})|$. Then,
\be 
|B_{\boldsymbol{\varsigma}}[f]| \le C_f \int\rmd\vec{H}  \int_{\Slc^{|\mathcal{N}|}} \rmd\vec{g} \prod_{\fl \in \mathcal{L}} \lt|\mathring{\omega}_{\fl,{\varsigma}_\fl} \right| .\label{Bbound1}
\ee

Recall that for each $\fl$, there is a sufficiently large (but finite) cutoff $A_\fl$ so that $B_{\bm{\varsigma} }[f]$ becomes independent of $A_\fl$ for any $f \in \mathbf{Cyl}$ once $A_\fl$ exceeds this threshold. The link amplitude $\mathring{\omega}_{\fl, \varsigma_\fl}$ involves a sum over multiplicities $p$ and spins $k_i$ on the stacked links. The Heaviside function $\Theta(A_\fl - \a_{p_\fl, \vec{k}})$, with a finite $A_\fl$, truncate this sum by restricting to a finite range of $p_\fl$ and $\vec{k}$. We denote by $p_{\fl,\mathrm{max}}$ and $\vec{k}_{\mathrm{max}}$ the maximally allowed values.

For the case of $\varsigma_\fl=\pm1$, each term in the sum is proportional to 
\be
\prod_{i=1}^{p_\fl}\mathrm{Tr}_{(k_i,\rho_i)}\lt(P_{k_i} g_{s(\fl)}^{-1} P_{k_i} g_{t(\fl)} P_{k_i} H^{(i)}_\fl\rt)=\prod_{i=1}^{p_\fl}\sum_{m_i,n_i,q_i}D^{(k_i,\rho_i)}_{k_im_i,k_in_i}\lt(g_{s(\fl)}^{-1}\rt)D^{(k_i,\rho_i)}_{k_in_i,k_iq_i}\lt(g_{t(\fl)}\rt)D^{k_i/2}_{q_i,m_i}\lt( H^{(i)}_\fl\rt)
\ee
where the range of $m_i,n_i,q_i$ is finite. Lemma \ref{LemmaFgFxi} implies that the matrix coefficients $D^{(k_i,\rho_i)}_{k_im_i,k_in_i}(g)$ satisfy the bound 
\be
\lt|D^{(k_i,\rho_i)}_{k_im_i,k_in_i}(g)\rt|\leq d_{k_i} c_{k_i}^2 \xi(g),\qquad c_{k_i}=\sup_{\substack{u\in \Su\\ m,n=-k_i/2,\cdots,k_i/2}} \lt|D^{k_i/2}_{m,n}(u)\rt|
\ee
because the orthonormal basis in $\ch_{(k,\rho)}$ used here is $\sqrt{d_{k_i}} D^{k_i/2}_{k_i/2, m_i}(u)$, $m_i=-k_i/2,\cdots,k_i/2$.
Thus, we obtain
\be
|\mathring{\omega}_{\fl,\pm}| \le \sum_{p_\fl=1}^{p_{\fl,\mathrm{max}}} C_{p_\fl} \xi\lt(g_{s(\fl)}\rt) ^{p_\fl}\xi\lt(g_{t(\fl)}\rt) ^{p_\fl}\leq C_\fl \xi\lt(g_{s(\fl)}\rt) \xi\lt(g_{t(\fl)}\rt) ,\qquad C_{p_\fl}= |\l_\fl|^{p_\fl}\sum_{1\leq k_1\leq\cdots\leq k_{p_b}}^{\vec{k}_{\mathrm{max}}}\prod_{i=1}^{p_\fl}d_{k_i}^6 c_{k_i}^5.
\ee
for some $C_\fl>0$. This second inequality holds because $\xi(g)$ decays exponentially as $\|g\| \to \infty$. As a result, the estimate factorizes, and the dependence on the source and target nodes in $|\mathring{\omega}_{\fl,\pm}|$ becomes decoupled.

The estimate for the case of $\varsigma_\fl = 0$ can be done similarly. But since each term in the sum is proportional to $\mathrm{Tr}(P_k g_{s(l)}^{-1} g_{t(l)} P_k H_l)$ which is a matrix element of the group element $g_{s(l)}^{-1} g_{t(l)}$. The result is
\be 
|\mathring{\omega}_{\fl,0}| \le C_\fl \xi\lt(g_{s(\fl)}^{-1} g_{t(\fl)}\rt) .
\ee
for some $C_\fl>0$. 

The graph $\G$ is the boundary of a simply connected root complex $\ck$. Let us first consider the special case that $\ck$ consists of a single vertex. It follows that all links in $\G$ are of type $\varsigma_\fl = 0$. In this scenario, the finiteness of $|B_{\bm{\varsigma} }[f]|$ is equivalent to the finiteness of the vertex amplitude of $v$. By $\Slc$ gauge fixing at the vertex $v$ and the assumption that the intersection graph $\G_v=\G$ is 3-connected, we conclude that $|B_{\bm{\varsigma} }[f]|$ is finite \cite{Kaminski:2010qb}.

Suppose $\ck$ contains more than one vertex. Consider a vertex $v$ that is connected to the boundary graph $\G = \partial\ck$ via an edge $e_b$. This vertex $v$ may be connected to several boundary nodes $\fn \in \G$; let $\cn_v$ denote the set of these nodes. The links $\fl$ connecting nodes in $\cn_v$ are of type $\varsigma_\fl = 0$. Given that $\ck$ is connected, there exists at least one internal edge $e \in E_{\rm int}$ linking $v$ to a different vertex $v'$. This edge $e$ is represented as a node in the intersection graph $\G_v$. Let $\Fn_v$ be the set of nodes in $\G_v$ that correspond to internal edges $e \in E_{\rm int}$. Thus, the set of nodes in $\G_v$ is the disjoint union $\Fn_v \cup \cn_v$. Since $\G_v$ is 3-connected, there must exist at least three links connecting nodes in $\Fn_v$ with nodes in $\cn_v$.

\begin{lemma}\label{LemmaLinkBijection}
There is a bijection from links in $\G_v$ that connect nodes in $\cn_v$ to nodes in $\Fn_v$ and links $\fl \subset \G$ connecting nodes in $\cn_v$ to other boundary nodes in $\cn\setminus \cn_v$. These links has $\varsigma_\fl = \pm1$. 
\end{lemma}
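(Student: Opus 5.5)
The plan is to exploit the local structure of $\ck$ near $v$. Links of $\G_v$ correspond one-to-one to faces of $\ck$ incident to $v$, taken together with the pair of edges at $v$ on their boundary. Nodes in $\cn_v$ correspond to boundary edges $e_b$ at $v$, and nodes in $\Fn_v$ correspond to internal edges $e\in E_{\rm int}$ at $v$. A link of $\G_v$ joining a node of $\cn_v$ to a node of $\Fn_v$ is therefore a face $f$ containing, at $v$, one boundary edge $e_b$ and one internal edge $e$. Such a face cannot be internal, because internal faces have only edges in $E_{\rm int}$. Hence it is a boundary face $b$, and by the bijection between boundary faces and boundary root links (used throughout Section~\ref{Boundary blocks}) it defines a link $\fl(b)\subset\G$. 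The map to be shown bijective is this link $\mapsto f=b\mapsto\fl(b)$.

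First I will identify the endpoints of $\fl(b)$. The boundary of $b$ is an open walk $e_b, e_{12},\dots, e_b'$ that starts and ends on boundary edges, with $\fl(b)$ closing it. One end is $e_b$ at $v$, so one endpoint is the node $\fn\in\cn_v$ dual to $e_b$. Next I will follow the walk from $v$ along the internal edge $e$. It passes through internal edges only, say $v=1,2,\dots,m$, and exits through a boundary edge $e_b'$ at its last vertex $m$. Since $e$ is internal and $m\ge2$, the walk does not return to the boundary at $v$ immediately. I must still exclude $m=v$, where the path comes back to $v$; here I would use that each pair of nodes in $\G$ is joined by at most one link together with the root-complex conventions. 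The cleanest route, which I expect to work, is the definition of $\cn_v$ as the set of nodes connected to $v$ via edges $e_b$: the second endpoint is attached to the last vertex of the walk, and under the simplicity and injectivity hypotheses that vertex is not $v$. So $\fl(b)$ joins $\cn_v$ to $\cn\setminus\cn_v$.

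Because the walk contains at least one internal edge, i.e.\ $m\ge 2$, we are not in the triangle case $m=1$, and by the classification above \eqref{zetabkob1} we get $\varsigma_\fl=\pm1$. Injectivity follows because distinct links of $\G_v$ are distinct faces, and distinct boundary faces give distinct root links.

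For surjectivity, take a link $\fl\subset\G$ from $\fn\in\cn_v$ to $\fn'\notin\cn_v$, with $b$ its boundary face. The edge $e_b$ dual to $\fn$ ends at $v$. The edge dual to $\fn'$ does not end at $v$, so $b$ is not the triangle $m=1$, and the edge of $\partial b$ adjacent to $e_b$ at $v$ must be internal. This gives a link of $\G_v$ between $\cn_v$ and $\Fn_v$.

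The main obstacle is the endpoint bookkeeping. One must rule out a boundary face leaving $v$ through an internal edge and re-entering at $v$, or a face meeting $v$ twice and so giving several corners in $\G_v$. I would handle this by counting corners of faces at $v$ rather than faces, and by invoking the root-complex assumption that faces do not repeat boundary cycles.
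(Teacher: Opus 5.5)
Your route is the paper's: a $\cn_v$--$\Fn_v$ link of $\G_v$ is a corner at $v$ of a face containing $e_b$ and an internal edge. That face must be a boundary face $b$, it is sent to $\fl(b)$, and the internal edge gives $m\geq 2$, hence $\varsigma_\fl=\pm1$. Your surjectivity argument is also fine. The paper's proof asserts the correspondence without checking the two points you single out: that the far endpoint of $\fl(b)$ lies in $\cn\setminus\cn_v$, and that $b$ has only one such corner at $v$.

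Your justification of these two points does not work. The injectivity $e_0(v)\neq e_0(v')$ is unrelated. ``No two faces share a boundary cycle'' constrains pairs of faces, not the walk of a single face. ``At most one link between two nodes'' does not forbid a link with both ends in $\cn_v$; the $\varsigma_\fl=0$ links are exactly such links. The definition of $\cn_v$ actually works against you: if the walk of $b$ leaves $v$ along an internal edge and returns to $v$ before exiting through $e_b'$, then the node dual to $e_b'$ is in $\cn_v$. Counting corners instead of faces does not help. In that case $b$ yields two $\cn_v$--$\Fn_v$ links of $\G_v$, both sent to the single link $\fl(b)$. That link joins two nodes of $\cn_v$ and has $\varsigma_\fl=\pm1$, so the map is neither injective nor valued in the stated target.

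This configuration is not excluded by the paper's stated hypotheses, so the step cannot be closed by argument. Here is an explicit example.
\begin{itemize}
\item Take vertices $v,w$, and internal edges $e_1,e_2$ from $v$ to $w$ bounding one internal bigon $(e_1,e_2^{-1})$.
\item Attach boundary edges $a,a',a''$ at $v$ and $c,c',c''$ at $w$.
\item Take boundary faces with walks $(a,e_1,c)$, $(a',e_2,c')$, $(a'',e_1,c'')$, $(a,e_2,e_1^{-1},a')$, $(c,e_2^{-1},e_1,c')$, together with the triangles $(a,a'')$, $(a',a'')$, $(c,c'')$, $(c',c'')$.
\end{itemize}
Then $\ck_-$ is a disk, $\G_v\cong\G_w$ are 3-connected, each $\G_{v,-}$ is connected, and $\G$ has at most one link per pair of nodes. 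Yet $\G_v$ has five $\cn_v$--$\Fn_v$ links, while $\G$ has only three links from $\cn_v$ to $\cn\setminus\cn_v$.

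What is missing is an explicit hypothesis: the boundary walk of every boundary face has pairwise distinct vertices. This is automatic for complexes dual to simplicial complexes, since the 4-simplices around a triangle are distinct. The paper also uses it tacitly, e.g.\ when it labels the vertices of $b$ by $1,\dots,m$ and when it claims all links inside $\cn_v$ have $\varsigma_\fl=0$. With this hypothesis the exit vertex satisfies $m\neq v$, so the far node lies in $\cn_m$. That set is disjoint from $\cn_v$ because each boundary edge has a single vertex endpoint. Each such $b$ then has exactly one corner at $v$, and your injectivity and surjectivity arguments go through as a more careful version of the paper's proof.
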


\begin{proof}
The links in $\G_v$ connecting $\cn_v$ to $\Fn_v$ one-to one correspond to boundary faces $b\subset\ck$, each of which is bounded by an internal edge $e \in E_{\rm int}$ and an edge $e_b$ connecting to the boundary graph. There is one-to-one correspondence between such faces and boundary links $\fl = b \cap \G$ attached to the node $\fn = e_b \cap \G$. Since each of these boundary faces $b$ includes at least one internal edge, it has more than one vertex; hence, the corresponding link $\fl$ is of type $\varsigma_\fl = \pm1$.
\end{proof}

This result allows us to partition the set of nodes $\cn$ in $\G$ into disjoint subsets $\cn_v$, such that $\cup_v \cn_v = \cn$. All links connecting nodes within each subset $\cn_v$ have type $\varsigma_\fl = 0$; let $\cl_v$ denote the set of these links. There are at least three links of type $\varsigma_\fl = \pm1$ connecting nodes in $\cn_v$ to nodes outside $\cn_v$, and we denote this set of links by $\Fl'_v$.

Given the bounds for $|\mathring{\omega}_{\fl,\varsigma_\fl}|$, the integral $ \int \rmd\vec{g} \prod_{\fl \in \mathcal{L}} |\mathring{\omega}_{\fl,{\varsigma}_\fl}| $ can be organized as a product over $\cn_v$.
\be
\int_{\Slc^{|\cn|}}\rmd\vec{g} \prod_{\fl \in \mathcal{L}} |\mathring{\omega}_{\fl,\varsigma_\fl} |\leq C \prod_{\cn_v}\ci_{\cn_v},\qquad \ci_{\cn_v}=\int \prod_{\fn\in\cn_v}\rmd g_{\fn}\prod_{\fl\in\cl_v}\xi\lt(g_{s(\fl)}^{-1} g_{t(\fl)}\rt)\prod_{\fl\in\Fl_v'}\xi\lt(g_{s/t(\fl)}\rt).\label{Bbound2}
\ee
where $s/t(\fl)$ indicates either the source or target of $\fl$.

\begin{theorem}
$|B_{\boldsymbol{\varsigma}}[f]| $ is finite. Therefore, $B_{\boldsymbol{\varsigma}}$ is a linear functional on $\mathbf{Cyl}$.
\end{theorem}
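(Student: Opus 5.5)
By \eqref{Bbound1} and \eqref{Bbound2}, together with the fact that the SU(2) Haar integral $\int\rmd\vec H$ is over a compact group, it suffices to show that each factor $\ci_{\cn_v}$ is finite. In the single-vertex case this is already settled by the 3-connectedness of $\G_v$ and the finiteness of the EPRL vertex amplitude \cite{Kaminski:2010qb}. So I assume $\ck$ has more than one vertex and fix one subset $\cn_v$. Since $\ci_{\cn_v}$ depends only on $\{g_\fn\}_{\fn\in\cn_v}$, and the factors $\xi(g_{s/t(\fl)})$ for $\fl\in\Fl'_v$ involve only the endpoint lying in $\cn_v$, the integrals over different $\cn_v$ decouple and the bound factorizes.

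Inside $\ci_{\cn_v}$ I would change variables to make the structure resemble a vertex amplitude. The point is that the integrand is a graph integral on the auxiliary graph $\wt\G_v$, obtained from $\G_v$ by collapsing all nodes of $\Fn_v$ to a single node carrying the fixed element $\mathbb{I}$. The nodes of $\cn_v$ carry $g_\fn$. Links in $\cl_v$ carry $\xi(g_{s}^{-1}g_t)$, and the links in $\Fl'_v$, which are in bijection with the $\cn_v$–$\Fn_v$ links of $\G_v$ by Lemma \ref{LemmaLinkBijection}, carry $\xi(g_\fn^{-1}\cdot\mathbb{I})$, using $\xi(g)=\xi(g^{-1})$. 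Thus $\ci_{\cn_v}$ is exactly the standard bound for a gauge-fixed vertex integral, with the fixed node playing the role of the gauge-fixed edge $e_0$. I would then invoke the power-counting argument of \cite{Kaminski:2010qb}, which uses only the decay $\xi(g)\sim t e^{-2t}$ from Lemma \ref{LemmaFgFxi} against the Haar growth $\sinh^2 t$.

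The condition needed there is that every nonempty subset $S\subseteq\cn_v$ of nodes sent to infinity collectively is attached to at least three links leading out of $S$ towards nodes that stay bounded (the fixed node or nodes of $\cn_v\setminus S$). This gives decay $e^{-2t\cdot 3}\cdot t^3$ against the measure growth $e^{2t}$ coming from the radial direction of a common scale. The Kaminski–Kisielowski–Lewandowski argument handles the several-scales case by ordering the scales.

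The main obstacle is verifying this 3-cut condition for $\wt\G_v$. The plan is to derive it from the 3-connectedness of $\G_v$. Collapsing the connected node set $\Fn_v$ can only increase edge connectivity across cuts that separate $\Fn_v$ from $S$. Since $\G_v$ is 3-connected, it is in particular 3-edge-connected, so every cut $S$ versus its complement (with the complement containing $\Fn_v$) has at least three crossing links, and these survive the collapse. I would first check that $\Fn_v$ is nonempty, which follows because $\ck$ is connected with more than one vertex; nonemptiness is all the collapse argument requires. If a subtlety arises because the nodes of $\Fn_v$ are not connected to one another inside $\G_v$, I would instead avoid collapsing: I would set each $\Fn_v$ node to $\mathbb{I}$ directly, which corresponds to the SU(2)-valued bulk data on $\cc_{\rm int}$, and argue cut by cut in $\G_v$ itself. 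Once every $\ci_{\cn_v}$ is finite, $|B_{\bm\varsigma}[f]|\le C_f\,C\prod_v\ci_{\cn_v}<\infty$. Linearity in $f$ is immediate, so $B_{\bm\varsigma}$ is a linear functional on $\mathbf{Cyl}$.
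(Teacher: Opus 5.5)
Your proof is correct, but it reaches the finiteness of $\ci_{\cn_v}$ by a different route from the paper's.

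The paper never leaves the graph $\G_v$. It smears the fixed values $g_\fn=\mathbb{I}$, $\fn\in\Fn_v\setminus\{\fn_0\}$, over small balls $B_\eps$ around $\mathbb{I}$. Using $\xi(g_2)\le c_\eps\,\xi(g_1^{-1}g_2)$ for $g_1\in B_\eps$, it obtains the inequality $\ci_{\cn_v}\le (C_\eps/\cv_\eps)\,I_v$. Here $I_v$ is the ordinary gauge-fixed vertex integral on the original 3-connected $\G_v$, whose finiteness is simply quoted from \cite{Kaminski:2010qb}.

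You instead identify $\ci_{\cn_v}$ \emph{exactly} with the gauge-fixed $\xi$-integral on the contracted multigraph $\wt\G_v$. Links inside $\Fn_v$ become self-loops at the fixed node and contribute $\xi(\mathbb{I})=1$. You then use the combinatorial fact that contracting a node set preserves every edge cut that does not split it, so $\wt\G_v$ inherits 3-edge-connectivity from $\G_v$.

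Your route is more direct, giving an equality rather than a comparison. The cost is that it needs the integrability criterion in its edge-connectivity form, applied to a new multigraph. $\wt\G_v$ is generally not 3-vertex-connected: it has only two nodes when $|\cn_v|=1$, and the contracted node is a cut vertex whenever $\cn_v$ splits into pieces joined only through $\Fn_v$. The ``3-connected'' hypothesis as the paper phrases it would therefore not cover it, whereas the paper's comparison only ever invokes the result on $\G_v$ itself. The result of \cite{Kaminski:2010qb} is in fact formulated for 3-edge-connected graphs, so your argument goes through.

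Two minor remarks. First, your fallback for a disconnected $\Fn_v$ is unnecessary, since edge cuts survive contraction of any node set, connected or not. Second, in the paper's normalization the Haar growth is $\sinh^2(2t)\sim e^{4t}$ rather than $e^{2t}$. Against the $e^{-6t}$ decay from three links, this still gives the same convergence threshold.
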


\begin{proof}
The finiteness has been proven for the special case that $\ck$ contains only a single vertex. We focus on the general scenario that $\ck$ has more than one vertices. By \eqref{Bbound1} and \eqref{Bbound2}, it suffices to prove the finiteness of $\ci_{\cn_v}$.

Focus on the intersection graph $\G_v$, we denote by $\Fl_v$ the set of links connecting between nodes in $\Fn_v$. We denote by $\Fl_v'$ the set of links connecting $\Fn_v$ and $\cn_v$, due to the bijection in Lemma \ref{LemmaLinkBijection}. The set of links in $\G_v$ is a disjoint union of $\cl_v$, $\Fl_v$ and $\Fl_v'$. 

The convergent integral to bound the vertex amplitude of $v$ in \cite{Kaminski:2010qb} is 
\be
I_v=\int \prod_{\fn\in\cn_v\cup \Fn_v\setminus\{\fn_0\}}\rmd g_{\fn}\,\prod_{\fl\in\cl_v\cup \Fl_v\cup\Fl_v'}\xi\lt(g_{s(\fl)}^{-1} g_{t(\fl)}\rt)\Big|_{g_{\fn_0=1}},
\ee
For convenience, the gauge fixing $g_{\fn_0}=1$ is imposed at a node $\fn_0\in\Fn_v$. We denote by $F$ and $G$ the Integrands of $I_v$ and $\ci_{\cn_v}$ respectively. It is clear that $G$ is the restriction of $F$ by imposing $g_\fn=1$ for all $\fn\in \Fn_v\setminus\{\fn_0\}$. It is not difficult to see that $\xi(g_2)\leq c_\eps \xi(g_1^{-1}g_2) $ for some $c_\eps>0$ and $g_1$ in any closed ball $B_\eps$ centered at $g_1=1$. Thus, $G\leq C_\eps F$ for some $C_\eps$ and $g_\fn$ in any closed ball $B_\eps$ centered at $g_\fn=1$, $ \fn\in \Fn_v\setminus\{\fn_0\}$. Therefore,
\be
\ci_{\cn_v}=\frac{1}{\cv_\eps} \int \prod_{\fn\in\cn_v}\rmd g_{\fn} \prod_{\fn\in\Fn_v\setminus\{\fn_0\}}\int_{B_\eps}\rmd g_{\fn}\, G
\leq \frac{C_\eps}{\cv_\eps} \int \prod_{\fn\in\cn_v}\rmd g_{\fn} \prod_{\fn\in\Fn_v\setminus\{\fn_0\}}\int_{B_\eps}\rmd g_{\fn}\, F\leq \frac{C_\eps}{\cv_\eps} I_v<\infty.
\ee
where $\cv_\eps$ is the volume of $B_\eps^{|\Fn_v\setminus\{\fn_0\}|}$.

\end{proof}



\appendix


\section{Graph of root complex}\label{vertexEdgeInjection}

For any connected root complex $\ck$, its internal 1-skeleton is a connected graph $G$ that contain all vertices and all edges in $E_{\rm int}$ (the set of edges not connecting to the boundary). Assuming $\ck$ contain at least one internal face, the graph $G$ contains at least one cycle. 

\begin{lemma}\label{lemmaInjectiveVE}
Let $G = (V, E)$ be a connected graph containing at least one cycle ($V,E$ denote the sets of vertices and edges). There exists an injective map $f: V \to E$ such that for every vertex $v \in V$, $f(v)$ is an edge incident to $v$.
\end{lemma}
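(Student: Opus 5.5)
We will build the injection $f$ from a spanning structure that contains exactly one cycle. Since $G$ is connected and contains a cycle, first pick a cycle $C$ in $G$ with vertices $w_1,\dots,w_m$ ($m\ge 1$; a loop gives $m=1$, and a pair of parallel edges would give $m=2$, though root complexes exclude these). Then extend $C$ to a connected spanning subgraph $U\subseteq G$ containing $C$ as its unique cycle. To do this, contract $C$ to a single vertex and take a spanning tree of the contracted graph, which exists because $G$ is connected. Pulling that tree back and adding the edges of $C$ gives $U$. The result is a unicyclic spanning subgraph with $|E(U)|=|V|$.

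Next, define $f$ in two parts. On the cycle, orient $C$ cyclically and set $f(w_i)=(w_i,w_{i+1})$ with indices mod $m$. For $m=1$, set $f(w_1)$ equal to the loop. These edges are distinct, and each is incident to its vertex. For the remaining vertices, root the forest $U\setminus E(C)$ at the cycle vertices: every vertex $v\notin C$ has a unique path in $U$ to $C$. Let $f(v)$ be the first edge of this path, namely the edge joining $v$ to its parent. This edge is incident to $v$.

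The main step is checking injectivity. Any two distinct non-cycle vertices $v\neq v'$ get distinct parent edges. If they shared one, that edge would be incident to both, so each would be the other's parent, which gives a 2-cycle in a forest and is impossible. Parent edges lie in $E(U)\setminus E(C)$, so they differ from every cycle edge. The cycle edges are distinct among themselves by construction. Hence $f$ is injective.

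The only delicate points are the degenerate cases. For $m=1$ (loop) and $m=2$ (multiple edges), the cyclic assignment must still produce distinct edges; with two parallel edges between $w_1$ and $w_2$, assign one to each vertex. The construction of $U$ also needs care, and the contraction argument handles it. Finally, I expect to remark on how the lemma is used. To realize the paper's requirement that the gauge-fixed edges satisfy $e_0(v)\in E_{\rm int}$ with $e_0(v)\neq e_0(v')$, apply the lemma to the internal 1-skeleton of $\ck$. Vertices without internal edges do not arise here, because the internal 1-skeleton contains all vertices.
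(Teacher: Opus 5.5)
Your proof is correct and follows essentially the same route as the paper: build a spanning unicyclic subgraph with $|V|$ edges, orient its cycle cyclically and its tree branches toward the cycle, and send each vertex to its unique outgoing (parent or cycle) edge. The only difference is how the unicyclic subgraph is built: the paper adds one non-tree edge $e^*$ to a spanning tree, which is more direct than contracting a preselected cycle. Separately, your aside that root complexes exclude parallel edges is not something the paper assumes (its footnote treats the length-2 cycle explicitly), but your argument covers that case anyway, so nothing is lost.
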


\begin{proof}

Since $G$ is connected, it contains a spanning tree $T = (V, E_T)$, where $E_T \subset E$. A spanning tree on $|V|$ vertices has exactly $|V| - 1$ edges. Since $G$ contains at least one cycle, the set of edges $E \setminus E_T$ is non-empty. Let $e^* = \{u, v\}$ be an edge in $E \setminus E_T$.
Consider the subgraph $H = (V, E_H)$ where $E_H = E_T \cup \{e^*\}$.
The subgraph $H$ has $|V|$ vertices and $|V|$ edges (so an injection from $V$ to $E_H\subset E$ becomes possible). Since adding an edge to a spanning tree creates exactly one cycle, $H$ contains exactly one cycle, which we denote by $C$.

We define an orientation for every edge in $E_H$ to construct a directed graph $\vec{H}$: (1) Let the vertices of the cycle $C$ be $c_1, c_2, \dots, c_k$ in order. We orient the edges of the cycle cyclically: $c_1 \to c_2 \to \dots \to c_k \to c_1$. (2) Removing the edges of $C$ from $H$ leaves a forest where each connected component is a tree rooted at a vertex in $C$. For any vertex $x \in V \setminus V(C)$, there is a unique path in $H$ from $x$ to the cycle $C$. We orient all edges on this path in the direction towards $C$.

For any vertex $v$ in a directed graph, the out-degree $d_{out}(v)$ is the number of outgoing edges from $v$. In $\vec{H}$, we show that $d_{out}(v)=1$ for every vertex $v \in V$: 
\begin{itemize}
\item $v$ is on the cycle $C$:
    The vertex $v$ has exactly two incident edges in $C$. In the cyclic orientation, exactly one of these edges is directed away from $v$. Any other edges incident to $v$ in $H$ belong to the trees attached to $C$ and are directed towards $v$ (since $v$ is the "root" for those branches). Thus, $d_{out}(v) = 1$.
\item  $v$ is not on the cycle $C$:
    Let $P$ be the unique path from $v$ to $C$. Let $e$ be the first edge of this path, connecting $v$ to some neighbor $w$. By our construction, $e$ is directed from $v$ to $w$. Any other edge incident to $v$ in $H$ would be part of a path starting further away from $C$ and passing through $v$, and thus would be directed towards $v$. Therefore, $v$ has exactly one outgoing edge, so $d_{out}(v) = 1$.
\end{itemize}

Given $\vec{H}$ specified above, we define the map $f: V \to E$ as follows:
For each vertex $v \in V$, let $f(v)$ be the unique edge in $E_H$ that is oriented away from $v$ in $\vec{H}$.
Formally, if the outgoing edge from $v$ is directed $v \to w$, then $f(v) = \{v, w\}$.
By definition, $f(v)$ is incident to $v$.

To prove $f$ is injective, we must show that if $x \neq y$, then $f(x) \neq f(y)$.
Suppose for the sake of contradiction that $f(x) = f(y) = \epsilon$ for distinct vertices $x, y$. Given $x \neq y$, it must be that $\{x, y\} = \epsilon$.
This implies that in our orientation $\vec{H}$, $x$ orients the edge $\epsilon$ as $x \to y$ (since $f(x)=\epsilon$), whereas $y$ orients the edge $\epsilon$ as $y \to x$ (since $f(y)=\epsilon$). This would imply that the edge $\epsilon$ is directed both ways. But an edge has only one starting vertex in a directed graph \footnote{Even if $C$ is a cycle of length 2 (two parallel edges between $x$ and $y$), the edges are distinct. Let the edges be $e_1$ and $e_2$. The cycle orientation would be $x \xrightarrow{e_1} y \xrightarrow{e_2} x$. In this case, $f(x) = e_1$ and $f(y) = e_2$. Since $e_1 \neq e_2$, we have $f(x) \neq f(y)$.}. Thus, it is impossible for $f(x) = f(y)$ with $x\neq y$. The map $f$ is injective.

\end{proof}

\section{The partition function $\Xi[s_h]$}\label{meromorphic function}

Let $\O_N$ be the set of sequences $\mathbf{n}=\{n_k\}_{k\in\Z_+}$, $n_k=0,1,2,\cdots$, such that $n_k = 0$ for all $k > N$. Let $\O_{\rm fin}=\cup_{N=1}^\infty \O_N$ be the set of sequences $\mathbf{n}$ with finite support ($\mathbf{n}=\{n_k\}_{k\in\Z_+}$ such that $n_k \neq 0$ for finitely many $k$).

\begin{lemma}\label{lemmaA1}
Given a sequence $\{a_k\}_{k=1}^\infty$ satisfying $0\leq a_k<1$ and $\sum_{k=1}^\infty a_k< \infty$, the following relation holds
\be
\sum_{\mathbf{n}\in\O_{\rm fin}}\prod_{k=1}^\infty a_k^{n_k}=\prod_{k=1}^\infty\frac{1}{1-a_k}< \infty.
\ee
\end{lemma}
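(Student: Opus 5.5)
The plan is to reduce the infinite product to finite products and pass to the limit by monotone convergence; since every term is non-negative, no rearrangement issues arise.

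\emph{Finite case.} For each fixed $N$, the sum over $\O_N$ factorizes. Because $\O_N$ is in bijection with $\{0,1,2,\dots\}^N$ and $\prod_{k=1}^\infty a_k^{n_k}=\prod_{k=1}^N a_k^{n_k}$ (using $a_k^0=1$ for $k>N$, including when $a_k=0$), Tonelli's theorem for non-negative series gives
\be
S_N:=\sum_{\mathbf{n}\in\O_N}\prod_{k=1}^\infty a_k^{n_k}=\prod_{k=1}^N\sum_{n=0}^\infty a_k^{n}=\prod_{k=1}^N\frac{1}{1-a_k},
\ee
where each geometric series converges since $0\le a_k<1$.

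\emph{Limit in $N$.} The sets $\O_N$ are increasing with union $\O_{\rm fin}$, and all summands are non-negative. Hence the sum over $\O_{\rm fin}$ (an unordered sum of non-negative terms) equals $\sup_N S_N=\lim_{N\to\infty}S_N$. Therefore it equals $\prod_{k=1}^\infty (1-a_k)^{-1}$, understood as the limit of the increasing partial products.

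\emph{Finiteness.} This is the only step that needs real input: we must show the partial products are bounded. Since $\sum_k a_k<\infty$, we have $a_k\to0$, so only finitely many $k$ satisfy $a_k>1/2$. Those finitely many factors contribute a finite constant, because each $a_k<1$. For the remaining $k$, use the elementary bound $-\ln(1-x)\le 2x$ for $0\le x\le 1/2$. Then
\be
\ln\prod_{k:\,a_k\le 1/2}\frac{1}{1-a_k}\le 2\sum_k a_k<\infty .
\ee
So $S_N$ is uniformly bounded and the limit is finite.

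The main subtlety is justifying the interchange of summation and product. The plan handles this entirely through non-negativity, using Tonelli on each $\O_N$ and monotone convergence in $N$, so no absolute-convergence argument beyond positivity is needed.
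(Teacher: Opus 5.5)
Your proposal is correct and follows essentially the same route as the paper: factorize the sum over $\Omega_N$ into the partial product $\prod_{k\le N}(1-a_k)^{-1}$ using geometric series, show the infinite product is finite via $-\ln(1-x)\le 2x$ on $[0,1/2]$ for the tail, and let $N\to\infty$. Your appeal to Tonelli and monotone convergence (non-negativity, with $\Omega_N$ increasing to $\Omega_{\rm fin}$) rigorously justifies the limit interchange that the paper asserts without comment.
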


\begin{proof}
Consider the infinite product $ Q = \prod_{k=1}^\infty ({1 - a_k})^{-1} $ and take the logarithm $ \ln Q = \sum_{k=1}^\infty [-\ln(1 - a_k) ]$. Since $\sum_{k=1}^\infty a_k$ converges, $a_k \to 0$ as $k \to \infty$. For sufficiently large $k_s$, $a_k \leq \frac{1}{2}$ for $k\geq k_s$. We have $ 0\leq -\ln(1 - a_k) \leq 2a_k $ for $0\leq a_k \leq \frac{1}{2}$. It implies $\sum_{k\geq k_s}[ -\ln(1 - a_k)]\leq 2\sum_{k\geq k_s}a_k<\infty$, so the infinite product $Q$ converges.

Let $Q_N$ be the $N$-th partial product $ Q_N = \prod_{k=1}^N \frac{1}{1 - a_k} $. It is clear that $\lim_{N\to\infty }Q_N=Q$. Since $0\leq a_k < 1$, we can expand each factor as a convergent geometric series: 
\be
Q_N = \prod_{k=1}^N \left( \sum_{n_k=0}^\infty a_k^{n_k} \right)  = \sum_{\mathbf{n} \in \O_N} \prod_{k=1}^\infty a_k^{n_k}.
\ee 
Then take the limit,
\be
Q=\lim_{N\to\infty }Q_N=\sum_{\mathbf{n} \in \O_{\rm fin}} \prod_{k=1}^\infty a_k^{n_k}
\ee
\end{proof}

It is sufficient to only consider the sum over $\mathbf{n}\in\O_{\rm fin}$. Indeed,
let $\O$ be the set of all sequences $\mathbf{n}$. For any $\mathbf{n}\in \O\setminus\O_{\rm fin}$ (sequences with infinite support), $\prod_{k=1}^\infty a_k^{n_k}=0$ because $a_k \to 0$ as $k \to \infty$ ($a_k\leq 1/2$ for $k\geq k_0$ and $\lim_{m\to\infty}(1/2)^m=0$). As a result, we may formally write 
\be
Q=\sum_{\mathbf{n} \in \O} \prod_{k=1}^\infty a_k^{n_k},
\ee
despite that $\O$ is uncountable.

\begin{lemma}\label{lemmaA2}
Given a sequence $\{a_k\}_{k=1}^\infty$, $a_k\in\C$, satisfying $|a_k|<1$ and $\sum_{k=1}^\infty |a_k|< \infty$, the series $\sum_{\mathbf{n}\in\O_{\rm fin}}\prod_{k=1}^\infty a_k^{n_k}$ converges absolutely, and the following relation holds
\be
\sum_{\mathbf{n}\in\O_{\rm fin}}\prod_{k=1}^\infty a_k^{n_k}=\prod_{k=1}^\infty\frac{1}{1-a_k}< \infty.
\ee
\end{lemma}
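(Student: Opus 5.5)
The plan is to reduce everything to Lemma \ref{lemmaA1} applied to the moduli $|a_k|$, and then pass from partial products to the full product by dominated convergence.

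First, absolute convergence. The sequence $\{|a_k|\}$ satisfies $0\le |a_k|<1$ and $\sum_k|a_k|<\infty$. Since $\bigl|\prod_k a_k^{n_k}\bigr| = \prod_k |a_k|^{n_k}$ for every $\mathbf{n}\in\O_{\rm fin}$, Lemma \ref{lemmaA1} gives
\be
\sum_{\mathbf{n}\in\O_{\rm fin}}\Bigl|\prod_{k=1}^\infty a_k^{n_k}\Bigr| = \prod_{k=1}^\infty \frac{1}{1-|a_k|}<\infty .
\ee
Hence the series converges absolutely over the countable index set $\O_{\rm fin}$. Its value is therefore independent of ordering or grouping.

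Next, the identity at finite $N$. Each factor $\frac{1}{1-a_k}$ is an absolutely convergent geometric series, because $|a_k|<1$. A finite product of absolutely convergent series equals the Cauchy-type sum over all multi-indices, which gives
\be
Q_N:=\prod_{k=1}^N\frac{1}{1-a_k}=\sum_{\mathbf{n}\in\O_N}\prod_{k=1}^\infty a_k^{n_k}.
\ee
The sets $\O_N$ increase to $\O_{\rm fin}$. By the absolute convergence established above, the right-hand side therefore tends to $\sum_{\mathbf{n}\in\O_{\rm fin}}\prod_k a_k^{n_k}$ as $N\to\infty$. The tail is bounded by the sum of $|\cdot|$ over $\O_{\rm fin}\setminus\O_N$, which goes to zero.

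The remaining step, and the only genuinely new point compared with Lemma \ref{lemmaA1}, is convergence of the complex infinite product $\prod_k(1-a_k)^{-1}$ to a finite nonzero limit, so that it equals $\lim_N Q_N$. I would handle this with the principal logarithm. Since $a_k\to0$, there is a $k_s$ with $|a_k|\le 1/2$ for $k\ge k_s$. For $|z|\le 1/2$ we have $|\ln(1-z)|\le 2|z|$, so $\sum_{k\ge k_s}\ln(1-a_k)$ converges absolutely. Then $\prod_{k\ge k_s}(1-a_k)^{-1}=\exp\bigl(-\sum_{k\ge k_s}\ln(1-a_k)\bigr)$ converges to a finite nonzero number. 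The finitely many factors with $k<k_s$ are finite because $a_k\neq1$. Thus $\lim_N Q_N$ exists and equals the infinite product, which proves the stated equality and its finiteness. Alternatively, one can bypass the logarithm entirely: the convergence of $Q_N$ already follows from the previous paragraph, and that suffices to define the product.
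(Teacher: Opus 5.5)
Your proof is correct and follows essentially the same route as the paper, which treats Lemma~\ref{lemmaA2} as a direct generalization of Lemma~\ref{lemmaA1}: expand the partial products $Q_N$ as sums over $\Omega_N$, let $N\to\infty$, and compare everything with $\prod_k(1-|a_k|)^{-1}$. You also spell out two points the paper leaves implicit: the domination argument (Lemma~\ref{lemmaA1} applied to $|a_k|$) that justifies passing from $\Omega_N$ to $\Omega_{\rm fin}$, and the logarithmic estimate for convergence of the complex product. This is worthwhile, since the paper's factor-wise bound $|1-a_k|^{-1}\le(1-|a_k|)^{-1}$ by itself only gives boundedness, not convergence, of the complex partial products.
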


\begin{proof}
The right-hand side convergence due to $|1-a_k|^{-1}\leq (1-|a_k|)^{-1}$. The rest of the proof is a trivial generalization from the above.

\end{proof}

\begin{theorem}
(1) Assume $E_k\sim O(k^N)$, $N\in\Z_+$, for large $k$ and $dE_k/dk\geq B$ for some $B>0$. The sum in
\be
\Xi(s_h) = \sum_{{\bf n}\in\O_{\rm fin}}
\prod_{k=1}^\infty
\left[ \lambda_h \tau_k^{(h)}(g_h) e^{-s_h E_k} \right]^{n_k}-1
\ee
converges absolutely for sufficiently large $\re(s_h)>0$.

(2) At a sufficiently large $\re(s_h)>0$, $\Xi(s_h)$ can be expressed as
\be
\Xi(s_h) =
\prod_{k=1}^\infty
\frac{1}{1-\lambda_h \tau_k^{(h)}(g_h) e^{-s_h E_k} }-1. \label{X1shA6}
\ee
\end{theorem}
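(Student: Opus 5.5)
The approach is to reduce both parts to Lemma~\ref{lemmaA2} with $a_k=\lambda_h\tau_k^{(h)}(g_h)e^{-s_hE_k}$, at fixed $g_h$. What has to be checked is that for $\re(s_h)$ large enough the two hypotheses of that lemma hold: $|a_k|<1$ for all $k$, and $\sum_k|a_k|<\infty$. The main input is the uniform bound $|\tau_k^{(h)}(g_h)|\le d_k^2=(k+1)^2$ (Appendix~\ref{Bound of tau k}), which gives
\[
|a_k|\le |\lambda_h|(k+1)^2e^{-\re(s_h)E_k}.
\]

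The first step is the uniform smallness. From $dE_k/dk\ge B>0$ we get $E_k\ge E_1+B(k-1)$, and $E_1>0$. Hence
\[
|a_k|\le |\lambda_h|(k+1)^2e^{-\re(s_h)E_1}e^{-\re(s_h)B(k-1)}.
\]
The factor $(k+1)^2e^{-\re(s_h)B(k-1)}$ is bounded uniformly in $k$ once $\re(s_h)\ge 1$; call its supremum $M$. Choosing $\re(s_h)$ so large that $|\lambda_h|Me^{-\re(s_h)E_1}<1$ then gives $|a_k|<1$ for every $k$.

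The second step is summability. The same bound dominates $\sum_k|a_k|$ by a constant times $\sum_k(k+1)^2e^{-\re(s_h)Bk}$, which converges for every $\re(s_h)>0$. The growth $E_k\sim k^N$ is used only to ensure the $E_k$ are positive and increasing; linear growth is already enough for the estimate.

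Part (1) then follows from the absolute convergence statement in Lemma~\ref{lemmaA2}. Concretely, $\sum_{\mathbf n}\prod_k|a_k|^{n_k}=\prod_k(1-|a_k|)^{-1}<\infty$, which is the nonnegative case, Lemma~\ref{lemmaA1}, applied to $\{|a_k|\}$. Subtracting the vacuum term $1$ does not affect convergence. For part (2), absolute convergence lets us rearrange the sum over $\O_{\rm fin}$ as the limit of the partial sums over $\O_N$. Each partial sum factorizes into $\prod_{k\le N}(1-a_k)^{-1}$, a finite product of convergent geometric series. Letting $N\to\infty$ gives \eqref{X1shA6}, and the infinite product converges because $\sum|a_k|<\infty$.

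The only step needing care is making the threshold on $\re(s_h)$ explicit, and it is uniform in $g_h$ because the bound on $\tau_k^{(h)}$ is uniform. I do not expect a real obstacle. The one subtlety is to justify the passage from the ordered-tuple sum to the occupation-number sum; this is a bijection between ordered tuples and finitely supported $\mathbf n$, and it is harmless under absolute convergence.
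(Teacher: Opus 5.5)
Your proposal is correct and is essentially the paper's proof. You bound $|a_k|\le|\lambda_h|d_k^2e^{-\re(s_h)E_k}$ using $|\tau_k^{(h)}|\le d_k^2$, show the bound is uniformly below $1$ and summable for large $\re(s_h)$, and invoke Lemma~\ref{lemmaA1} for (1) and Lemma~\ref{lemmaA2} for (2). The only difference is cosmetic: the paper gets the uniform bound $4|\lambda_h|e^{-\re(s_h)E_1}$ by showing $d_k^2e^{-xE_k}$ decreases in $k$ once $x>B^{-1}$, while you use $E_k\ge E_1+B(k-1)$ with a supremum over $k$, which also handles summability more cleanly than the paper's appeal to $E_k\sim O(k^N)$.
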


\begin{proof}
(1) Let $a_k=|\lambda_h \tau_k^{(h)}(g_h) e^{-s_h E_k}|$, we only need to show $a_k<1$ and $\sum_{k=1}^\infty a_k< \infty$ for some $s_h$ with a large real part, by Lemma \ref{lemmaA1}. We denote by $x=\re(s_h)$. It is proven in \cite{spinfoamstack} that $|\tau_k^{(h)}(g_h)|\leq d_k^2$, so $a_k\leq |\lambda_h| d_k^2 e^{-x E_k}$. Given any $x>0$, the derivative of $ d_k^2 e^{-x E_k}$ in $k$ is
\be
e^{-x E_k}(k+1)\lt[2-(k+1) x\frac{dE_k}{dk}\rt]\leq e^{-x E_k}(k+1)\lt[2-(k+1) x B\rt]. \label{derivA5}
\ee
For all $x>B^{-1}$, Eq.\eqref{derivA5} is strictly negative for $k\in\Z_+$, so $|\lambda_h| d_k^2 e^{-x E_k}\leq 4|\lambda_h| e^{-x E_{1}}$. For all $x>\mathrm{Max}\lt\{B^{-1},E_{1}^{-1}\ln(4|\lambda_h|)\rt\}$, $a_k<1$ is satisfied, and $\sum_{k=1}^\infty a_k \leq |\l_h|\sum_{k=1}^\infty d_k^2 e^{-x E_k}<\infty$ since $E_k\sim O(k^N)$, $N\in\Z_+$, for large $k$. 

(2) This follows from Lemma \ref{lemmaA2}.
\end{proof}

\begin{theorem}
When analytic continuing the expression of $\Xi[s_h]$ in \eqref{X1shA6}, the resulting function is meromorphic on the right half plane $\re(s_h)>0$.
\end{theorem}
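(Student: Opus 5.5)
We need to show that the product $\prod_{k\ge1}\bigl(1-\lambda_h\tau^{(h)}_k(g_h)e^{-sE_k}\bigr)^{-1}$, which is already known to converge for large $\re(s)$, extends to a meromorphic function on the whole half plane $\re(s)>0$. The plan is to split the product into a finite head and an infinite tail. The tail is controlled uniformly on compacta, and the head is manifestly meromorphic.

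\textbf{Step 1: uniform decay of the factors.} Fix a compact set $\mathcal{D}\subset\{\re(s)>0\}$ and let $x_0=\min_{\mathcal{D}}\re(s)>0$. Write $a_k(s)=\lambda_h\tau^{(h)}_k(g_h)e^{-sE_k}$. The bound $|\tau^{(h)}_k|\le d_k^2$ gives
\[
|a_k(s)|\le|\lambda_h|\,d_k^2\,e^{-x_0E_k}\qquad\text{for all } s\in\mathcal{D}.
\]
Since $E_k$ grows like $k^N$ and $d_k^2=(k+1)^2$ grows only polynomially, the right-hand side tends to $0$ and is summable in $k$. Hence there is a $K$, depending on $\mathcal{D}$, such that $|a_k(s)|\le 1/2$ for all $k>K$ and $s\in\mathcal{D}$, and
\[
\sum_{k>K}\sup_{\mathcal{D}}|a_k|<\infty.
\]
The point to note is that positivity of $\re(s)$, not its largeness, already suffices for summability. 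The largeness was only needed earlier to make every factor satisfy $|a_k|<1$, including the small-$k$ ones.

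\textbf{Step 2: the tail is holomorphic and nonvanishing.} Each $a_k$ is entire in $s$. Define the tail
\[
T_K(s)=\prod_{k>K}\bigl(1-a_k(s)\bigr)^{-1}=\exp\Bigl(-\sum_{k>K}\ln\bigl(1-a_k(s)\bigr)\Bigr),
\]
using the principal branch of the logarithm. On $\mathcal{D}$ we have $|\ln(1-a)|\le2|a|$ for $|a|\le1/2$, so the series converges uniformly on $\mathcal{D}$ by the Weierstrass M-test. By Weierstrass' theorem it is therefore holomorphic on the interior of $\mathcal{D}$. Consequently $T_K$ is holomorphic and nowhere zero there.

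\textbf{Step 3: the head is meromorphic, and the pieces patch together.} The finite head $\prod_{k\le K}(1-a_k(s))^{-1}$ is a finite product of reciprocals of entire functions. It is therefore meromorphic. Its poles sit at the zeros of $1-a_k$, namely the points $s_h(k,m,g_h,\lambda_h)$ of Eq.~\eqref{eq:poles}. These are isolated, since each $1-a_k$ is a nonconstant entire function; if $\tau_k^{(h)}=0$ the factor is simply $1$. Hence $\Xi(s)+1$ equals the head times $T_K(s)$, which is meromorphic on the interior of $\mathcal{D}$.

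On the region of large $\re(s)$ where the original theorem applies, this expression agrees with the convergent product \eqref{X1shA6}. By the identity theorem, the continuations obtained on different compacta agree on their overlaps; when $K$ is enlarged, the extra factors simply move from the tail to the head. Exhausting $\{\re(s)>0\}$ by compacta then gives a single meromorphic function on the right half plane. Subtracting $1$ does not affect meromorphy.

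\textbf{Main obstacle.} The step needing most care is Step 2: making the bound $|a_k|\le1/2$ uniform in $s$ on each compactum, so that $K$ can be chosen independently of $s$. This works because the estimate $d_k^2e^{-x_0E_k}\to0$ depends only on $x_0>0$. As $x_0\to0$, $K$ must be taken larger, which is why the argument gives meromorphy on the open half plane but says nothing at $\re(s)=0$. The poles may accumulate only toward the imaginary axis: for large $k$ their real parts $\ln|\lambda_h\tau_k|/E_k$ tend to $0$, and the imaginary spacing $2\pi/E_k$ shrinks. Inside the open half plane, however, only finitely many $k$ contribute poles on any compactum, so the poles remain isolated there.
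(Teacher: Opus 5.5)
Your proof is correct and follows essentially the paper's route: the uniform estimate $|\lambda_h\tau^{(h)}_k(g_h)e^{-sE_k}|\le|\lambda_h|d_k^2e^{-x_0E_k}$ on compacta with $\re(s)\ge x_0>0$, the Weierstrass M-test, and the standard convergence theorem for infinite products. The only difference is that the paper cites Rudin's Theorem 15.6 to get holomorphy of the whole denominator $\prod_k(1-\mathcal{T}_k(s))$ and then takes its reciprocal. You instead reprove that theorem by hand via the head/tail split, which also makes explicit that the tail is nonvanishing and that only finitely many $k$ contribute poles on each compactum.
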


\begin{proof}
Consider the following function
\be 
\Xi(s_h) +1 = \prod_{k=1}^{\infty} \frac{1}{1 - \mathcal{T}_k(s_h)},\qquad \mathcal{T}_k(s_h) = \lambda_h \tau^{(h)}_k(g_h) e^{-s_hE_h}. 
\ee
To prove that $\Xi(s_h)$ is meromorphic, we must show that the denominator, the infinite product $D(s_h) = \prod_{k=1}^\infty (1 - \mathcal{T}_k(s_h))$, defines a holomorphic function. The function $ 1/D(s_h)$ will then be meromorphic, with poles occurring at the zeros of $D(s_h)$.

An infinite product $\prod_{k=1}^\infty (1 - \ct_k(s))$ converges to a holomorphic function on a domain $D$ if the series $\sum_{k=1}^\infty |\ct_k(s)|$ converges uniformly on compact subsets of $D$ (Theorem 15.6 in \cite{rudin1987real}). Let $s_h = x + iy$. We consider the domain $D = \{ s_h \in \mathbb{C} \mid x > 0 \}$ (the right half-plane).
For large $k$, $E_k \geq C k$ for some $C>0$.
We estimate 
$$ |\mathcal{T}_k(s_h)| \le |\lambda_h | d_k^2 e^{-C x k} $$
Consider any compact subset $K \subset D$. There exists a minimum real part $x_0 > 0$ such that for all $s_h \in K$, $\operatorname{Re}(s_h) \ge x_0$.
The series $\sum_{k=1}^\infty d_k^2 e^{-C x_0 k}$ is convergent.
By the Weierstrass M-test, the series $\sum_{k=1}^\infty \mathcal{T}_k(s_h)$ converges absolutely and uniformly on $K$.

\end{proof}

\section{Bound of $|\tau_k^{(h)}(g_h)|$}\label{Bound of tau k}

Given the principal series unitary irrep of $\Slc$ carried by the Hilbert space $\ch_{(k,\rho)}=\oplus_{m=k}^\infty\ch_m$ with $k\in\Z_+$ and $\rho\in R$ (in the direct sum $m=k,k+2,\cdots$), we denote by $D^{(k,\rho)}(g)$ the unitary operator representing $g\in\Slc$ on $\ch_{(k,\rho)}$. The operator $P_k$ is the orthogonal projection from $\ch_{(k,\rho)} $ to the SU(2) irrep $\ch_k$.

\begin{lemma}
	$D^{(k,\rho)}(g)P_k |\,\ff\,\rangle\in \ch_k$ for all $\ff\in\ch_k$ if an only if $g\in\Su$.
\end{lemma}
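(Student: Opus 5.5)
The plan is to prove the two directions separately, working in the canonical basis of $\ch_{(k,\rho)}=\oplus_{m\ge k}\ch_m$ (with $m=k,k+2,\dots$) and using the explicit realization of $\ch_{(k,\rho)}$ as covariant functions on $\Su$ introduced in Section~\ref{Finiteness of boundary block}.

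The ``if'' direction is immediate. For $g\in\Su$, $D^{(k,\rho)}(g)$ restricted to the $\Su$ subgroup acts block-diagonally on the decomposition $\oplus_m\ch_m$, since each $\ch_m$ is an $\Su$-irrep. Therefore $D^{(k,\rho)}(g)$ preserves $\ch_k$, and $D^{(k,\rho)}(g)P_k\ff\in\ch_k$.

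For the ``only if'' direction, I will use the Cartan decomposition $g=u_1e^{-irK^3}u_2$ with $u_1,u_2\in\Su$ and $r\in\R$. Because $u_1,u_2$ preserve $\ch_k$, the hypothesis is equivalent to $D^{(k,\rho)}(e^{-irK^3})$ mapping $\ch_k$ into itself. It then suffices to show that $r\ne0$ forces a nonzero component outside $\ch_k$.

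The natural route is to use the boost action formula \eqref{SL2Caction}. For a pure boost, $\lambda_g(u)^2=e^{2t}|\alpha|^2+e^{-2t}|\beta|^2$ as in the proof of Lemma~\ref{LemmaFgFxi}. Take $\ff=|k,k/2\rangle$, the highest weight vector, whose function is $\sqrt{d_k}D^{k/2}_{k/2,k/2}(u)\propto\alpha^{*k}$-type. The boosted function is $\lambda^{i\rho-2}\times(\text{a power of }\alpha/\lambda)$. Its projection onto $\ch_k$ is fixed by its inner product with $\ch_k$ basis functions. Membership in $\ch_k$, a finite-dimensional space of polynomials of degree $k$ in $(\alpha,\beta)$ and their conjugates, would require $(e^{2t}|\alpha|^2+e^{-2t}|\beta|^2)^{(i\rho-2-k)/2}$ times a polynomial to be such a polynomial. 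Since $|\alpha|^2+|\beta|^2=1$, the factor reduces to $(1+(e^{2t}-1)|\alpha|^2)$ raised to a non-integer or negative exponent, which is non-polynomial in $|\alpha|^2$ unless $t=0$. This forces $r=0$, i.e.\ $g\in\Su$.

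An alternative, more algebraic route avoids the function-space computation. The generator $K^3$ has nonzero matrix elements between $\ch_k$ and $\ch_{k+2}$; this is the standard $C$-coefficient in the Naimark basis, which is nonzero for $k\ge1$. If $e^{-irK^3}$ preserved $\ch_k$ for some $r\neq0$, then by the $\Su$ conjugation invariance of the condition, the set of such $r$ would be a closed subgroup of $\R$. Closed subgroups of $\R$ are $\{0\}$, discrete lattices $a\Z$, or $\R$ itself. In the last case, differentiating gives $K^3\ch_k\subset\ch_k$, which contradicts the nonzero off-diagonal element. The discrete case must then be excluded separately, and this is where the explicit non-polynomial argument above is needed anyway.

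I expect the main obstacle to be a clean verification that the boosted highest weight vector has a nonvanishing component outside $\ch_k$ for every $r\ne0$, rather than only infinitesimally. I will handle it by computing the overlap $\langle k,k/2|D(e^{-irK^3})|k,k/2\rangle$ explicitly as an integral over $|\alpha|^2\in[0,1]$, and showing that its modulus is strictly less than $1=\|\ff\|$ when $r\neq0$. The integral is essentially $\int_0^1 x^{k}(e^{t}x+e^{-t}(1-x))^{\dots}dx$ type. By Cauchy--Schwarz, equality of the modulus with $1$ would require $D\ff\propto\ff$, and the explicit $\lambda$-dependence shows this fails for $t\ne0$. Combining this with unitarity, and with the fact that $\ch_k$ is irreducible under the stabilizer $\mathrm{U}(1)$ weights, shows that the image must leak out of $\ch_k$.
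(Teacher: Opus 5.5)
Your argument is correct. It shares the paper's skeleton but closes the decisive step by a different mechanism.

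Both proofs reduce via $g=u_1e^{r\sigma_3/2}u_2$ to a pure boost. Both then rest on two facts: the boost commutes with $L^3$, and each weight occurs exactly once in $\ch_k$ (this is what your remark about the stabilizer $\mathrm{U}(1)$ should say). Hence a weight vector mapped into $\ch_k$ would have to be an eigenvector of the boost.

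The paper excludes eigenvectors spectrally. In the model $\ch_{(k,\rho)}\simeq L^2(\C,\rmd^2z)$ the boost is a dilation, unitarily equivalent to a translation in $\xi=\ln|z|$, and such a translation has no normalizable eigenvector for any $r\neq0$. You instead stay in the $\Su$ model of Section~\ref{Finiteness of boundary block}. There the boosted highest-weight function is a constant times $\lambda_g(u)^{\I\rho-2-k}$ times the original function. Its modulus $\lambda_g(u)^{-2-k}$ is non-constant on the support of $\ff$ when $r\neq0$, so the image is not proportional to $\ff$.

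Your route is elementary and explicit. The paper's is uniform over all vectors and all $r\neq0$, and it is exactly what closes the lattice case $a\Z$ that your closed-subgroup alternative leaves open: restrict $e^{-\I aK^3}$ to the weight-$k/2$ line of $\ch_k$.

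Two minor points:
\begin{itemize}
\item ``A non-polynomial factor times a polynomial is not in $\ch_k$'' is not a valid inference on its own, since cancellations can occur. What concludes is the weight argument together with the non-constant modulus, and that makes your overlap integral and Cauchy--Schwarz step superfluous.
\item With $|\alpha|^2+|\beta|^2=1$ the factor is $e^{-2t}\bigl(1+(e^{4t}-1)|\alpha|^2\bigr)$, not $1+(e^{2t}-1)|\alpha|^2$. This slip is harmless.
\end{itemize}

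Finally, one witness suffices for the lemma as stated. Theorem~\ref{zetabound}, however, uses the stronger form that no nonzero $\ff\in\ch_k$ stays in $\ch_k$. Your computation yields that as well: every $|k,m\rangle$ is multiplied by a constant times the same factor $\lambda_g(u)^{\I\rho-2-k}$, and the boost preserves the weight grading.
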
 

\begin{proof} We only prove the ``only if'' direction, while the other direction is trivial. Suppose $g\in\Slc$ but not in $\Su$. We use the decomposition $g=k_1 e^{r \sigma_3/2} k_2$ with real $r\neq 0$ and $k_1,k_2\in\Su$. Then it suffices to show $D^{(k,\rho)}(e^{r \sigma_3/2})|\,v\,\rangle\not\in\ch_k$ for any $|v\rangle\in\ch_k$ and any real $r\neq 0$. It suffices to only consider the basis and let $|v\rangle=|k,m\rangle$. The action of $D^{(k,\rho)}(e^{r \sigma_3/2})$ on $|k,m\rangle$ leaves $m$ invariant but generally changes $k$. If $D^{(k,\rho)}(e^{r \sigma_3/2})|k,m\rangle\in\ch_k$, then $|k,m\rangle$ would have to be an eigenstate of $D^{(k,\rho)}(e^{r \sigma_3/2})$. But this contradicts to the fact that $D^{(k,\rho)}(e^{r \sigma_3/2})$ only has continuous spectrum. Indeed, use the representation $\ch_{(k,\rho)}\simeq L^2(\C,\rmd^2 z)$, the action $D^{(k,\rho)}(e^{r \sigma_3/2})$ on any $f(z,\bar{z})\in \ch_{(k,\rho)}$ is a dilation: 
\be
D^{(k,\rho)}(e^{r \sigma_3/2})f(z,\bar{z})=e^{-r(i\rho/2-1)}f(e^r z,e^r\bar{z}).
\ee 
Consider the coordinate transformation $z = e^{\xi + i\phi}$, where $\xi \in \mathbb{R}$ and $\phi \in [0, 2\pi)$. The measure transforms as $\rmd^2z = e^{2\xi} \rmd\xi \rmd\phi$.
We define a unitary map $V: L^2(\mathbb{C},\rmd^2z) \to L^2(\mathbb{R} \times S^1,\rmd\xi\rmd\phi)$ by: $
\psi(\xi, \phi) = [V f] (\xi, \phi) = e^{\xi} f(e^{\xi + i\phi},e^{\xi - i\phi})$ \footnote{
	The factor $e^\xi$ ensures unitarity: $
\int |f(z)|^2 d^2z = \int |e^{-\xi} \psi|^2 e^{2\xi} d\xi d\phi = \int |\psi|^2 d\xi d\phi$.
}.
The action of $D^{(k,\rho)}(e^{r \sigma_3/2})$ on $\psi$ is given by
\be
[V D^{(k,\rho)}(e^{r \sigma_3/2}) V^{-1} \psi](\xi, \phi) = e^{-i r \rho/2} \psi(\xi + r, \phi).
\ee
The generator of the boost is denoted by $K_3$ and $D^{(k,\rho)}(e^{r \sigma_3/2})=\exp(-i r K_3)$. Differentiating the above relation with respect to $r$ at $r=0$, we find the transformed generator $\tilde{K}_3 = V K_3 V^{-1}=i \partial_\xi+\rho/2$, which is a self-adjoint operator with continuous spectrum. Therefore, $D^{(k,\rho)}(e^{r \sigma_3/2})$ has only continuous spectrum for any $r\neq 0$ and thus cannot have normalizable eigenstate.

\end{proof}

\begin{theorem}\label{zetabound}
	$|\t_k^{(h)}|\leq d_k^2$, the equality holds if and only if $g_{ve}^{-1}g_{ve'}\in \Su$, for all $e,e'\subset\partial h$, $e\cap e'=v$, and $\overrightarrow{\prod}_{v\in\partial h} g_{ve}^{-1}g_{ve'} =\pm \mathbb{I}$.
	
\end{theorem}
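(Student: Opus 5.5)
Write $U_v := g_{ve}^{-1}g_{ve'}$ for the wedge element at each vertex $v\in\partial h$. Set $A_v := P_k D^{(k,\rho)}(U_v) P_k$, viewed as an operator on the $d_k$-dimensional space $\ch_k$. Then
\[
\tau_k^{(h)} = d_k\,\tr_{\ch_k}\Bigl[\overrightarrow{\prod_v} A_v\Bigr].
\]

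\textbf{Upper bound.} Since $D^{(k,\rho)}$ is unitary and $P_k$ is an orthogonal projection, each $A_v$ is a contraction: $\|A_v\|_{op}\le 1$. For any operator $X$ on $\ch_k$ one has $|\tr X|\le d_k\|X\|_{op}$ (for instance $|\tr X|\le \|X\|_1\le d_k\|X\|_{op}$). The operator norm is submultiplicative, so
\[
|\tau_k^{(h)}|\le d_k\cdot d_k\prod_v\|A_v\|_{op}\le d_k^2 .
\]

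\textbf{Equality case.} Equality forces every inequality in the chain to be tight.

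First, $\prod_v\|A_v\|_{op}=1$ requires $\|A_v\|_{op}=1$ for each $v$. More strongly, $|\tr X|=d_k\|X\|_{op}$ with $\|X\|_{op}\le1$ means all singular values of $X=\prod A_v$ equal $1$ and $X$ is a unimodular multiple of a unitary with all eigenvalues equal, i.e.\ $X=c\,\mathbb{I}$ with $|c|=1$. A product of contractions that is unitary forces each factor to be an isometry on the relevant subspace. Tracing back along the product, each $A_v$ is then unitary on $\ch_k$.

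Next, if $A_v=P_kD(U_v)P_k$ is unitary on $\ch_k$, then for $\ff\in\ch_k$ we have $\|P_kD(U_v)\ff\|=\|\ff\|=\|D(U_v)\ff\|$, hence $D(U_v)\ff\in\ch_k$. By the preceding lemma ($D^{(k,\rho)}(g)P_k\ff\in\ch_k$ for all $\ff$ iff $g\in\Su$), this gives $U_v\in\Su$. The projectors then drop out, so $X=D^{k/2}\bigl(\overrightarrow{\prod}U_v\bigr)=c\,\mathbb{I}$. Since $D^{k/2}$ is irreducible and faithful modulo the center, this forces $\overrightarrow{\prod}U_v=\pm\mathbb{I}$.

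Conversely, if $U_v\in\Su$ for all $v$ and $\overrightarrow{\prod}U_v=\pm\mathbb{I}$, then $\tau_k^{(h)}=d_k(\pm1)^k d_k$, so $|\tau_k^{(h)}|=d_k^2$.

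The main obstacle is the middle step, passing from $|\tr\prod A_v|=d_k$ to each $A_v$ being unitary. One clean route: choose an orthonormal eigenbasis $\{\ff_i\}$ of $X$, so that $\sum_i\langle\ff_i,X\ff_i\rangle=c\,d_k$ with $|\langle \ff_i,X\ff_i\rangle|\le1$. Then $X\ff_i=c\ff_i$, and each contraction in the chain must preserve the norm of the propagated vector. Running this over a full basis gives that each $A_v$ is an isometry, and hence unitary on the finite-dimensional $\ch_k$.

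A second, smaller point: for $D^{k/2}(g)=c\,\mathbb{I}$ with $g\in\Su$, Schur's lemma together with the fact that the kernel of $D^{k/2}$ is contained in $\{\pm\mathbb{I}\}$ gives $g=\pm\mathbb{I}$ when $k\ge1$.
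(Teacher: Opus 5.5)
Your proof is correct and follows essentially the paper's argument. The bound comes from each wedge operator $P_kD^{(k,\rho)}(g_{ve}^{-1}g_{ve'})P_k$ being a contraction on $\mathcal{H}_k$. Saturation then forces these operators to be norm-preserving, the preceding lemma gives $g_{ve}^{-1}g_{ve'}\in\mathrm{SU}(2)$, and the product is pinned to $\pm\mathbb{I}$.

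The differences are cosmetic. You use singular values to get $X=c\,\mathbb{I}$ and then projective faithfulness of $D^{k/2}$. The paper instead applies Schwarz to the diagonal entries $\langle k,m|O|k,m\rangle$, concludes the holonomy is a $z$-rotation, and uses the explicit character $\sin((k+1)\theta/2)/\sin(\theta/2)$.
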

	
\begin{proof}
For any state $\ff\in\ch_k\subset \ch_{(k,\rho)}$, we have $\Vert D^{(k,\rho)}(g_{ve}^{-1}g_{ve'})P_k |\,\ff\,\rangle\Vert =  \Vert \,\ff\,\Vert $ by the unitarity, then after the projection by $P_k$ back to $\ch_k$, $\Vert P_k D^{(k,\rho)}(g_{ve}^{-1}g_{ve'})P_k |\,\ff\,\rangle\Vert \leq  \Vert \,\ff\,\Vert $. The equality holds if and only if $D^{(k,\rho)}(g_{ve}^{-1}g_{ve'})P_k |\,\ff\,\rangle\in\ch_k$, equivalent $g_{ve}^{-1}g_{ve'}\in \Su$ by the above lemma. For any internal face $h$ with $N$ vertices,
\be
\lt\Vert \overrightarrow{\prod_{v=1}}^N P_k D^{(k,\rho)}(g_{ve}^{-1}g_{ve'})P_k |\,\ff\,\rangle\rt\Vert \leq  \lt\Vert \overrightarrow{\prod_{v=2}}^N P_k D^{(k,\rho)}(g_{ve}^{-1}g_{ve'})P_k |\,\ff\,\rangle\rt\Vert\leq \cdots \leq \Vert \,\ff\,\Vert
\ee
The equality $\Vert \overrightarrow{\prod}_{v} P_k D^{(k,\rho)}(g_{ve}^{-1}g_{ve'})P_k |\,\ff\,\rangle\Vert =  \Vert \,\ff\,\Vert $ holds if and only if all $g_{ve}^{-1}g_{ve'}\in \Su$. By using this inequality and the Schwartz inequality, we obtain
\be
	\lt|\lag \ff_1\lt|\overrightarrow{\prod_{v\in\partial h}}P_kD^{(k,\rho)}\lt(g_{ve}^{-1}g_{ve'}\rt)P_k\rt|  \ff_2\rag\rt|\leq \Vert \ff_1\Vert\lt\Vert \overrightarrow{\prod_{v\in\partial h}} P_k D^{(k,\rho)}(g_{ve}^{-1}g_{ve'})P_k |\,\ff_2\,\rangle\rt\Vert\leq \Vert \ff_1\Vert\Vert \ff_2\Vert=1,\label{schwartzineq}
\ee
for normalized states $\ff_1,\ff_2$. The equality holds if and only if for all $g_{ve}^{-1}g_{ve'}\in \Su$ and $\overrightarrow{\prod}_{v\in\partial h} D^k(g_{ve}^{-1}g_{ve'})\mid  \ff_2\rangle= e^{i\theta}\mid\ff_1 \rangle$ with $\theta\in\R$, where $D^k$ denotes the SU(2) unitary irrep on $\ch_k$. Moreover,
\be
	\lt|\tr_{(k,\rho)}\lt[\overrightarrow{\prod_{v\in\partial h}}P_kg_{ve}^{-1}g_{ve'}P_k\rt]\rt|\leq  \sum_{m=-k/2}^{k/2}\lt|\lag k,m\lt| \overrightarrow{\prod_{v\in\partial h}} P_k D^{(k,\rho)}(g_{ve}^{-1}g_{ve'})P_k  \rt| k,m\rag\rt|\leq d_k,\label{2ineq}
\ee
The second inequality in \eqref{2ineq} holds if and only if all $g_{ve}^{-1}g_{ve'}\in \Su$ and $\overrightarrow{\prod}_{v\in\partial h} g_{ve}^{-1}g_{ve'} = \exp(i\theta L_3)$ for some $\theta\in\R$, where $L_3|k,m\rangle = m |k,m\rangle$ \footnote{
	Denote $O=\overrightarrow{\prod}_{v\in\partial h} P_k D^{(k,\rho)}(g_{ve}^{-1}g_{ve'})P_k $,
	the equality $\sum_m |\langle k,m| O |k,m\rangle| = d_k$ and \eqref{schwartzineq} requires $|\langle k,m| O |k,m\rangle| = 1$ for every $m$. This implies that all $g_{ve}^{-1}g_{ve'}$ must be in SU(2). Let $G = \overrightarrow{\prod}_{v\in\partial h} g_{ve}^{-1}g_{ve'} \in \text{SU(2)}$. The operator $O$ now simply becomes the SU(2) representation matrix $D^{(k/2)}(G)$. The condition $|\langle k,m| D^{(k/2)}(G) |k,m\rangle|=1$ also means $|k,m\rangle$ must be an eigenvector of $D^{(k/2)}(G)$ (saturation of Schwartz inequality). Since this must hold for all basis vectors $|k,m\rangle$, the operator $D^{(k/2)}(G)$ must be diagonal in this basis. The basis $|k,m\rangle$ is the eigenbasis of the generator $L_3$, so $G$ must be a rotation about the z-axis, i.e., $G = \exp(i\theta L_3)$.
}. Then $\tr_{(k,\rho)}\lt[\overrightarrow{\prod}_{v\in\partial h} P_kg_{ve}^{-1}g_{ve'}P_k\rt]=\frac{\sin \left(  (k+1)\theta/2\right)}{\sin \left({\theta }/{2}\right)}$ is the SU(2) character. $|\tr_{(k,\rho)}\lt[\overrightarrow{\prod}_{v\in\partial h}P_kg_{ve}^{-1}g_{ve'}P_k\rt]|=d_k$ if and only if $\theta=0,2\pi$, i.e. $\overrightarrow{\prod}_{v\in\partial h} g_{ve}^{-1}g_{ve'} =\pm 1$.
	
\end{proof}

\begin{lemma}\label{lemmaAnalyQ}
	The function $\tau_k^{(h)}(g_h)$ is real analytic in $g_h$.
	
\end{lemma}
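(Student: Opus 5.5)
The plan is to reduce the claim to real analyticity of finitely many matrix coefficients of the principal series representation, composed with the group multiplication. By definition \eqref{zetakh},
\be
\tau_k^{(h)}(g_h)=d_k\sum_{m_1,\dots,m_N}\prod_{v=1}^{N}\langle k,m_v|D^{(k,\rho)}\lt(g_{ve}^{-1}g_{ve'}\rt)|k,m_{v+1}\rangle ,\qquad m_{N+1}\equiv m_1,
\ee
so inserting the resolution $P_k=\sum_m|k,m\rangle\langle k,m|$ turns $\tau_k^{(h)}$ into a finite sum of products of matrix coefficients $\langle k,m|D^{(k,\rho)}(U)|k,n\rangle$. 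These are evaluated at $U=g_{ve}^{-1}g_{ve'}$. The maps $g_h\mapsto g_{ve}^{-1}g_{ve'}$ are polynomial in the real coordinates of $\Slc$ (inversion on $\Slc$ is polynomial, via the adjugate). Sums, products and compositions of real analytic maps are real analytic. So it suffices to show that each matrix coefficient $\langle k,m|D^{(k,\rho)}(U)|k,n\rangle$ is real analytic on $\Slc$.

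For that step I would use the explicit model of Section \ref{Finiteness of boundary block}. Write $F_1=\sqrt{d_k}D^{k/2}_{k/2,m}$ and $F_2=\sqrt{d_k}D^{k/2}_{k/2,n}$; these are polynomials in the entries of $u\in\Su$. The coefficient becomes
\be
\int_{\Su}\rmd u\,\lambda_U(u)^{\I\rho-2}F_1^*(u)F_2\lt(u_U(u)\rt).
\ee
Here $\lambda_U(u)^2=\langle\tfrac12,\tfrac12|u(U^{-1})^\dagger U^{-1}u^\dagger|\tfrac12,\tfrac12\rangle$ is a strictly positive polynomial in $(u,\bar u,U,\bar U)$. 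Next I would rewrite $F_2(u_U(u))$ as a polynomial in $uU$ divided by appropriate powers of $\lambda_U$. This uses the covariance \eqref{covariance}: a degree-$k$ homogeneous function of the rows of $u_U(u)=b_U(u)^{-1}uU$ is the corresponding polynomial of $uU$ times a power of $\lambda_U$ and its conjugate. The gauge choice $\lambda_U\in\R$ removes the phase ambiguity. The integrand is then a real analytic function of $(u,U)$, since positive real powers of a positive real analytic function are real analytic.

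Differentiation under the integral over the compact group $\Su$ then gives real analyticity in $U$. Concretely, near each $U_0$ the integrand extends holomorphically to a complex neighbourhood $W$ of $U_0$ in the complexification of the real manifold $\Slc$. On $W\times\Su$ it stays bounded, because $\lambda^2$ keeps a positive real part on a small enough $W$. A uniformly bounded family of holomorphic functions integrated over a compact parameter space is holomorphic, by Morera and Fubini. Restricting to the real points gives real analyticity.

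The main obstacle is this last step: controlling the complexified integrand uniformly in $u$. The worry is that $\lambda_U(u)^2$ could vanish or wrap around zero after complexification, which would spoil the branch of $\lambda^{\I\rho-2}$. Compactness of $\Su$ should save us. By Lemma \ref{LemmaFgFxi}, $\lambda_U(u)^2\ge e^{-2t}>0$ uniformly in $u$, so a sufficiently small complex neighbourhood of $U_0$ keeps $\re\lambda^2\ge e^{-2t}/2$. The principal branch then remains analytic there. As a cross-check, analyticity also follows from the general fact that $K$-finite matrix coefficients of admissible representations are real analytic, which is Harish-Chandra's theorem. I would cite that fact as a fallback if the explicit estimate becomes messy.
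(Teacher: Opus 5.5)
Your proof is correct. The outer reduction is the same as the paper's: expand the trace with $P_k=\sum_m|k,m\rangle\langle k,m|$, so that $\tau_k^{(h)}$ becomes a finite polynomial in matrix coefficients $D^{(k,\rho)}_{km,kn}$ evaluated at $g_{ve}^{-1}g_{ve'}$. The difference is the key step. The paper simply cites Harish-Chandra's theorem that $K$-finite vectors of a unitary irreducible representation are (weakly) analytic vectors, which is exactly your fallback. Your main route proves analyticity by hand in the $\mathrm{SU}(2)$ model, and it goes through.

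You can make your rewriting of $F_2(u_U(u))$ fully explicit. Since $b_U(u)^{-1}$ is upper triangular with lower-right entry $\lambda_U^{-1}$, the second row of $u_U(u)$ is $\lambda_U^{-1}(c,d)$, where $(c,d)$ is the second row of $uU$. Unitarity of $u_U(u)$ then gives $\lambda_U(u)^2=|c|^2+|d|^2$ and first row $\lambda_U^{-1}(\bar d,-\bar c)$. Hence $F_2(u_U(u))=\lambda_U^{-k}P(\bar d,-\bar c)$ for a homogeneous degree-$k$ polynomial $P$, and the entry $\mu$ of $b_U$ never enters. The integrand is $(|c|^2+|d|^2)^{(\mathrm{i}\rho-2-k)/2}F_1^*(u)P(\bar d,-\bar c)$. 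The exponent is complex, not a ``positive real power'', but the integrand is still analytic via $\exp(s\ln\lambda^2)$. Compactness of $\mathrm{SU}(2)$ gives the uniform lower bound on $\mathrm{Re}\,\lambda^2$ after complexifying $(U,\bar U)$ near $U_0$, so the Morera--Fubini step is fine.

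The paper's citation buys brevity and generality: it covers all $K$-finite coefficients of any admissible representation at once, with no model needed. Your argument buys self-containment, since it uses only formulas already in the paper. It also gives an explicit holomorphic extension around each $U_0$, whose size is controlled by $\min_u\lambda_{U_0}(u)^2$.
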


	\begin{proof}
	First of all, the canonical basis vector $|(k,\rho), k ',m\rangle$ is $K$-finite, where $K=\Su$ is the maximal compact subgroup of $\Slc$, then Harish-Chandra's analyticity theorem \cite{etingof2024representationsliegroups} \footnote{Harish-Chandra's analyticity theorem states that if $V$ carries a unitary irrep of a semisimple Lie group $G$ with maximal compact subgroup $K$ then every $K$-finite vector is a weakly analytic vector. A vector $v\in V$ is $K$-finite if it is contained in a finite-dimensional subrepresentation of $K$. A vector $v\in V$ is weakly analytic vector if $\langle u |g| v\rangle$ ($g\in G$) is analytic on $G$ (as a real manifold) for any $u\in V$.} implies that it is weakly analytic, so the Wigner $D$-function of the $\Slc$ unitary irrep $D^{(k,\rho)}_{k_1 m_1,k_2 m_2}(g)=\langle (k,\rho), k_1,m_1 | g | (k,\rho), k_2,m_2\rangle$ is an analytic function on $\Slc$. Consequently, $\t_k^{(h)}(g_h)$ is an analytic function of $g_h=\{g_{ve}\}_{e\subset h}$ for each $k$, since it is a polynomial of the $D$-functions.
	

	\end{proof}

\section{Inverse Laplace transform}\label{Inverse Laplace transform}

\begin{lemma}\label{KlimitTheta}
Consider the following integral with $T>0$ and $\L\in\R$,
\be
K_R(\Lambda) = \frac{1}{2\pi i} \int_{T-iR}^{T+iR} \frac{\rmd s}{s} e^{\Lambda s},
\ee
For $\L\neq 0$, 
\be
\lt|K_R(\L)-\Theta(\L)\rt|\leq \frac{e^{\Lambda T}}{\pi R |\Lambda|}.
\ee
For $\L=0$,
\be
K_R(0)=\frac{1}{\pi} \arctan\left(\frac{R}{T}\right)\to \frac{1}{2},\quad R\to\infty.
\ee
\end{lemma}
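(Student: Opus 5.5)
The plan is the standard contour-deformation argument behind Perron's formula, treating the cases $\Lambda>0$, $\Lambda<0$ and $\Lambda=0$ separately. The integrand $e^{\Lambda s}/s$ is meromorphic with a single simple pole at $s=0$, of residue $1$. The segment $[T-iR,T+iR]$ will be closed by a rectangle that goes to the left when $\Lambda>0$ (where $e^{\Lambda s}$ decays as $\re(s)\to-\infty$) and to the right when $\Lambda<0$.

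For $\Lambda>0$, I would take the rectangle with vertices $T\pm iR$ and $-X\pm iR$ with $X>0$, and then let $X\to\infty$. This contour encloses the pole, so the contour integral divided by $2\pi i$ equals $1=\Theta(\Lambda)$. Hence $K_R(\Lambda)-1$ equals minus the contributions of the two horizontal sides and the left vertical side.

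\begin{itemize}
\item The left side is bounded by $e^{-\Lambda X}(2R)/(2\pi X)$, which tends to $0$.
\item Each horizontal side satisfies
\[
\Bigl|\frac{1}{2\pi i}\int_{-X}^{T}\frac{e^{\Lambda(x\pm iR)}}{x\pm iR}\,\rmd x\Bigr|\le \frac{1}{2\pi R}\int_{-\infty}^{T}e^{\Lambda x}\,\rmd x=\frac{e^{\Lambda T}}{2\pi R\Lambda}.
\]
\end{itemize}

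The two horizontal sides together give the bound $e^{\Lambda T}/(\pi R|\Lambda|)$.

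For $\Lambda<0$, I would close with the rectangle to the right, with vertices $T\pm iR$ and $X\pm iR$, and let $X\to\infty$. This contour contains no pole, so the integral is $0=\Theta(\Lambda)$. The right side again vanishes in the limit. The horizontal sides are bounded by $\frac{1}{2\pi R}\int_T^\infty e^{\Lambda x}\,\rmd x=\frac{e^{\Lambda T}}{2\pi R|\Lambda|}$ each, which gives the same total bound. The orientation signs need only be tracked up to absolute value, since we only need an estimate.

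For $\Lambda=0$, I would compute directly. Parametrize $s=T+iy$, so that
\[
K_R(0)=\frac{1}{2\pi}\int_{-R}^{R}\frac{\rmd y}{T+iy}=\frac{1}{2\pi}\int_{-R}^{R}\frac{T\,\rmd y}{T^2+y^2},
\]
because the odd imaginary part cancels. This integral equals $\frac1\pi\arctan(R/T)$, which tends to $1/2$ as $R\to\infty$.

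No real obstacle is expected. The only care needed is to justify discarding the far vertical side as $X\to\infty$, and to get the horizontal-side bound uniformly in $X$ using $|x\pm iR|\ge R$.
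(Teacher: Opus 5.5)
Your proposal is correct and follows the paper's proof essentially step for step: a rectangle closed to the left that picks up the residue $1$ for $\Lambda>0$, a rectangle closed to the right with no enclosed pole for $\Lambda<0$, the estimate $|x\pm iR|\ge R$ giving $e^{\Lambda T}/(2\pi R|\Lambda|)$ for each horizontal side, and the direct $\arctan$ computation for $\Lambda=0$. You are slightly more explicit than the paper in bounding the far vertical side, which the paper simply asserts vanishes as the rectangle is pushed to infinity.
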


\begin{proof}

$\Lambda > 0$: We consider a rectangular contour $\mathcal{C}$ in the complex $s$-plane with vertices at $T-iR, T+iR, -M+iR, -M-iR$, where $M,T > 0$. The pole of the integrand at $s=0$ lies inside this contour. The integral along $\cc$ is given by the residue:
\be
\frac{1}{2\pi i}\oint_{\mathcal{C}} \frac{\rmd s}{s} e^{\Lambda s} = 1
\ee
The integral splits into four parts, each of which is along an edge of the rectangle. As $M \to \infty$, the integral over the left vertical segment vanishes. Thus
\be
K_R(\lambda) = 1 + \frac{1}{2\pi i} \left( \int_{-\infty+iR}^{T+iR} \frac{e^{\Lambda s}}{s} \rmd s + \int_{T-iR}^{-\infty-iR} \frac{e^{\Lambda s}}{s} \rmd s \right)
\ee
We bound the error terms (the horizontal integrals). On the top segment, $s = x + iR$, so $|s| \geq R$.
\be
\left| \int_{-\infty}^{T} \frac{e^{\Lambda(x+iR)}}{x+iR} \rmd x \right| \leq \frac{1}{R} \int_{-\infty}^{T} e^{\Lambda x} \rmd x = \frac{e^{\Lambda T}}{R \Lambda}
\ee
The same bound applies to the bottom segment. Thus:
\be
|K_R(\Lambda) - 1| \leq \frac{e^{\Lambda T}}{\pi R \Lambda} .
\ee

$\Lambda < 0$: We close the contour to the right, with vertices $T-iR, T+iR, M+iR, M-iR$, $M>T>0$. The pole at $s=0$ is outside this contour. The integral along the contour gives 0. Similar to the above, the horizontal integrals provide the error term as $M\to\infty$:
\be 
|K_R(\Lambda) - 0| \leq \frac{e^{\Lambda T}}{\pi R |\Lambda|} .
\ee

$\Lambda = 0$: 
\be
K_R(0) = \frac{1}{2\pi i} \int_{T-iR}^{T+iR} \frac{\rmd s}{s} = \frac{1}{2\pi} \int_{-R}^{R} \frac{\rmd y}{T+iy} 
\ee
Using $T+iy = \frac{T-iy}{T^2+y^2}$, the imaginary part is odd and integrates to 0.
\be
 K_R(0) = \frac{1}{2\pi} \int_{-R}^{R} \frac{T}{T^2+y^2} \rmd y = \frac{1}{\pi} \arctan\left(\frac{R}{T}\right) \to \frac{1}{2}, \quad R\to\infty.
\ee

\end{proof}

\begin{lemma}\label{expAtlemmaB2}
Given that
\be
\Xi(s) =
\prod_{k=1}^\infty
\frac{1}{1-\lambda \tau_k e^{-s E_k} }-1. \label{X1shA6C2}
\ee 
Assume $\{E_k\}_k$ to be positive integers and monotonically growing as polynomials, for any $T>0$ such that the contour of the integral
\be
\omega(A,T)=
\mathscr{P}\int_{T-i\infty}^{T+i\infty}
\frac{\rmd s}{2\pi i s}\, e^{A s}\Xi(s),\qquad A>0
\ee
does not pass any pole of $\Xi(s)$ 
\footnote{
	$T$ might not be greater that the real parts of all $\Xi(s)$'s poles and $T\neq \ln(|\lambda  \tau_k|)/E_k$ for all $k\in\Z_+$.
}, the integral satisfies the bound $|\o|\leq C(T) e^{AT}$, where $C(T)>0$ does not depend on $A$.

\end{lemma}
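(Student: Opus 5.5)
The plan is to exploit the periodicity that comes from the integrality of the $E_k$. Set $z=e^{-s}$. Since every $E_k\in\mathbb{Z}_+$, each factor $1-\lambda\tau_k e^{-sE_k}=1-\lambda\tau_k z^{E_k}$ is a polynomial in $z$. By the meromorphy theorem of Appendix \ref{meromorphic function}, the function $\Xi(s)=\widetilde\Xi(z)$ is therefore meromorphic in $z$ on the punctured disk $0<|z|<1$. The vertical line $\re(s)=T$ maps onto the circle $|z|=r:=e^{-T}$. By hypothesis no pole of $\Xi$ lies on this line, and the poles of $\widetilde\Xi$ in $0<|z|<1$ are isolated. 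Hence there is an annulus $r_1<|z|<r_2$ with $r_1<r<r_2<1$ on which $\widetilde\Xi$ is holomorphic. Writing $r_1=e^{-T''}$ and $r_2=e^{-T'}$, we have $0<T'<T<T''$.

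On this annulus I would expand in a Laurent series,
\be
\widetilde\Xi(z)=\sum_{n\in\mathbb{Z}}c_n z^n,
\ee
which converges absolutely and uniformly on $|z|=r$. Let $M'$ be the supremum of $|\widetilde\Xi|$ on the two circles $|z|=r_1,r_2$ (shrinking the annulus slightly if needed so that these circles lie inside the domain of holomorphy). Cauchy's estimates then give $|c_n|\le M' e^{nT'}$ for $n\ge 0$ and $|c_n|\le M' e^{nT''}$ for $n<0$. Both $\sum_n|c_n|e^{-nT}$ and $\sum_n|c_n|$ are finite, since $T'<T<T''$ makes each tail geometric.

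Next I would write the principal value as $\lim_{R\to\infty}$ of the integral over $[T-iR,T+iR]$. For fixed $R$ the series $\Xi(s)=\sum_n c_n e^{-ns}$ converges uniformly on the segment, so the sum can be exchanged with the integral. This gives
\be
\omega_R=\sum_n c_n K_R(A-n),
\ee
with $K_R$ as in Lemma \ref{KlimitTheta}. That lemma gives $|K_R(\Lambda)-\Theta(\Lambda)|\le e^{\Lambda T}/(\pi R|\Lambda|)$ for $\Lambda\neq0$, and $K_R(0)\to\tfrac12$. The error is therefore at most
\be
\frac{e^{AT}}{\pi R}\sum_{n\ne A}\frac{|c_n|e^{-nT}}{|A-n|},
\ee
plus the single term $n=A$ when $A$ is an integer. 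Since $|A-n|$ is bounded below by a positive constant for $n\neq A$ at fixed $A$, the sum is finite, and the error vanishes as $R\to\infty$.

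The limit is then $\omega(A,T)=\sum_{n<A}c_n+\tfrac12 c_A\,[A\in\mathbb{Z}]$, and I would bound it termwise:
\be
|\omega(A,T)|\le\sum_{n\le A}|c_n|\le e^{AT}\sum_{n\in\mathbb{Z}}|c_n|e^{-nT}.
\ee
The inequality $|c_n|\le |c_n|e^{(A-n)T}$ for $n\le A$ justifies the last step. This yields $C(T)=\sum_n|c_n|e^{-nT}$, which is independent of $A$ and finite by the Cauchy estimates.

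The main obstacle I expect is justifying the exchange of the Laurent sum with the conditionally convergent principal-value integral. In particular I would need to check that the error bound from Lemma \ref{KlimitTheta} is summable in $n$, which is exactly where the exponential decay of $|c_n|e^{-nT}$ on both sides is needed. A secondary issue is ensuring that the annulus of holomorphy exists. This uses that the poles are isolated and lie off the circle $|z|=e^{-T}$, and that the infinite product defining $\widetilde\Xi$ converges locally uniformly in $|z|<1$, so that the zeros of its denominator are isolated.
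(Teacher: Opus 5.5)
Your proof is correct, and its skeleton matches the paper's. You write the principal value as $\lim_{R\to\infty}$ of the truncated integral and apply Lemma~\ref{KlimitTheta} term by term to an absolutely convergent two-sided expansion on $\mathrm{Re}(s)=T$. Integrality of the exponents keeps $|A-n|$ away from zero, and $\Theta(A-n)\le e^{(A-n)T}$ gives the final bound.

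The genuine difference is how you produce that expansion. The paper builds it by hand. It splits the indices into $K_+=\{k:|\lambda\tau_k|e^{-TE_k}<1\}$ and the finite set $K_-$. It expands the $K_+$ factors as geometric series in $e^{-sE_k}$, expands the $K_-$ factors in inverse powers $e^{+sE_k}$, and multiplies. You instead pass to $z=e^{-s}$, use integrality to make $\Xi$ a single-valued meromorphic function of $z$, and take its Laurent series on a pole-free annulus around $|z|=e^{-T}$. Cauchy estimates then supply the two-sided geometric decay of $|c_n|e^{-nT}$. By uniqueness of Laurent coefficients, the two expansions coincide once terms with equal exponent are grouped. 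So your constant $\sum_n|c_n|e^{-nT}$ is dominated by the paper's $\sum_\mu|\beta_\mu|e^{-\mathcal{E}_\mu T}+1$.

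Your route is more economical: there is no case split and no inverse-power expansion. The paper's route gives its constant in explicit product form, e.g.\ $\prod_k(1-|\lambda\tau_k|e^{-TE_k})^{-1}+1$ when $K_-=\emptyset$. Lemma~\ref{smoothboundB3} uses exactly this form to get continuity in $g_h$. Your Cauchy-estimate constant depends on $\sup|\widetilde\Xi|$ on auxiliary circles and on the width of the pole-free annulus. Continuity in $g_h$ would therefore need an extra argument, for instance comparing back to the product.

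One side claim is false, though you never use it: $\sum_n|c_n|$ is not finite in general. For $n\ge 0$ your own bound $|c_n|\le M'e^{nT'}$ grows, since $T'>0$. In fact the nonnegative-index coefficients are unbounded whenever $\widetilde\Xi$ has a pole in $e^{-T}<|z|<1$. This happens, for example, on the locus $|\tau_k|=d_k^2$, where $\ln(|\lambda|d_k^2)/E_k\in(0,T)$ for all large $k$. There the regular part of the Laurent series has radius of convergence strictly less than $1$. Your argument only needs $\sum_{n\le A}|c_n|\le e^{AT}\sum_n|c_n|e^{-nT}$, which is what you use both for the convergence of $\sum_{n<A}c_n$ and for the final estimate. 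So you should simply drop that claim.
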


\begin{proof}
Due to $|\t_k|\leq d_k^2$, we can always classify the indices $k$ into two sets:
\be
K_{+} =  \{ k :\ |\lambda \tau_k| e^{-T E_k} < 1 \},\qquad K_{-} = \{ k :\  |\lambda\tau_k| e^{-T E_k} > 1 \}.
\ee
Since $E_k$ grows monotonically as $k$ grows, $K_{-}$ is a finite set, and $K_{+}$ contains all large $k$'s. There is no $k$ for $|\lambda\tau_k| e^{-T E_k} = 1$ because the poles are avoided by the contour. We decompose $\Xi(s) + 1$ into two partial products:
\be
\Xi(s) + 1 = P_-(s) P_+(s), \qquad P_\pm(s) = \prod_{k \in K_{\pm}} \frac{1}{1 - \lambda \tau_k e^{-s E_k}}. 
\ee
We expand each factor into a geometric series valid on the line $\text{Re}(s) = T$: 

\begin{itemize}

\item For $k \in K_{+}$, we have $|\lambda \tau_k e^{-s E_k}| < 1$ and $\sum_{k\in K_+}|\lambda \tau_k e^{-s E_k}|\leq |\lambda|\sum_{k\in K_+} d^2_k e^{-T E_k} <\infty$ at $\text{Re}(s)=T$. The expansion in Lemma \ref{lemmaA2} applies:
\be
P_+(s)= \sum_{\{n_k\}_{k\in K_+}} \prod_{k\in K_+}\lt(\lambda \tau_k e^{-s E_k} \rt)^{n_k}.
\ee
This series converges absolutely for $\re(s)=T$. 

\item For $k \in K_{-}$, we have $|\lambda \tau_k e^{-s E_k}| > 1$ at $\text{Re}(s)=T$. We expand in powers of the inverse $ \frac{1}{1 - u} = -\frac{u^{-1}}{1 - u^{-1}} = -\sum_{m=1}^\infty u^{-m} $ with $u = \lambda \tau_k e^{-s E_k}$, and the expansion is absolutely convergent. We obtain
\be
P_-(s)&=&\prod_{k\in K_-}\frac{1}{1 - \lambda \tau_k e^{-s E_k}} = \prod_{k\in K_-}\lt[ -\sum_{m_k=1}^\infty (\lambda \tau_k)^{-m_k} e^{s m_k  E_k} \rt]\nonumber\\
&= &\sum_{\{m_k\}_{k\in K_-}}\prod_{k\in K_-}\lt[ -(\lambda \tau_k)^{-1} e^{s E_k} \rt]^{m_k},\qquad m_k\geq 1.\label{PminusSeries}
\ee
We have interchanged the sum and product by the finiteness of $K_-$. The series \eqref{PminusSeries} converges absolutely for $\re(s)=T$. 

\end{itemize}

Combining these, $\Xi(s)+1$ can be written as an absolutely convergent series for $\re(s)=T$ \footnote{$ \sum_{\mu} |\beta_\mu e^{-s \ce_\mu}|\leq \sum_{\mathbf{n}\in\O}\prod_{k=1}^\infty a_k^{n_k}$ where $a_k=|-(\l\t_k)^{-1} e^{s E_k}|$ for $k\in K_-$ and $a_k=|\l\t_k e^{-s E_k}|$ for $k\in K_+$. Both conditions $a_k<1$ and $\sum_k a_k<\infty$ are satisfied.}:
\be
\Xi(s) +1= \sum_{\mu} \beta_\mu e^{-s \ce_\mu} ,\label{absconvergenceB16}
\ee
where we have introduced short-hand notations
\be
\mu\equiv \{m_k\}_{k\in K_-}\cup \{n_k\}_{k\in K_+},\quad 
\b_\mu\equiv \prod_{k\in K_-}\lt[ -(\lambda \tau_k)^{-m_k}\rt]\prod_{k\in K_+}\lt(\lambda \tau_k \rt)^{n_k}\quad
\ce_\mu =\sum_{k\in K_+} n_k E_k-\sum_{k\in K_-} m_k E_k.
\ee
The absolute convergence implies the uniform convergence on the vertical line with $\re(s)=T$.

Given that $\o$ is defined as the principal value, we consider the following integral on the finite segment $s \in [T-iR, T+iR]$
\be
\o_R = \frac{1}{2\pi i} \int_{T-iR}^{T+iR} \frac{\rmd s}{s}e^{As} \left( \sum_{\mu} \beta_\mu e^{-s \ce_\mu} - 1  \right)= \frac{1}{2\pi i} \lt[\sum_{\mu}\beta_\mu \int_{T-iR}^{T+iR} \frac{\rmd s}{s}e^{As} e^{-s \ce_\mu} - \int_{T-iR}^{T+iR} \frac{\rmd s}{s}e^{As} \rt],
\ee
and $\o = \lim_{R \to \infty} \o_R$. We can interchange the sum and the integral by the properties of uniformly convergent series.

By Lemma \ref{KlimitTheta}, $\lim_{R \to \infty} K_R(A-\ce_\mu) = \Theta(A-\ce_\mu)$, where $\Theta(x)$ is $1$ for $x>0$, $0$ for $x<0$, and $1/2$ for $x=0$. We now prove that $\lim_{R \to \infty} \sum_\mu \b_\mu K_R(A-\ce_\mu) = \sum_\mu \b_\mu \Theta(A-\ce_\mu)$. We estimate the error
\be
\left|\sum_\mu \b_\mu K_R(A-\ce_\mu) - \sum_\mu \b_\mu \Theta(A-\ce_\mu) \right| \leq \sum_{\mu} |\b_\mu| \lt|K_R(A-\ce_\mu) - \Theta(A-\ce_\mu)\rt| \equiv \Delta_R 
\ee
We split the sum into terms where $A \neq \ce_\mu$ and terms where $A =\ce_\mu$. For the terms with $A \neq \ce_\mu$, we using the bounds from Lemma \ref{KlimitTheta}:
\be 
\Delta_R^{\neq} \leq \frac{1}{\pi R} \sum_{\mu;A \neq \ce_\mu} |\b_\mu| \frac{e^{(A-\ce_\mu)T}}{|A-\ce_\mu|} = \frac{e^{AT}}{\pi R} \sum_{\mu;A \neq \ce_\mu} \frac{|\b_\mu| e^{-T \ce_\mu}}{|A-\ce_\mu|} 
\ee
The set of values $\{\ce_\mu\}_\mu$ has no finite accumulation point because it is a linear combination of integers with integer coefficients. This ensures a minimal distance $\delta>0$ between $A$ and $\ce_\mu\neq A$, i.e. $|A-\ce_\mu| \geq\delta$. Therefore,
\be
\Delta_R^{\neq} \leq \frac{e^{AT}}{\pi \delta  R} \sum_{\mu;A \neq \ce_\mu}{|\b_\mu| e^{-T \ce_\mu}} . \label{DeltaRneq}
\ee
On the other hand, for terms with $A = \ce_\mu$, we use $|\frac{1}{\pi} \arctan(R/T) - \frac{1}{2}|\leq C \frac{T}{R}$ for large $R$ and some $C>0$,
\be
\Delta_R^{=}\leq C \frac{T}{R}\sum_{\mu;A = \ce_\mu}|\b_\mu|=CT \frac{e^{AT}}{R}\sum_{\mu;A = \ce_\mu}|\b_\mu|e^{-T \ce_\mu}\label{DeltaReq}
\ee
Summing \eqref{DeltaRneq} and \eqref{DeltaReq}, 
\be
\Delta_R=\Delta_R^{\neq} +\Delta_R^=\leq C' (T)\frac{e^{AT}}{R} \sum_{\mu}{|\b_\mu| e^{-T \ce_\mu}},\qquad C'(T)=\max\lt\{CT,\frac{1}{\pi\delta}\rt\}>0, 
\ee
The sum converges due to the absolute convergence of $\Xi(s)$ in \eqref{absconvergenceB16} for $\re(s)=T$, so the total error vanishes as $R\to\infty$. We obtain $\o=\sum_\mu \b_\mu \Theta(A-\ce_\mu)-1$, then
\be
|\o|\leq \sum_\mu |\b_\mu| \Theta(A-\ce_\mu)+1\leq e^{AT}\lt(\sum_\mu |\b_\mu| e^{-\ce_\mu T}+1\rt)\equiv C(T)e^{AT}.
\ee
where we uses $\Theta(x)\leq e^{xT}$ and uses again the absolute convergence of \eqref{absconvergenceB16} for $\re(s)=T$ in the last step. Note that here we have to use the original value of $T$ in order to ensure the absolute convergence, although $\Theta(x)\leq e^{xT}$ is valid for all $T>0$.

\end{proof}

\begin{lemma}\label{smoothboundB3}
Denote by $\re(s_*(g_h))$ the maximum among the real parts of the poles of $\Xi(s)$ in the integrand of $\o_h^{\rm bos}$. Consider any domain $U$ of $g_h$, and if there exists $T>0$ such that $T>\re(s_*(g_h))$ for all $g_h\in U$, then there exists a function $C(g_h,T)$ continuous in $g_h\in U$ such that
\be
\lt|\o_h^{\rm bos}(A_h;g_h)\rt|\leq C(g_h,T) e^{A_hT},
\ee

\end{lemma}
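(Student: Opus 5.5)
The plan is to reduce to Lemma \ref{expAtlemmaB2}, which already gives $|\o_h^{\rm bos}(A_h;g_h)|\le C(T)e^{A_hT}$ at each fixed $g_h$. Its proof gives the constant explicitly as
\be
C(g_h,T)=1+\sum_\mu|\b_\mu(g_h)|e^{-T\ce_\mu}
=1+\prod_{k\in K_+}\frac{1}{1-|\l_h\t_k^{(h)}(g_h)|e^{-TE_k}}\,\prod_{k\in K_-}\frac{1}{|\l_h\t_k^{(h)}(g_h)|e^{-TE_k}-1}.
\ee
The hypothesis $T>\re(s_*(g_h))$ on $U$ means that every pole has real part below $T$. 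Hence $|\l_h\t_k^{(h)}(g_h)|e^{-TE_k}<1$ for every $k$ with $\t_k^{(h)}(g_h)\neq0$. So $K_-=\emptyset$ throughout $U$, and the contour never meets a pole. This removes the one place where continuity could fail, namely an index $k$ jumping between $K_+$ and $K_-$.

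With $K_-$ empty, I define
\be
C(g_h,T):=1+\prod_{k=1}^\infty\frac{1}{1-a_k(g_h)},\qquad a_k(g_h)=|\l_h|\,|\t_k^{(h)}(g_h)|\,e^{-TE_k}\in[0,1).
\ee
The bound then follows from Lemma \ref{expAtlemmaB2}. Its hypotheses hold because the contour at $\re(s)=T$ avoids all poles and the expansion of Lemma \ref{lemmaA1} converges pointwise, with $\sum_k a_k\le|\l_h|\sum_k d_k^2e^{-TE_k}<\infty$.

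Continuity is the main step. Each $a_k$ is continuous in $g_h$, since $\t_k^{(h)}$ is real analytic by Lemma \ref{lemmaAnalyQ}. For the tail I use the uniform bound $a_k\le|\l_h|d_k^2e^{-TE_k}$ from Theorem \ref{zetabound}, which is summable and independent of $g_h$. Choose $k_1$ with $|\l_h|d_k^2e^{-TE_k}\le1/2$ for all $k\ge k_1$. Then $-\ln(1-a_k)\le2|\l_h|d_k^2e^{-TE_k}$ uniformly on $U$ for those $k$. By the Weierstrass M-test, $\sum_{k\ge k_1}-\ln(1-a_k(g_h))$ converges uniformly, so the tail product is continuous.

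The finitely many factors with $k<k_1$ are continuous wherever $a_k<1$, which holds on $U$ by the previous paragraph. So $C(g_h,T)$ is a product of continuous functions and hence continuous on $U$.

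The obstacle I expect is that $a_k<1$ holds only pointwise. Near the boundary of $U$, some $a_k$ may approach $1$ and $C$ blow up. That is still compatible with continuity on $U$ itself, which is all the statement asks for.

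If the intended reading demands that $T$ exceed $\re(s_*)$ only up to the poles $\ln|\l_h\t_k|/E_k$ that are actually present, the zero set $\t_k=0$ needs separate care. There, however, $a_k=0$ and the factor equals $1$, so continuity persists. Finally, I would note that the $\Theta\le e^{xT}$ step in Lemma \ref{expAtlemmaB2} uses the same $T$, so no further adjustment is required.
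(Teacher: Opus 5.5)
Your proposal is correct and follows essentially the same route as the paper: with every pole to the left of $\re(s)=T$ the set $K_-$ is empty, so the constant from Lemma \ref{expAtlemmaB2} becomes $1+\prod_k\bigl(1-|\lambda_h\tau_k^{(h)}(g_h)|e^{-TE_k}\bigr)^{-1}$. Continuity then follows because the $g_h$-independent bound $|\tau_k^{(h)}|\le d_k^2$ gives uniform convergence of the tail product and the finitely many remaining factors are continuous; your logarithm/M-test step and the remark about the zero set of $\tau_k^{(h)}$ only spell out what the paper states in one line.
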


\begin{proof}
$T$ is greater than the real parts of all poles, so $C(T)=\sum_\mu |\b_\mu| e^{-\ce_\mu T}+1$ in Lemma \ref{expAtlemmaB2} becomes 
$$
C(g_h,T)={\sum_{\{n_k\}_{k\in\Z_+}}}
\prod_{k=1}^\infty
\left[ \lt|\lambda_h \tau_k^{(h)}(g_h)\rt| e^{-T E_k} \right]^{n_k} +1=\prod_{k=1}^\infty \frac{1}{1-\lt|\lambda_h \tau_k^{(h)}(g_h)\rt| e^{-TE_k}} + 1.
$$
By $| \tau_k^{(h)}(g_h)|\leq d_k^2$ and the fact that $|\lambda_h| d_k^2 e^{-TE_k}<1$ for $k\geq k_s$ with a sufficiently large $k_s$, the following infinite product converges uniformly
\be
&&\prod_{k=k_s}^\infty \frac{1}{1-\lt|\lambda_h \tau_k^{(h)}(g_h)\rt| e^{-TE_k}}\leq \prod_{k=k_s}^\infty \frac{1}{1-\lt|\lambda_h\rt| d_k^2 e^{-TE_k}}
\ee
Therefore $C(g_h,T)$ is a continuous function of $g_h$, due to the uniform convergence and continuity of $\tau_k^{(h)}(g_h)$.

\end{proof}

\begin{theorem}
For $T>0$ greater than the real parts of all poles of $\Xi(s)$ in the right-half plane, for large $A$
\be
\o(A,T)=\lt[\sum_{s_*}\underset{s\to s_*}{\mathrm{Res}}\frac{e^{s A}}{s}\Xi(s)\rt]\lt[1+O(A^{-\infty})\rt]
\ee
where $s_*$ are the poles of $\Xi(s)$ having the largest real parts. $\mathrm{Res}_{s\to s_*}$ denotes the residue at $s=s_*$. $O(A^{-\infty})$ is the subleading contribution exponentially suppressed for large $A$.


\end{theorem}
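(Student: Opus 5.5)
The plan is a contour shift. Let $\sigma_{\max}=\re(s_*)$ be the largest real part of the poles of $\Xi(s)$ in the right half plane. Choose $T'$ with $\beta^{(2)}<T'<\sigma_{\max}$, where $\beta^{(2)}$ is the largest real part among the poles \emph{not} on the line $\re(s)=\sigma_{\max}$. Such a $T'$ exists because the pole real parts are $\ln|\lambda\tau_k|/E_k$, only finitely many of them are positive, and they accumulate at most at $0$. We also require $T'\neq \ln|\lambda\tau_k|/E_k$ for every $k$, so that the line $\re(s)=T'$ avoids all poles. We then compare $\omega(A,T)$ with $\omega(A,T')$.

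\textbf{Step 1: relate $\omega(A,T)$ and $\omega(A,T')$ on a truncated rectangle.} Take the rectangle with vertical sides at $\re(s)=T$ and $\re(s)=T'$, and horizontal sides at $\mathrm{Im}(s)=\pm R$. Choose $R$ along a sequence $R_n\to\infty$ such that the horizontal sides stay uniformly away from the poles. This is possible because, for each $k$, the poles with that $k$ are spaced by $2\pi/E_k$ in the imaginary direction, and only finitely many $k$ give poles in the strip $T'\le \re(s)\le T$. The residue theorem then gives
\[
\omega_R(A,T)-\omega_R(A,T')=\sum_{\substack{T'<\re(s_p)<T\\ |\mathrm{Im}\, s_p|<R}}\mathrm{Res}_{s=s_p}\frac{e^{As}}{s}\Xi(s)+(\text{horizontal integrals}).
\]
The poles enclosed are exactly the $s_*$, i.e.\ the poles with $k=k_0$ and all $m\in\mathbb{Z}$ for generic $\lambda$. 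Their sum is the symmetric sum over $m$ that appears in $\Fs_h$.

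\textbf{Step 2: control the horizontal integrals (expected main obstacle).} On the horizontal segments we need a bound $|\Xi(s)|\le C$ that is uniform in $n$. Split $\Xi+1=P_-P_+$ as in Lemma~\ref{expAtlemmaB2}. For $P_+$, the expansion is absolutely convergent on the whole strip once $K_+$ is defined using $T'$, so $|P_+|\le\prod_{k\in K_+}(1-|\lambda\tau_k|e^{-T'E_k})^{-1}$ there. For each of the finitely many factors in $P_-$, the choice of $R_n$ keeps $|1-\lambda\tau_ke^{-sE_k}|$ bounded below.

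With $|\Xi|\le C$ on these segments, the horizontal integrals are $O(e^{AT}/R_n)$ and vanish as $n\to\infty$. The same holds for the $-1/s$ term. Consistency with the principal value on the vertical lines follows from Lemma~\ref{KlimitTheta} through the series representation used in Lemma~\ref{expAtlemmaB2}. The sum of residues converges as a symmetric limit in $m$, as the paper remarks.

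\textbf{Step 3: bound the remainder and compare with the leading term.} By Lemma~\ref{expAtlemmaB2}, $|\omega(A,T')|\le C(T')e^{AT'}$. Each residue at $s_*$ has modulus of order $e^{A\sigma_{\max}}$, and $\Fs_h$ is bounded and periodic in $A$. Hence the remainder is suppressed by a factor $e^{-A(\sigma_{\max}-T')}$, which is $O(A^{-\infty})$. Writing this as the multiplicative form $[1+O(A^{-\infty})]$ needs the leading sum to be non-vanishing.

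If the residue sum vanishes at isolated values of $A$, I would interpret the statement additively, with error $O(e^{A(\sigma_{\max}-\delta)})$. This is the form actually used in the paper's applications.
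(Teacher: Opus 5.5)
Your proposal is correct and follows essentially the paper's own argument: a rectangle with right side on $\re(s)=T$ and left side on a lower pole-free line (the paper's $T_0$), horizontal sides at a pole-avoiding sequence $R_n$, a uniform bound on $\Xi$ along the horizontal edges, Lemma~\ref{expAtlemmaB2} to bound the left side by $C e^{A T_0}$, and the residue theorem. The differences are minor. You place the left line just below $\re(s_*)$, so only the dominant poles are enclosed; make sure to also take it positive, since Lemma~\ref{expAtlemmaB2} and the convergence of $P_+$ need this. You also justify the horizontal bound through the $P_\pm$ split, where the paper simply asserts it. Finally, you correctly observe that the multiplicative form $[1+O(A^{-\infty})]$ tacitly assumes the leading residue sum does not vanish, a point the paper does not address either.
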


\begin{proof}
We write $\l_h \t^{(h)}_k(g_h) = \rho_k e^{i\phi_k}$, $\phi_k\in[0,2\pi)$, $0\leq\rho_k\leq \l_h d_k^2$. $\Xi(s)$ has poles located at
\be
s(k,m)=\frac{1}{E_k}\lt[\ln \rho_k+i\lt(\phi_k+2\pi m\rt)\rt],\qquad k\in\Z_+,\ m\in\Z.
\ee
By $E_k\geq Bk$ for some $B>0$, the real part of the poles satisfies $\frac{\ln\rho_{k}}{E_{k}}\leq\frac{\ln\left(|\lambda| d_{k}^{2}\right)}{E_{k}}\leq\frac{\ln\left(|\lambda| d_{k}^{2}\right)}{Bk}$ which vanishes as $k\to\infty$, so $\frac{\ln\rho_{k}}{E_{k}}$ approaches to zero or negative as $k\to\infty$. For the poles whose real parts are positive and not close to zero, i.e. $\frac{\ln\rho_{k}}{E_{k}}>T_0$ with $0<T_0<T$, they only relate to finitely many $k$, say, $k<k_0$. Let us focus on these poles. Then we find a positive sequence $\{R_n\}_{n=1}^\infty$ such that for every $n$, $R_n\neq E_k^{-1}(\phi_k+2\pi m)$ for any $m\in\Z$, $k\in\Z_+$ and $k<k_0$, and $\lim_{n\to\infty}R_n=\infty$.

We consider $\o_n$ by truncating the integration contour to the vertical segment $[T-iR_n,T+iR_n]$, so $\o=\lim_{n\to\infty}\o_n$. We close the contour by adding edges $[T_0+iR_n,T+iR_n]$, $[T_0-iR_n,T-iR_n]$ and $[T_0-iR_n,T_0+iR_n]$ to form a rectangle, assuming $T_0\neq \frac{\ln\rho_{k}}{E_{k}}$ for any $k$. On the horizontal edges $[T_0+iR_n,T+iR_n]$ and $[T_0-iR_n,T-iR_n]$, $|\Xi(s)|$ are uniformly bounded: $|\Xi(s)|\leq C$ for some constant $C>0$, so 
\be
\lt|\int_{T_0\pm iR_n}^{T\pm iR_n}\frac{\rmd s}{s}e^{As}\Xi(s)\rt|\leq C\int_{T_0}^{T}\frac{\rmd x}{|x\pm iR_n|}e^{Ax}\leq \frac{C}{AR_n}\lt[e^{AT}-e^{AT_0}\rt]\to 0,\qquad n\to\infty.
\ee 
By Lemma \ref{expAtlemmaB2},
\be
\lt|\int_{T_0- iR_n}^{T_0+ iR_n}\frac{\rmd s}{s}e^{As}\Xi(s)\rt|\leq C(T_0)e^{AT_0}.
\ee
Denote the rectangular contour by $\cc$, 
\be
\o_n&=&\frac{1}{2\pi i}\oint_\cc\frac{\rmd s}{s}e^{As}\Xi(s)+\int_{T_0- iR_n}^{T_0+ iR_n}\frac{\rmd s}{s}e^{As}\Xi(s)+\int_{T_0+ iR_n}^{T+ iR_n}\frac{\rmd s}{s}e^{As}\Xi(s)-\int_{T_0- iR_n}^{T- iR_n}\frac{\rmd s}{s}e^{As}\Xi(s)\nonumber\\
&=&\sum_{s(k,m)}\underset{s\to s(k,m)}{\mathrm{Res}}\frac{e^{s A}}{s}\Xi\lt(s\rt)+O(e^{AT_0})+O(R_n^{-1}).
\ee
where we only sum the poles enclosed by the contour $\cc$. The dominant contribution to the sum comes from the poles $s_*$ with largest real parts: There exists a finite set $K_{*}$ such that $\frac{\ln\rho_{k_*}}{E_{k_*}}=\sup_{k}\frac{\ln\rho_k}{E_k}$ for any $k_*\in K_{*}$.
\be
s_*=s(k_*,m)=\frac{\ln\rho_{k_*}}{E_{k_*}}+i\frac{\phi_{k_*}+2\pi m}{E_{k_*}},\qquad k_*\in K_*.
\ee
By the limit $n\to\infty$,
\be
\o=\lt[\sum_{s_*}\underset{s\to s_*}{\mathrm{Res}}\frac{e^{s A}}{s}\Xi\lt(s\rt)\rt]\lt[1+O(A^{-\infty})\rt].
\ee
where the range of the sum covers entire $m\in\Z$.

\end{proof}

\section{Bound the derivatives of $\tau^{(h)}_k(g_h)$}\label{Bound the derivatives of tau}

Consider the principal series unitary irrep $(k,\rho)$ of $\Slc$ carried by the Hilbert space $\mathcal{H}_{(\rho,k)}=\oplus_{m=k}^\infty \mathcal{H}_m$, where $\mathcal{H}_m$, $m=k,k+2,\cdots$ is the $\Su$ irrep of spin $m/2$. Define the projection 
\be 
P_m: \mathcal{H}_{(k,\rho)} \to \mathcal{H}_m
\ee  
onto the subspace of spin $m/2$. We set $\rho=\gamma(k+2)$ by the simplicity constraint.

\begin{lemma}\label{lemmaPXPbound}
Let $X$ be a generator of $\mathfrak{sl}(2, \mathbb{C})$. The operator norm of the operator $ X P_m$ is bounded by a polynomial of $m,k$.
\end{lemma}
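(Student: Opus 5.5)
The plan is to use the explicit action of the $\mathfrak{sl}(2,\mathbb{C})$ generators in the canonical basis $|(k,\rho),j,m\rangle$ of $\ch_{(k,\rho)}$. The rotation generators $L^i$ preserve each $\ch_m$, so $\|L^iP_m\|\leq m/2+1$ and there is nothing to prove for them. By linearity, and because $\|X P_m\|\leq\sum_i(|a_i|\|L^iP_m\|+|b_i|\|K^iP_m\|)$ for $X=a_iL^i+b_iK^i$, it therefore suffices to bound the boost generators $K^3$ and $K^\pm=K^1\pm iK^2$ restricted to $\ch_m$.

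For the boosts I will quote the standard formulas (Naimark, or Rühl's conventions as in \cite{generalize}). The operator $K^3$ maps $\ch_m$ into $\ch_{m-2}\oplus\ch_m\oplus\ch_{m+2}$, in terms of spin $j=m/2$:
\[
K^3|j,\mu\rangle=\alpha_j\sqrt{j^2-\mu^2}\,|j-1,\mu\rangle+\gamma_j\,\mu\,|j,\mu\rangle-\alpha_{j+1}\sqrt{(j+1)^2-\mu^2}\,|j+1,\mu\rangle .
\]
Here $\gamma_j=\frac{k\rho}{4j(j+1)}$ and $\alpha_j=\frac{i}{j}\sqrt{\frac{(j^2-k^2/4)(j^2+\rho^2/4)}{4j^2-1}}$, and $K^\pm$ have analogous expressions. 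Each block between $\ch_m$ and $\ch_{m'}$, with $m'\in\{m-2,m,m+2\}$, is a matrix of size at most $(m+3)\times(m+1)$. I will bound its operator norm by the Frobenius norm, which is at most $\sqrt{m+3}$ times the maximal entry.

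Next I bound the entries. For $j\geq k/2$ one has $|\gamma_j|\leq \rho/(4j)\cdot k/(j+1)\leq\rho$, and hence $|\gamma_j\mu|\leq \rho\, m$. Also $|\alpha_j|\leq \frac{1}{j}\sqrt{\frac{j^2(j^2+\rho^2/4)}{4j^2-1}}\leq C\sqrt{1+\rho^2/(4j^2)}$, while $\sqrt{j^2-\mu^2}\leq j+1$. Substituting $\rho=\gamma(k+2)$, every entry is bounded by $C(1+\gamma(k+2))(m+2)$. Summing the three blocks then gives
\[
\|XP_m\|\leq C'\,(1+\gamma(k+2))\,(m+3)^{3/2}.
\]
This is polynomial in $m$ and $k$; one could also use that the blocks are essentially banded to get a linear bound, but that is not needed.

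The main obstacle I anticipate is the lowest spin $j=k/2$, where $\alpha_{k/2}=0$ and no lowering term exists. There one must check that $\gamma_j$ and $\alpha_{j+1}$ have no singularity. For $k\geq1$ we have $j\geq 1/2$; the only factor that could blow up is $4j^2-1$ at $j=1/2$, but it appears only in $\alpha_{1/2}$, which multiplies $j^2-k^2/4=0$ (up to a convention check), and $\alpha_{j+1}$ has $4(j+1)^2-1\geq 8$. So the denominators are harmless, and I will state this case explicitly.
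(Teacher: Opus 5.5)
Your proposal is correct and is essentially the paper's argument: the same splitting of $XP_m$ into the three blocks $P_{m'}XP_m$ with $m'\in\{m-2,m,m+2\}$, and the same polynomial bounds on the explicit coefficients $\alpha_{(j)},\beta_{(j)}$ (your $\gamma_j$). The only differences are cosmetic: you go from entries to operator norm via the Frobenius norm rather than the paper's triangle-inequality sum $\sum_\mu |c_\mu|\,\|O|j,\mu\rangle\|$, and your factor $\sqrt{m+3}$ tacitly uses that each column of a block has a single nonzero entry (since $K^3$ preserves $\mu$ and $K^\pm$ shift it by one), whereas for a dense block it would be $\sqrt{(m+1)(m+3)}$, which is still polynomial and changes nothing.
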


\begin{proof} 
For any generator $X \in \mathfrak{sl}(2, \mathbb{C})$, the image of a vector in $\ch_m$ lies in the direct sum of adjacent subspace, i.e.
\be
X:\ \ch_m \to \ch_{m-2} \oplus \ch_m \oplus \ch_{m+2} 
\ee
The operator $X P_m$  is bounded because it annihilates all states $ \mathcal{H}_{(\rho,k)}$ except the ones in the finite-dimensional space $\ch_m$. 
We define $O\equiv P_{m'} X P_m$ where $m' \in \{m-2, m, m+2\}$.

The Lie algebra $\mathfrak{sl}(2, \mathbb{C})$ is spanned by the rotation generators $L_i$ (generating $\mathfrak{su}(2)$) and the boost generators $K_i$ (where $i=1,2,3$). We consider the canonical basis $|j, \mu\rangle$ for $\ch_m$ (where $j=m/2$).
The generators $L_i$ act as angular momentum operators.
\be 
\| L_3 |j, \mu\rangle \| = |\mu| \le j\qquad  \| L_\pm |j, \mu\rangle \| = \sqrt{(j \mp \mu)(j \pm \mu + 1)} \le 2c' j ,\qquad c'>0 
\ee
Thus, $\|P_{m'} L_iP_m|j, \mu\rangle\| \le c' m$ for some constant $c'>0$.

The boost generators $K_i$ mix spins. The action on a basis vector $|j, \mu\rangle$ is
\be
K_3|j, \mu\rangle&= & -\alpha_{(j)} \sqrt{j^2-\mu^2}|j-1, \mu\rangle-\beta_{(j)} \mu|j, \mu\rangle +\alpha_{(j+1)} \sqrt{(j+1)^2-\mu^2}|j+1, \mu\rangle, \label{K3action}\\
K_{+}|j, \mu\rangle&= & -\alpha_{(j)} \sqrt{(j-\mu)(j-\mu-1)}|j-1, \mu+1\rangle -\beta_{(j)} \sqrt{(j-\mu)(j+\mu+1)}|j, \mu+1\rangle \nonumber\\
&&-\alpha_{(j+1)} \sqrt{(j+\mu+1)(j+\mu+2)}|j+1, \mu+1\rangle, \\
K_{-}|j, \mu\rangle&= & \alpha_{(j)} \sqrt{(j+\mu)(j+\mu-1)}|j-1, \mu-1\rangle -\beta_{(j)} \sqrt{(j+\mu)(j-\mu+1)}|j, \mu-1\rangle  \nonumber\\
&&+\alpha_{(j+1)} \sqrt{(j-\mu+1)(j-\mu+2)}|j+1, \mu-1\rangle,\label{Kmaction}
\ee
where
$$
\alpha_{(j)}=\frac{i}{j} \sqrt{\frac{\left(j^2-(k/2)^2\right)\left(j^2+(\rho/2)^2\right)}{4 j^2-1}}, \quad \beta_{(j)}=\frac{k \rho}{4j(j+1)}
$$
All coefficients on the right-hand sides in \eqref{K3action} - \eqref{Kmaction} are bounded by $\mathrm{Pol}(2j,k)$ for some polynomial function $\mathrm{Pol}$. Therefore, $\|P_{m'} K_iP_m|j, \mu\rangle\| \le  \mathrm{Pol}(m,k)$. So $\|O|j, \mu\rangle\|$ is polynomially bounded by $m,k$ for all generators in $sl(2,\C)$.

For any normalized state $\psi \in \mathcal{H}_{(\rho, k)}$, we can rescale its $m$-th spin component as $a_m \psi_m = P_m\psi \in \mathcal{H}_m$ with $\|\psi_m\| = 1$ (the normalization ensures $a_m \leq 1$, where $a_m$ is the normalization factor). Expanding $\psi_m$ in the canonical basis $|j, \mu\rangle$ (with $j = m/2$), we have $\psi_m = \sum_{\mu=-j}^j c_\mu |j, \mu\rangle$. The operator $O$ acts as
$
\|O\psi\| \leq \|O \psi_m\| \leq \sum_{\mu=-j}^j |c_\mu|\, \|O|j, \mu\rangle\| .
$
From earlier, we know that for each basis element $|j,\mu\rangle$, $\|O|j, \mu\rangle\| \leq \mathrm{Pol}(m,k)$. Thus,
$
\|O\psi\| \leq \mathrm{Pol}(m,k)\sum_{\mu=-j}^j |c_\mu| \leq ({m+1})\mathrm{Pol}(m,k),
$
by $\sum_{\mu=-j}^j |c_\mu|^2=1$. This estimate holds for any normalized state $\psi$. As a result,
\be
\|XP_m\psi\|\leq \sum_{m'=m-2}^{m+2}\|P_{m'}XP_m\psi\|\leq 3({m+1})\mathrm{Pol}(m,k)
\ee 
for all normalized state $\psi$.

\end{proof}

\begin{lemma}\label{lemmaXXXXXPbound}
The operator $X_1\cdots X_n P_k$ on $\mathcal{H}_{(k,\rho)}$, where $X_1,\dots, X_n$ are representations of Lie algebra generators, has the operator norm bounded by a polynomial of $k$.
\end{lemma}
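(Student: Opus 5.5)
The plan is to reduce the statement to Lemma \ref{lemmaPXPbound} by induction on $n$, exploiting that each generator shifts the $\Su$ spin label by at most two units. The key structural fact, already used in the proof of Lemma \ref{lemmaPXPbound}, is
\be
X:\ \ch_m\to\ch_{m-2}\oplus\ch_m\oplus\ch_{m+2}.
\ee
Since $P_k$ projects onto the lowest $\Su$ subspace $\ch_k$, the vector $X_1\cdots X_n P_k\psi$ lies in $\bigoplus_{m=k}^{k+2n}\ch_m$, which is a finite-dimensional subspace. Consequently, only finitely many spins can occur, the largest being $k+2n$.

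First, I would insert resolutions of the identity on these finitely many subspaces. This gives
\be
X_1\cdots X_n P_k=\sum_{m_1,\dots,m_{n}} P_{m_n}X_n P_{m_{n-1}}X_{n-1}\cdots P_{m_1}X_1 P_k ,
\ee
up to relabelling of the order of the $X_i$. Here each consecutive pair of labels differs by at most $2$, every $m_i\le k+2n$, and the number of terms is at most $3^n$. By submultiplicativity of the operator norm, each term is bounded by $\prod_i\|P_{m_{i}}X_iP_{m_{i-1}}\|\le\prod_i\|X_iP_{m_{i-1}}\|$.

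Second, I would apply Lemma \ref{lemmaPXPbound} to each factor. This gives $\|X_iP_{m}\|\le 3(m+1)\mathrm{Pol}(m,k)$. Substituting $m\le k+2n$ makes each factor polynomial in $k$ for fixed $n$. The product of $n$ such factors, multiplied by the $3^n$ count of terms, is therefore a polynomial in $k$, which proves the lemma.

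The main point requiring care is that the polynomial in Lemma \ref{lemmaPXPbound} must be uniform in $m$ and $k$ jointly, including at the constraint $\rho=\g(k+2)$. That uniformity comes from the explicit coefficients $\alpha_{(j)},\beta_{(j)}$. I would check that $|\alpha_{(j)}|\le C\sqrt{(j^2+\rho^2/4)}$ and $|\beta_{(j)}|\le k\rho/(4j(j+1))$, both of which are polynomially bounded in $j$ and $k$ for $j\ge k/2\ge 1/2$, so no denominator degenerates. If domain issues arise because the $X_i$ are unbounded operators, I would note that $P_k\psi$ is $K$-finite and hence lies in the common domain of polynomials in the generators. This justifies the finite-sum manipulation above.
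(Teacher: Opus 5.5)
Your argument is correct and is essentially the paper's proof: both use the ladder structure $X:\ch_m\to\ch_{m-2}\oplus\ch_m\oplus\ch_{m+2}$ to confine the image to $\bigoplus_{l=0}^{n}\ch_{k+2l}$, and both apply Lemma~\ref{lemmaPXPbound} once per generator. The only difference is bookkeeping. The paper iterates on the vectors $v_r=X_{n-r+1}\cdots X_n v_0$ (with $v_0\in\ch_k$), using the triangle inequality over their spin components at each step, whereas you expand over at most $3^n$ paths of spin labels. Your extra checks on the uniformity of $\alpha_{(j)},\beta_{(j)}$ and on $K$-finiteness are sound details that the paper leaves implicit.
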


\begin{proof}
Since $P_k$ annihilates any vector orthogonal to $\ch_k$, it suffices to consider a normalized vector $v_0 \in \ch_k$ ($j=k/2$). Let $v_0 \in H_k$ with $\|v_0\|=1$.
Define the sequence of vectors:
\be
v_1 = X_n v_0 ,\qquad  v_2 = X_{n-1} v_1, \qquad\cdots, v_n = X_1 v_{n-1}  
\ee

In general, $v_r$ lies in the subspace $V^{(r)} = \bigoplus_{l=0}^{r} \ch_{k+2l}$.
The maximum index involved in $v_r$ is $m_{\rm max}(r) = k + 2r$. We bound the norm iteratively. For any $v_r \in V^{(r)}$, $v_r = \sum_{l=0}^r v_{k+2l}$ where $v_{k+2l} \in \ch_{k+2l}$.
\be 
\|v_{r+1}\| &=&\|X_{n-r} v_r\| \le \sum_{l=0}^r \|X_{n-r} v_{k+2l}\| = \sum_{l=0}^r\|X_{n-r} P_{k+2l} v_r\|\nonumber\\
&\leq&\sum_{l=0}^r \mathrm{Pol}_0(k+2l,k)\| v_r\|\equiv  \mathrm{Pol}_{1}(r,k)\| v_r\|,
\ee
where $\mathrm{Pol}_0(m,k)=3({m+1})\mathrm{Pol}(m,k)$ and $\sum_{l=0}^r \mathrm{Pol}_0(k+2l,k)$ are two polynomials. Applying this inequality $n$ times:
\be
\|v_n\|= \|X_1\cdots X_n P_k v_0\| \le  \lt[\prod_{r=0}^{n-1} \mathrm{Pol}_{1}(r,k)\rt] \|v_0\| 
\ee
Thus, the norm of the operator $X_1\cdots X_n P_k$ is bounded by a polynomial of $k$.

\end{proof}

An $6n$-dimensional multi-index is $\a=\{\a_{A,i}\}_{A,i}$, $\a_{A,i}=0,1,2,\cdots$, where $A=1,\cdots,6$ and $i=1,\cdots,n$. The notation $|\a|$ is defined by the sum of components $|\a|=\sum_{A,i}\a_{A,i}$. We define the notation $D^\alpha_g$ for an arbitrary order-$|\alpha|$ multi-index derivative: For any function $f(g_1,\cdots,g_n)$, $g_i\in\Slc$,
\begin{eqnarray}
D^\alpha_g f(g_1,\cdots,g_n)=\frac{\partial^{|\alpha|}}{\partial t^{\a_{1,1}}_{1,1}\cdots \partial t_{6,n}^{\a_{6,n}}}f\lt(g_1(\vec t_1),\cdots,g_n(\vec t_n)\rt)\Big|_{\vec t=0},\qquad g_i(t_i)=e^{i\sum_{A=1}^6 t_{A,i} J^A}g_i ,
\end{eqnarray}
where $\{J^A\}_{A=1}^6$ are the generators of the Lorentz Lie algebra. The index $i$ corresponds to the label $(v,e)$ in the following discussion.

\begin{lemma}\label{lemmaBoundDtau}
For any $\a$, $D^\a_g \tau^{(h)}_k(g_h)$ are uniformly bounded by a polynomial of $k$ on any compact neighborhood of $g_h$.
\end{lemma}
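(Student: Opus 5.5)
We will compute the derivatives directly from the trace formula \eqref{zetakh}. Write $\tau^{(h)}_k(g_h)=d_k\,\tr_{(k,\rho)}\bigl[\prod_{v} P_k D^{(k,\rho)}(g_{ve}^{-1}g_{ve'})P_k\bigr]$. Under the left-multiplicative deformation $g_{ve}\to e^{i\sum_A t_A J^A}g_{ve}$, the factor $g_{ve}^{-1}g_{ve'}$ acquires $e^{-i t\cdot J}$ on the left or $e^{i t\cdot J}$ on the right, depending on whether $e$ is incoming or outgoing at $v$. Each derivative $\partial/\partial t_{A}$ at $t=0$ therefore inserts a generator $\mp i J^A$ into the trace, either between $P_k$ and $D^{(k,\rho)}(g_{ve}^{-1}g_{ve'})$ or after $D^{(k,\rho)}(g_{ve}^{-1}g_{ve'})$. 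Since the deformation is by a one-parameter exponential in each coordinate, higher derivatives give ordered products of generators at the same slot. For generic multi-indices, where the $t$'s of different generators do not commute, the symmetrized sum over orderings arising from the BCH expansion of $e^{i\sum t_AJ^A}$ is a finite sum with combinatorial coefficients independent of $k$.

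After the first step, $D^\alpha_g\tau^{(h)}_k$ is a finite sum, with at most $C_\alpha$ terms independent of $k$, of expressions of the form
\[
d_k\,\tr_{(k,\rho)}\Bigl[\prod_v P_k\,Y_v\,D^{(k,\rho)}(g_{ve}^{-1}g_{ve'})\,Z_v\,P_k\Bigr],
\]
where $Y_v,Z_v$ are products of generators of total length $|\alpha|$. The main obstacle is that the generators are unbounded operators on $\ch_{(k,\rho)}$. We only need them adjacent to a projector, but $Z_v$ sits between $D(g)$ and $P_k$. To handle this we move it across $D(g)$ by conjugation, $D(g)Z=\bigl(D(g)ZD(g)^{-1}\bigr)D(g)=\mathrm{Ad}_g(Z)\,D(g)$. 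Here $\mathrm{Ad}_g(Z)$ is again a product of Lie algebra elements, with coefficients given by the finite-dimensional adjoint matrix of $g$, which is continuous in $g$ and hence bounded on compact sets. Alternatively, we write $Z_vP_k$ and use that the next factor starts with $P_k$. Since $P_k Y_v=(Y_v^\dagger P_k)^\dagger$, we get $\|P_kY_v\|=\|Y_v^\dagger P_k\|$, and $Y_v^\dagger$ is again a product of generators (up to signs and $i$'s, the $J^A$ being anti-Hermitian or Hermitian). So every unbounded string ends up adjacent to a $P_k$.

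With this arrangement we bound each term by $d_k\cdot d_k\cdot\prod_v \|P_kY_v\|\,\|D(g)\|\,\|Z'_vP_k\|$, using $|\tr_{\ch_k}(O)|\le d_k\|O\|$ since the trace is effectively over the $d_k$-dimensional space $\ch_k$. Here $\|D(g)\|=1$ by unitarity. Lemma \ref{lemmaXXXXXPbound} bounds $\|X_1\cdots X_nP_k\|$ by a polynomial in $k$, and the adjoint-matrix coefficients are bounded uniformly on any compact neighborhood of $g_h$. The product and the finite sum over terms are then bounded by $\mathrm{Pol}_\alpha(k)\cdot\sup_{\text{compact}}\|\mathrm{Ad}\|^{|\alpha|}$, which gives the claimed uniform polynomial bound. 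The delicate point to verify carefully is the bookkeeping of where the generators land. In particular, a string that would be trapped between two non-projector operators must be commuted through $D(g)$ via $\mathrm{Ad}_g$ before Lemma \ref{lemmaXXXXXPbound} can be applied.
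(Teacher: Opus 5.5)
Your argument is correct and is essentially the paper's proof. Derivatives insert finitely many $k$-independent combinations of generator strings, the adjoint action (with coefficients bounded on compacts) moves each string next to a $P_k$, and then $|\tr_{\ch_k}O|\le d_k\|O\|$, unitarity of $D^{(k,\rho)}(g)$, and Lemma~\ref{lemmaXXXXXPbound} give the polynomial bound; the paper just pushes every string to the right of $D^{(k,\rho)}(g_{ve}^{-1}g_{ve'})$ rather than using your adjoint trick $\|P_kY\|=\|Y^\dagger P_k\|$. One small bookkeeping correction: under the paper's left-multiplicative deformation $g\mapsto e^{i t\cdot J}g$, the generators are inserted between $D^{(k,\rho)}(g_{ve}^{-1})$ and $D^{(k,\rho)}(g_{ve'})$, not at the two ends of $D^{(k,\rho)}(g_{ve}^{-1}g_{ve'})$, but your $\mathrm{Ad}_g$ step repairs this exactly as in the paper.
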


\begin{proof}
The function $\tau^{(h)}_k(g_h)$ depends on $g_h$ through $g_{ve}^{-1}g_{ve'}$. The derivative with respect to each of $g_{ve},g_{ve'}$ gives $g_{ve}^{-1}X g_{ve'}$ for some $X \in \mathfrak{sl}(2,\mathbb{C})$, while the second derivative gives $g_{ve}^{-1}X Y g_{ve'}$ for some $X,Y\in \mathfrak{sl}(2,\mathbb{C})$. The trace in $\tau^{(h)}_k(g_h)$ is equivalent to the trace $\tr_k$ of operators on $\ch_k$. By using the relation $\tr_k(A)\leq d_k \Vert A \Vert_k$ and $\Vert A B \Vert_k\leq \Vert A \Vert_k\Vert B \Vert_k$,
\begin{eqnarray}
D^\a_g \tau^{(h)}_k(g_h)\leq d_k^2\prod_{v\in\partial h}\Vert A^{m_v}_v\Vert_k,\qquad A^{m_v}_v=P_kg_{ve}^{-1}X_1\cdots X_{m_v} g_{ve'}P_k
\end{eqnarray}
where $X_1,\cdots, X_{m_v}$ are representations of Lie algebra generators. By using the adjoint action $g^{-1} J^A g=\sum_B c^A_{\ B}(g)J^B $, where $c^A_{\ B}(g)$ is smooth in $g\in\Slc$ and thus is uniformly bounded by a constant on any compact neighborhood. It suffices to prove that 
\begin{eqnarray}
\Vert P_kg X_1\cdots X_{m}P_k\Vert_k
\end{eqnarray}
is uniformly bounded by a polynomial of $k$. Indeed, the unitarity of the representation implies $\Vert P_kg \psi \Vert\leq \Vert \psi \Vert$ for all $\psi\in\ch_{(k,\rho)}$. By Lemma \ref{lemmaXXXXXPbound}, $\|X_1\cdots X_{m}P_k\psi\|\leq C_m k^N \|\psi\|$ for some $C_m>0,\ N>0$ and any $\psi\in\ch_{(k,\rho)}$. Therefore,
\be
\Vert P_kg X_1\cdots X_{m}P_k\psi\Vert\leq \Vert X_1\cdots X_{m}P_k\psi\Vert \leq C_m k^N \|\psi\|
\ee
which shows that the operator norm is uniformly bounded by a polynomial of $k$.

\end{proof}

\begin{lemma}\label{lemmaSmoothQ}
Define 
\be
Q_h^M(k_0,m_h,g_h,\lambda_h)=\prod_{k\neq k_0,k<M}\frac{1}{1-\lambda_h\tau_k^{(h)}(g_h)e^{-s_h(k_0,m_h,g_h,\lambda_h)E_k}}.
\ee
The functions $Q_h^M(k_0,m_h,g_h,\lambda_h)$ and $D_{g_h}^\a Q^M_h(k_0,m_h,g_h,\lambda_h)$, $\forall \a$, converge uniformly in $g_h$ on any compact neighborhood $K$ in ${\mathscr{U}}_h$ as $M\to\infty$. Therefore, $Q_h(k_0,m_h,g_h,\lambda_h)=\lim_{M\to \infty}Q^M_h(k_0,m_h,g_h,\lambda_h)$ is a smooth function on ${\mathscr{U}}_h$.
\end{lemma}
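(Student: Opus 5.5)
The plan is to treat $Q_h^M$ as a finite product and show that the tail factors approach $1$ uniformly, together with all of their $g_h$-derivatives, on a compact $K\subset\mathscr{U}_h$. Write $s_*(g_h)=s_h(k_0,m_h,g_h,\lambda_h)$ and set
\[
w_k(g_h)=\lambda_h\tau_k^{(h)}(g_h)\,e^{-s_*(g_h)E_k}.
\]
Since $\mathrm{Re}\, s_*=R_h(k_0,g_h,\lambda_h)$, we have $|w_k|=|\lambda_h\tau_k^{(h)}|e^{-R_h(k_0,g_h,\lambda_h)E_k}$. On $\mathscr{U}_h$ the key bound is
\[
|\lambda_h\tau_k^{(h)}|\le |\lambda_h| d_k^2=e^{\beta_k(\lambda_h)E_k}\le e^{\beta^*E_k},
\]
while $R_h(k_0,g_h,\lambda_h)>\beta^*$. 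On the compact set $K$, continuity of $R_h$ gives a uniform gap $\delta>0$ with $R_h(k_0,g_h,\lambda_h)\ge\beta^*+\delta$. Hence
\[
|w_k(g_h)|\le e^{-\delta E_k}\quad\text{for all }k\ne k_0 .
\]
Because $E_k$ grows polynomially, $\sum_k e^{-\delta E_k}$ converges, and moreover $|w_k|\le e^{-\delta E_1}<1$ uniformly. Therefore each factor $1/(1-w_k)$ is nonsingular on $K$, and $Q_h^M\to Q_h$ uniformly by the standard estimate for $\log\prod(1-w_k)^{-1}$, exactly as in Lemma \ref{lemmaA2}.

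For the derivatives, I first need smoothness of $s_*(g_h)=E_{k_0}^{-1}\bigl[\ln(\lambda_h\tau_{k_0}^{(h)})+2\pi\I m_h\bigr]$ on $\mathscr{U}_h$. This holds because $\tau_{k_0}^{(h)}\ne0$ there and $\tau_{k_0}^{(h)}$ is real analytic by Lemma \ref{lemmaAnalyQ}; I take a local continuous branch of the logarithm, since only $e^{-s_*E_k}$ enters and $E_k$ is an integer, so the branch ambiguity drops out. By Lemma \ref{lemmaBoundDtau}, $D^\alpha_g\tau_k^{(h)}$ is bounded on $K$ by a polynomial $P_\alpha(k)$. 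The derivatives of $e^{-s_*E_k}$ produce factors $E_k^{j}$ times derivatives of $s_*$, which are bounded on $K$ independently of $k$. The Leibniz rule then gives
\[
|D^\alpha_g w_k|\le \tilde P_\alpha(k)\,e^{-\delta E_k}
\]
for some polynomial $\tilde P_\alpha$, uniformly on $K$.

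Next I write $\log Q_h^M=-\sum_{k\ne k_0,\,k<M}\log(1-w_k)$ and differentiate term by term. By Faà di Bruno, $D^\alpha\log(1-w_k)$ is a finite combination of products of $D^{\beta}w_k$ divided by powers of $(1-w_k)$, and $|1-w_k|\ge 1-e^{-\delta E_1}>0$. Each such product is bounded by a polynomial in $k$ times $e^{-\delta E_k}$, which is summable. By the Weierstrass M-test, $D^\alpha\log Q_h^M$ converges uniformly on $K$ for every $\alpha$. Exponentiating, and using the Leibniz rule again on $Q^M=\exp(\log Q^M)$ with uniformly bounded factors, gives uniform convergence of $D^\alpha Q_h^M$. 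Uniform convergence of all derivatives on compacta implies that the limit $Q_h$ is $C^\infty$ on $\mathscr{U}_h$.

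The main obstacle is the uniformity in $m_h$ and the gap $\delta$. Since $|w_k|$ does not depend on $m_h$, the bounds above hold uniformly in $m_h$ for the modulus. The derivatives of $s_*$ also do not depend on $m_h$, because $m_h$ enters only through the additive constant $2\pi\I m_h/E_{k_0}$. So no $m_h$-dependence survives, and the one genuinely delicate point is extracting $\delta>0$ from compactness of $K$ inside the open set $\mathscr{U}_h$.
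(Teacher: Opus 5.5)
Your argument is essentially the paper's. Both proofs bound $|w_k|$ by a summable sequence uniformly on $K$, bound $D^\alpha_g w_k$ by a polynomial in $k$ (Lemma \ref{lemmaBoundDtau}) times exponential decay in $E_k$, and apply the Weierstrass M-test to the logarithmic derivatives. The paper's $S_1^M$ and $S_2^M$ are exactly $D\log Q_h^M$ and $D^2\log Q_h^M$. The one difference is minor. You extract a compactness gap $\delta$, so $|w_k|\le e^{-\delta E_k}$ holds for every $k\neq k_0$ and all factors, not just the tail, stay uniformly away from $w_k=1$. The paper instead writes $w_k=e^{-i\theta_k}\lambda_h\tau_k(\lambda_h\tau_{k_0})^{-\rho_k}$ with $\rho_k=E_k/E_{k_0}$, and uses $|(\lambda_h\tau_{k_0})^{-\rho_k}|\le e^{-\beta^*E_k}$ together with $\beta_k\to 0$.

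One claim in your proposal is false: the branch ambiguity of the logarithm does not drop out just because $E_k$ is an integer. Changing the branch shifts $s_*$ by $2\pi i/E_{k_0}$, which is the same as $m_h\to m_h+1$. It also multiplies $w_k$ by $e^{-2\pi i E_k/E_{k_0}}$, and this equals $1$ only if $E_{k_0}$ divides $E_k$. A concrete counterexample is $E_k=k(k+2)$ with $k_0=1$, where $E_1=3$ and $E_2=8$.

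Consequently, with the principal logarithm, $Q_h(k_0,m_h,\cdot)$ jumps to $Q_h(k_0,m_h\pm1,\cdot)$ wherever $\lambda_h\tau^{(h)}_{k_0}$ crosses the negative real axis. This can happen inside $\mathscr{U}_h$, for example near $\mathcal{C}_h^-$ with $k_0$ odd and $\lambda_h>0$, where $\lambda_h\tau^{(h)}_{k_0}=-\lambda_h d_{k_0}^2$.

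What you actually prove is smoothness for a locally continuous branch of $\ln(\lambda_h\tau^{(h)}_{k_0})$, i.e. up to a local relabeling of $m_h$. The paper proves the same thing, since it tacitly differentiates $(\lambda_h\tau_{k_0})^{-\rho_k}$ using a local branch. This is enough downstream, because only the full residue sum over $m$ in $e^{A_h r_h}\Fs_h$ enters, and that sum is branch independent. Your estimates are unaffected, since they involve only moduli and $m_h$-independent derivative bounds. The stated justification, however, should be replaced by this local-branch statement.
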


\begin{proof} 
We omit the fixed parameters $k_0, m_h, \lambda_h$ in the argument list of $Q$ and $X$ for brevity, focusing on the dependence on $g_h \in U$. We write the function $Q_h^M$ as
\be
Q_h^M(g_h) = \prod_{k \neq k_0,k < M } \frac{1}{1 - \cx_k(g_h)} ,\qquad \cx_k(g_h) = \lambda_h \tau_k^{(h)}(g_h) e^{-s_h(g_h) E_k},
\ee
where $s_h(g_h)$ is given by $ s_h(g_h) = E_{k_0}^{-1} \ln[\lambda_h \tau_{k_0}^{(h)}(g_h)] + {2\pi}\mathrm{i} m_h {E_{k_0}^{-1}} $. We denote by $\rho_k = {E_k}/{E_{k_0}}$, which grows polynomially with $k$, and denote $\theta_k = 2\pi m_h \rho_k$, which is real and independent of $g_h$. Then $\cx_k(g_h)$ can be written as:
\be
\cx_k(g_h) = \lambda_h \tau_k^{(h)}(g_h) \left( \lambda_h \tau_{k_0}^{(h)}(g_h) \right)^{-\rho_k} e^{-\mathrm{i} \theta_k}.
\ee
The bound $\frac{1}{|\lambda_h|} e^{E_{k_0} \beta^*_h}<|\tau_k^{(h)}(g_h)| \leq d_k^2$ on the open neighborhood $\mathscr{U}_h$ implies the following bound of $|\cx_k(g_h)|$
\be
 |\cx_k(g_h)| \leq |\lambda_h| d_k^2 e^{-\beta^*_h E_k} 
\ee
Since $E_k$ grows polynomially, $\sum_{k} |\cx_k(g_h)|$ converges uniformly on $\mathscr{U}_h$. Therefore, $Q^M_h$ converges uniformly to $Q_h$. In addition, for sufficiently large $k$, $|\cx_k(g_h)|\leq1/2$ for all $g_h \in \mathscr{U}_h$.

Let $\tau_k$ denote $\tau_k^{(h)}(g_h)$ and $\tau_{k_0}$ denote $\tau_{k_0}^{(h)}(g_h)$, and $ \cx_k = e^{-\mathrm{i} \theta_k} \lambda_h \tau_k (\lambda_h \tau_{k_0})^{-\rho_k} $. Consider differentiating with respect to $g$ (notationally, $D_g\equiv D^\alpha _g$ for $|\a|=1$):
\be
D_{g} \cx_k = e^{-\mathrm{i} \theta_k} (\lambda_h \tau_{k_0})^{-\rho_k} B^{(1)}_k,\qquad B^{(1)}_k=\lambda_h \tau_k' - \rho_k \lambda_h \tau_k \frac{\tau_{k_0}'}{\tau_{k_0}}\qquad \tau_k'\equiv D_g\tau_k 
\ee
By $|{\tau_{k_0}}|^{-1} \leq |\lambda_h| e^{-E_{k_0} \beta^*_h} $, the term $(\lambda_h \tau_{k_0})^{-\rho_k}$ is bounded by $e^{-\beta^*_h E_k}$. On any compact neighborhood $K$ in $\mathscr{U}_h$, $\tau_k, \tau_k'$ are bounded by polynomials in $k$. $\tau_{k_0}, \tau_{k_0}'$ depend only on $k_0$ (fixed), so they are bounded uniformly in $g_h$ on $K$. $\rho_k = E_k/E_{k_0}$ grows polynomially in $k$. Thus, $B^{(1)}_k$ is bounded by a polynomial in $k$, say $P_1(k)$. We obtain
\be 
| D_{g} \cx_k| \leq  P_1(k) e^{-\beta^*_h E_k} 
\ee
This ensures $\sum_k |\partial_{g_h} \cx_k|$ converges uniformly.

Consider the second derivative $|\a|=2$ and denote $D^2_g=D^\a_g$ for any multi-index of length 2:
\be
D_{g}^2 \cx_k 
&=&e^{-\mathrm{i} \theta_k}(\lambda_h \tau_{k_0})^{-\rho_k} B^{(2)}_k,\qquad B^{(2)}_k= -\rho_k \frac{\tau_{k_0}'}{\tau_{k_0}} B_k^{(1)} + D_{g} B_k^{(1)} ,\\
D_{g} B_k^{(1)} &=& \lambda_h \tau_k'' - \rho_k \lambda_h \left( \tau_k' \frac{\tau_{k_0}'}{\tau_{k_0}} + \tau_k \frac{\tau_{k_0}'' \tau_{k_0} - (\tau_{k_0}')^2}{\tau_{k_0}^2} \right),\qquad \tau_{k}''=D_{g}^2  \tau_{k}\ .
\ee
The quantities $\tau_k, \tau_k', \tau_k''$, $\rho_k$ and $\rho_k^2$ are all polynomially bounded, so $B^{(1)}_k$ is bounded by a polynomial in $k$, say $P_2(k)$. Thus, we bound the second derivative as:
\be
|D_{g}^2 \cx_k| \leq  P_2(k) e^{-\beta^*_h E_k} , 
\ee
By applying the same reasoning recursively, we see that the estimate holds for all derivatives of arbitrary order indexed by the multi-index $\a$:
\be
|D_g^\alpha \cx_k| \leq P_{|\alpha|}(k) e^{-\beta^*_h E_k} , \qquad \forall\, \alpha,
\ee
where $P_{|\alpha|}(k)$ denotes a polynomial whose degree may depend on $|\alpha|$. This ensures $\sum_k |D_g^\alpha \cx_k|$ converges uniformly for all $\a$ on any compact neighborhood $K\subset\mathscr{U}_h$.

The derivative $D_g Q_h^M$ gives
\be
D_g Q_h^M = Q_h^M(g_h) S_1^M(g_h) ,\qquad S_1^M(g_h) = \sum_{k < M, k \neq k_0} \frac{D_g\cx_k}{1 - \cx_k}
\ee
We have shown that $|\cx_k'| \leq  P_1(k) e^{-\beta^*_h E_k}$. For $k$ sufficiently large, $|1 - \cx_k| \geq 1/2$, then $\left| \frac{D_g \cx_k}{1 - \cx_k} \right| \leq 2P_1(k) e^{-\beta^*_h E_k}$. By the Weierstrass M-test, the series $S_1(g_h) = \lim_{M \to \infty} S_1^M(g_h)$ converges uniformly on $K$. Since $Q_h^M$ converges uniformly to $Q_h$, the product $\partial_{g_h} Q_h^M$ converges uniformly to $Q_h S_1$.

Consider the second derivative of $Q_h^M$:
\be
 D^2_{g_h} Q_h^M = Q_h^M (S_1^M)^2 + Q_h^M S_2^M,\qquad S_2^M = \partial_{g_h} S_1^M = \sum_{k < M, k \neq k_0} \lt[\frac{D_g^2\cx_k}{1 - \cx_k} + \left( \frac{D_g\cx_k}{1 - \cx_k} \right)^2\rt]
\ee
For suifficiently large $k$, the term in $[\cdots]$ is bounded by:
\be
\left| \frac{D_g^2 \cx_k}{1 - \cx_k} \right| + \left| \frac{D_g \cx_k}{1 - \cx_k} \right|^2 \leq 2 |D_g^2 \cx_k| + 4 |D_g\cx_k|^2 \leq 2 P_2(k) e^{-\beta^*_h E_k} + 4 P_1(k)^2 e^{-2\beta^*_h E_k}
\ee
Again by the Weierstrass M-test, $S_2^M$ converges uniformly to a function $S_2$, so $D^2_{g_h} Q_h^M $ converge uniformly to $Q_h S_1^2 + Q_h^M S_2$ on $K$. It is straight-forward to generalize the proof of uniform convergence to arbitrarily higher order derivatives $D^\a_g Q_h^M$.

\end{proof}

\section{On the sum over $m$}\label{sum over m}

We compute the following sum over $m\in\Z$ in \eqref{leadingpolecontri}:
\be
\Fs(A,g)&=&\frac{1}{E_{k_0}}\sum_{m\in\Z}\frac{e^{iA m \nu_0}}{r_0(g)+im\, \nu_0}  Q_0\lt(m,g\rt),\qquad
Q_0\lt(m,g\rt)=\prod_{k \neq k_0}\frac{1} {1 - \lambda \tau_k(g) e^{-s_0(m,g) E_k}},\\
\nu_0&=&\frac{2\pi}{E_{k_0}},\qquad r_0(g)=\frac{\ln(\l\t_{k_0})(g)}{E_{k_0}},\qquad s_0(m,g)=r_0(g)+im\, \nu_0.
\ee
where the dependence on $h$ and $\lambda$ in the arguments of various functions is omitted. Note that here $\sum_{m\in\Z}\cdots = \lim_{M\to\infty}\sum_{m=-M}^M\cdots$, since it comes from the principal value integral. For any $g\in\mathscr{U}_h$, $|\lambda \tau_k(g) e^{-s_0(m,g)E_k}|< 1$ for $k\neq k_0$, because $\{s_0(m,g)\}_m$ have real parts larger than all other poles with $k\neq k_0$. We expand $Q(m)$ into a Dirichlet series. Let $\mathbf{n} = (n_k)_{k \neq k_0}$ be a sequence of non-negative integers (occupation numbers) with finite support.
\be
Q_0(m,g) &=& \prod_{k \neq k_0} \sum_{n_k=0}^\infty \lt[\lambda \tau_k(g) e^{-s_0(m,g) E_k}\rt]^{n_k} = \sum_{\mathbf{n}} C(\mathbf{n},g) e^{-s_0(m,g) \ce(\mathbf{n})}, \\
C(\mathbf{n},g) &=& \prod_{k \neq k_0} \lt[\lambda \tau_k(g)\rt]^{n_k},\qquad \mathcal{E}(\mathbf{n}) = \sum_{k \neq k_0} n_k E_k.
\ee
For any compact neighborhood $K\subset \mathscr{U}_h$, the series converges uniformly in $m$ and $g\in K$, because (1) $|\lambda \tau_k(g) e^{-s_0(m,g) E_k}|=|\lambda \tau_k(g)| e^{-\re[r_0(g)] E_k}<1$ is independent of $m$, and (2) $|\lambda \tau_k(g) e^{-s_0(m,g) E_k}|<|\lambda| d_k^2  e^{-\b_* E_k}$, and thus $\sum_{k\neq k_0}|\lambda \tau_k(g) e^{-s_0(m,g) E_k}|$ converges uniformly (by Weierstrass M-test).

We interchange $\sum_{m\in\Z}$ and $\sum_{{\bf n}}$
\be
\Fs(A,g)=\frac{1}{E_{k_0}}\sum_{\bf n}C({\bf n},g) e^{-r_0(g) \ce(\mathbf{n})}\sum_{m\in\Z}\frac{e^{i\lt[A-\ce({\bf n})\rt] m \nu_0}}{r_0(g)+im\, \nu_0} .
\ee
The interchange of sum is justified by (1) the uniform bound for the partial sum over $m$ (see Lemma \ref{unifBoundsumm} below):
\be
\lt|\sum_{m=-M}^M\frac{e^{i\lt[A-\ce({\bf n})\rt] m \nu_0}}{r_0+im\, \nu_0} \rt|\leq \mathscr{K},
\ee
where $\mathscr{K}$ does not depend on $m$ and ${\bf n}$; (2) The dominated convergence:
\be
\sum_{\bf n}\lt|C({\bf n}) e^{-r_0 \ce(\mathbf{n})}\sum_{m=-M}^M\frac{e^{i\lt[A-\ce({\bf n})\rt] m \nu_0}}{r_0+im\, \nu_0}\rt|\leq \mathscr{K}\sum_{\bf n}\lt|C({\bf n})\rt| e^{-\re(r_0) \ce(\mathbf{n})}<\infty.
\ee

\begin{lemma}\label{unifBoundsumm}

(1) $T_M(\theta) = \sum_{m=-M}^{M} \frac{e^{i m \theta}}{\sigma + i m} $, $\re(\sig)>0$, $\theta\in\R$, is uniformly bounded $|T_M(\theta)|\leq \sk(\sig)$, where $\sk(\sig)>0$ is independent of $M$ and $\theta$ and continuous on the right-half $\sigma$-plane.

(2) The periodic function $T(\theta)=\lim_{M\to\infty}T_M(\theta)=\frac{2\pi e^{\sig (2\pi -\theta)}}{e^{2\pi \sig}-1}$ for $0<\theta<2\pi$ and has jump discontinuities at multiples of $2\pi$. 

\end{lemma}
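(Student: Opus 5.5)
The plan is to reduce $T_M(\theta)$ to a classical Fourier series plus an absolutely convergent remainder. For $m\neq 0$ I will use the partial-fraction identity
$$\frac{1}{\sigma+im}=\frac{1}{im}-\frac{\sigma}{im(\sigma+im)}.$$
This gives the splitting
$$T_M(\theta)=\frac{1}{\sigma}+\sum_{0<|m|\le M}\frac{e^{im\theta}}{im}-\sigma\sum_{0<|m|\le M}\frac{e^{im\theta}}{im(\sigma+im)}.$$
The middle term equals $2\sum_{m=1}^M \frac{\sin(m\theta)}{m}$. It is the classical sawtooth partial sum, and I will invoke the standard uniform bound $\bigl|\sum_{m=1}^M \sin(m\theta)/m\bigr|\le \int_0^\pi \frac{\sin x}{x}\,\dd x$, which holds for all $M$ and $\theta$.

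For the last sum I need a bound on $|\sigma+im|$ from below. Since $\re(\sigma)>0$, we have $|\sigma+im|\ge \re(\sigma)$ for every $m$. When $|m|\ge 2|\sigma|$ we also have $|\sigma+im|\ge |m|/2$. The finitely many terms with $|m|<2|\sigma|$ are therefore bounded by $(4|\sigma|+2)/\re(\sigma)$. The remaining terms are bounded by $2\sum_{m\neq0}m^{-2}=2\pi^2/3$. Collecting the three pieces, I will take
$$\mathscr{K}(\sigma)=\frac{1}{|\sigma|}+2\int_0^\pi\frac{\sin x}{x}\,\dd x+|\sigma|\Bigl(\frac{4|\sigma|+2}{\re(\sigma)}+\frac{2\pi^2}{3}\Bigr).$$
This function is independent of $M$ and $\theta$ and is manifestly continuous on $\re(\sigma)>0$, which proves (1).

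For (2), the remainder series converges absolutely, so the limit $T(\theta)=\lim_{M\to\infty}T_M(\theta)$ exists wherever the sawtooth series converges, which is everywhere. To identify the limit I compute the Fourier coefficients of $f(\theta)=e^{\sigma(2\pi-\theta)}$ on $(0,2\pi)$, extended $2\pi$-periodically. The coefficient formula is
$$\frac{1}{2\pi}\int_0^{2\pi}e^{\sigma(2\pi-\theta)}e^{-im\theta}\,\dd\theta=\frac{e^{2\pi\sigma}-1}{2\pi(\sigma+im)}.$$
It follows that $T_M$ is exactly the symmetric partial Fourier sum of $\frac{2\pi f}{e^{2\pi\sigma}-1}$.

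This function is piecewise smooth, so Dirichlet's theorem applies. It gives pointwise convergence to $\frac{2\pi e^{\sigma(2\pi-\theta)}}{e^{2\pi\sigma}-1}$ on $0<\theta<2\pi$. At multiples of $2\pi$ it gives convergence to the midpoint of the one-sided limits, and these limits differ by $2\pi$, which is the jump discontinuity claimed in the statement.

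The only delicate point is making the constant in (1) uniform in $\theta$ near the jump. The Gibbs phenomenon rules out uniform convergence there, but not uniform boundedness, and this is exactly why I isolate the sawtooth series and use its classical bound.
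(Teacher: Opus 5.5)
Your proof is correct and follows essentially the paper's route. Your partial-fraction step $\frac{1}{\sigma+im}=\frac{1}{im}-\frac{\sigma}{im(\sigma+im)}$, once the $\pm m$ terms are combined, reproduces exactly the paper's absolutely convergent remainders $2\sigma\cos(m\theta)/(\sigma^2+m^2)$ and $-2\sigma^2\sin(m\theta)/[m(\sigma^2+m^2)]$, and it isolates the same sawtooth sum $2\sum_{m=1}^M\sin(m\theta)/m$, which you bound by citing the classical constant $\int_0^\pi\frac{\sin x}{x}\,\mathrm{d}x$ where the paper proves $1+3\pi$ by summation by parts. In part (2) you compute the same Fourier coefficients $1/(\sigma+im)$, and your explicit appeal to Dirichlet's theorem, with the midpoint value and the jump of size $2\pi$, supplies the pointwise-convergence step that the paper leaves implicit.
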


\begin{proof}

(1) We separate the $m=0$ term and pair the terms for $\pm m$ ($m > 0$):
\be
T_M(\theta) = \frac{1}{\sig}+ \sum_{m=1}^{M} \left( \frac{e^{i m \theta}}{\sigma + i m} + \frac{e^{-i m \theta}}{\sigma - i m} \right)=\frac{1}{\sigma} + 2\sigma \sum_{m=1}^{M} \frac{\cos(m\theta)}{\sigma^2 + m^2} + 2 \sum_{m=1}^{M} \frac{m \sin(m\theta)}{\sigma^2 + m^2}. 
\ee

For the term involving Cosine:
\be 
\left| 2\sigma \sum_{m=1}^{M} \frac{\cos(m\theta)}{\sigma^2 + m^2} \right| \le 2|\sigma| \sum_{m=1}^{\infty} \frac{1}{|\sigma^2 + m^2|}\equiv C_1(\sigma). 
\ee
By $|\sigma^2 + m^2|\geq |m^2-|\sigma|^2| $, for any compact neighborhood $K$ in the right-half $\sigma$-plane such that $|\sig|\leq R$, we choose $N$ such that $N^2>2R^2$, then for all $m>N$, $m^2>2R^2\geq 2|\sigma|^2$, and thus $|\sigma^2 + m^2|\geq m^2-|\sigma|^2> m^2/2$. Therefore, when we split the sum $\sum_{m=1}^{\infty} \frac{1}{|\sigma^2 + m^2|}=\sum_{m=1}^{N} \frac{1}{|\sigma^2 + m^2|}+\sum_{m=N+1}^{\infty} \frac{1}{|\sigma^2 + m^2|}$, the infinite sum is uniformly convergent: $\sum_{m=N+1}^{\infty} \frac{1}{|\sigma^2 + m^2|}\leq \sum_{m=N+1}^\infty \frac{2}{m^2}$. Therefore $C_1(\sigma)$ is continuous on the compact neighborhood $K$. Since $K$ is arbitrary, $C_1(\sigma)$ is continuous on the entire right-half $\sig$-plane.

For the term involving Sine, we use $\frac{m}{\sigma^2 + m^2} = \frac{1}{m} - \frac{\sigma^2}{m(\sigma^2 + m^2)}$:
\be 
2 \sum_{m=1}^{M} \frac{m \sin(m\theta)}{\sigma^2 + m^2}= 2 \sum_{m=1}^{M} \frac{\sin(m\theta)}{m} - 2\sigma^2 \sum_{m=1}^{M} \frac{\sin(m\theta)}{m(\sigma^2 + m^2)}. 
\ee
The second term is bounded by $C_2$ which is independent of $M$ and $\theta$:
\be 
\left| 2\sigma^2 \sum_{m=1}^{M} \frac{\sin(m\theta)}{m(\sigma^2 + m^2)} \right| \le 2|\sigma|^2 \sum_{m=1}^{\infty} \frac{1}{m |m^2 + \sigma^2|}\equiv C_2(\sigma) .
\ee
where $C_2(\sigma)$ is continuous on the right-half $\sig$-plane by the same argument as the above.

We are left to bound the sum $\sum_{m=1}^{M} \frac{\sin(m\theta)}{m}$.
Since this function is odd and $2\pi$-periodic in $\theta$, it suffices to consider $\theta \in [0, \pi]$. For $\theta = 0$, The sum is 0. For $\theta \in (0, \pi]$, we use $A_k(\theta) = \sum_{j=1}^k \sin(j\theta)= \frac{\cos(\theta/2) - \cos((k+1/2)\theta)}{2\sin(\theta/2)}$ and the bound:
\be
|A_k(\theta)| \le \frac{2}{2\sin(\theta/2)} = \frac{1}{\sin(\theta/2)}\le \frac{\pi}{\theta}. 
\ee
by $\sin(\theta/2) \ge \frac{\theta}{\pi}$ for $\theta \in (0, \pi]$. Split the sum $\sum_{m=1}^{M} \frac{\sin(m\theta)}{m}$ at index $N = \lfloor \frac{1}{\theta} \rfloor$. For $m \le N$, using $|\sin(x)| \le |x|$:
\be 
\left| \sum_{m=1}^{\min(M, N)} \frac{\sin(m\theta)}{m} \right| \le \sum_{m=1}^{N} \frac{m\theta}{m} = N\theta \le \frac{1}{\theta} \cdot \theta = 1. 
\ee    
For $m > N$, 
\be 
\sum_{m=N+1}^{M} \frac{\sin(m\theta)}{m} = \sum_{m=N+1}^{M} \frac{A_m (\theta)- A_{m-1}(\theta)}{m} = \frac{A_M(\theta)}{M} - \frac{A_N(\theta)}{N+1} + \sum_{m=N+1}^{M-1} A_m(\theta) \left( \frac{1}{m} - \frac{1}{m+1} \right). 
\ee
Use the bound $|A_m| \le \frac{\pi}{\theta}$:
\be
&&\qquad\qquad \left| \frac{A_M}{M} \right| \le \frac{\pi}{M\theta} \le \frac{\pi}{(N+1)\theta} \le \pi, \qquad  \left| \frac{A_N}{N+1} \right| \le \frac{\pi}{(N+1)\theta} \le \pi,\\
&&\left| \sum_{m=N+1}^{M-1} A_m \left( \frac{1}{m} - \frac{1}{m+1} \right) \right| \le \frac{\pi}{\theta} \sum_{m=N+1}^{M-1} \left( \frac{1}{m} - \frac{1}{m+1} \right) = \frac{\pi}{\theta} \left( \frac{1}{N+1} - \frac{1}{M} \right) \le \frac{\pi}{\theta(N+1)} \le \pi. 
\ee
Thus, the sum for $m > N$ is bounded by $3\pi$. Combining both parts, $|\sum_{m=1}^{M} \frac{\sin(m\theta)}{m}| \le 1 + 3\pi\equiv C_3$. 

Combining all parts, we have:
\be 
|T_M(\theta)| \le  \frac{1}{|\sigma|} + C_1(\sig) + C_2(\sig) + 2C_3  \equiv \sk(\sig). \label{Ksig}
\ee
where $\sk(\sig)$ is independent of $\theta$ and $M$ and continuous in $\sig$ for $\re(\sig)>0$.

(2) Taking the limit $M\to\infty$, $T(\theta)=\lim_{M\to\infty}T_M(\theta)$ is still bounded uniformly by $\mathscr{K}$ and is a periodic function with period $2\pi$. We expand $T(\theta)$ in Fourier series $T(\theta)=\sum_{m\in\Z} c_m e^{im\theta}$ and check the coefficients 
\be
c_m=\frac{1}{2\pi}\int_0^{2\pi }T(\theta)e^{-im\theta}\rmd\theta=\frac{e^{2\pi \sig}}{e^{2\pi \sig}-1}\int_0^{2\pi }e^{-(\sigma +im)\theta}\rmd\theta=\frac{1}{\sigma +i m}.
\ee

\end{proof}

Apply the above result, we obtain
\be
\Fs(A,g)&=&\frac{ e^{2\pi{r_{0}(g)}/{\nu_{0}}}}{e^{2\pi{r_{0}(g)}/{\nu_{0}}}-1}\sum_{{\bf n}}C({\bf n},g)e^{-r_{0}(g){\cal E}(\mathbf{n})}\exp\lt[-\frac{r_{0}(g)}{\nu_{0}}\psi(A,{\bf n})\rt],\label{FsSeries}\\
&&\qquad \psi(A,{\bf n})=\left[A-\mathcal{E}({\bf n})\right]\nu_{0}+2\pi Z(A,{\bf n}).
\ee
Here $Z(A,{\bf n})\in\Z$ is defined such that $\psi(A,{\bf n})\in(0,2\pi)$, so $\exp\lt[-\frac{r_{0}(g)}{\nu_{0}}\psi(A,{\bf n})\rt]$ is periodic in $A$ with period $2\pi/\nu_0=E_{k_0}$ and has jump discontinuities at the boundaries of each period, in particular, it satisfies the uniform bound
\begin{eqnarray}
\lt|\exp\lt[-\frac{r_{0}(g)}{\nu_{0}}\psi(A,{\bf n})\rt]\rt|\leq C
\end{eqnarray}
for all $\mathbf{n}$ and for all $g$ in any compact neighborhood $K\subset\mathscr{U}_h$ (by the continuity of $r_0(g)$).

\begin{lemma}\label{lemmaSmoothFs} (1) For any $A>0$, the function $\Fs(A,g)$ is smooth on $\mathscr{U}_h$.

(2) $D^\a_g \Fs(A,g)$ is uniformly bounded on $\R_{>0}\times K$, for any compact neighborhood $K\subset\mathscr{U}_h$
\end{lemma}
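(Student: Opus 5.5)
I will work from the series representation \eqref{FsSeries}:
\[
\Fs(A,g)=\frac{e^{2\pi r_0(g)/\nu_0}}{e^{2\pi r_0(g)/\nu_0}-1}\sum_{\bf n}C({\bf n},g)\,e^{-r_0(g)\ce({\bf n})}\exp\!\left[-\frac{r_0(g)}{\nu_0}\psi(A,{\bf n})\right].
\]
The point of this form is that, for fixed $A$, the function $\psi(A,{\bf n})\in(0,2\pi)$ does not depend on $g$. All $g$-dependence therefore sits in $r_0(g)=E_{k_0}^{-1}\ln[\lambda\tau_{k_0}(g)]$ and in the coefficients $C({\bf n},g)=\prod_{k\neq k_0}[\lambda\tau_k(g)]^{n_k}$.

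\textbf{Prefactor.} On $\mathscr{U}_h$ we have $\tau_{k_0}\neq0$, so $r_0$ is smooth there; $\tau_{k_0}$ is real analytic by Lemma \ref{lemmaAnalyQ}. The principal branch of the logarithm could jump, but $e^{r_0E_{k_0}}=\lambda\tau_{k_0}$ is smooth, and $\psi$ was defined with a fixed branch. I will therefore work locally on a small compact $K$ where a continuous branch can be chosen.

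Since $\re(r_0)>\b^*>0$ on $\mathscr{U}_h$, we get $|e^{2\pi r_0/\nu_0}|=e^{E_{k_0}\re(r_0)}>1$. The prefactor is then a smooth function of $r_0$ whose denominator is bounded away from zero on $K$, so it is smooth and bounded on $K$.

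\textbf{The series.} Each term factorizes as
\[
\prod_{k\neq k_0}\bigl[\lambda\tau_k(g)e^{-r_0(g)E_k}\bigr]^{n_k}\exp\!\left[-\frac{r_0(g)}{\nu_0}\psi\right].
\]
This is exactly the product of $\cx_k$ factors from Lemma \ref{lemmaSmoothQ} evaluated at $m=0$, multiplied by a bounded exponential. I will prove smoothness by termwise differentiation, with uniform convergence of every derivative on $K$.

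Apply $D_g^\alpha$ by the Leibniz rule. Derivatives of the last factor produce polynomials in $\psi/\nu_0\le 2\pi/\nu_0$ times derivatives of $r_0$; these are bounded on $K$ independently of $A$ and ${\bf n}$. Derivatives of $\prod_k \cx_k^{n_k}$ are estimated using $|D_g^\beta\cx_k|\le P_{|\beta|}(k)e^{-\b^*E_k}$ from the proof of Lemma \ref{lemmaSmoothQ}.

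The key step is to bound $\sum_{\bf n}\bigl|D^\alpha_g\prod_k\cx_k^{n_k}\bigr|$ uniformly. Rather than expanding termwise, I will recognize this sum as dominated by derivatives of the generating product. Summing over ${\bf n}$ first for the undifferentiated terms gives exactly $\prod_{k\neq k_0}(1-\cx_k)^{-1}$ at $m=0$. For the derivatives, I majorize by a comparison product
\[
\prod_{k\neq k_0}\frac{1}{1-(|\cx_k|+\epsilon\,\mathrm{Pol}(k)e^{-\b^*E_k})},
\]
with a formal parameter for each derivative order; equivalently, I bound $D^\alpha$ of the majorant series $\sum_{\bf n}\prod_k (x_k)^{n_k}$, where $x_k=\sup_K|\cx_k|<1$. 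Each derivative pulls down factors $n_k\,\mathrm{Pol}(k)e^{-\b^*E_k}/x_k$. The resulting sums $\sum_{\bf n}\prod_k n_k^{a_k}x_k^{n_k}$ converge, because $\sup_K|\cx_k|\le q<1$ uniformly in $k$ and $\sum_k\mathrm{Pol}(k)e^{-\b^*E_k}<\infty$. I expect this combinatorial control of $n_k$-dependent derivatives to be the main obstacle.

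\textbf{Conclusion.} The Weierstrass M-test now gives uniform convergence of all derivatives on $K$, which proves (1). The resulting bounds are independent of $A$, because $\psi\in(0,2\pi)$ and $|\exp[-r_0\psi/\nu_0]|\le C$ uniformly in $A$. This gives (2).

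One subtlety remains. The sets where $\psi$ jumps depend on $A$ and ${\bf n}$ but not on $g$, so for each fixed $A$ there is no issue with differentiability in $g$. The bound in (2) is a supremum over $A>0$, including $A$ at jump points, and it holds there because all the estimates above are uniform in $A$.
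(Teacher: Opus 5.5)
Your proof is correct. Its skeleton matches the paper's: the representation \eqref{FsSeries}, a smooth prefactor times a series differentiated termwise, the Weierstrass M-test on compact $K$, and uniformity in $A$ from $\psi\in(0,2\pi)$. Both proofs also use the same compactness slack $\re r_0\ge\beta^*+\epsilon$ on $K$. The difference is in how you organize the key estimate.

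The paper keeps each term as $C(\mathbf n,g)\,e^{-r_0(g)\Lambda_{\mathbf n}(A)}$ with $\Lambda_{\mathbf n}=\ce(\mathbf n)+\psi/\nu_0$. Differentiation then pulls down powers of $\Lambda_{\mathbf n}\sim\ce(\mathbf n)$, and differentiating $C(\mathbf n,g)$ gives polynomials in $n_k\le\ce(\mathbf n)$. The paper absorbs these by giving up part of the exponential rate, $P(\ce)\le C_\delta e^{\delta\ce}$, and then sums with Lemma \ref{lemmaA1}.

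You instead regroup $C(\mathbf n,g)e^{-r_0\ce(\mathbf n)}=\prod_{k\neq k_0}\cx_k^{n_k}$, the $m=0$ factors of Lemma \ref{lemmaSmoothQ}. The leftover exponential then has the bounded exponent $r_0\psi/\nu_0$, the large parameter $\ce(\mathbf n)$ never appears, and you can reuse $|D^\beta_g\cx_k|\le P_{|\beta|}(k)e^{-\beta^*E_k}$ directly. The $n_k$-multiplicities are handled by a generating-function majorant rather than by trading off exponential rate. This gives a cleaner closed-form bound tied to $Q_h$ at $m=0$.

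Write the majorant step as coefficient extraction rather than ``dividing by $x_k$''. Set $x_k=\sup_K|\cx_k|\le q<1$ and $Y_k=\max_{1\le|\beta|\le|\alpha|}\sup_K|D^\beta_g\cx_k|$. The Leibniz rule then gives $\sum_{\mathbf n}\sup_K|D^\alpha_g\prod_k\cx_k^{n_k}|\le|\alpha|!\,[\epsilon^{|\alpha|}]\prod_{k\neq k_0}\bigl(1-x_k-\tfrac{\epsilon}{1-\epsilon}Y_k\bigr)^{-1}$, which is finite since $\sum_k(x_k+Y_k)<\infty$.

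Your branch remark addresses something the paper's proof passes over by simply declaring $r_0$ smooth. The prefactor $P(g)=\lambda\tau_{k_0}/(\lambda\tau_{k_0}-1)$ does not depend on the branch. But $r_0\to r_0-\mathrm{i}\nu_0$ multiplies the series by $e^{\mathrm{i}A\nu_0}$. So with the principal branch, $\Fs$ is in general discontinuous where $\lambda\tau_{k_0}$ crosses $\R_{<0}$, unless $A/E_{k_0}\in\Z$. Local continuous branches together with a finite cover of $K$ are therefore the right reading of (1). Since the factor is unimodular, the bounds in (2) are unaffected.
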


\begin{proof} 

(1) To prove that $\Fs(A,g)$ is smooth with respect to $g$ on $\mathscr{U}_h$, we want to show that the series \eqref{FsSeries} can be differentiated term-by-term to any order. This requires proving that the series of derivatives converges uniformly on any compact subset $K \subset \mathscr{U}_h$.

The function $\Fs(A,g)$ is the product of a smooth prefactor $P(g)$ and a series $S(g)$: $ \Fs(A,g) = P(g)  S(g) $ where
\be 
P(g) = \frac{e^{2\pi{r_{0}(g)}/{\nu_{0}}}}{e^{2\pi{r_{0}(g)}/{\nu_{0}}}-1}, \qquad S(g)& =& \sum_{{\bf n}} T_{\mathbf{n}}(g),\qquad T_{\mathbf{n}}(g) = C({\bf n},g) e^{-r_0(g) \Lambda_{\mathbf{n}}(A) },\\
\Lambda_{\mathbf{n}}(A) &=& \mathcal{E}(\mathbf{n}) + \frac{1}{\nu_0}\psi(A,{\bf n}) 
\ee
Note that $\Lambda_{\mathbf{n}}(A)$ is independent of $g$. Since $\psi(A, \mathbf{n}) \in (0, 2\pi)$ and $\mathcal{E}(\mathbf{n}) \geq 0$, we have the bounds $\mathcal{E}(\mathbf{n}) < \Lambda_{\mathbf{n}}(A) < \mathcal{E}(\mathbf{n}) + E_{k_0}$. 

We need to prove $S(g)$ to be smooth. Consider an arbitrary differential operator $D^\alpha_g$ of order $|\alpha|$ with respect to $g$ (see Section \ref{Bound the derivatives of tau} for the multi-index notation $D^\a_g$). We must bound $|D^\alpha_g T_{\mathbf{n}}(g)|$ on a compact set $K \subset \mathscr{U}_h$. The derivative is:
\begin{eqnarray}
D^\alpha_g T_{\mathbf{n}}(g) = \sum_{|\mu| \le |\alpha|} \binom{|\alpha|}{|\mu|} \left[ D^\mu_g C(\mathbf{n},g) \right] \left[ D^{\alpha-\mu}_g e^{-r_0(g) \Lambda_{\mathbf{n}}(A)} \right] .
\end{eqnarray}

Repeated differentiation of $e^{-r_0(g) \Lambda_{\mathbf{n}}}$ brings down factors of $\Lambda_{\mathbf{n}}$ and derivatives of $r_0(g)$. Since $r_0(g)$ is smooth, its derivatives are bounded on the compact set $K$. Thus, there exists a constant $C_1$ (depending on $\alpha, K$) such that
\be  
\left| D^{\rho}_g e^{-r_0(g) \Lambda_{\mathbf{n}}} \right| \le C_1 \left(1 + \Lambda_{\mathbf{n}}\right)^{|\rho|} e^{-\text{Re}(r_0(g)) \Lambda_{\mathbf{n}}}\leq C_2 \left(1 + \mathcal{E}(\mathbf{n})\right)^{|\rho|} e^{-(\beta^*+\eps) \mathcal{E}(\mathbf{n})} \qquad \forall g\in K,
\ee
due to $\mathcal{E}(\mathbf{n}) < \Lambda_{\mathbf{n}} < \mathcal{E}(\mathbf{n}) + E_{k_0}$, and since $\text{Re}(r_0(g)) > \beta^*$ everywhere on $\mathscr{U}_h$, it follows that on any compact $K \subset \mathscr{U}_h$ we have $\text{Re}(r_0(g)) \geq \beta^* + \eps$ for some $\eps > 0$.

Consider $ D^\mu_g C(\mathbf{n},g)$ where $C(\mathbf{n},g) = \prod_{k \neq k_0} u_k(g)^{n_k}$ and $u_k(g) = \lambda \tau_k(g)$. The occupation numbers $\mathbf{n}$ is of finite support, so the $n_k\neq 0$ only for finitely many $k$. When differentiating the product $\prod_{k \neq k_0} u_k(g)^{n_k}$ a finite number of times ($|\mu|$ times), the result is a finite sum of terms, each involving at most $|\mu|$ factors of the form $D_g u_k $ and a polynomial in $n_k$.
\begin{itemize}
\item  By Lemma \ref{lemmaBoundDtau}, derivatives of $\tau_k$ are polynomially bounded in $k$. Since $E_k$ grows polynomially in $k$, derivatives are also polynomially bounded in $E_k$.
\item  Differentiating $u_k^{n_k}$ introduces a polynomial of $n_k$. Since $E_k \ge 1$, we have $n_k \le n_k E_k \le \mathcal{E}(\mathbf{n})$. Thus, any polynomial in $n_k$ is bounded by a polynomial in the total energy $\mathcal{E}(\mathbf{n})$.
\end{itemize}
Combining these, we obtain
$$ |D^\mu_g C(\mathbf{n},g)| \le \sum _i P_\mu^{(i)}(\mathcal{E}(\mathbf{n})) \sup_{g \in K} |C_i(\mathbf{n},g)| \leq P_\mu(\mathcal{E}(\mathbf{n})) \prod_{k\neq k_0}\lt|\l d_k^2\rt|^{n_k}$$
where $C_i(\mathbf{n},g)$ removes a number of $u_k$'s from $C(\mathbf{n},g)$. We use the estimate $\sup_{g \in K} |\lambda \tau_k(g)| \leq |\lambda | d_k^2$ and $1\leq C|\lambda| d_k^2$ for some $C>0$. $P_\mu$ is a polynomial with positive coefficients.

We obtain the following bound of $D^\alpha_g T_{\mathbf{n}}(g)$:
\be
|D^\alpha_g T_{\mathbf{n}}(g)| \le P'_\alpha(\mathcal{E}(\mathbf{n})) e^{-(\beta^*+\eps)\mathcal{E}(\mathbf{n})} \prod_{k\neq k_0}\lt|\l d_k^2\rt|^{n_k}
\ee
where $P'_\alpha$ is a polynomial with positive coefficients. Given any $\delta > 0$, we can find a constant $C_\delta > 0$ such that $P'_\alpha(x) \leq C_\delta e^{\delta x}$ for all $x > 0$. We choose $\delta < \eps$.
\be
|D^\alpha_g T_{\mathbf{n}}(g)| \le C_\delta \prod_{k \neq k_0} \left[ |\l| d^2_k e^{-(\beta^*+\eps') E_k} \right]^{n_k},\qquad \eps' = \eps-\delta>0.
\ee
Summing $\prod_{k \neq k_0} \left( |\l |d^2_k e^{-(\beta^*+\eps') E_k} \right)^{n_k}$ over $\mathbf{n}$ converges by Lemma \ref{lemmaA1}, because $|\l| d^2_k e^{-(\beta^*+\eps') E_k}< 1$ (recall that $\b_*=\sup_{k\neq k_0}\b_k(\l)$) and $|\l| \sum_{k\neq k_0}d^2_k e^{-(\beta^*+\eps') E_k}< \infty$. Therefore, the series $\sum_{\mathbf{n}} D^\alpha_g T_{\mathbf{n}}(g)$ converges uniformly on $K$ by the Weierstrass M-test. 

(2) By $|D^\alpha_gS(g)|\leq \sum_{\mathbf{n}} |D^\alpha_g T_{\mathbf{n}}(g)|\leq C_\delta \sum_{\mathbf{n}}\prod_{k \neq k_0} [ \l d^2_k e^{-(\beta^*+\eps') E_k} ]^{n_k}$, the bound is valid on any compact neighborhood $K\subset\mathscr{U}_h$ and is independent of $A$ and $g$.

\end{proof}

\section{Interchange sum over $m$ and integral over $g$}\label{Interchange sum over m and integral over g}

We would like to justify the interchange of the integral over the compact neighborhood $\overline{\mathscr{U}}_{\rm int} \subset \text{SL}(2, \mathbb{C})^N$ and the summation over $m \in \mathbb{Z}$ in the expression for the stack amplitude $\mathscr{A}_{\mathcal{K}}$. The stack amplitude is given by:
\be 
\mathscr{A}_{\mathcal{K}} = \int_{\overline{\mathscr{U}}_{\rm int}} \mathrm{d}\Omega(g) \prod_h \omega_h^{\mathrm{bos}}(g) \prod_b \omega_b^{\mathrm{bos}}(g) ,
\ee
where the dominant term in each $\omega_h^{\mathrm{bos}}$ takes the following form (certain entries such as $k_0$ and $\lambda_h$ are suppressed for brevity):
\be
\omega_f^{\mathrm{bos}}(A_h; g_h) &=& \frac{e^{A_h r_h(g_h)}}{E_{k_0}} \left[ \sum_{m \in \mathbb{Z}} \frac{e^{i A_h m \nu_0}}{r_h(g_h) + i m \nu_0} Q_h(m, g_h) \right] ,\\
 Q_h\lt(m,g_h\rt)&=&\prod_{k \neq k_0}\frac{1} {1 - \lambda_h \tau_k^{(h)}(g_h) e^{-s_0(m,g_h) E_k}}.
\ee
with $r_h(g_h) = \frac{\ln[\lambda_h \tau_{k_0}^{(h)}(g_h)]}{E_{k_0}}$, $\nu_0 = \frac{2\pi}{E_{k_0}}$, and $s_0(m,g_h)=r_h(g_h) + i m \nu_0$. We define the partial sum for each internal face $h$, assuming the cutoff $M$ to be independent of $h$:
\be
S_{M,h}(g_h) = \sum_{m=-M}^M \frac{e^{i A_h m \nu_0}}{r_h(g_h) + i m \nu_0} Q_h( m, g_h)
\ee 
To prove the interchange, we show that $S_{M,f}(g)$ is uniformly bounded for all $M$ and $g \in \mathscr{U}_{\rm int}$ (here $g$ relates to $g_h$ by the projection $p_{h}(g)=g_h$). 

We have $|\lambda_h \tau_k^{(h)}(g_h) e^{-s_0(m,g_h) E_k}|\leq 1$ for $k\neq k_0$, because $\{s_0(m,g_h)\}_m$ have real parts larger than all other poles with $k\neq k_0$ on $\overline{\mathscr{U}}_{\rm int}$. We expand $Q_h$ into a Dirichlet series. Let $\mathbf{n} = (n_k)_{k \neq k_0}$ be a sequence of non-negative integers (occupation numbers) with finite support.
\be
Q_f(m, g_h)  &=& \prod_{k \neq k_0} \sum_{n_k=0}^\infty \lt[\lambda_h \tau_k^{(h)}(g_h) e^{-s_0(m,g_h)E_k}\rt]^{n_k} = \sum_{\mathbf{n}} C(\mathbf{n},g_h) e^{-s_0(m,g_h) \ce(\mathbf{n})}, \\
C(\mathbf{n},g_h) &=& \prod_{k \neq k_0} \lt[\lambda_h \tau^{(h)}_k(g_h)\rt]^{n_k},\qquad \mathcal{E}(\mathbf{n}) = \sum_{k \neq k_0} n_k E_k.
\ee
The series converges point-wisely in $g_h$ and uniformly in $m$, because $|\lambda_h \tau_k^{(h)}(g_h) e^{-s_0(m,g_h) E_k}|=|\lambda_h \tau_k^{(h)}(g_h)| e^{-\re(r_h(g_h)) E_k}\leq 1$ is independent of $m$.

Substituting the expansion into the partial sum:
\be 
S_{M,h}(g_h) = \frac{1}{\nu_0} \sum_{\mathbf{n}} C(\mathbf{n}, g_h) e^{-r_h(g_h) \mathcal{E}(\mathbf{n})} \left( \sum_{m=-M}^M \frac{e^{i [A_h - \mathcal{E}(\mathbf{n})] m \nu_0}}{\sigma(g_h) + i m} \right) ,
\ee
where $\sigma(g_h) = r_h(g_h) / \nu_0=\frac{1}{2\pi}\ln[\l_h\t_{k_0}^{(h)}(g_h)]$. 

From Lemma \ref{unifBoundsumm}, the inner sum $T_M(\theta, \sigma) = \sum_{m=-M}^M \frac{e^{im\theta}}{\sigma + im}$ is bounded by a constant $\sk(\sigma)$ independent of $M$ and $\theta$ (see \eqref{Ksig}). The bound $\sk(\sigma)$ is a continuous function of $\sigma$ for $\text{Re}(\sigma) > 0$. 
On the compact neighborhood $\overline{\mathscr{U}}_{\rm int}$, the function $\sigma(g_h)$ is continuous in $g_h$ and $\re(\sigma(g_h)) \geq \frac{1}{2\pi}E_{k_0}\b^*> 0$ for all $g \in \overline{\mathscr{U}}_{\rm int}$. Thus, $\sk(\sigma(g_h))$ is continuous in $g_h$ on the compact neighborhood $ \overline{\mathscr{U}}_{\rm int}$. We have a uniform bound:
\be 
\sup_{g \in \overline{\mathscr{U}}_{\rm int}, M, \mathbf{n}} \left| \sum_{m=-M}^M \frac{e^{i [A_h - \mathcal{E}(\mathbf{n})] m \nu_0}}{\sigma(g_h) + i m} \right| \leq \sk_{\max} < \infty 
\ee
where $\sk_{\max}$ is the maximum of $sk(\sigma(g_h))$ on $ \overline{\mathscr{U}}_{\rm int}$. Applying this bound to $S_{M,h}(g_h)$, we obtain
\be
|S_{M,h}(g_h)| \leq \frac{\sk_{\max}}{\nu_0} \sum_{\mathbf{n}} |C(\mathbf{n}, g_h)| e^{-\text{Re}(r_h(g_h)) \mathcal{E}(\mathbf{n})} 
\ee
The sum over $\mathbf{n}$ is the expansion of the absolute product:
\be
&&\sum_{\mathbf{n}} |C(\mathbf{n}, g_h)| e^{-\text{Re}(r_h(g)) \mathcal{E}(\mathbf{n})} = \prod_{k \neq k_0} \frac{1}{1 - |\lambda_h \tau_k^{(h)}(g_h)| e^{-\text{Re}(r_h(g_h)) E_k}}\\
&=& \prod_{k \neq k_0, k\leq k_s} \frac{1}{1 - |\lambda_h \tau_k^{(h)}(g_h)| e^{-\text{Re}(r_h(g_h)) E_k}}\prod_{k>k_s} \frac{1}{1 - |\lambda_h \tau_k^{(h)}(g_h)| e^{-\text{Re}(r_h(g_h)) E_k}}.\label{infiniteprodC}
\ee
Here we set $k_s$ such that for $k>k_s$, $|\lambda_h \tau_k^{(h)}(g_h)| e^{-\text{Re}(r_h(g_h)) E_k}\leq |\lambda_h |d_k^2 e^{-\b^* E_k}<1$, $g\in  \overline{\mathscr{U}}_{\rm int}$, so the infinite product of $k>k_s$ converges uniformly on $\overline{\mathscr{U}}_{\rm int}$, then the right-hand side is a bounded continuous function of $g$ on $\overline{\mathscr{U}}_{\rm int}$. Thus, there exists constant $B_h$ such that 
\be
|S_{M,h}(g_h)| \leq B_h
\ee 
for all $M$ and $g \in \overline{\mathscr{U}}_{\rm int}$.

The leading integrand for the stack amplitude is $I_M(g) = \cf(g) \prod_h S_{M,h}(g)$, where $\cf(g)=\prod_h \frac{e^{A_h r_h(g_h)}}{E_{k_0}}\prod_b\o_b(g_b)$ is a continuous function on $\overline{\mathscr{U}}_{\rm int}$. Since each $S_{M,h}$ is uniformly bounded by $B_h$ on on $\overline{\mathscr{U}}_{\rm int}$, the product is uniformly bounded:
\be 
|I_M(g)| \leq |\cf(g)|\prod_h B_h \leq \mathcal{G} 
\ee
where $\mathcal{G}$ is a constant since $\overline{\mathscr{U}}_{\rm int}$ is compact and the function is continuous. On a compact domain with a smooth measure $\mathrm{d}\Omega(g)$, the constant function $\mathcal{G}$ is integrable. By the dominated convergence theorem, we have:
\be 
\int_{\overline{\mathscr{U}}_{\rm int}} \mathrm{d}\Omega(g) \lim_{M \to \infty} I_M(g) = \lim_{M \to \infty} \int_{\overline{\mathscr{U}}_{\rm int}} \mathrm{d}\Omega(g) I_M(g) 
\ee
This justifies interchanging the integral over $\mathscr{U}$ and the summation over $m$ for each face.




\section{Parametrization of $\cc_{\rm int}^\cs/\cg_{\rm int}$}\label{AppendixParametrizationofcccscg}

Let $K$ be a connected 2-complex. Let $K^{(0)} = V$ be the set of vertices and $K^{(1)}$ be the 1-skeleton. Let $E$ be the set of oriented edges in $K^{(1)}$. We denote the inverse of an edge $e$ by $e^{-1}$.

Let $v_* \in V$ be a base point. Let $T$ be a maximal spanning tree in the 1-skeleton $K^{(1)}$. Since $T$ is a spanning tree, it contains all vertices $V$ and is contractible.

\begin{definition}
For any pair of vertices $u, v \in V$, let $p_T(u, v)$ denote the unique path in $T$ from $u$ to $v$. For any edge $e = (u, v) \in E \setminus T$ (edges not in the tree), we define the fundamental loop based at $v_*$ associated with $e$ as:
$$ \ell_e = p_T(v_*, u) \cdot e \cdot p_T(v, v_*) $$
where $\cdot$ denotes path concatenation.
\end{definition}

\begin{lemma}
The set of homotopy classes $\{ [\ell_e] \mid e \in E \setminus T \}$ generates the fundamental group $\pi_1(K, v_*)$.
\end{lemma}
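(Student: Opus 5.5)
We will prove that the fundamental loops generate $\pi_1(K,v_*)$ in the standard way. First we show that every loop is homotopic to an edge-path loop in the 1-skeleton, and that $\pi_1(K^{(1)},v_*)\to\pi_1(K,v_*)$ is surjective. Then we show that every edge-path loop in $K^{(1)}$ is a product of fundamental loops $\ell_e$.

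Surjectivity comes from cellular approximation. Any loop in $K$ based at $v_*$ is homotopic, rel basepoint, to a loop in $K^{(1)}$. Concretely, a loop inside a 2-cell can be pushed radially off an interior point of the cell onto its boundary cycle; since the loop is compact and $K$ is a finite CW complex, finitely many such pushes suffice. The result is a loop in the 1-skeleton graph. A further homotopy reparametrizes it as a finite edge-path $v_*=w_0\xrightarrow{e_1}w_1\xrightarrow{e_2}\cdots\xrightarrow{e_n}w_n=v_*$, where each $e_i$ is an oriented edge or its inverse.

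The key step is an insertion of tree paths. Set $\gamma_i:=p_T(v_*,w_{i-1})\cdot e_i\cdot p_T(w_i,v_*)$. Since $p_T(w_i,v_*)\cdot p_T(v_*,w_i)$ is homotopic to the constant path at $w_i$ (it backtracks inside the tree), the edge-path is homotopic to $\gamma_1\cdot\gamma_2\cdots\gamma_n$. Here we use $p_T(v_*,v_*)$ trivial at both ends. For each $i$ there are two cases.

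\emph{Case $e_i\in T$.} The path $p_T(v_*,w_{i-1})\cdot e_i$ is a path in the tree from $v_*$ to $w_i$. Because $T$ is contractible (a tree has a unique reduced path between any two vertices), it is homotopic to $p_T(v_*,w_i)$. Hence $\gamma_i$ is null-homotopic.

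\emph{Case $e_i\notin T$.} Then $\gamma_i=\ell_{e_i}$, or $\gamma_i=\ell_{e}^{-1}$ if $e_i=e^{-1}$ traverses $e$ backwards. Indeed, $\ell_{e^{-1}}$ reversed is exactly $p_T(v_*,v)\cdot e^{-1}\cdot p_T(u,v_*)$.

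Therefore $[\gamma]=\prod_i[\ell_{e_i}]^{\pm1}$ over those $i$ with $e_i\notin T$, which proves generation.

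The only step needing care is the cellular approximation, i.e. that loops in $K$ deform into $K^{(1)}$. If the paper's setting is purely combinatorial, with $\pi_1$ defined as the edge-path group modulo face relations, this step is immediate by definition. In the topological setting I would invoke the cellular approximation theorem for CW complexes directly, rather than re-prove it. Orientation conventions for $\ell_{e^{-1}}$ versus $\ell_e^{-1}$ are routine.
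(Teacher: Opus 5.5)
Your argument is correct, but it takes a different route from the paper's. The paper collapses the spanning tree. Because $T$ is a contractible subcomplex, $K\to K/T$ is a homotopy equivalence, and the 1-skeleton becomes a wedge of circles, one circle $\hat e$ for each non-tree edge. Each $\ell_e$ maps to $\hat e$, so by van Kampen $\pi_1(K^{(1)},v_*)$ is free on the classes $[\ell_e]$. Attaching the 2-cells only adds face-boundary relations, so the images still generate $\pi_1(K,v_*)$.

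You get the surjectivity $\pi_1(K^{(1)})\to\pi_1(K)$ from cellular approximation; this is the same fact the paper states as ``adding relations''. In place of the quotient and van Kampen step, you insert the backtracks $p_T(w_i,v_*)\cdot p_T(v_*,w_i)$, which splits any edge-path loop directly into fundamental loops and null-homotopic tree pieces.

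The paper's route proves more: $\pi_1(K^{(1)})$ is free on the $[\ell_e]$, which gives a presentation of $\pi_1(K)$ by these generators with the face boundaries as relators. That presentation is the structure the next lemma uses. Your route is more elementary and constructive, since it writes out the word in the $[\ell_e]^{\pm1}$ for any given loop. It only proves generation, but generation is all the statement asks for.

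Three small points:
\begin{itemize}
\item Your radial-push sketch works only once the loop misses an interior point of each 2-cell, which a general continuous loop need not do. Cellular approximation (or a preliminary PL approximation) supplies this, so citing the theorem is the right call.
\item Turning a loop in the graph into an edge path is a homotopy, not just a reparametrization. This is a standard lemma, so citing it is enough.
\item In the backward case, $p_T(v_*,v)\cdot e^{-1}\cdot p_T(u,v_*)$ is the reversal of $\ell_e$ (equivalently, it is $\ell_{e^{-1}}$ itself), not the reversal of $\ell_{e^{-1}}$. Your conclusion $\gamma_i=\ell_e^{-1}$ is still correct.
\end{itemize}
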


\begin{proof}
Since $T$ is a contractible subcomplex of $K$, the quotient map $q: K \to K/T$ is a homotopy equivalence. In the quotient space $K/T$, (1)  All vertices $v \in V$ are identified to a single point, which we identify with $v_*$; (2)  All edges in $T$ are mapped to the constant path at $v_*$; (3) Each edge $e \in E \setminus T$ becomes a loop at $v_*$. Let's call this loop $\hat{e}$.

The 1-skeleton of $K/T$, denoted $(K/T)^{(1)}$, is a wedge of circles (one circle for each edge in $E \setminus T$). By the Van Kampen theorem (or standard properties of wedge sums), $\pi_1((K/T)^{(1)}, v_*)$ is the free group generated by the classes of these loops $\{[\hat{e}] \mid e \in E \setminus T\}$.

Consider the image of a fundamental loop $\ell_e$ under the quotient map $q$. Since $p_T(v_*, u)$ and $p_T(v, v_*)$ lie entirely in $T$, they map to the constant path at $v_*$. Therefore:
$$ q_{\#}([\ell_e]) = [\text{const} \cdot \hat{e} \cdot \text{const}] = [\hat{e}] $$
Since $q$ induces an isomorphism on fundamental groups, and the classes $\{[\hat{e}]\}$ generate $\pi_1((K/T)^{(1)})$, the classes $\{[\ell_e]\}$ generate $\pi_1(K^{(1)}, v_*)$.

Finally, $\pi_1(K, v_*)$ is obtained from $\pi_1(K^{(1)}, v_*)$ by adding relations corresponding to the boundaries of the 2-cells (faces) of $K$. Since $\{[\ell_e]\}$ generate the group $\pi_1(K^{(1)}, v_*)$, their images generate the quotient group $\pi_1(K, v_*)$.

\end{proof}

\begin{lemma}\label{HeenotinFlenotinct}
Pick up a minimal set $L$ of independent generators in $\pi_1(\ck_-,v_*)$ that are fundamental loops and denote the set of edges along these generators by $\Fl$. 
On $\cc_{\rm int}^\cs$ and under the gauge fixing $ H_e = 1, \forall e \subset \ct $, the holonomies $\{H_e\}_{e\not\in\Fl,e\not\in\ct}$ are uniquely determined by $\{H_l\}_{l\in\Fl,l\not\in\ct}$.
\end{lemma}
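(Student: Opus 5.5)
Having gauge-fixed $H_e=1$ on $\ct$, the task is to solve the $\cs$-flatness conditions \eqref{HHH} edge by edge for the remaining holonomies. I would do this with a homotopy argument in $\ck_-$. For any $e\notin\ct$, look at its fundamental loop $\ell_e$. Since $H=1$ along $\ct$, the holonomy of $\ell_e$ is just $H_e$.

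\textbf{Expressing $H_e$ through the generators.} By the preceding lemma, the classes $\{[\ell_e]\}$ generate $\pi_1(\ck_-,v_*)$. Since $L$ is a generating set, I would write $[\ell_e]$ as a word in the elements of $L$. At the level of based loops in the 1-skeleton, this means $\ell_e$ equals a word $w(\ell_l)$ in the loops $\ell_l$, $l\in\Fl$, times a product of conjugated face boundaries. Concretely,
\[
\ell_e \simeq w(\ell_l)\cdot \prod_i \gamma_i\,(\partial h_i)^{\epsilon_i}\gamma_i^{-1}.
\]
This uses the standard fact that $\pi_1(\ck_-)$ is the quotient of $\pi_1(\ck_-^{(1)})$ (a free group) by the normal closure of the face boundaries. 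Any two loops in the 1-skeleton that are homotopic in $\ck_-$ differ by such a product, and free-group identities (backtracking) do not change the holonomy.

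\textbf{Evaluating the holonomy.} I would then take holonomies of both sides. On $\cc_{\rm int}^\cs$, each face boundary has holonomy $\varsigma_h\mathbb{I}$, which is central. So every conjugated factor contributes only the sign $\varsigma_{h_i}^{\epsilon_i}$, and the path $\gamma_i$ drops out. This gives
\[
H_e=\Big(\prod_i\varsigma_{h_i}^{\epsilon_i}\Big)\,w(H_{\ell_l}),
\]
and $H_{\ell_l}=H_l$ because the tree contributes trivially. Hence $H_e$ is expressed entirely through $\{H_l\}_{l\in\Fl,\,l\notin\ct}$ and $\cs$, which gives existence of the expression.

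\textbf{Uniqueness.} I also need the resulting value to be independent of the chosen decomposition. Two decompositions of the same $[\ell_e]$ differ by a relation among the generators, i.e.\ an element of the normal closure of the face loops. Its holonomy is again a product of signs $\varsigma_h$, and the admissibility condition \eqref{closedCycle=1} should force this sign to be $+1$. Alternatively, and more directly, uniqueness can be read off without any word choice: $H_e$ \emph{equals} the holonomy of $\ell_e$, which is a fixed function of the configuration. So any configuration in $\cc_{\rm int}^\cs$ sharing the same $H_l$ must have the same $H_e$, since the identity above holds for every point of $\cc_{\rm int}^\cs$.

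I expect the main subtlety to be the bookkeeping of the conjugating paths $\gamma_i$ and the orientations $\epsilon_i$. Centrality of $\pm\mathbb{I}$ disposes of both. In the simply connected case, $w$ is empty and $H_e=\pm\mathbb{I}$, with the sign fixed by $\cs$.
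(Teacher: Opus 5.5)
Your proposal is correct and follows essentially the same route as the paper's proof. You write $[\ell_e]$ as a word in the generators times conjugated face boundaries, and use centrality of the face holonomies $\varsigma_h\mathbb{I}$ to obtain $H_e=\pm\, w(H_l)$, with the sign fixed by $\cs$.

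For well-definedness, your direct argument is the right one: a fixed decomposition gives the identity at every point of $\cc_{\rm int}^\cs$. The appeal to \eqref{closedCycle=1} is unnecessary. Moreover, the holonomy of the relator between two decompositions need not be $+1$; it is compensated by the differing sign prefactors.
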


\begin{proof}
Let $\ct$ be the fixed maximal spanning tree in $\ck_-$. Let $E_{\rm int}$ be the set of edges in $\ck_-$. The set of edges not in the tree is denoted by $S = E_{\rm int} \setminus \ct$. The set of all fundamental loops is denoted by $\Gamma_{all} = \{ \ell_e \mid e \in S \}$ where $\ell_e = p_T(v_*, u) \cdot e \cdot p_T(v, v_*)$.

We fix the gauge partially by requiring $ H_e = 1,\ \forall e \subset \ct $. This condition determines the gauge $g_v$ uniquely for all $v \neq v_*$ relative to $g_{v_*}$.
In this gauge, for any $e \in S$, the holonomy of the fundamental loop is:
$$ H(\ell_e) = H(p_T) H_e H(p_T)^{-1} =  H_e $$
Thus, the variables describing the connection are exactly the matrices $\{H_e \mid e \in S\}$. The residual gauge freedom is the global conjugation by $g_{v_*} \in SU(2)$.

Let $L \subset \Gamma_{all}$ be a minimal set of independent generators for $\pi_1(\ck_-, v_*)$. Let $\Fl \subset S$ be the subset of edges corresponding to the loops in $L$. Let $e \in S \setminus \Fl$ be an edge whose fundamental loop $\ell_e$ is not in the minimal set $L$. Since $L$ generates $\pi_1(K)$, the homotopy class $[\ell_e]$ can be expressed as a product of classes in $L$ and their inverses. Let this word be $W_e$.
$$ [\ell_e] = W_e( [\ell_{l_1}], \dots, [\ell_{l_k}] ) \quad \text{in } \pi_1(\ck_-, v_*) $$
where $l_i\in\Fl$ and $l_i\not\in\ct$.

For the fundamental group $\pi_1(\ck_-,v_*)$, relations are generated by the boundaries of the 2-cells (faces), and they implies that the loop $\ell_e \cdot (W_e)^{-1}$ is contractible in $\ck_-$. Specifically, this loop is homotopic to a product of conjugates of face boundaries.
There exist paths $\alpha_j$ (along 1-skeleton and starting at $v_*$), faces $h_j$ with boundaries $\partial h_j$, and signs $\epsilon_j \in \{\pm 1\}$ such that:
$$ \ell_e \simeq \left( \prod_{j=1}^N \alpha_j \cdot (\partial h_j)^{\epsilon_j} \cdot \alpha_j^{-1} \right) \cdot \tilde{W}_e(\ell_{l_1}, \dots, \ell_{l_k}) $$
where $\tilde{W}_e$ is the loop representative of the word $W_e$. The product term represents the "trivial" part of the loop in the quotient group, formed by traversing to a face, going around it, and returning.

We apply the holonomy map $H$ to the homotopy relation derived above. For every face $h$, the loop holonomy is $H(\partial h) = s_h$, where $s_h =\pm 1$ generates the center $Z(\Su)$. Therefore, for any path holonomy $M = H(\alpha_j)$, we have:
$$ H(\alpha_j \cdot (\partial h_j)^{\epsilon_j} \cdot \alpha_j^{-1}) = M s_h^{\epsilon_j} M^{-1} = s_h^{\epsilon_j} $$

Now, we evaluate the holonomy of $\ell_e$:
$$ H(\ell_e) = \left( \prod_{j=1}^N s_{h_j}^{\epsilon_j} \right)  H( \tilde{W}_e ) $$
Let $s_e = \prod_{j=1}^N s_{h_j}^{\epsilon_j}=\pm1$. Since $H(\cdot)$ is multiplicative on path concatenation $ H(\tilde{W}_e) = W_e( H(\ell_{l_1}), \dots, H(\ell_{l_k}) ) $. In the fixed gauge, $H(\ell_e) = H_e$. Thus:
$$ H_e = s_e  W_e( H_{l_1}, \dots, H_{l_k} ) $$
The set of holonomies $\{H_e\}_{e\not\in\Fl,e\not\in\ct}$ are uniquely determined by the matrices $\{H_l\}_{l \in \Fl,l\not\in\ct}$.


\end{proof}

\bibliography{muxin.bib}

\end{document}